\definecolor{webgreen}{rgb}{0,.5,0}
\definecolor{webbrown}{rgb}{.6,0,0}
\definecolor{RoyalBlue}{cmyk}{1, 0.50, 0, 0}
\newcommand{\R}		{\mathbb{R}}
\newcommand{\C}		{\mathbb{C}}
\newcommand{\N}		{\mathbb{N}}
\newcommand{\Z}		{\mathbb{Z}}
\newcommand{\supp}{\mathrm{supp}}
\newcommand{\re}{\mathrm{Re}}
\newcommand{\im}{\mathrm{Im}}
\renewcommand{\arg}{\mathrm{arg}}
\renewcommand{\det}{\mathrm{det}}
\newcommand{\qasq}{\quad \text{as} \quad}
\newcommand{\qandq}{\quad \text{and} \quad}
\newcommand{\dd}{\mathrm{d}}
\newcommand{\ic}{\mathrm{i}}
\newcommand{\RS}{\mathfrak S}
\newcommand{\z}{{\boldsymbol z}}
\newcommand{\s}{{\boldsymbol s}}
\newcommand{\be}{\beta}
\newcommand{\ga}{\gamma}
\newcommand{\Ga}{\Gamma}
\newcommand{\la}{\lambda}
\newcommand{\ep}{\varepsilon}
\newcommand{\Tr}{{\mathsf{Tr}}\,}
\newcommand{\rhy}   {\textnormal{RHP}-${\boldsymbol Y}$}
\newcommand{\rht}   {\textnormal{RHP}-${\boldsymbol T}$}
\newcommand{\rhs}   {\textnormal{RHP}-${\boldsymbol S}$}
\newcommand{\rhr}   {\textnormal{RHP}-${\boldsymbol R}$}
\newcommand{\rhn}   {\textnormal{RHP}-${\boldsymbol N}$}
\newcommand{\rhp}   {\textnormal{RHP}-${\boldsymbol P}$}
\newtheorem{theorem}{Theorem}[section]
\newtheorem{proposition}{Proposition}[section]
\newtheorem{lemma}[proposition]{Lemma}
\newtheorem{definition}[proposition]{Definition}
\theoremstyle{definition}
\newtheorem{notation}{Convention}
\numberwithin{equation}{section}
\begin{document}

\title[Complex cubic ensemble: two-cut case]{Investigation of the two-cut phase region  in the \\ complex cubic ensemble of random matrices}

\date{\today}

\author{Ahmad Barhoumi}
\address{Department of Mathematics, University of Michigan, East Hall, 530~Church Street, Ann Arbor, MI 48109, USA}
\email{barhoumi@umich.edu}

\author{Pavel  Bleher}
\address{Department of Mathematical Sciences, Indiana University-Purdue University Indianapolis, 402~North Blackford Street, Indianapolis, IN 46202, USA}
\email{pbleher@iupui.edu}

\author{Alfredo Dea\~no}
\address{Department of Mathematics, Universidad Carlos III de Madrid, Avda. de la Universidad 30, 28911 Legan\'es, Madrid, Spain}
\email{alfredo.deanho@uc3m.es}

\author{Maxim Yattselev}
\address{Department of Mathematical Sciences, Indiana University-Purdue University Indianapolis, 402~North Blackford Street, Indianapolis, IN 46202, USA}
\email{maxyatts@iupui.edu}

\begin{abstract}
We investigate the phase diagram of  the complex cubic unitary ensemble of random matrices with the potential $V(M)=-\frac{1}{3}M^3+tM$ where $t$ is a complex parameter. As proven in our previous paper \cite{BlDeaY17}, the whole phase space of the model, $t\in\C$, is partitioned into two phase regions, $O_{\mathsf{one-cut}}$ and $O_{\mathsf{two-cut}}$, such that in $O_{\mathsf{one-cut}}$ the equilibrium measure is supported by one Jordan arc (cut) and in $O_{\mathsf{two-cut}}$ by two cuts. The regions $O_{\mathsf{one-cut}}$ and $O_{\mathsf{two-cut}}$ are separated by critical curves, which can be calculated in terms of critical trajectories of an auxiliary quadratic differential. In \cite{BlDeaY17} the one-cut phase region was investigated in detail. In the present paper  we investigate the two-cut region. We prove that in the two-cut region the endpoints of the cuts are analytic functions of the real and imaginary parts of the parameter $t$, but not of the parameter $t$ itself (so that the Cauchy--Riemann equations are violated for the endpoints). We also obtain the semiclassical asymptotics of the orthogonal polynomials associated with the ensemble of random matrices and their recurrence coefficients.  The proofs are based on the Riemann--Hilbert approach to semiclassical asymptotics of the orthogonal polynomials and the theory of $S$-curves and quadratic differentials.
\end{abstract}

\keywords{cubic random matrix model, partition function, equilibrium measure, S-curve, quadratic differential, orthogonal polynomials, non-Hermitian orthogonality, Riemann--Hilbert problem, nonlinear steepest descent method.}

\subjclass[2010]{33C47, 30E15, 31A25,  15B52} 

\thanks{The first author (A.B.) was partially supported by the National Science Foundation (NSF) under DMS-1812625. The work of the second author (P.B.) is supported in part by the NSF Grants DMS-1265172 and DMS-1565602. The second author (P.B.) acknowledges hospitality and support from MSRI (Mathematical Sciences Research Institute) during the research program ``Universality and Integrability in Random Matrix Theory and Interacting Particle Systems" in September--December 2021. The third author (A.D.) acknowledges financial support from the EPSRC grant ``Painlev\'e equations: analytical properties and numerical computation", reference EP/P026532/1, as well as support by the Madrid Government (Comunidad de Madrid-Spain) under the Multiannual Agreement with UC3M in the line of Excellence of University Professors (EPUC3M23), and in the context of the V PRICIT (Regional Programme of Research and Technological Innovation). The research of the fourth author (M.Y.) is supported by a grant from the Simons Foundation, CGM-706591.}

\dedicatory{Dedicated to the memory of Freeman Dyson}

\maketitle

\section{Introduction}
\label{s:intro}

This work is a continuation of the works of Bleher and Dea\~no \cite{BD1,BD2} and Bleher, Dea\~no, and  Yattselev \cite{BlDeaY17}. The main goal of the whole project is to analyze the large $N$ asymptotics of the partition function and correlation functions in the unitary ensemble of random matrices with the cubic potential $V(M)=-\frac{1}{3}M^3+tM$, where $t$ is a complex parameter. The partition function of this ensemble is formally defined by the matrix integral over the space of $N\times N$ Hermitian matrices,
\[
Z_N^{\rm RM}(t)=\int_{\mathcal H_N} e^{-N\Tr 
\left(-\frac{1}{3}M^3+tM\right)} \mathrm dM.
\]
The integral is formal because it is divergent, and it needs a regularization. To define such a regularization consider the (formal) identity
\[
Z_N^{\rm RM}(t)=\frac{\pi^{N(N-1)/2}}{\prod_{k=1}^N k!}\int_{\R}\ldots\int_{\R} \prod_{1\leq j<k\leq N}(z_j-z_k)^2\,
\prod_{j=1}^N e^{-N \left(-\frac{1}{3}z_j^3+tz_j\right)}\mathrm dz_1\ldots \mathrm dz_N
\]
see, \cite[formula (1.2.18)]{BL}. The random matrix partition function $Z_N^{\rm RM}(t)$ can now be regularized by changing the contour of integration from the real line $\R$ to the contour $\Ga$ depicted in Figure \ref{Gamma_t1}, which goes from $(-\infty)$ to $(e^{\pi \ic/3}\infty)$. Hence, we are interested in studying the regularized eigenvalue partition function defined as
\begin{equation}
\label{ZN}
Z_N(t) := \int_{\Ga}\ldots\int_{\Ga} \prod_{1\leq j<k\leq N}(z_j-z_k)^2\,
\prod_{j=1}^N e^{-N \left(-\frac{1}{3}z_j^3+tz_j\right)}\mathrm dz_1\ldots \mathrm dz_N,
\end{equation}
which is well defined (the integrals are convergent) for all values $t\in \C$. 

\begin{figure}[h]
\label{Gamma_t1}
\includegraphics[scale=0.4]{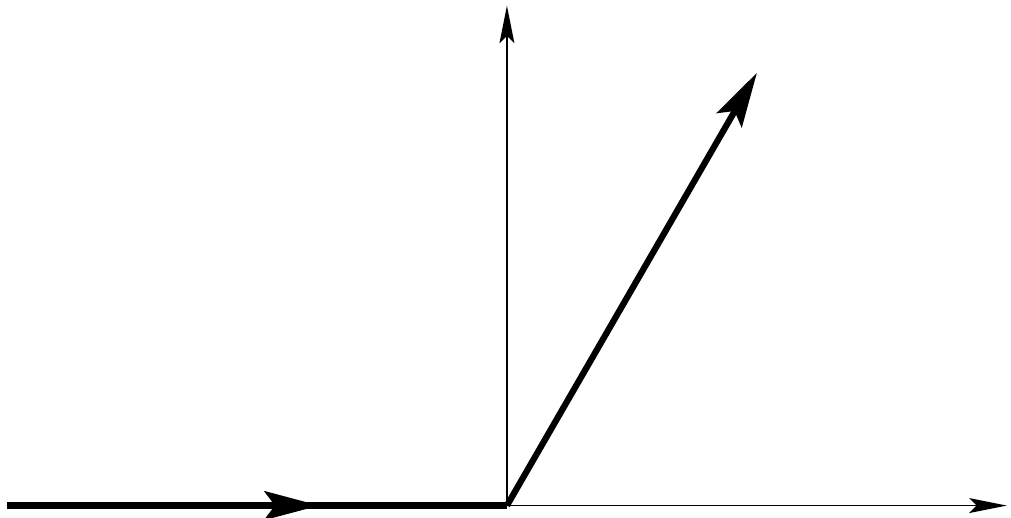}
\begin{picture}(0,0)
\put(-46,40){$\Ga$}
\put(-67,3){$0$}
\end{picture}
\caption[sectors ]{\small The  contour  $\Ga$ of integration.}
 \end{figure}

As shown in \cite{BD1}, the partition function $Z_N(t)$ can be used to study the topological expansion in the cubic ensemble of random matrices and enumeration of regular graphs of degree 3 on Riemannian surfaces. To be more precise, let us consider the cubic potential in the form $M^2/2-uM^3$, where $u>0$, and the corresponding eigenvalue matrix integral is given by
\begin{equation}
\label{ZNu}
\Xi_N(u)=\int_\Ga\ldots\int_\Ga \prod_{1\leq j<k\leq N}(\zeta_j-\zeta_k)^2\,
\prod_{j=1}^N e^{-N \left(\frac12\zeta_j^2-u\zeta_j^3\right)}\mathrm d\zeta_1\ldots \mathrm d\zeta_N.
\end{equation}
We would like to consider all the possible values, including the complex ones, of the parameter $u$. Formula \eqref{ZNu} is not very convenient for this purpose  because the contour of integration $\Ga$ should be rotated to secure the convergence of the integral depending on the argument of \( u \). Instead, let us make the change of variables
\[
\zeta_j=(3u)^{-1/3}z_j+\frac{1}{6u},
\]
which yields that
\[
\frac{\zeta_j^2}{2}-u\zeta_j^3-\frac{1}{108u^2}=-\frac{z_j^3}{3}+tz_j,
\]
where
\begin{equation}\label{tu}
t=\frac{1}{4(3u)^{4/3}}\,.
\end{equation}
This implies the relation between the partition functions,
\[
\Xi_N(u) =(4t)^{N^2/4}e^{-2t^{3/2}N/3} Z_N(t). 
\]

As proven in  \cite{BD1}, the free energy of the cubic model $\mathcal F_N(u)$   admits an asymptotic expansion as $N\to\infty$ in powers of ${N^{-2}}$, that is,
\begin{equation}\label{free}
\mathcal F_N(u):=
\frac{1}{N^2} \ln \frac{\Xi_N(u)}{\Xi_N(0)}\sim \sum_{g=0}^{\infty}\frac{ F^{(2g)}(u)}{N^{2g}}
\end{equation}
for any $u$ in the interval $0\le u< u_c$, where \( u_c:=3^{1/4} 18^{-1} \) is a critical point. In addition, the functions $F^{(2g)}(u)$ admit an analytic continuation to the disk $|u|<u_c$ on the complex plane,  and if we expand them in powers of $u$,
\begin{equation}\label{top2}
F^{(2g)}(u)=\sum_{j=1}^\infty \frac{f^{(2g)}_{2j}u^{2j}}{(2j)!},
\end{equation}
then the coefficient $f^{(2g)}_{2j}$ is a positive integer that counts the number of $3$-valent connected graphs with $2j$ vertices on a Riemann surface of genus $g$. Asymptotic expansion \eqref{free} is called a {\it topological expansion}. For more details on this aspect of the theory,  we refer the reader to the classical papers of Bessis, Itzykson and Zuber \cite{BIZ}, Br\'{e}zin, Itzykson, Parisi and Zuber \cite{BIPZ}, the monograph of Forrester \cite[Section 1.6]{Forrester}, the works of  Mulase \cite{Mulase}, Di Francesco \cite{DiFrancesco}, Ercolani and McLaughlin \cite{EMcL,EMcLP}, Eynard \cite{Eyn}, and references therein, or a very readable introduction by Zvonkin \cite{Zvonkin}. It is noteworthy that the general idea of a topological expansion goes back to the classical work of 't~Hooft \cite{tH74}.

As shown in \cite{BD1}, the coefficients $f^{(2g)}_{2j}/(2j)!$ of power series \eqref{top2} behave, when $j\to\infty$, as
\[
\begin{aligned}
\frac{f^{(2g)}_{2j}}{(2j)!} &=\frac{K_{2g}  j^{\frac{5g-7}{2}}}{u_c^{2j}}\left(1+\mathcal{O}\left(j^{-1/2}\right)\right),
\quad K_{2g}>0.
\end{aligned}
\]
This implies that  $u_c$ is the radius of convergence of each power series \eqref{top2}. In fact, $u=u_c$ is a singular point of each of the functions $\eqref{top2}$. The topological expansion in a neighborhood of the critical point
$u_c$ has been obtained in the work of Bleher and Dea\~no \cite{BD2} and it is closely related to the Painlev\'e I equation as follows:
consider a formal series
\begin{equation}
\label{genfun}
\sum_{g=0}^{\infty} \Gamma\left(\frac{5g-1}{2}\right)\frac{u_c^{g}K_{2g}}{6\cdot 3^{1/4}} \lambda^{\frac{1-5g}{2}}\,,
\end{equation}
it can be shown that there exists a one-parameter family of solutions $\{y_{\alpha}(\lambda)\}$, $\alpha\in\R,$ of the Painlev\'e~I differential equation
\[
y''(\lambda)=a_0y^2(\lambda)-a_1\lambda,
\]
with $a_0=2^{\frac{5}{2}}3^{\frac{9}{4}}$ and $a_1=2^{\frac{3}{2}}3^{-\frac{5}{4}}$, that admits an asymptotic expansion \eqref{genfun} as $\lambda\to-\infty$. This family consists of the Boutroux \emph{tronqu\'ee} solutions, which means that they are asymptotically free of poles in one of the five canonical sectors of angle $2\pi/5$ in the complex plane that appear naturally when considering rotational symmetries of the Painlev\'e~I equation. The parameter $\alpha$ appears in the Painlev\'e~I Riemann--Hilbert problem, as described by Kapaev in \cite{Kapaev}, and in the context of the cubic model, it is related to the choice of the original contour of integration $\Gamma$, see \cite{BD1,BD2}. In the present case, we have $\alpha=1$, which makes the solution of Painlev\'e~I \emph{tritronqu\'ee}, that is, asymptotically free of poles in four of the five canonical sectors of angle $2\pi/5$ in the complex plane. We refer the reader to \cite[Section 2.1]{BD2} and also the work of Joshi and Kitaev \cite{JK} for more details.
  
It is noteworthy that the key ingredient in the proof of topological expansion \eqref{free} in \cite{BD1} is  the derivation of semiclassical asymptotic formulae for the recurrence coefficients $\ga_n^2$, $\be_n$, see \eqref{tu} and \eqref{cm4} further below, of the corresponding monic orthogonal polynomials $P_n(z)=z^n+\ldots$ defined via orthogonality relations
\[
\int_{\Gamma} P_n(z)z^k e^{-N\left(\frac12z^2-uz^3\right)}\mathrm dz=0, \quad k\in\{0,1,\ldots,n-1\}.
\]
The idea of using the orthogonal polynomials in ensembles of random matrices goes back to the classical works of Dyson and Mehta, see the works \cite{Dyson} and \cite{Mehta} and references therein. Namely, as proven in \cite{BD1}, for any $u\in[0,u_c)$, there exists $\ep>0$ such that as $N,n\to \infty$ with $1-\ep\le \frac{n}{N}\le 1+\ep$, the recurrence coefficients $\ga_n^2$ and $\be_n$ admit asymptotic expansions in powers of $\frac{1}{N^2}$:
\begin{equation}\label{asympgb}
\left\{
\begin{aligned}
\gamma_n^2 & \sim \sum_{k=0}^{\infty}\frac{1}{N^{2k}}g_{2k}\left(\frac{n}{N},u\right),\\
\beta_n & \sim \sum_{k=0}^{\infty}\frac{1}{N^{2k}}
b_{2k}\left(\frac{n}{N}+\frac{1}{2N},u\right),
\end{aligned}
\right.
\end{equation}
where the functions $g_{2k}(s,u)$, $b_{2k}(s,u)$ do not depend on $n$ and $N$ and are analytic in $s$ at $s=1$.

In \cite{BD2} this asymptotic expansion is extended to the double scaling asymptotic expansion  of the recurrence coefficients at the critical point $u_c$. In the double scaling regime we set 
\[
\frac{n}{N}=1+vN^{-4/5}.
\]
where $v\in\mathbb{R}$ is a scaling variable. Then as proven in \cite{BD2}, at $u=u_c$ the recurrence coefficients $\ga_n^2$ and $\be_n$ admit asymptotic expansions in powers of $N^{-2/5}$ as $N\to\infty$:
\begin{equation} \label{dsc}
\left\{
\begin{aligned}
\ga_{n}^2&\sim \ga^2_c+\sum_{k=1}^{\infty}\frac{1}{N^{2k/5}}\, {p}_{2k}(v),\\
\be_{n}& \sim \be_c+\sum_{k=1}^{\infty}\frac{1}{N^{2k/5}}\,{q}_{2k}(\tilde{v}),
\end{aligned}
\right.
\end{equation}
where the functions ${p}_{2k}(v)$, ${q}_{2k}(\tilde{v})$ are expressed in terms of the Boutroux tritronqu\'ee solution to Painlev\'e I equation, \( \ga_c^2,\be_c \) are known explicitly, and $\tilde v=v+N^{-1/5}/2\,$. As shown in \cite{BD2}, expansions \eqref{asympgb} and \eqref{dsc} can be extended for large $N$ to $u$ in overlapping intervals: $[0,u_c-N^{-0.79}]$ for \eqref{asympgb} and $[u_c-N^{-0.65},u_c]$ for \eqref{dsc}, and this can be used to obtain the double scaling asymptotic formula for
the partition function at \( u_c \).  Namely, let
\[
u-u_c=2^{-\frac{12}{5}}3^{-\frac{7}{4}}\la N^{-\frac{4}{5}},
\]
where $\la$ is a complex scaling variable in the double scaling regime. Then for $\la$ outside of a neighborhood
of the poles of \( y(\lambda) \), the Boutroux tritronqu\'ee solution to Painlev\'e~I equation, the partition function $\Xi_N(u)$ can be written as
\begin{equation}\label{ZNuu}
\Xi_N(u)=\Xi_N^{\rm reg}(u)\Xi_N^{\rm sing}(\lambda)
\left(1+\mathcal{O}(N^{-\varepsilon})\right),\quad \varepsilon>0,
\end{equation}
where the regular and singular factors are given by
\[
\Xi_N^{\rm reg}(u)=e^{N^2 [a+b(u-u_c)+c(u-u_c)^2]+d} \quad \text{and} \quad \Xi_N^{\rm sing}(\lambda)=e^{-Y(\lambda)},
\]
where $a,b,c,d$ are some explicit constants and $Y(\lambda)$ is a solution of the differential equation \( Y''(\lambda)=y(\lambda) \) with  the boundary condition
\[
Y(\lambda)=\frac{2\sqrt{6}}{45}(-\lambda)^{5/2}-\frac{1}{48}\log(-\lambda)+\mathcal{O}((-\lambda)^{-5/2}), \qquad \lambda\to-\infty.
\]
Asymptotic formula \eqref{ZNuu} is used in \cite{BD2} to prove the conjecture of David \cite{Dav1,Dav2} that the poles of the tritronqu\'ee solution $y(\la)$ give rise to zeros of $\Xi_N(u)$.  
 
As we have stressed before, the topological expansions from \cite{BD1,BD2,BlDeaY17} were obtained by first analyzing the associated orthogonal polynomials defined by
\begin{equation}
\label{cm3}
\int_{\Ga} z^kP_n(z;t,N)e^{-NV(z;t)} \mathrm dz=0,\quad k\in\{0,\ldots, n-1\},
\end{equation} 
where 
\begin{equation}
\label{cm2}
V(z;t) := -\dfrac{z^3}{3} + tz, \quad t \in \C.
\end{equation}
Due to the non-Hermitian character of the above relations, it might happen that polynomial satisfying \eqref{cm3} is non-unique. In this case we understand by $P_n(z;t,N)$ the monic non-identically zero polynomial of the smallest degree (such a polynomial is always unique). One way of connecting the partition function to orthogonal polynomials is via {\it three term recurrence relations}. More precisely, let
\begin{equation}
\label{cm5a}
h_n(t,N) := \displaystyle \int_{\Ga} P_n^2(z;t,N)e^{-NV(z;t)} \mathrm dz = \frac{D_n(t,N)}{D_{n-1}(t,N)},
\end{equation}
where \( D_{-1}(t,N)\equiv 1 \) and \( D_n(t,N) := \det\big[\int_\Gamma z^{i+j}e^{-NV(z;t)}\dd z\big]_{i,j=0}^n \) is the Hankel determinant of the moments of the measure of integration in \eqref{cm3}. It easy to see that each \( D_n(t,N) \) is an entire function of \( t \) and therefore each \( h_n(t,N) \) is meromorphic in \( \C \). Hence, given \( n \), the set of the values \( t \) for which there exists \( k\in\{0,\ldots,n \} \) such that \( h_k(t,N) =0 \) is countable with no limit points in the finite plane. Outside of this set, the standard argument using orthogonality \eqref{cm3} shows that
\begin{equation}
\label{cm4}
zP_n(z;t,N) = P_{n+1}(z;t,N)+\be_n(t,N) P_n(z;t,N)+\ga_n^2(t,N) P_{n-1}(z;t,N),
\end{equation}
and by analytic continuation \eqref{cm4} extends to those values of \( t \) for which \( D_{n-1}(t,N)D_n(t,N) \neq 0 \) (that is, \( n+1 \)-st and \( n \)-th polynomials appearing \eqref{cm4} have the prescribed degrees), where
\begin{equation}
\label{cm5b}
\ga_n^2(t,N) = h_n(t,N)/h_{n-1}(t,N).
\end{equation}
Notice that \eqref{cm4} remains meaningful even if \( D_{n-2}(t,N)=0 \). In this case \( h_{n-2}(t,N)=0 \), \( h_{n-1}(t,N)=\infty \), and \( h_n(t,N) \) is finite, which means that \( P_{n-2}(z;t,N) \) is orthogonal to itself and therefore is equal to \( P_{n-1}(z;t,N) \) but not to \( P_n(z;t,N) \). Thus, \( \gamma_n^2(t,N)=0 \) and the last term in \eqref{cm4} is zero. Notice also that if \( h_{n-1}(t,N) \neq 0 \) while \( h_n(t,N)=0 \) (that is, \( D_{n-1}(t,N)\neq0 \) and \( D_n(t,N)=0 \)), then we simply have that \( \ga_N^2(t,N)=0 \) by analytic continuation. However, this means that the polynomial \( P_n(z;t,N) \) is orthogonal to itself and therefore \( P_{n+1}(z;t,N) = P_n(z;t,N) \). Then the analytic continuation argument necessitates that \( \beta_n(t,N)=\infty \), which also can be seen from the determinantal representation of \( \beta_n(t,N) \) as it is the difference of the subleading coefficients of \( P_n(z;t,N) \) and \( P_{n+1}(z;t,N) \). It is further known that the recurrence coefficients $\ga_N^2(t,N)$ satisfy the {\it Toda equation}:
\[
\frac{\partial^2 F_N(t)}{\partial t^2}=\ga_N^2(t,N), \qquad F_N(t)=\frac{1}{N^2}\log Z_N(t).
\]
Another way of connecting partition function to orthogonal polynomials is through the formula
\[
Z_N(t) = N!\prod_{n=0}^{N-1}h_n(t,N) \big( = N! D_{N-1}(t,N) \big).
\]
As it turns out, there are two different regions in the complex \( t \)-plane for which the behavior of the polynomials \( P_n(z;t,N) \), and therefore of the partition function, is different. Colloquially, we call them one-cut and two-cut regions, a distinction that will become clear later. In \cite{BD1,BD2,BlDeaY17} the partition function has been analyzed in and on the boundary of the one-cut region. The goal of this work is to start the analysis in the two-cut region. More precisely, here we only consider the asymptotics of the orthogonal polynomials and their recurrence coefficients and postpone the analysis of the partition function for the future project. The structure of the paper is as follows:
\begin{itemize}
\item In Sections \ref{s:Sprop} and \ref{s:Structure} we describe equilibrium measures and corresponding $S$-curves for the cubic model under consideration (those describe asymptotic behavior of the normalized counting measures of zeros of the orthogonal polynomials in the weak$^*$  sense). This leads us to a precise description of the phase diagram (i.e., of the one- and two-cut regions) of the cubic model on the complex $t$-plane. 
\item In Section~\ref{s:main} we present the main results of the paper:  asymptotic formulae for the orthogonal polynomials and their recurrence coefficients in the two-cut phase region.
\item Section \ref{s:pr-geom}--\ref{s:ae} are devoted to the proof of our main results, in them we obtain various results about the detailed structure of the $S$-curves, derive main properties of the dominant terms of the expansions, apply the Riemann--Hilbert approach to asymptotics of the orthogonal polynomials, and finally prove the expansions themselves.
\end{itemize}

\section{Equilibrium Measures and S-Property}\label{s:Sprop}

It is well understood that the zeros of polynomials satisfying \eqref{cm3} asymptotically distribute as a certain \emph{weighted equilibrium measure} on an \emph{S-contour} corresponding to the weight function \eqref{cm2}. In this section we discuss these notions in greater detail. Our consideration will use the works of Huybrechs, Kuijlaars, and Lejon \cite{HKL} and Kuijlaars and Silva~\cite{KS}. Let us start with some
definitions.

\begin{definition} 
\label{def:eq}
Let $V$ be an entire function. The logarithmic energy in the external field $\re V$ of a measure $\nu$ in the complex plane
is defined as
\[
E_V(\nu)=\iint \log \frac{1}{|s-t|}\mathrm d\nu(s) \mathrm d\nu(t)+\int \re V(s)\mathrm d\nu(s).
\]
The equilibrium energy of a contour $\Ga$ in the external field $\re V$ is equal to
\begin{equation}
\label{em1}
\mathcal E_V(\Ga)=\inf_{\nu\in \mathcal M(\Ga)} E_V(\nu),
\end{equation}
where $\mathcal M(\Ga)$ denotes the space of Borel probability measures on $\Ga$.
\end{definition}

When $\re V(s)-\log|s|\to+\infty$ as $\Ga\ni s\to\infty$, there exists a unique minimizing measure for \eqref{em1}, which is called the {\it weighted equilibrium measure} of $\Ga$ in the external field $\re V$, say $\mu_\Ga$, see \cite[Theorem I.1.3]{SaffTotik} or \cite{HKL}. The support of $\mu_\Ga$, say $J_\Ga$, is a compact subset of $\Ga$. The equilibrium measure  $\mu=\mu_\Ga$ is characterized by the Euler--Lagrange variational conditions:
\begin{equation}
\label{em2}
2U^\mu(z)+\re V(z)\;
\left\{
\begin{aligned}
&= \ell,\qquad z\in J_\Ga,\\
&\ge \ell,\qquad z\in \Ga\setminus J_\Ga,
\end{aligned}
\right.
\end{equation}
where $\ell=\ell_\Ga$ is a constant, the Lagrange multiplier, and \( U^\mu(z):=-\int\log|z-s|\dd\mu(s) \) is the logarithmic potential of $\mu$, see \cite[Theorem~I.3.3]{SaffTotik}. 

 Observe that due to the analyticity of the integrand in \eqref{ZN}, \( \Ga \) can be varied without changing the value of \( Z_N(t) \). Henceforth, we suppose that partition function \eqref{ZN} and orthogonal polynomials \eqref{cm3} are defined with $\Ga\in\mathcal T$, where \( \mathcal T \) is the following class of contours. 
\begin{definition} 
 We shall denote by $\mathcal T$ the collection of all piecewise smooth contours that extend to infinity in both directions, each admiting a parametrization $z(s)$, $s\in\R$, for which there exists $\epsilon\in(0,\pi/6)$ and $s_0>0$ such that
\begin{equation}
\label{cm2b}
\left\{
\begin{array}{ll}
|\arg(z(s))-\pi/3|\leq \epsilon, & s\geq s_0, \medskip \\
|\arg(z(s))-\pi|\leq \epsilon, & s\leq -s_0,
\end{array}
\right.
\end{equation}
where $\arg(z(s))\in[0,2\pi)$. 
\end{definition}
Despite the above flexibility, it is well understood in the theory of non-Hermitian orthogonal polynomials, starting with the works of Stahl \cite{St85,St85b,St86} and Gonchar and Rakhmanov \cite{GRakh87}, that one should use the contour whose equilibrium measure has support  \emph{symmetric} (with the \emph{S-property}) in the external field $\re V$. We make this idea precise in the following definition.
\begin{definition} 
The support $J_\Ga$ has the S-property in the external field $\re V$, if it consists of a finite number of open analytic arcs and their endpoints, and  on each arc it holds that
\begin{equation}
\label{em4}
\frac{\partial }{\partial n_+}\,\big(2U^{\mu_\Gamma}+\re V\big)=
\frac{\partial }{\partial n_-}\,\big(2U^{\mu_\Gamma}+\re V\big),
\end{equation}
where $\frac{\partial }{\partial n_+}$ and  $\frac{\partial }{\partial n_-}$ are the normal derivatives from the $(+)$- and $(-)$-side of $\Ga$. We shall say that a curve $\Ga\in\mathcal T$ is an S-curve in the field $\re V$, if $J_\Ga$ has the S-property in this field.
\end{definition}

It is also understood that geometrically $J_\Ga$ is comprised of \emph{critical trajectories} of a certain quadratic differential. Recall that if $Q$ is a meromorphic function, a \emph{trajectory} (resp. \emph{orthogonal trajectory}) of a quadratic differential $-Q(z)\mathrm dz^2$ is a maximal regular arc on which
\[
-Q(z(s))\big(z^\prime(s)\big)^2>0 \quad \big(\text{resp.} \quad -Q(z(s))\big(z^\prime(s)\big)^2<0\big)
\]
for any local uniformizing parameter. A trajectory is called \emph{critical} if it is incident with a \emph{finite critical point} (a zero or a simple pole of $-Q(z)\mathrm dz^2$) and it is called \emph{short} if it is incident only with finite critical points. We designate the expression \emph{critical (orthogonal) graph of $-Q(z)\mathrm dz^2$} for the totality of the critical (orthogonal) trajectories $-Q(z)\mathrm dz^2$.

The following theorem is a specialization to $V(z;t)$ of  \cite[Theorems~2.3 and~2.4]{KS}.

\begin{theorem}
Let $V(z;t)$ be given by \eqref{cm2}.
\label{fundamental} 
\begin{enumerate}
  \item There exists a contour $\Ga_t\in\mathcal T$ such that
\begin{equation}
\label{em3}
\mathcal E_V(\Ga_t)=\sup_{\Ga\in\mathcal T} \mathcal E_V(\Ga).
\end{equation}
  \item  The support $J_t$ of the equilibrium measure $\mu_t:=\mu_{\Ga_t}$ has the S-property in the external field $\re V(z;t)$ and the measure \( \mu_t \) is uniquely determined by \eqref{em4}. Therefore, \( \mu_t \) and its support \( J_t \) are the same for every $\Ga_t$ satisfying \eqref{em3}.
  \item The function
\begin{equation}
\label{em5}
Q(z;t)=\left(\frac{V'(z;t)}{2}- \int \frac{\mathrm d\mu_t(s)}{z-s}\right)^2,\quad z\in \C\setminus J_t,
\end{equation}
is a polynomial of degree 4.
\item The support $J_t$ consists of some short critical trajectories of the quadratic differential $-Q(z;t)\mathrm dz^2$ and the equation
\begin{equation}
\label{em6}
\mathrm d\mu_t(z)=-\frac{1}{\pi \ic}\,Q_+^{1/2}(z;t)\mathrm dz, \quad z\in J_t,
\end{equation}
holds on each such critical trajectory, where $Q^{1/2}(z;t)=\frac12z^2+\mathcal{O}(z)$ as $z\to\infty$ (in what follows, \( Q^{1/2}(z;t) \) will always stand for such a branch).
\end{enumerate}  
\end{theorem}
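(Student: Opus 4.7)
Since the statement is presented as a specialization of \cite[Theorems~2.3 and~2.4]{KS}, my plan is first to verify that $V(z;t)=-z^3/3+tz$ together with the class $\mathcal T$ defined in \eqref{cm2b} meets the hypotheses of their framework, and then to extract the extra information specific to the cubic case---namely that $Q$ is a polynomial of degree $4$ and that $J_t$ is a union of short critical trajectories. The only non-trivial hypothesis to check is admissibility of $\re V$ along every $\Gamma\in\mathcal T$. Using $\re(-z^3/3)=-\tfrac{1}{3}|z|^3\cos(3\arg z)$ together with $\cos\pi=\cos3\pi=-1$ and the angular restrictions in \eqref{cm2b} with $\epsilon\in(0,\pi/6)$, one has $\re V(z;t)\sim \tfrac{1}{3}|z|^3$ as $z\to\infty$ along any $\Gamma\in\mathcal T$, which is much stronger than the logarithmic growth needed to invoke existence and uniqueness of a minimizer $\mu_\Gamma$ of the energy functional on each such $\Gamma$ by \cite[Theorem~I.1.3]{SaffTotik}. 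The Kuijlaars--Silva max-min construction then yields a contour $\Gamma_t\in\mathcal T$ attaining \eqref{em3} whose equilibrium support $J_t$ has the S-property \eqref{em4}, and uniqueness of $\mu_t$ follows because any two maximizing contours share their extremal measure.

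For parts (3) and (4) I would work with
\[
h(z):=\frac{V'(z;t)}{2}-\int\frac{\dd\mu_t(s)}{z-s},\qquad z\in\C\setminus J_t,
\]
which admits the expansion $h(z)=-z^2/2+t/2-1/z+\mathcal O(z^{-2})$ at infinity. Writing the variational equality \eqref{em2} as $\re\bigl(V(z;t)-2g(z)\bigr)=\ell$ on $J_t$, with $g(z)=\int\log(z-s)\dd\mu_t(s)$, vanishing of the tangential derivative of the left-hand side produces $h_+(z)+h_-(z)=V'(z;t)$ on each open arc of $J_t$, while the S-property \eqref{em4} equates the two normal derivatives and forces $h_+(z)=-h_-(z)$ there. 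Hence $Q:=h^2$ has no jump across $J_t$. As $h$ has at worst square-root type singularities at the finitely many endpoints of $J_t$, $Q$ has at worst integrable ones, which are removable by Riemann's theorem, so $Q$ extends to an entire function. Combined with the growth $Q(z)\sim z^4/4$, Liouville's theorem gives that $Q$ is a polynomial of degree exactly $4$, proving \eqref{em5}.

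Finally, fix the branch $Q^{1/2}$ at infinity by $Q^{1/2}(z;t)=z^2/2+\mathcal O(z)$, so that $h=-Q^{1/2}$ on $\C\setminus J_t$ and thus $h_\pm=\mp Q_+^{1/2}$ on $J_t$. The Sokhotski--Plemelj formula applied to the Cauchy transform $\int\dd\mu_t/(z-s)$ converts the identity $h_+(z)-h_-(z)=-2Q_+^{1/2}(z;t)$ into $\dd\mu_t(z)=-\tfrac{1}{\pi\ic}Q_+^{1/2}(z;t)\,\dd z$ on $J_t$, with the orientation induced from $\Gamma_t$, which is \eqref{em6}. The tangential vanishing of $\re(h(z)\,\dd z)$ along $J_t$ then reads $-Q(z;t)(\dd z/\dd s)^2>0$, showing that each open arc of $J_t$ lies on a trajectory of $-Q(z;t)\dd z^2$. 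Since the endpoints of arcs of $J_t$ are zeros of $h$, hence of $Q$, they are finite critical points of the differential, which makes these trajectories short.

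The only delicate point, modulo branch and sign bookkeeping, is the passage from the variational equality together with the S-property to the jump identities $h_++h_-=V'(z;t)$ and $h_+=-h_-$ on $J_t$; once these are in hand, the remainder is a classical application of Liouville's and Sokhotski--Plemelj's theorems.
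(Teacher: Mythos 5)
The paper offers no independent proof of this theorem: it is stated as a direct specialization of Kuijlaars--Silva \cite[Theorems~2.3 and~2.4]{KS}, which already contain all four assertions (existence of a maximizing contour, the S-property and uniqueness of $\mu_t$, polynomiality of $Q$, and the trajectory description of $J_t$ with \eqref{em6}). Your reduction of parts (1)--(2) to that framework, including the growth check $\re V(z;t)\gtrsim|z|^3$ along contours of class $\mathcal T$ (the constant is $\cos(3\epsilon)/3$ rather than $1/3$, but that is immaterial), is exactly the paper's route. Where you diverge is in re-deriving (3)--(4) by hand, and there the argument has two genuine problems.

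First, the jump identities are garbled: with $h=V'/2-\int \dd\mu_t(s)/(\cdot-s)$, the tangential derivative of the variational equality gives only the real equation $\re\big(h_\pm(z)z'(s)\big)=0$, and the S-property gives $\im\big((h_++h_-)(z)z'(s)\big)=0$; only the two together yield $h_+=-h_-$, equivalently $h_++h_-=0$, i.e.\ $2\,\mathrm{p.v.}\!\int \dd\mu_t(s)/(z-s)=V'(z;t)$ on $J_t^\circ$. Your stated identity $h_++h_-=V'$ is not only wrong (it would force $V'=0$ on $J_t$ when combined with $h_+=-h_-$) but cannot follow from a single real condition; you appear to have written $h$ where the Cauchy transform $g'$ was meant. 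Second, and more seriously, the endpoint step fails as reasoned: if $h$ has a square-root blow-up at an endpoint (a hard edge, which nothing in the variational equality or S-property alone forbids), then $Q=h^2$ has a \emph{simple pole} there, and Riemann's removable singularity theorem does not apply --- ``integrable'' singularities of an analytic function need not be removable. Ruling out such poles, i.e.\ showing all endpoints of $J_t$ are soft so that $Q$ is entire and hence a degree-$4$ polynomial, is precisely where the extremal (max-min) property of $\Gamma_t$ enters in Gonchar--Rakhmanov/Kuijlaars--Silva, and your proof never invokes it. Since (3)--(4) are part of the cited KS theorems anyway, the cleanest fix is to cite them for these items as well (as the paper does), or else to supply the local variational argument excluding hard edges.
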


Much information on the structure of the critical graphs of a quadratic differential can be found in the excellent monographs \cite{Jenkins,Pommerenke,Strebel}. Since $\deg\, Q=4$, $J_t$ consists of one or two arcs, corresponding (respectively) to the cases where $Q(z;t)$ has two simple zeros and one double zero, and the case where it has four simple zeros. Away from \( J_t \), one has freedom in choosing \( \Ga_t \). In particular, let
\begin{equation}
\label{em0}
\mathcal U(z;t) := \mathrm{Re}\left(2\int_e^z Q^{1/2}(s;t)\dd s \right) = \ell_{\Ga_t} - \mathrm{Re}(V(z;t)) - 2U^{\mu_t}(z),
\end{equation}
where \( e\in J_t \) is any and the second equality follows from \eqref{em5} (since the constant \( \ell_{\Gamma_t} \) in \eqref{em2} is the same for both connected components of \( J_t \) and the integrand is purely imaginary on \( J_t \), the choice of \( e \) is indeed not important). Clearly, \( \mathcal U(z;t) \) is a subharmonic function (harmonic away from \( J_t \)) which is equal to zero \( J_t \) by \eqref{em2}. The trajectories of \( -Q(z;t)\dd z^2 \) emanating out of the endpoints of \( J_t \) belong to the set \( \{z:\mathcal U(z;t)=0 \} \) and it follows from the variational condition \eqref{em2} that \( \Ga_t\setminus J_t\subset \{z:\mathcal U(z;t)<0\} \). However, within the region \( \{z:\mathcal U(z;t)<0 \} \) the set \( \Ga_t\setminus J_t \) can be varied freely. The geometry of the set \( \{z:\mathcal U(z;t)<0 \} \) is described further below in Theorems~\ref{geometry1} and~\ref{geometry2}.

\section{Structure of $\Ga_t$}\label{s:Structure}

The partition of the phase space into one- and two-cut regions as well as the structure of $\Ga_t$ and its dependence on $t$ has been heuristically described in \cite{AMAM1,AMAM2}. Even before these works, the phase diagram in Figure \ref{fig:C-curves}  was heuristically described in \cite{Dav1}. Mathematically rigorous description was provided in \cite{BlDeaY17}, but only in the one-cut region. Let us quickly recall the important notions from \cite{BlDeaY17}.

Denote by $\mathcal C$ the critical graph of an auxiliary quadratic differential
\begin{equation}
\label{aux-d}
-(1+1/s)^3 \mathrm ds^2,
\end{equation}
see Figure~\ref{fig:loops}(a). It was shown in \cite[Section~5]{BlDeaY17} that $\mathcal C$ consists of 5 critical trajectories emanating from $-1$ at the angles $2\pi k/5$, $k\in\{0,1,2,3,4\}$, one of them being $(-1,0)$,
\begin{figure}[ht!]
\centering
\subfigure[]{\includegraphics[scale=.5]{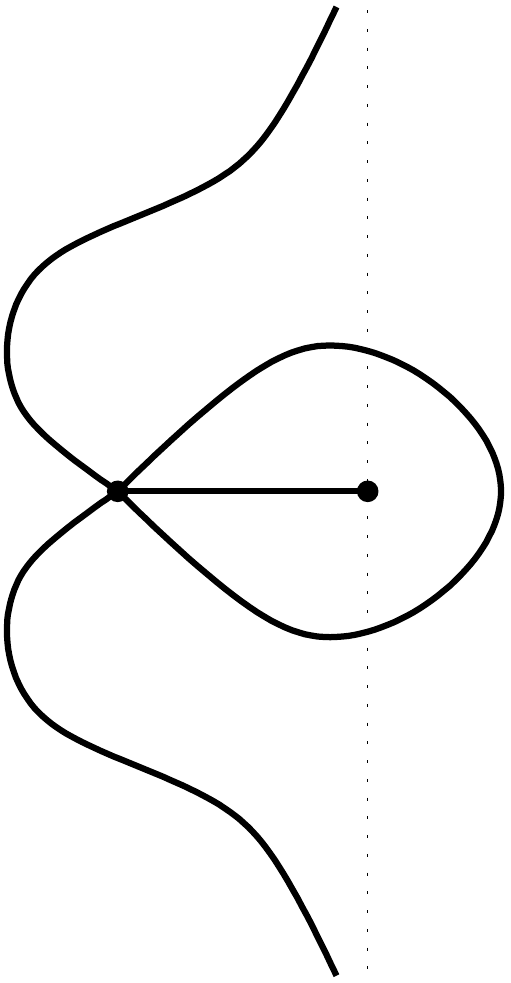}}
\begin{picture}(0,0)
\put(-23,62){$0$}
\put(-70,60){$-1$}
\end{picture}
\quad\quad
\subfigure[]{\includegraphics[scale=.8]{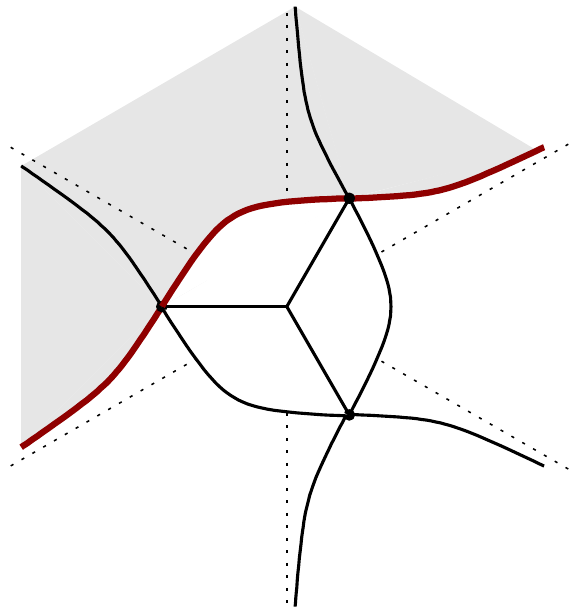}}
\begin{picture}(0,0)
\put(-142,65){$-\sqrt[3]{1/2}$}
\put(-35,90){$\Delta^a_\mathsf{birth}$}
\put(-90,80){$\Delta_\mathsf{split}$}
\put(-120,40){$\Delta^b_\mathsf{birth}$}
\put(-110,105){$\Omega_\mathsf{one-cut}$}
\end{picture}
\caption{\small Schematic representation of (a)  the critical graph $\mathcal C$; (b) the set $\Delta$ (solid lines) and the domain $\Omega_\mathsf{one-cut}$ (shaded region).}
\label{fig:loops}
\end{figure}
 other two forming a loop crossing the real line approximately at $0.635$, and the last two approaching infinity along the imaginary axis without changing the half-plane (upper or lower). Given $\mathcal C$, define
\[
\Delta:=\big\{x:~2x^3\in\mathcal C\big\}.
\]
Further, put $\Omega_\mathsf{one-cut}$ to be the shaded region on Figure~\ref{fig:loops}(b) and set
\[
\partial \Omega_\mathsf{one-cut} = \Delta^b_\mathsf{birth}\cup \big\{-2^{-1/3}\big\} \cup \Delta_\mathsf{split} \cup \big\{e^{\pi\mathrm i/3}2^{-1/3}\big\} \cup \Delta^a_\mathsf{birth},
\]
where $\Delta_\mathsf{split}$ connects $-2^{-1/3}$ and $e^{\pi\mathrm i/3}2^{-1/3}$, $\Delta^b_\mathsf{birth}$ extends to infinity in the direction of the angle $7\pi/6$ while $\Delta^a_\mathsf{birth}$ extends to infinity in the direction of the angle $\pi/6$. Let
\[
t(x) := (x^3-1)/x
\]
and set
\begin{equation}
\label{ts5}
\left\{
\begin{array}{l}
t_\mathsf{cr} := 3\cdot2^{-2/3} = t\big(-2^{-1/3}\big), \medskip \\ 
O_\mathsf{one-cut}:=t(\Omega_\mathsf{one-cut}), \medskip \\
C_\mathsf{split} := t\big(\Delta_\mathsf{split}\big), \quad C^b_\mathsf{birth} := t\big(\Delta^b_\mathsf{birth}\big), \quad C^a_\mathsf{birth} := t\big(\Delta^a_\mathsf{birth}\big), \medskip \\
S:=(t_\mathsf{cr},\infty), \quad e^{2\pi\mathrm i/3}S := \big\{z:~e^{-2\pi\mathrm i/3}z\in S\big\},
\end{array}
\right.
\end{equation}
see Figure~\ref{fig:C-curves}. 
\begin{figure}[ht!]
\centering
\includegraphics[scale=1.2]{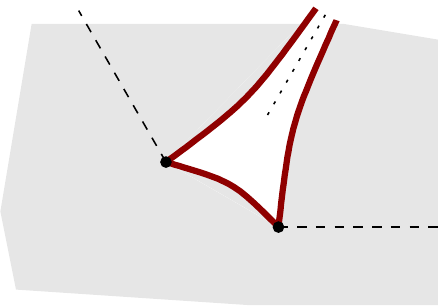}
\begin{picture}(0,0)
\put(-64,18){$t_\mathsf{cr}$}
\put(-140,47){$e^{2\pi\mathrm i/3}t_\mathsf{cr}$}
\put(-53,55){$C_\mathsf{birth}^b$}
\put(-95,77){$C_\mathsf{birth}^a$}
\put(-92,33){$C_\mathsf{split}$}
\put(-20,18){$S$}
\put(-153,80){$e^{2\pi\mathrm i/3}S$}
\put(-135,15){$O_\mathsf{one-cut}$}
\end{picture}
\caption{\small Domain $O_\mathsf{one-cut}$ (shaded region); $\partial O_\mathsf{one-cut}$ consisting of the open bounded arc $C_\mathsf{split}$, two open semi-unbounded arcs $C_\mathsf{birth}^a$ and $C_\mathsf{birth}^b$, and two points $t_\mathsf{cr}$ and $e^{2\pi\mathrm i/3}t_\mathsf{cr}$; the semi-unbounded open horizontal rays $S$ and $e^{2\pi\mathrm i/3}S$ (dashed lines).}
\label{fig:C-curves}
\end{figure}
The function $t(x)$ is holomorphic in $\Omega_\mathsf{one-cut}$ with non-vanishing derivative there. It maps $\Omega_\mathsf{one-cut}$ onto $O_\mathsf{one-cut}$ in a one-to-one fashion. Hence, the inverse map $x(t)$ exists and is holomorphic. 
\begin{notation}
\label{Gamma-arcs}
Below, we adopt the following convention: $\Ga(z_1,z_2)$ (resp. $\Ga[z_1,z_2]$) stands for the trajectory or orthogonal trajectory (resp. the closure of) of the differential \( -Q(z;t)\dd z^2 \) connecting $z_1$ and $z_2$, oriented from $z_1$ to $z_2$, and $\Ga\big(z,e^{\mathrm i\theta}\infty\big)$ (resp. $\Ga\big(e^{\mathrm i\theta}\infty,z\big)$) stands for the orthogonal trajectory ending at $z$, approaching infinity at the angle $\theta$, and oriented away from $z$ (resp. oriented towards $z$).\footnote{This notation is unambiguous as the corresponding trajectories are unique for polynomial differentials as follows from Teichm\"uller's lemma.}
\end{notation}
\begin{figure}[ht!]
\subfigure[]{\includegraphics[scale=.2]{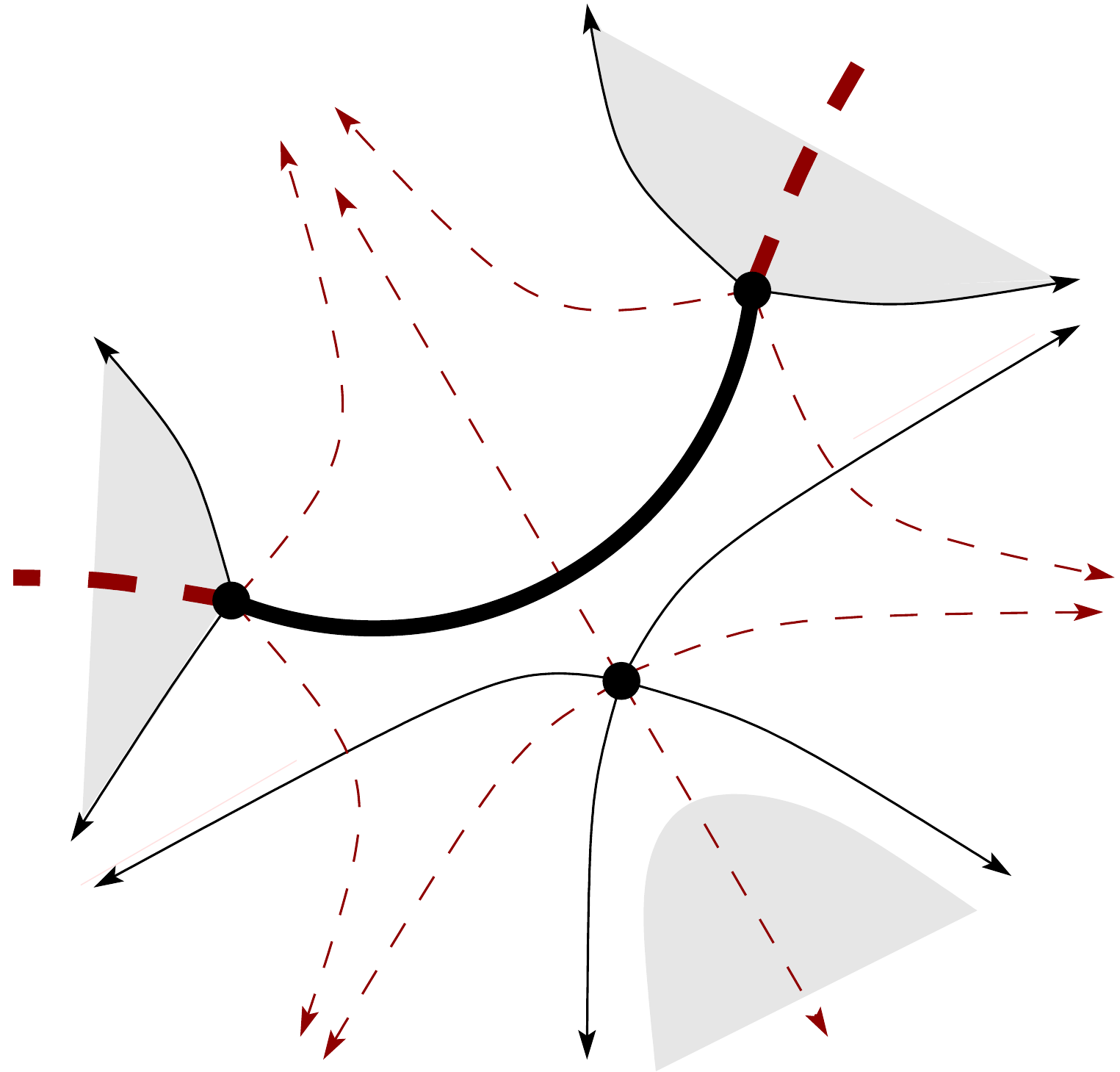}}
\subfigure[]{\includegraphics[scale=.2]{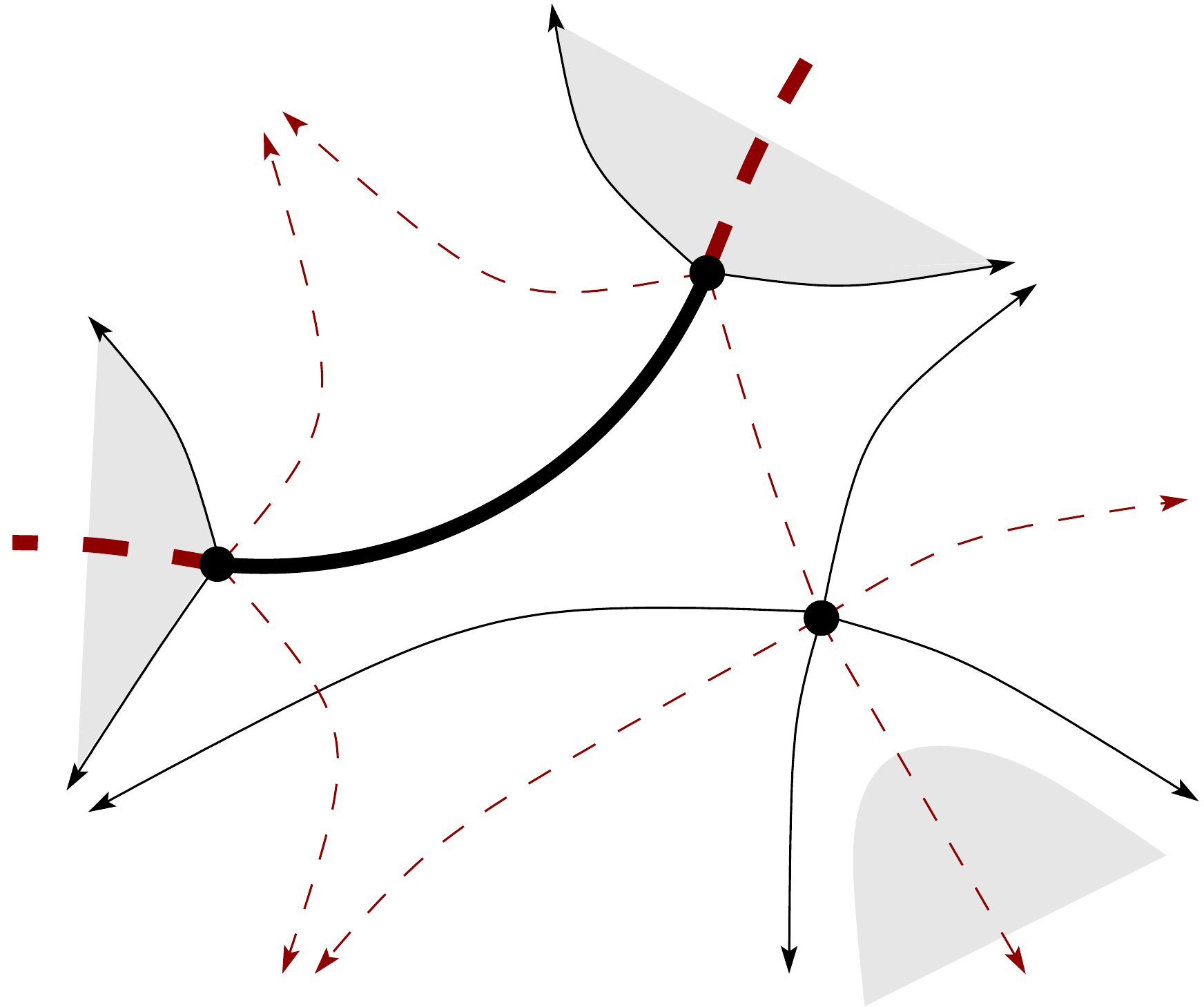}}
\subfigure[]{\includegraphics[scale=.19]{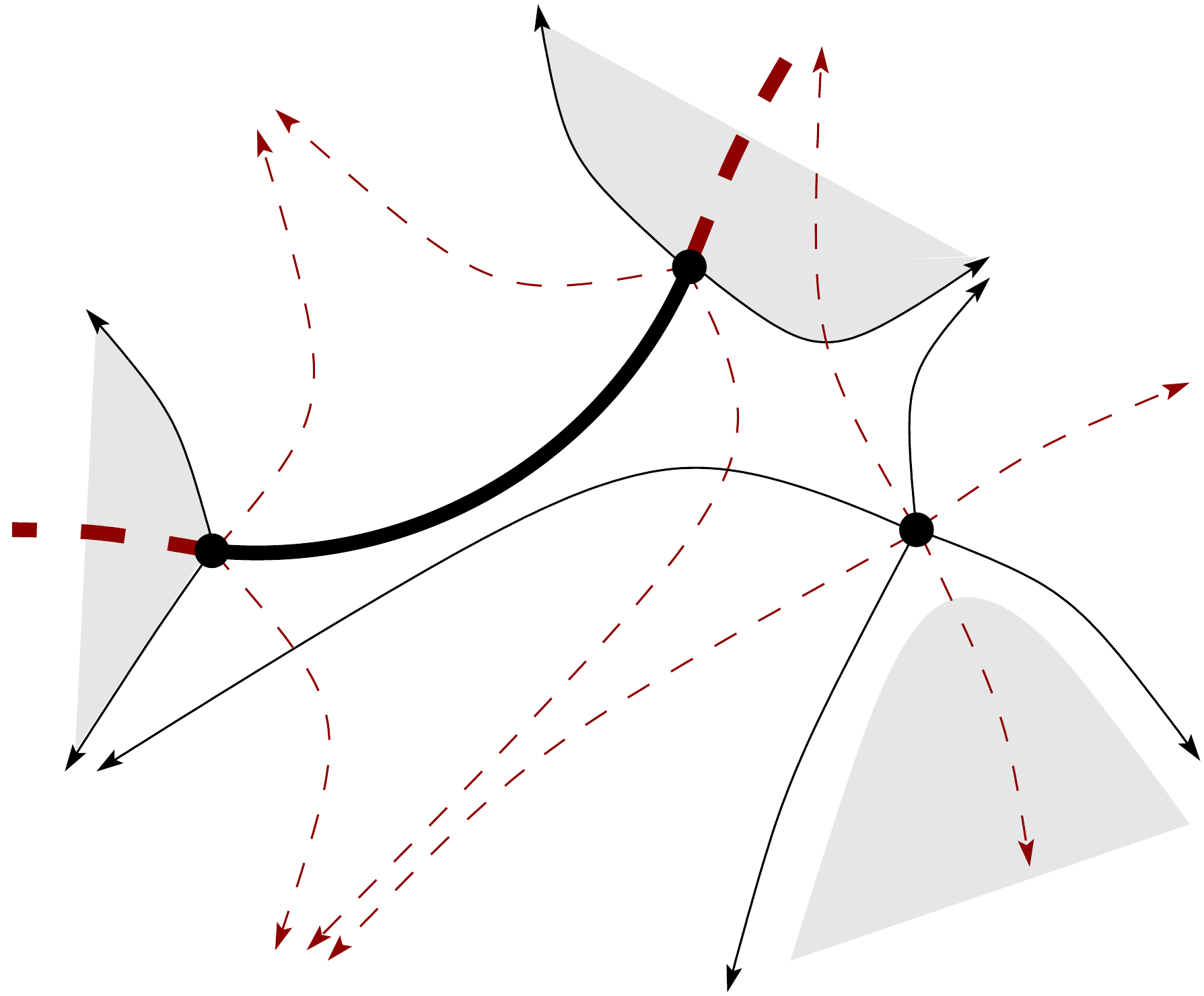}}
\subfigure[]{\includegraphics[scale=.18]{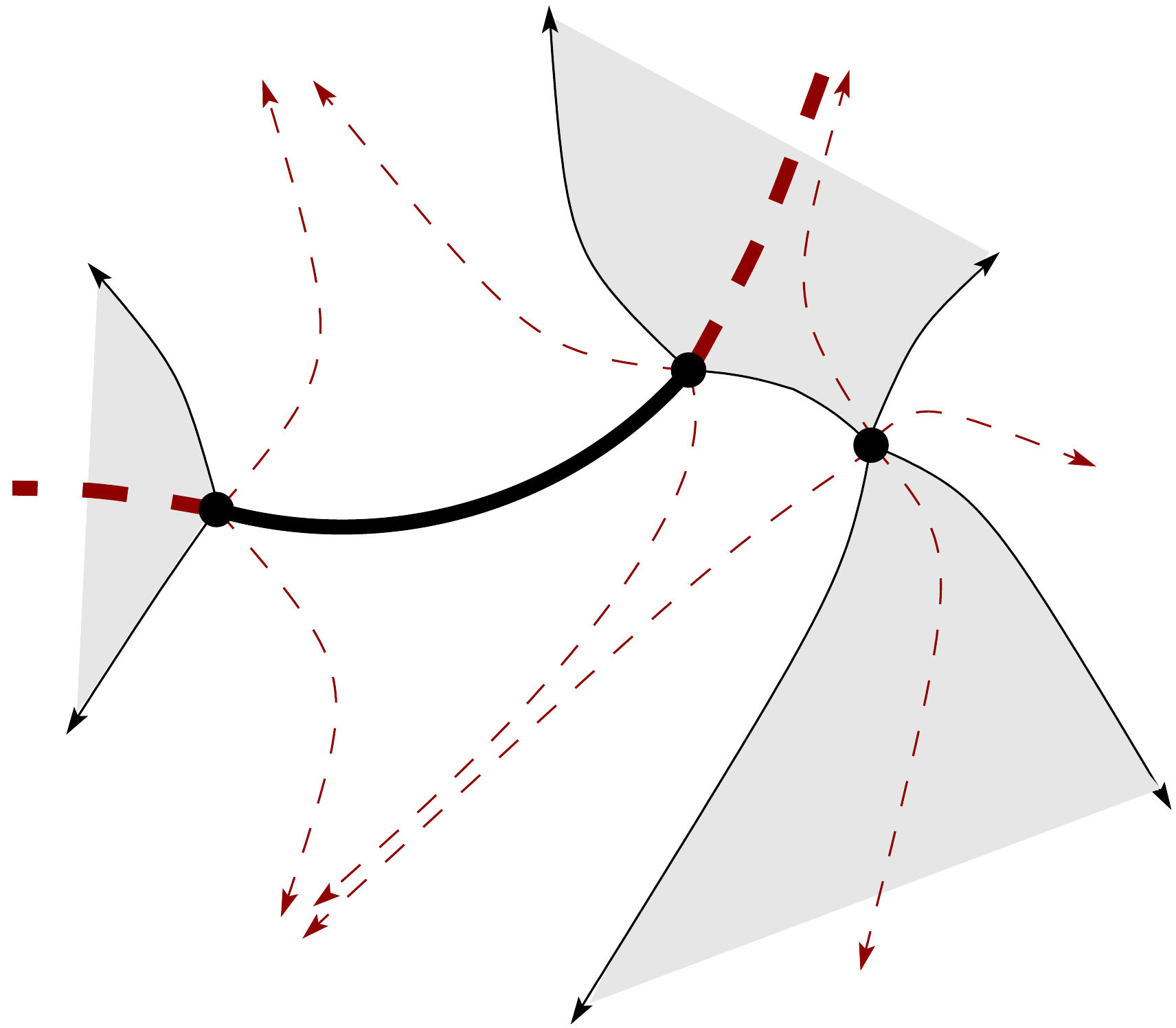}}\newline
\subfigure[]{\includegraphics[scale=.21]{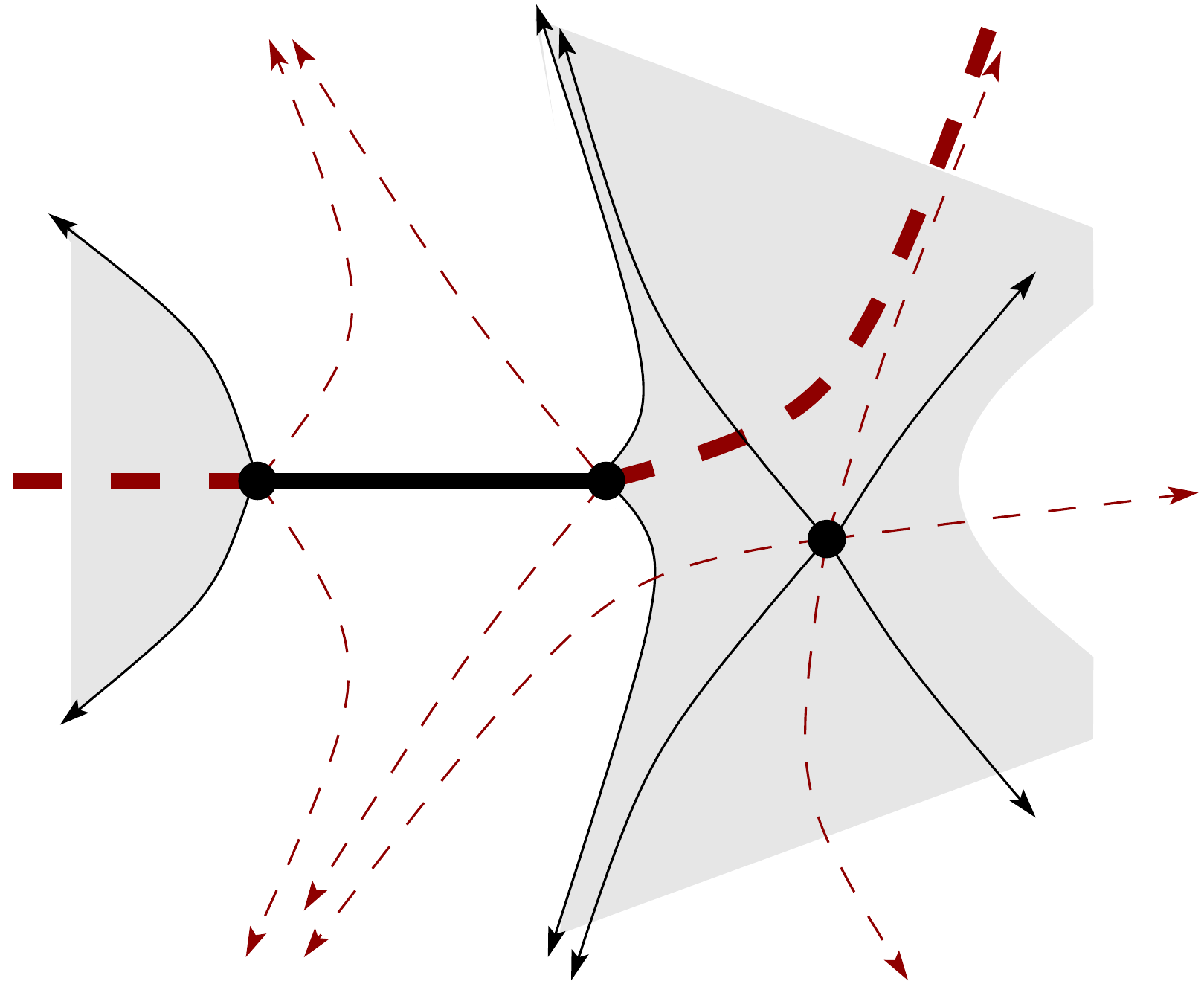}}
\subfigure[]{\includegraphics[scale=.21]{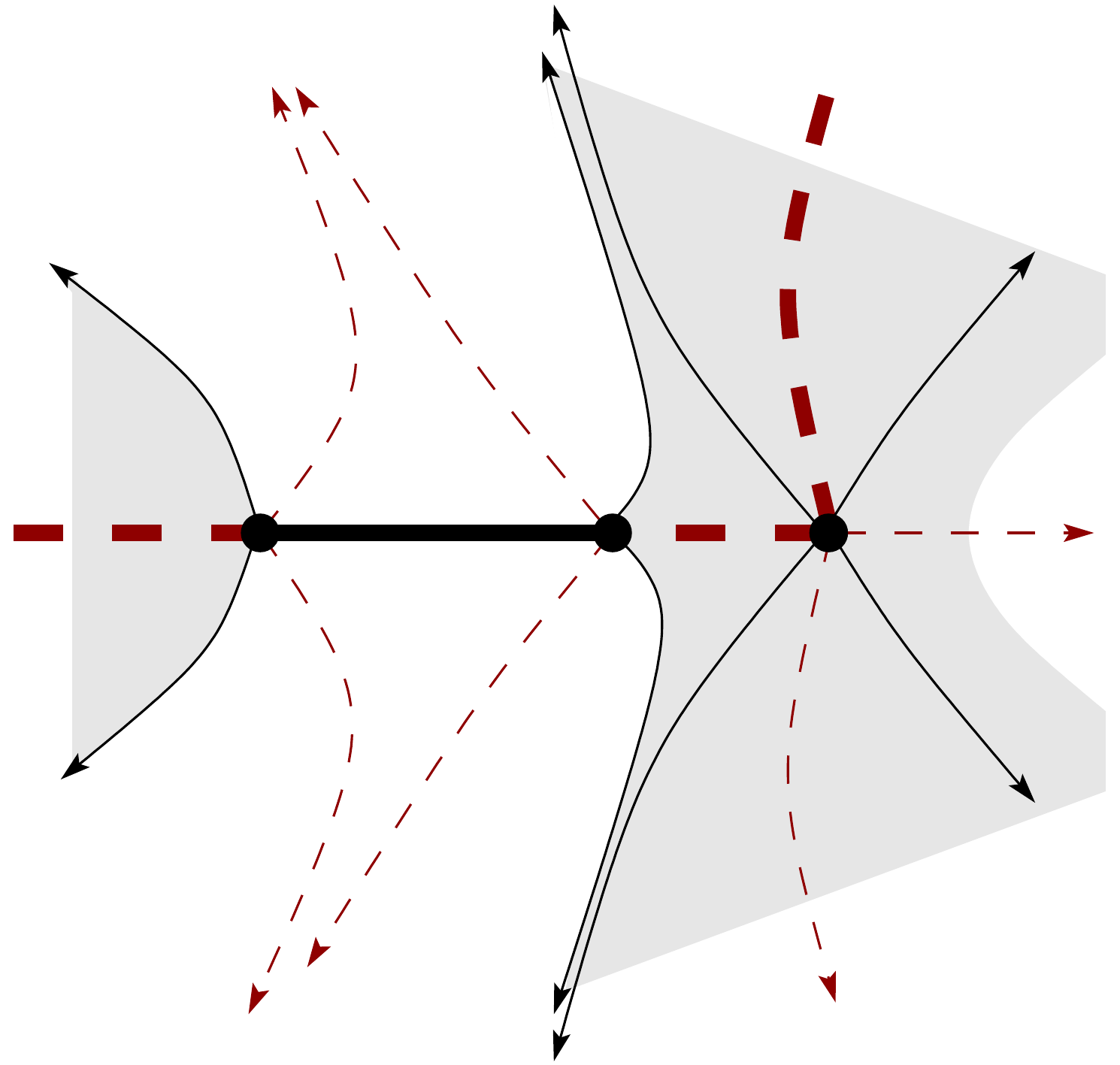}}
\subfigure[]{\includegraphics[scale=.21]{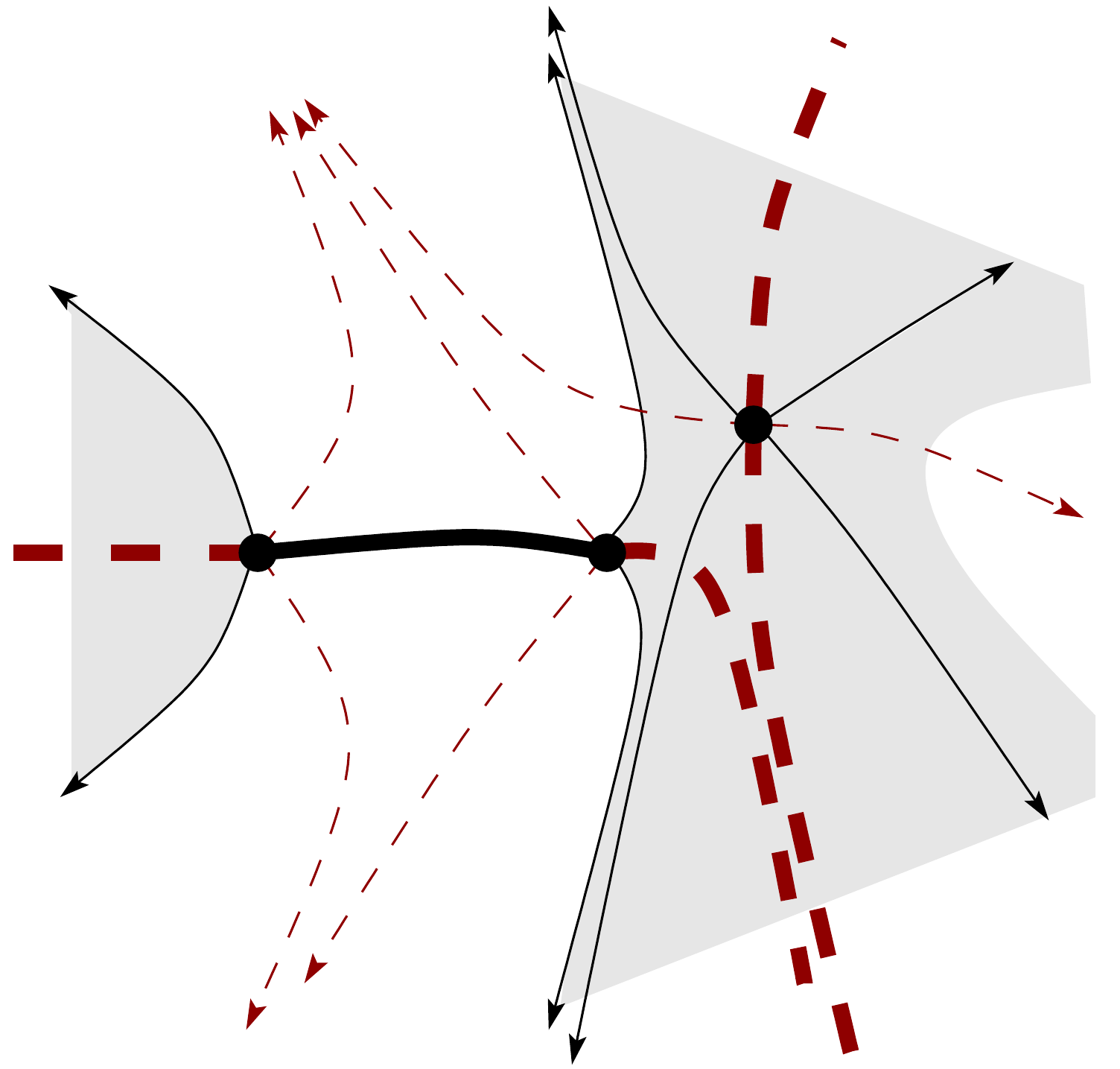}}
\caption{\small Schematic representation of the critical (solid) and critical orthogonal (dashed) graphs of $-Q(z;t)\mathrm dz^2$ when $t\in O_\mathsf{one-cut}$. The bold curves represent the preferred S-curve $\Ga_t$. Shaded region is the set $\{\mathcal U(z;t)<0\}$.}
 \label{s-curves1}
 \end{figure} 

The following theorem has been proven in \cite[Theorem~3.2]{BlDeaY17} and it describes the geometry of \( \Ga_t \) when $t\in\overline{O}_\mathsf{one-cut}$.

\begin{theorem}
\label{geometry1}
Let $\mu_t$ and $Q(z;t)$ be as in Theorem~\ref{fundamental}, and let $J_t=\mathsf{supp}(\mu_t)$. When $t\in\overline{O}_\mathsf{one-cut}$, the polynomial $Q(z;t)$ is of the form
\begin{equation}
\label{ts1}
Q(z;t)=\frac14(z-a(t))(z-b(t))(z-c(t))^2
\end{equation}
with $a(t)$, $b(t)$, and $c(t)$  given by
\begin{equation}
\label{ts6}
\left\{
\begin{array}{lll}
a(t) &:=& x(t)-\mathrm i\sqrt2/\sqrt x(t), \medskip \\
b(t) &:=& x(t)+\mathrm i\sqrt2/\sqrt x(t), \medskip \\
c(t) &:=& -x(t),
\end{array}
\right.
\end{equation}
where $\sqrt x(t)$  is the branch holomorphic in $O_\mathsf{one-cut}$ satisfying $\sqrt x(0)=e^{\pi\mathrm i/3}$. The set $J_t$ consists of a single arc and
\begin{itemize}
\item[(I)] if $t\in O_\mathsf{one-cut}$, then $J_t=\Ga[a,b]$ and an S-curve $\Ga_t\in\mathcal T$ can be chosen as
\begin{itemize}
\smallskip
\item[(a)] $\Ga\big(e^{\pi\mathrm i}\infty,a\big)\cup J_t \cup \Ga\big(b,e^{\pi\mathrm i/3}\infty\big)$ when $t$ belongs to the connected component bounded by $S\cup C_\mathsf{split}\cup e^{2\pi\mathrm i/3}S$, see Figure~\ref{s-curves1}(a--e);
\smallskip
\item[(b)] $\Ga\big(e^{\pi\mathrm i}\infty,a\big) \cup J_t\cup \Ga(b,c) \cup \Ga\big(c,e^{\pi\mathrm i/3}\infty\big)$ when $t\in S$, see Figure~\ref{s-curves1}(f);
\smallskip
\item[(c)] $\Ga\big(e^{\pi\mathrm i}\infty,c\big) \cup \Ga(c,a) \cup J_t \cup  \Ga\big(b,e^{\pi\mathrm i/3}\infty\big)$ when $t\in e^{2\pi\mathrm i/3}S$;
\smallskip
\item[(d)] $\Ga\big(e^{\pi\mathrm i}\infty,a\big) \cup J_t\cup \Ga\big(b,e^{-\pi\mathrm i/3}\infty\big) \cup \Ga\big(e^{-\pi\mathrm i/3}\infty,c\big) \cup \Ga\big(c,e^{\pi\mathrm i/3}\infty\big)$ when $t$ belongs to the connected component bounded by $S\cup C_\mathsf{birth}^b$, see Figure~\ref{s-curves1}(g);
\smallskip
\item[(e)] $\Ga\big(e^{\pi\mathrm i}\infty,c\big) \cup \Ga\big(c,e^{-\pi\mathrm i/3}\big) \cup \Ga\big(e^{-\pi\mathrm i/3}\infty,a\big) \cup J_t \cup \Ga\big(b,e^{\pi\mathrm i/3}\infty\big)$ when $t$ belongs to the connected component bounded by $e^{2\pi\mathrm i/3}S\cup C_\mathsf{birth}^a$.
\end{itemize}
\item[(II)] if $t=t_\mathsf{cr}$ (resp. $t=e^{2\pi i/3}t_\mathsf{cr})$, then $J_t=\Ga[a,b]$, $c$ coincides with $b$ (resp. $a$), and an S-curve $\Ga_t\in\mathcal T$ can be chosen as in Case I(a), see Figure~\ref{s-curves2}(a).
\item[(III)] if $t\in C_\mathsf{split}$, then $J_t=\Ga[a,c]\cup\Ga[c,b]$ and an S-curve $\Ga_t\in\mathcal T$ can be chosen as in Case I(a), see Figure~\ref{s-curves2}(b).
\item[(IV)] if  $t\in C_\mathsf{birth}^b$ (resp. $t\in C_\mathsf{birth}^a$), then $J_t=\Ga[a,b]$ and an S-curve $\Ga_t\in\mathcal T$ can be chosen as in Case I(d) (resp. Case I(e)), see Figure~\ref{s-curves2}(c).
\end{itemize}
\end{theorem}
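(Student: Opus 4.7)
The starting point is Theorem \ref{fundamental}, which produces the equilibrium measure \(\mu_t\), its support \(J_t\), and the polynomial \(Q(z;t)\) of degree \(4\). Since \(V'(z;t)/2 = \tfrac12(-z^2 + t)\) and the Cauchy transform of \(\mu_t\) is \(1/z + \mathcal{O}(z^{-2})\) at infinity, the distinguished branch \(Q^{1/2}(z;t)\) from Theorem \ref{fundamental} has the expansion
\begin{equation*}
Q^{1/2}(z;t) = \tfrac12 z^2 - \tfrac{t}{2} + \tfrac{1}{z} + \mathcal{O}(z^{-2}), \quad z \to \infty.
\end{equation*}
In the one-cut regime, \(J_t\) is a single arc whose two endpoints must be odd-order zeros of \(Q\); since \(\deg Q = 4\), the only possibility consistent with the above expansion is that \(Q\) has two simple zeros and one double zero, so it takes the form \eqref{ts1} (a quadruple zero is ruled out by matching the \(z^{-1}\) coefficient).

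Writing \(Q(z;t) = \tfrac14(z-a)(z-b)(z-c)^2\) and matching the coefficients of \(z^3\), \(z^2\), and \(z^1\) with the square of the expansion above yields \(a+b+2c = 0\), \(ab + 2c(a+b) + c^2 = -2t\), and \((a+b)c^2 + 2abc = -4\). Eliminating \(a+b\) and \(ab\) reduces the system to \(c^3 - tc + 1 = 0\), and the substitution \(c = -x\) recovers \(t = t(x) = (x^3-1)/x\); the remaining relations then force \(a+b = 2x\) and \(ab = x^2 + 2/x\), giving the formulas \eqref{ts6}. The map \(t(x)\) has critical points precisely where \(t'(x) = 0\), i.e.\ at \(2x^3 = -1\), which accounts for the role of \(-2^{-1/3}\) and its two rotations, and explains why the auxiliary differential \(-(1+1/s)^3\dd s^2\) (with \(s = 2x^3\)) controls the discriminant locus. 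The domain \(\Omega_\mathsf{one-cut}\) is chosen as the sheet on which \(x(t)\) is single-valued and the associated critical graph of \(-Q(z;t)\dd z^2\) supports the one-cut ansatz; its image is \(O_\mathsf{one-cut}\), and its boundary pulls back the critical graph \(\mathcal{C}\) to \(\Delta\).

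To handle the structure of \(\Gamma_t\), I would inspect the local critical graph of \(-Q(z;t)\dd z^2\): from each simple zero \(a, b\) three critical trajectories emanate at equal angles \(2\pi/3\); from the double zero \(c\) four trajectories emanate at angles \(\pi/2\); and at the pole of order \(8\) at infinity six asymptotic trajectory directions \(\arg z = \pi k/3\) alternate with six orthogonal directions \(\arg z = \pi/6 + \pi k/3\). By Theorem \ref{fundamental}(4), \(J_t\) is a union of short critical trajectories: in Case I(a) this is precisely \(\Gamma[a,b]\), while in Case III the double zero \(c\) lies \emph{on} the short trajectory joining \(a\) to \(b\) and splits \(J_t\) into \(\Gamma[a,c]\cup\Gamma[c,b]\). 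The extensions of \(J_t\) to infinity in Cases I(b)--(e) and IV are dictated by the requirement \(\Gamma_t \setminus J_t \subset \{\mathcal{U}(z;t) < 0\}\) coming from \eqref{em2}: as \(t\) leaves the central component, the ``direct'' orthogonal trajectories from \(a\) or \(b\) to the required asymptotic sectors cross into \(\{\mathcal{U} > 0\}\), forcing a detour through the double zero \(c\) and its orthogonal trajectories.

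The main obstacle is global: verifying, for every \(t \in \overline{O}_\mathsf{one-cut}\), that the critical graph of \(-Q(z;t)\dd z^2\) has the prescribed topology (no hidden short critical trajectories in the interior of \(O_\mathsf{one-cut}\), the chosen trajectories and orthogonal trajectories escape to the claimed asymptotic sectors, and \(\mathcal{U} < 0\) along the extension). My plan is to establish this by a continuity argument: starting from a sample point in each connected component of \(O_\mathsf{one-cut}\) where the configuration can be checked directly, I deform \(t\) continuously and note that the topology of the critical graph changes only when a pair of critical trajectories collides, which by analysis of the auxiliary differential \(-(1+1/s)^3\dd s^2\) happens precisely on the strata \(C_\mathsf{split}\), \(C^a_\mathsf{birth}\), \(C^b_\mathsf{birth}\), \(S\), and \(e^{2\pi\ic/3}S\). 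Uniqueness of \(\mu_t\) from Theorem \ref{fundamental}(2) then certifies that the chosen short trajectories support \(\mu_t\), and rigidity of polynomial quadratic differentials (Teichm\"uller's lemma) certifies that the trajectories and orthogonal trajectories from \(c\) to infinity in the degenerate cases are uniquely determined, completing the description of \(\Gamma_t\).
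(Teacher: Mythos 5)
First, a structural point: this paper does not actually prove Theorem~\ref{geometry1}; it is imported verbatim from \cite[Theorem~3.2]{BlDeaY17}, so the only in-paper material to compare with is the closely related analysis of Section~\ref{s:pr-geom}. Measured against that, your algebraic step is correct and is essentially the computation in \eqref{ts2}--\eqref{ts4}: writing $Q(z;t)=\frac14(z-a)(z-b)(z-c)^2$, matching with $\big(\frac{z^2-t}{2}-\frac1z+\mathcal O(z^{-2})\big)^2$ gives your system, elimination gives $c^3-tc+1=0$, i.e.\ $x^3-tx-1=0$ with $c=-x$, and then $a+b=2x$, $ab=x^2+2/x$ yield \eqref{ts6}. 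That part is fine, but note it is conditional: it only says \emph{if} the one-cut form occurs, then the endpoints are as in \eqref{ts6}.

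The genuine gaps are in the global part, which is where all the content of the theorem lies. (i) You assume from the outset that ``in the one-cut regime $J_t$ is a single arc with endpoints at odd-order zeros,'' but the theorem asserts that for $t\in\overline O_\mathsf{one-cut}$ the one-cut form \eqref{ts1} actually holds; nothing in your argument excludes the four-simple-zero (two-cut) alternative allowed by Theorem~\ref{fundamental}. The way this is done in the framework of this paper (Sections~\ref{ss:52}--\ref{ss:54}, in the reverse direction) is to construct from the candidate $Q$ an explicit measure via \eqref{em6} on the candidate short trajectories, verify it is a unit positive measure satisfying the variational conditions \eqref{em2} and the S-property \eqref{em4} on a contour of class $\mathcal T$ with $\mathcal U\le 0$ off $J_t$, and only then invoke the uniqueness in Theorem~\ref{fundamental}(2); your appeal to uniqueness skips this verification, and your appeal to the max-energy property \eqref{em3} is absent. (ii) The ``topology changes only when trajectories collide, and collisions happen precisely on $C_\mathsf{split}$, $C^{a,b}_\mathsf{birth}$, $S$, $e^{2\pi\ic/3}S$'' step is asserted, not proved. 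The actual mechanism is the sign of $\mathcal U(c;t)=\mathcal U(-x;t)$ at the double zero, and the identity $\mathcal U(-x;t)=\re\big(\frac23\int_{-1}^{2x^3}(1+1/s)^{3/2}\dd s\big)$ is what ties the phase boundaries to trajectories of $-(1+1/s)^3\dd s^2$ emanating from $-1$; your phrase that this differential ``controls the discriminant locus'' conflates two different things, since the discriminant of $x^3-tx-1$ vanishes only at the three points $t=3\cdot2^{-2/3}e^{2\pi\ic k/3}$, whereas $C_\mathsf{split}$ and $C_\mathsf{birth}^{a,b}$ are level-set curves of $\mathcal U(-x;t)$, not discriminant curves. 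Relatedly, the transitions across $S$ and $e^{2\pi\ic/3}S$ (cases I(b)--(e)) involve no change of the critical graph of $-Q\dd z^2$ at all, only a change of which orthogonal trajectories one must use to stay inside $\{\mathcal U<0\}$, so a ``collision of critical trajectories'' criterion would not even detect them. (iii) The branch selection among $x_0,x_1,x_2$ of \eqref{ts7} in the different components, which determines whether $c$ lies to the left or right of $J_t$ and hence distinguishes I(a) from I(d)/(e) and IV, is not addressed beyond ``choose the sheet on which $x(t)$ is single-valued.'' As written, the proposal is a reasonable plan in the spirit of \cite{BlDeaY17}, but these three items are exactly the nontrivial work and are missing.
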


\begin{figure}[ht!]
\subfigure[]{\includegraphics[scale=.23]{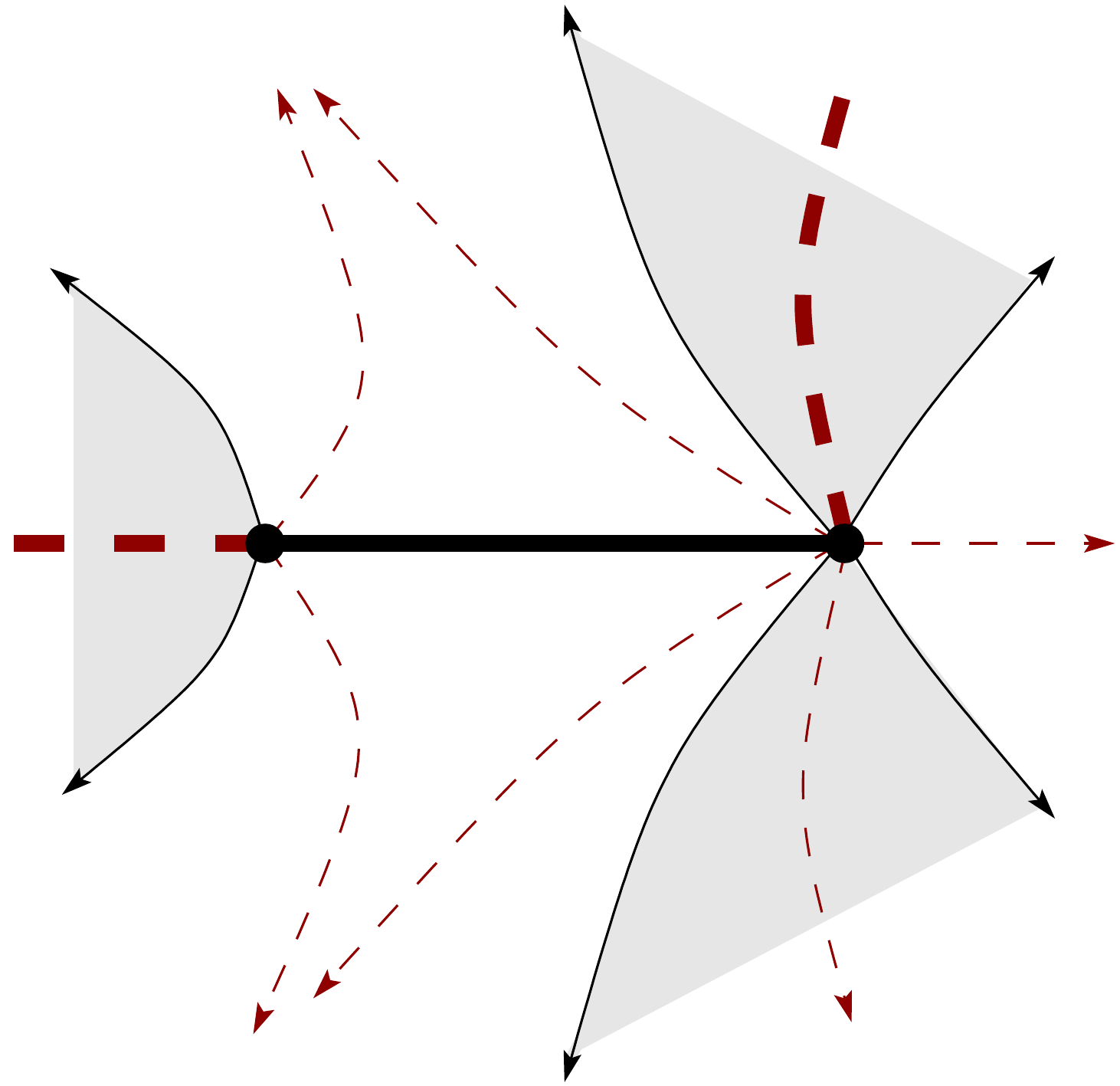}}
\subfigure[]{\includegraphics[scale=.23]{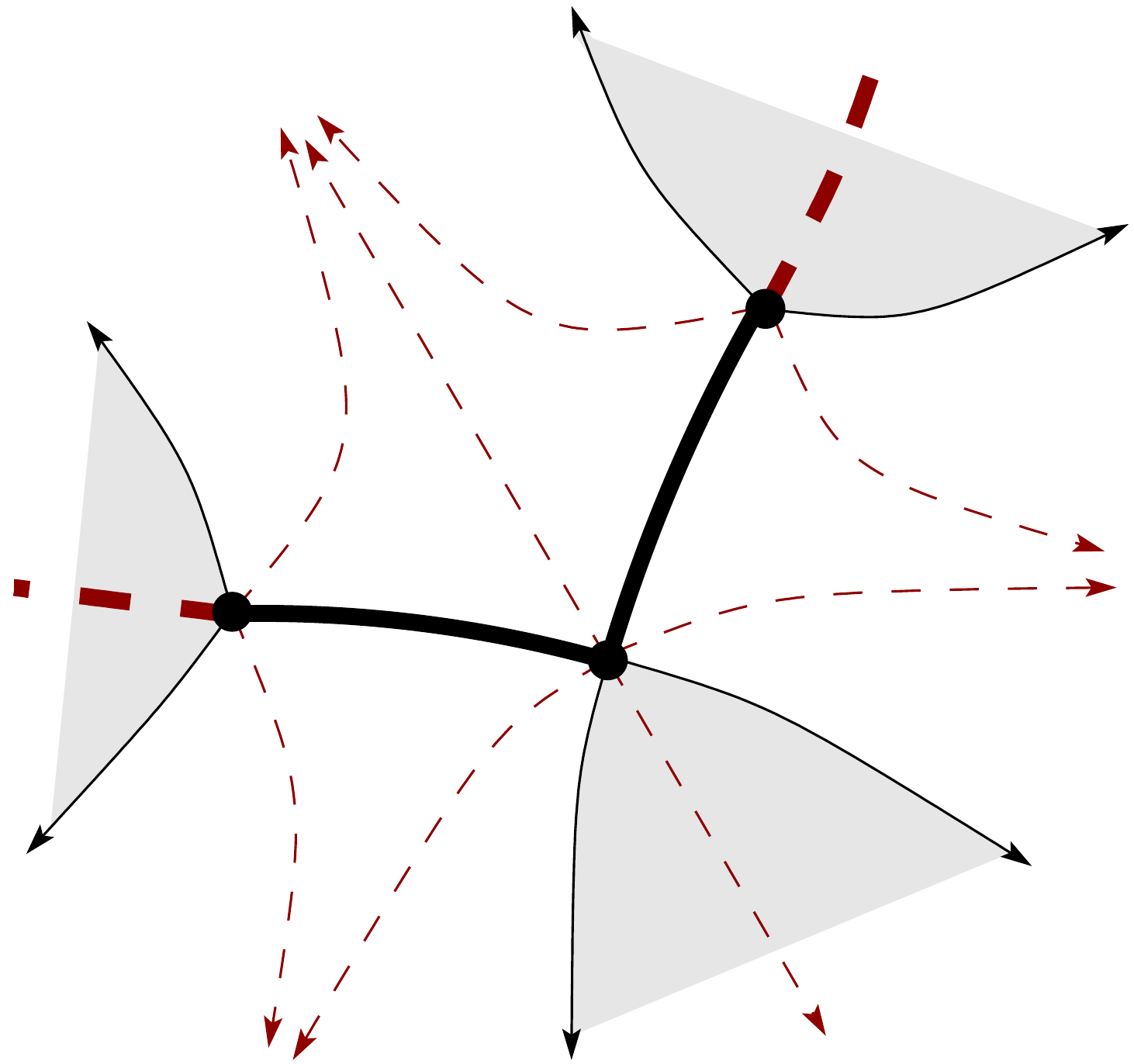}}
\subfigure[]{\includegraphics[scale=.2]{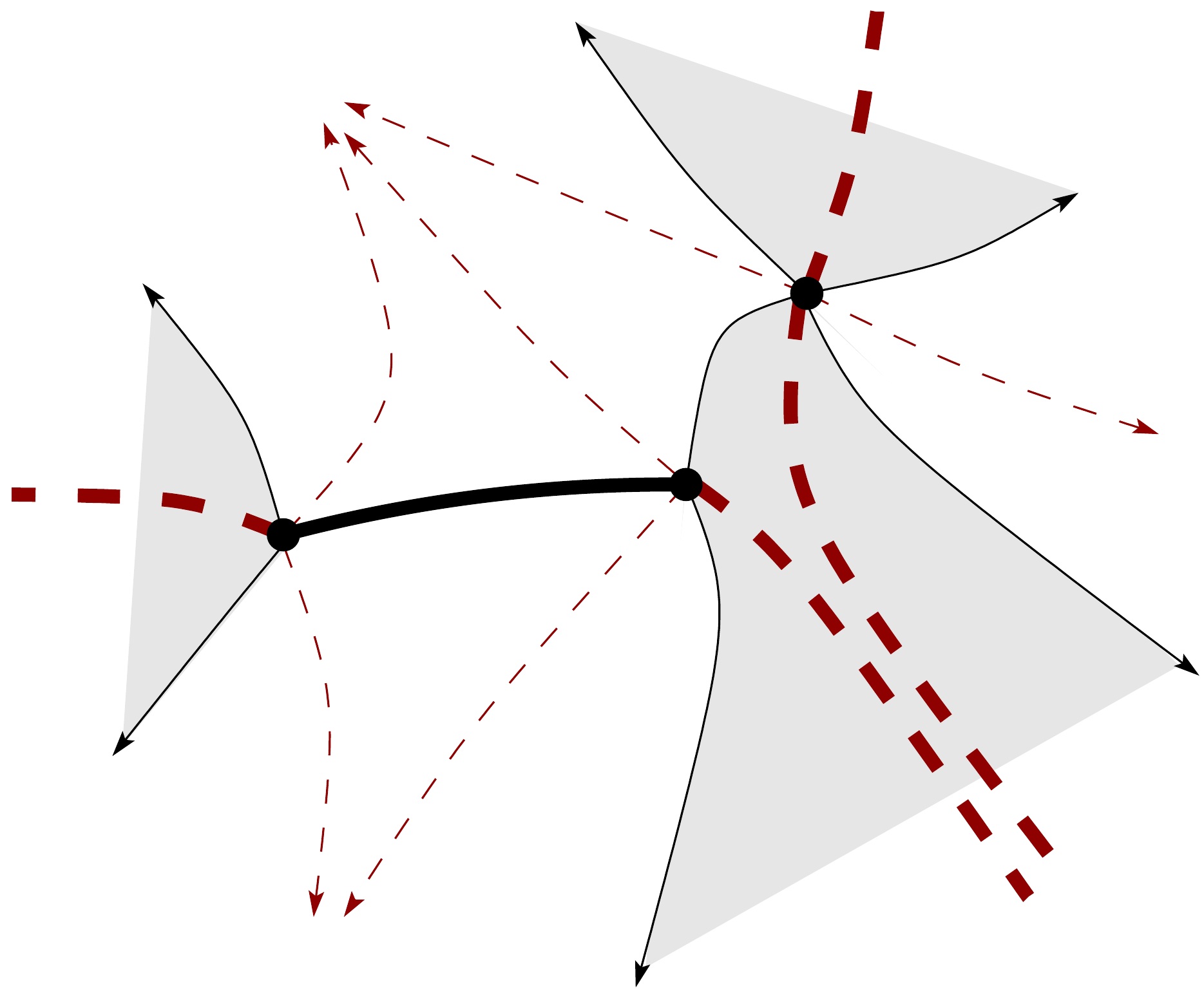}}
\caption{\small This is a continuation of Figure~\ref{s-curves1} for the case $t\in\partial O_\mathsf{one-cut}$.}
 \label{s-curves2}
 \end{figure}

Now, let \( O_\mathsf{two-cut} := \C\setminus \overline O_\mathsf{one-cut} \). Then the following theorem holds.

\begin{theorem}
\label{geometry2}
Let $\mu_t$ and $Q(z;t)$ be as in Theorem~\ref{fundamental}, $J_t=\mathsf{supp}(\mu_t)$. When $t\in O_\mathsf{two-cut}$, the polynomial $Q(z;t)$ is of the form
\begin{equation}
\label{ts8}
Q(z;t)=\frac14(z-a_1(t))(z-b_1(t))(z-a_2(t))(z-b_2(t))
\end{equation}
with $a_1(t)$, $b_1(t)$, \( a_2(t) \), and $b_2(t)$ all distinct. The real and imaginary parts of \( a_i(t),b_i(t) \) are real analytic functions of \( \re(t) \) and \( \im(t) \) when \( t \in O_\mathsf{two-cut} \); however, at no point of \( O_\mathsf{two-cut} \) any of the functions  \( a_i(t),b_i(t) \) is analytic. The S-curve \( \Gamma_t \) can be chosen as
\[
\Ga\big(e^{\pi\ic}\infty,a_1(t)\big) \cup J_{t,1} \cup \Ga\big(b_1(t),e^{-\pi\ic/3}\infty\big) \cup \Ga\big(e^{-\pi\ic/3}\infty,a_2(t)\big) \cup J_{t,2} \cup \Ga\big(b_2(t),e^{\pi\ic/3}\infty\big), 
\]
where $J_t=J_{t,1}\cup J_{t,2}$ and \( J_{t,i} := \Gamma\big[a_i(t),b_i(t)\big] \), \( i\in\{1,2\} \), see Figure~\ref{fig:tc} (this also explains how we choose the labeling of the zeros of \( Q(z;t) \) in the considered case). Moreover, it holds that
\begin{equation}
\label{ts8a}
a_1(t),b_1(t)\to a(t^*), \quad b_1(t),a_2(t) \to c(t^*), \qandq a_2(t),b_2(t) \to b(t^*)
\end{equation}
as \( t\to t^* \) with \( t^*\in C_\mathsf{birth}^a \cup \big\{e^{2\pi i/3}t_\mathsf{cr}\} \), \( t^*\in C_\mathsf{split} \), and \( t^*\in C_\mathsf{birth}^b \cup \big\{t_\mathsf{cr}\} \), respectively.
\end{theorem}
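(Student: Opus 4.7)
My plan has three main tasks: derive the explicit form of $Q(z;t)$ and set up the implicit equations for its free parameter, apply the real-analytic implicit function theorem, and finally describe the geometry of $\Ga_t$ and the limits at $\partial O_\mathsf{one-cut}$. For the form, expanding $Q^{1/2}(z;t) = V'(z;t)/2 - \int \dd\mu_t(s)/(z-s)$ at infinity using $V'(z;t)=-z^2+t$ and $\int\dd\mu_t=1$, and requiring all negative powers of $z$ in $Q$ to vanish, yields
\[
Q(z;t) = \tfrac{1}{4}\bigl(z^4 - 2tz^2 + 4z + 4q\bigr),\qquad q = t^2/4 + m_1(t),
\]
where $m_1(t) := \int s\,\dd\mu_t(s)$. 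In $O_\mathsf{two-cut}$ the support $J_t$ consists of two short trajectories of $-Q(z;t)\dd z^2$, so the four zeros of $Q$ are simple and pairwise distinct (any coalescence would produce the one-cut form \eqref{ts1}); I label them so that $J_{t,i}=\Ga[a_i(t),b_i(t)]$, $i\in\{1,2\}$.

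The central step is real-analyticity. The S-property and positivity of $\dd\mu_t$ on each cut give two real equations on $q$: equality of the Robin constants on $J_{t,1}$ and $J_{t,2}$, which using $g(z):=V(z;t)-2\int\log(z-s)\dd\mu_t(s)$ with $g'(z)=2Q^{1/2}(z;t)$ reduces to $F_B := \re\int_{b_1}^{a_2}Q^{1/2}\dd z=0$ along a path in $\C\setminus J_t$; and realness of the mass on $J_{t,1}$, $F_A := \re\int_{a_1}^{b_1}Q^{1/2}_+\dd z=0$. Both are real-analytic in $(\re q,\im q,\re t,\im t)$. Using $\partial_q Q\equiv 1$, a direct computation of the Jacobian gives
\[
\det\frac{\partial(F_A,F_B)}{\partial(\re q,\im q)} = -\tfrac{1}{4}\,\im\!\bigl(\overline{I_A}\,I_B\bigr),\qquad I_C := \oint_C \frac{\dd z}{Q^{1/2}(z;t)},
\]
where $A, B$ are the standard cycles on the elliptic Riemann surface $y^2=Q(z;t)$. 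Since $\dd z/Q^{1/2}$ is its unique (up to scalar) holomorphic differential and its modulus $\tau := I_B/I_A$ satisfies $\im\tau>0$ by the Riemann bilinear relations, this determinant is non-zero. The real-analytic implicit function theorem then produces $q$, and therefore each zero $a_i(t),b_i(t)$, as a real-analytic function of $(\re t,\im t)$ throughout $O_\mathsf{two-cut}$.

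For non-holomorphy, were $q$ holomorphic in $t$ the integrals $X_C(t):=\oint_C Q^{1/2}\dd z$ ($C\in\{A,B\}$) would be holomorphic functions of $t$ with vanishing real parts, hence imaginary constants; differentiation then forces $q'(t)=-\partial_tX_A/\partial_qX_A=-\partial_tX_B/\partial_qX_B$, which rearranges to $I_A\oint_B\eta=I_B\oint_A\eta$ with $\eta := z^2\dd z/Q^{1/2}$. By the Riemann bilinear relation, the combination $I_A\oint_B\eta-I_B\oint_A\eta$ equals $2\pi\ic$ times a sum of residues of $F\,\om$ at the two points above $\infty$ (with $F$ a local primitive of the second-kind differential $\eta$ and $\om = \dd z/Q^{1/2}$), and a direct residue computation shows this to be non-zero. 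Hence $\partial_{\bar t}q\ne 0$ on $O_\mathsf{two-cut}$, and the Cauchy--Riemann equations fail.

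The S-curve $\Ga_t$ is finally constructed by adjoining $J_{t,1}\cup J_{t,2}$ with three unbounded orthogonal trajectories of $-Q(z;t)\dd z^2$, whose existence and uniqueness come from Teichm\"uller's lemma and whose asymptotic directions $e^{\pi\ic}\infty,e^{-\pi\ic/3}\infty,e^{\pi\ic/3}\infty$ are forced by $Q^{1/2}(z;t)\sim z^2/2$ at infinity together with \eqref{cm2b}; this also pins down which pair is $(a_1,b_1)$ versus $(a_2,b_2)$. The limits \eqref{ts8a} follow by continuity: the limiting polynomial $Q(z;t^*)$ at $t^*\in\partial O_\mathsf{one-cut}$ coincides with the one-cut polynomial \eqref{ts1} of Theorem~\ref{geometry1}, whose zeros are $a(t^*),b(t^*),c(t^*)$ with $c$ doubled, and matching the trajectory topology with Cases~(II)--(IV) there identifies which pair $(a_i,b_i)$ merges to each of $a,b,c$. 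The main obstacles throughout are the two period/Jacobian computations, both of which ultimately rest on Riemann bilinear relations for the elliptic Riemann surface $y^2=Q(z;t)$.
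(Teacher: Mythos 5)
Your real-analyticity and non-analyticity arguments are essentially sound and, in places, genuinely different from the paper's: parametrizing $Q(z;t)=\frac14(z^4-2tz^2+4z+4q)$ by the single free coefficient $q$ reduces the paper's $8\times8$ Jacobian computation in Section~\ref{ss:54} to a $2\times2$ one, and both hinge on the same fact ($\im$ of the ratio of the periods of $\dd z/Q^{1/2}$ is positive); your non-analyticity proof via the bilinear relation for $\eta=z^2\dd z/Q^{1/2}$ (the residue at the two points over infinity is indeed a nonzero constant) is more self-contained than the paper's, which instead deduces non-constancy of $\omega(t)$ from \eqref{ts8a}. However, there are genuine gaps. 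The most serious is the proof of \eqref{ts8} itself: $O_\mathsf{two-cut}$ is \emph{defined} as $\C\setminus\overline O_\mathsf{one-cut}$ via the map $t(x)$, not as the set of $t$ for which $\supp(\mu_t)$ has two components, so your sentence ``in $O_\mathsf{two-cut}$ the support consists of two short trajectories, hence the four zeros are distinct'' assumes exactly what must be proved. The paper spends Section~\ref{ss:52} on this: it supposes the one-cut form \eqref{ts1} with $x(t)$ one of the three branches of $x^3-tx-1=0$, and shows for each branch that the critical graph of $-Q(z;t)\dd z^2$ admits no contour of class $\mathcal T$ on which $\mathcal U(z;t)\le0$, contradicting \eqref{em2}. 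Relatedly, the chain structure of the short trajectories ($a_1$--$b_1$, $b_1$--$a_2$, $a_2$--$b_2$ rather than a threefold through one zero), which you need even to define your cycles and the labeling and which underlies the S-curve claim, is asserted; the paper rules out the alternative configurations in Section~\ref{ss:53} using strip-width arguments and the maximality property \eqref{em3}.

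A second gap is the identification step after the implicit function theorem: the IFT produces a locally unique real-analytic branch $q(t)$ solving $F_A=F_B=0$, but to conclude that the \emph{actual} endpoints are real-analytic you must show this branch coincides with the zeros of the true $Q(z;t)$ — either by first proving the endpoints depend continuously on $t$, or, as the paper does, by reconstructing from the branch a candidate polynomial $Q^*(z;t)$, verifying that the associated measure is a probability measure with the S-property supported on a contour in $\mathcal T$, and invoking the uniqueness in Theorem~\ref{fundamental}(2). The same issue affects your treatment of \eqref{ts8a}: ``the limits follow by continuity'' presupposes that $Q(\cdot;t)$ converges as $t\to t^*$; the paper first shows the zeros remain bounded, passes to subsequential limits, shows the limiting differential again produces an S-contour in $\mathcal T$, and only then identifies the limit with the one-cut data of Theorem~\ref{geometry1} by uniqueness (Section~\ref{ss:55}). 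Finally, a cosmetic but needed remark for your non-analyticity conclusion: analyticity of any single endpoint forces analyticity of $q$ (the endpoint is a simple root, so $q=-\frac14(e^4-2te^2+4e)$), which is what lets your contradiction in terms of $q$ yield the statement about $a_i(t),b_i(t)$.
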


We prove Theorem~\ref{geometry2} in Section~\ref{s:pr-geom}.

\begin{figure}[ht!]
\includegraphics[scale=.25]{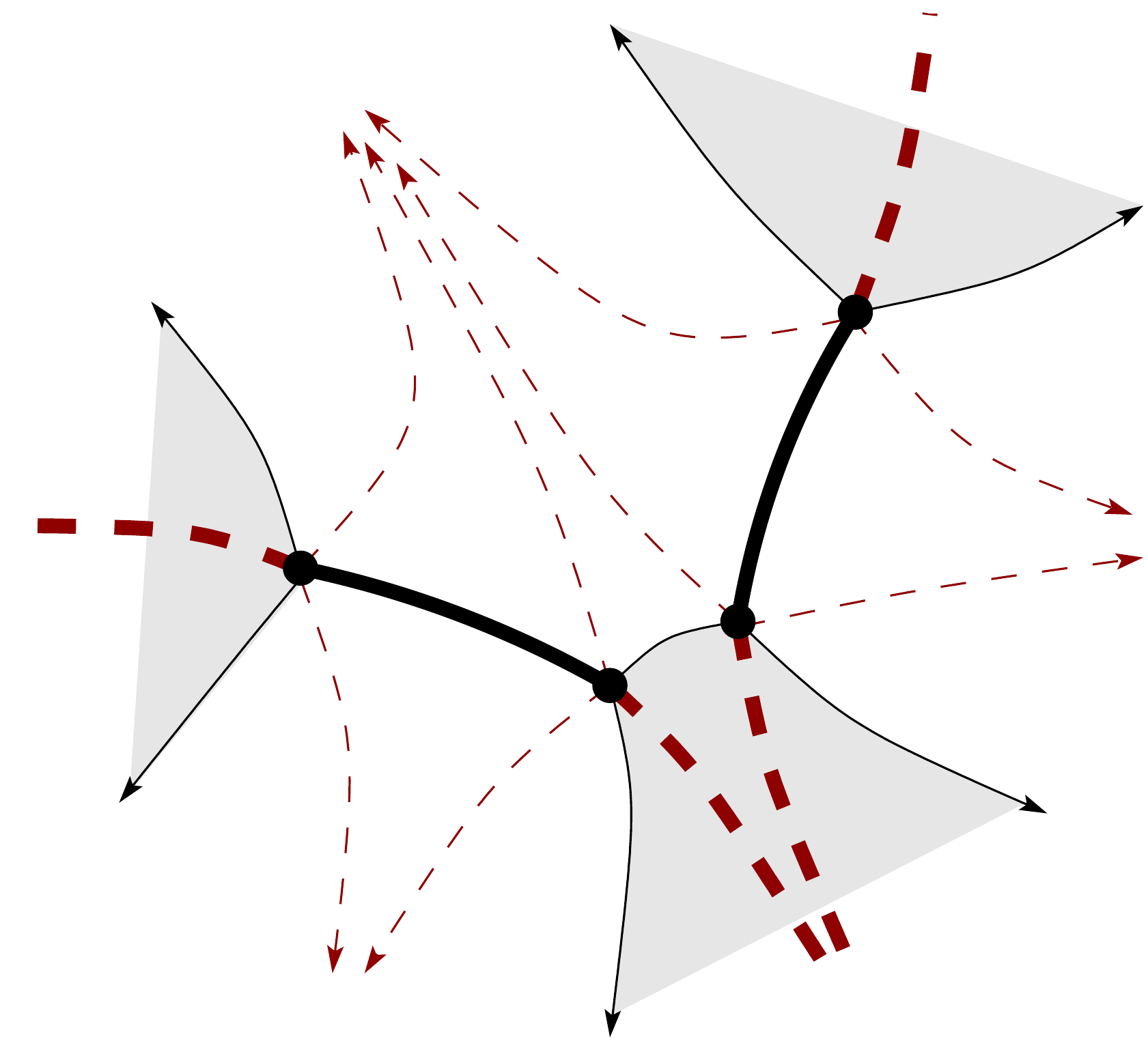}
\begin{picture}(0,0)
\put(-98,37){$a_1$}
\put(-68,25){$b_1$}
\put(-40,40){$a_2$}
\put(-28,72){$b_2$}
\end{picture}
\caption{\small The schematic representation of the critical and critical orthogonal graphs of \( -Q(z;t)\dd z^2 \) when $t\in  O_\mathsf{two-cut}$.  The bold curves represent the preferred S-curve $\Ga_t$. Shaded region is the set $\{\mathcal U(z;t)<0\}$.}
 \label{fig:tc}
 \end{figure}

\section{Main Results}
\label{s:main}

In this section we assume that \( t\in O_\mathsf{two-cut} \) and that \( Q(z;t) \) and \( J_t= J_{t,1} \cup J_{t,2} \) are as in Theorem~\ref{geometry2}. We also put
\[
E_t :=\big\{a_1(t),b_1(t),a_2(t),b_2(t) \big\}, \quad J_{t,i}^\circ := J_{t,i}\setminus E_t, \quad \text{and} \quad J_t^\circ := J_{t,1}^\circ \cup J_{t,2}^\circ.
\] 
\begin{notation}
\label{i-t}
When it comes to the definition of the S-contour \( \Ga_t \), it will be more practical for us to change the choice of \( \Ga_t \) made in Theorem~\ref{geometry2}. Rather than including unbounded trajectories \( \Ga\big(b_1(t),e^{-\pi\ic/3}\infty\big) \) and \( \Ga\big(e^{-\pi\ic/3}\infty,a_2(t)\big) \), we shall replace them by a smooth Jordan arc, say \( I_t \), connecting \( b_1(t) \) and \( a_2(t) \), such that \( I_t^\circ := I_t\setminus E_t \) lies entirely within the region \( \{\mathcal U(z;t)<0\} \) while there exist \( s_1(t)\in\Ga\big(b_1(t),e^{-\pi\ic/3}\infty\big) \) and \( s_2(t)\in\Ga\big(e^{-\pi\ic/3}\infty,a_2(t)\big) \) for which \( \Ga\big(b_1(t),s_1(t)\big)\cup\Ga\big(s_2(t),a_2(t)\big)\subset I_t^\circ \), see Figure~\ref{lens} further below.
\end{notation}
To describe the asymptotics of the orthogonal polynomials themselves, we need to construct the Szeg\H{o} function of \( e^{V(z;t)} \).

\begin{proposition}
\label{prop:Szego}
Let constants \( \varsigma(t) \) and \( d_i(t) \) be given by
\begin{equation}
\label{Ct}
\varsigma(t) := \frac{2t}{3d_0(t)} \quad \text{and} \quad d_i(t) := \int_{I_t} \dfrac{s^i \dd s }{Q^{1/2}(s; t)} ,
\end{equation}
where, as usual, we use the branch \( Q^{1/2}(z;t) = \frac12z^2 + \mathcal O(z) \) as \( z\to\infty \). Then the Szeg\H{o} function
\begin{equation}
\label{D}
\mathcal D(z;t) := \exp\left\{\frac12 V(z;t) + \frac13 \left(z + \int_{I_t}\frac{3\varsigma(t)}{s-z}\frac{\dd s}{Q^{1/2}(s;t)}\right) Q^{1/2}(z;t)\right\}
\end{equation}
is holomorphic and non-vanishing in \( \overline\C\setminus \Ga[a_1(t),b_2(t)] \) with continuous traces on \( J_t^\circ \cup I_t^\circ \) that satisfy
\begin{equation}
\label{Djumps}
\left\{
\begin{array}{ll}
\mathcal D_+(s;t)\mathcal D_-(s;t) = e^{V(s;t)}, & s\in J_t^\circ, \medskip \\
\mathcal D_+(s;t) = \mathcal D_-(s;t) e^{2\pi\ic\varsigma(t)}, & s\in I_t^\circ.
\end{array}
\right.
\end{equation}
Denote by $\mu_1(t):=\int s\dd\mu_t(s)$ the first moment of the equilibrium measure $\mu_t$. Then 
\begin{equation}
\label{D-exp}
\mathcal{D}(z;t) = \exp \left\{ \frac{1}{3} - \frac{(\varsigma d_1)(t)}{2} \right\} \left ( 1 + \dfrac{t^2 + \mu_1(t) - 3(\varsigma d_2)(t)/2}{3z} + \mathcal{O}\left(\frac{1}{z^2} \right) \right)
\end{equation}
as \( z\to \infty \). In particular, it holds that $\mathcal D(\infty;t) = \exp \left\{ \frac{1-t(d_1/d_0)(t)}{3}\right\}$.
\end{proposition}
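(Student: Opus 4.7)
The plan is to decompose $\mathcal{D}(z;t)$ as a product of three factors and analyze each separately. We choose $Q^{1/2}$ with branch cuts on $J_{t,1}\cup J_{t,2}$, so the factor $\exp((z/3)Q^{1/2}(z;t))$ is holomorphic in $\C\setminus J_t$, and the Cauchy-type integral $\mathcal{I}(z;t):=\int_{I_t}\frac{\dd s}{(s-z)Q^{1/2}(s;t)}$ is holomorphic in $\C\setminus I_t$ (the integrand is integrable at the endpoints because $Q^{1/2}$ has at worst a square-root singularity there). Since $I_t^\circ$ lies in $\{\mathcal U(\cdot;t)<0\}$ by Convention~\ref{i-t} and hence is disjoint from $J_t^\circ$, the union $\Ga[a_1,b_2]=J_{t,1}\cup I_t\cup J_{t,2}$ is a single Jordan arc, and $\mathcal{D}(z;t)$ is holomorphic and non-vanishing on its complement in $\overline\C$.

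Next I verify the jump relations in \eqref{Djumps}. On $J_t^\circ$ the factor $Q^{1/2}$ satisfies $Q_+^{1/2}=-Q_-^{1/2}$, while $V$ and $\mathcal{I}$ are analytic there. Writing $\mathcal{D}_\pm = \exp\!\big(V/2\pm(z/3+\varsigma\mathcal{I})Q_+^{1/2}\big)$ yields at once $\mathcal{D}_+\mathcal{D}_- = e^V$. On $I_t^\circ$ the factor $Q^{1/2}$ is analytic, while the Plemelj--Sokhotski formula (with $I_t$ oriented from $b_1(t)$ to $a_2(t)$ so that the $+$-side is on the left) gives $\mathcal{I}_+ - \mathcal{I}_- = 2\pi\ic/Q^{1/2}$, whence $\mathcal{D}_+/\mathcal{D}_- = \exp\!\big(\varsigma Q^{1/2}\cdot 2\pi\ic/Q^{1/2}\big) = e^{2\pi\ic\varsigma(t)}$.

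For the expansion at infinity, I expand the exponent in descending powers of $z$. Using \eqref{em5} together with the normalization $\int\dd\mu_t=1$ gives $Q^{1/2}(z;t) = z^2/2 - t/2 + 1/z + \mu_1(t)/z^2 + \mathcal{O}(z^{-3})$, while the geometric series $1/(s-z) = -\sum_{i\ge 0}s^i/z^{i+1}$ yields $\mathcal{I}(z;t) = -\sum_{i\ge 0}d_i(t)/z^{i+1}$. A direct multiplication shows that $V(z;t)/2 + (z/3)Q^{1/2}(z;t) = tz/3 + 1/3 + \mu_1(t)/(3z) + \mathcal{O}(z^{-2})$ and that $\varsigma Q^{1/2}\mathcal{I} = -\varsigma d_0 z/2 - \varsigma d_1/2 + \varsigma(t d_0 - d_2)/(2z) + \mathcal{O}(z^{-2})$. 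The cubic term cancels automatically; the crucial cancellation of the linear $z$-term is engineered precisely by the choice $\varsigma=2t/(3d_0)$, which forces $\varsigma d_0/2 = t/3$. Substituting $\varsigma d_0 = 2t/3$ once more simplifies the $1/z$ coefficient to $(t^2 + \mu_1(t) - 3\varsigma d_2/2)/(3z)$, and exponentiating gives \eqref{D-exp}; the formula $\mathcal{D}(\infty;t) = \exp((1 - t d_1/d_0)/3)$ then follows from $\varsigma d_1/2 = t d_1/(3d_0)$.

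The argument is thus largely a bookkeeping exercise. The only genuine subtleties are tracking signs in the Plemelj--Sokhotski relation and in the branch of $Q^{1/2}$, and recognizing that the specific form of $\varsigma(t)$ is dictated precisely by the requirement that $\mathcal{D}$ have a finite nonzero limit at infinity, i.e., the need to annihilate the $z$-term in the exponent.
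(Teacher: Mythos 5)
Your proposal is correct and follows essentially the same route as the paper's own proof: the jump on \(J_t^\circ\) from \(Q_+^{1/2}=-Q_-^{1/2}\), the jump on \(I_t^\circ\) from the Plemelj--Sokhotski formula, and the expansion \eqref{D-exp} by expanding \(Q^{1/2}\) via \eqref{em5} and the Cauchy integral in powers of \(1/z\), with the choice \(\varsigma=2t/(3d_0)\) cancelling the linear term. The only difference is cosmetic bookkeeping (you group the exponent as \(V/2+(z/3)Q^{1/2}\) plus \(\varsigma\mathcal{I}Q^{1/2}\), while the paper multiplies out \(\big(z+\int_{I_t}\frac{3\varsigma}{s-z}\frac{\dd s}{Q^{1/2}(s)}\big)Q^{1/2}\) first), so no further comparison is needed.
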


We shall also denote by \( D(z;t):=\mathcal D(z;t)/\mathcal D(\infty;t) \) the normalized Szeg\H{o} function. We prove Proposition~\ref{prop:Szego} in Section~\ref{ss:Szego}. 

To describe the geometric growth of the orthogonal polynomials, let us define
\begin{equation}
\label{mQ}
\mathcal Q(z;t) := \int_{b_2(t)}^zQ^{1/2}(s;t)\dd s, \quad z\in\C\setminus\Ga\big(e^{\pi\ic}\infty,b_2(t)\big],
\end{equation}
(where we deviate slightly from Convention~\ref{Gamma-arcs} and denote by \( \Ga[s_1,s_2] \) a subarc of \( \Ga_t \) connecting \( s_1 \) to \( s_2 \); for example, \( \Ga(e^{\pi\ic}\infty,a_2] \) includes \( I_t \) rather than the short trajectories connecting \( b_1 \) to \( a_2 \)). Observe that \( \mathcal U(z;t) = 2\mathrm{Re}(\mathcal Q(z;t)) \) as defined in \eqref{em0}. This function has the following properties.

\begin{proposition}
\label{prop:g}
Let constants \( \tau(t),\omega(t) \) be given by
\begin{equation}
\label{tau-omega}
\tau(t) := -\frac1{\pi\ic}\int_{I_t}Q^{1/2}(s;t)\dd s \quad \text{and} \quad \omega(t) := -\frac1{\pi\ic}\int_{J_{t,1}}Q_+^{1/2}(s;t)\dd s.
\end{equation}
These constants are necessarily real (in fact, \( \omega(t)=\mu_t(J_{t,1}) \), see \eqref{em6}). The function \( e^{\mathcal Q(z;t)} \) is holomorphic in \( \C\setminus \Ga[a_1(t),b_2(t)] \) and there exists a constant \( \ell_*(t) \) such that
\begin{equation}
\label{oc5}
\exp\left\{\frac{V(z;t)-\ell_*(t)}2 + \mathcal Q(z;t)\right\} = z+ \mathcal{O}(1) \quad \text{as} \quad z\to\infty.
\end{equation}
Moreover, \( \mathcal Q(z;t) \) possesses continuous traces on  \(J_t^\circ \cup I_t^\circ \) that are purely imaginary on \( J_t^\circ \) and satisfy
\begin{equation}
\label{gjumps}
\left\{
\begin{array}{ll}
e^{\mathcal Q_+(s;t)+\mathcal Q_-(s;t)} = 1, & s\in J_{t,2}^\circ, \medskip \\
e^{\mathcal Q_+(s;t)+\mathcal Q_-(s;t)} = e^{2\pi\ic\tau(t)}, & s\in J_{t,1}^\circ, \medskip \\
e^{\mathcal Q_+(s;t)} = e^{\mathcal Q_-(s;t)-2\pi\ic\omega(t)}, & s\in I_t^\circ.
\end{array}
\right.
\end{equation}
\end{proposition}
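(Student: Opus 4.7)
My plan is to verify the four claims of the proposition in sequence, exploiting throughout the identity \( \mathcal U(z;t) = 2\re(\mathcal Q(z;t)) \) (cf.\ \eqref{em0} with \( e=b_2 \)) on any simply connected domain where \( \mathcal Q \) is single-valued, and the jump relation \( Q_+^{1/2} + Q_-^{1/2} = 0 \) on \( J_t^\circ \).

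For the reality of the constants, \( \omega(t)=\mu_t(J_{t,1})\in[0,1] \) is immediate from \eqref{em6}. For \( \tau(t) \), I pick a simply connected neighborhood \( U \) of \( I_t \) that is disjoint from \( J_t \) and fix a single-valued branch of \( \mathcal Q \) on \( U \). The endpoints \( b_1,a_2\in J_t \) satisfy \( \mathcal U(b_1)=\mathcal U(a_2)=0 \) by the equilibrium condition \eqref{em2}, and \( \mathcal Q \) is continuous at these endpoints since \( Q^{1/2} \) has a square-root zero there. Consequently, \( \mathcal Q(b_1) \) and \( \mathcal Q(a_2) \) are purely imaginary in this branch, and so is \( \int_{I_t} Q^{1/2}\,\dd s = \mathcal Q(a_2) - \mathcal Q(b_1) \), forcing \( \tau(t)\in\R \).

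For holomorphy and the behavior at infinity, since \( Q^{1/2}(\cdot;t) \) is holomorphic off \( J_t \), the integral \( \mathcal Q \) is a well-defined primitive on any simply connected subdomain of \( \C\setminus J_t \), multi-valued only through the periods \( \oint_{J_{t,i}}Q^{1/2}\,\dd s = -2\pi\ic\mu_t(J_{t,i}) \); cutting additionally along \( I_t \) produces the simply connected domain \( \C\setminus\Ga[a_1,b_2] \), so \( \mathcal Q \) and hence \( e^{\mathcal Q} \) are single-valued and holomorphic there. For the asymptotic, starting from \( Q^{1/2}(z;t)=-V'(z;t)/2+\int\dd\mu_t(s)/(z-s) \) (valid away from \( J_t \)) and expanding at infinity yields
\[
Q^{1/2}(z;t) = \tfrac{z^2-t}{2} + \tfrac{1}{z} + \tfrac{\mu_1(t)}{z^2} + \mathcal O(z^{-3}),
\]
so term-by-term integration gives \( \mathcal Q(z;t) = z^3/6 - tz/2 + \log z + C(t) + \mathcal O(z^{-1}) \) for a constant \( C(t) \) pinned by \( \mathcal Q(b_2;t)=0 \). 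Since \( V(z;t)=-z^3/3+tz \), defining \( \ell_*(t):=2C(t) \) cancels the cubic and linear terms in \( (V-\ell_*)/2 + \mathcal Q \), leaving \( \log z + \mathcal O(z^{-1}) \), which upon exponentiation is exactly \eqref{oc5}.

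Finally, for the jump relations and pure imaginarity, the tangential derivative of \( \mathcal Q_+ + \mathcal Q_- \) along \( J_t \) vanishes via \( Q_+^{1/2}+Q_-^{1/2}=0 \), so \( \mathcal Q_++\mathcal Q_- \) is locally constant on each component of \( J_t^\circ \); analogously \( \mathcal Q_+-\mathcal Q_- \) is locally constant on \( I_t^\circ \) since \( Q^{1/2} \) is continuous across \( I_t \). On \( J_{t,2}^\circ \) the constant is \( 2\mathcal Q(b_2)=0 \), while on \( J_{t,1}^\circ \) I evaluate it at an interior point by going from \( b_2 \) along the \( \pm \)-sides of the chain \( J_{t,2}\cup I_t\cup J_{t,1} \); the \( J_{t,2} \) and \( J_{t,1} \) contributions cancel, leaving \( 2\int_{a_2}^{b_1}Q^{1/2}\,\dd s = 2\pi\ic\tau(t) \) once the orientation of \( I_t \) is taken into account. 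For \( z_0\in I_t^\circ \) only the \( J_{t,2} \) segment contributes, producing \( \mathcal Q_+(z_0)-\mathcal Q_-(z_0) = 2\int_{b_2}^{a_2}Q_+^{1/2}\,\dd s = 2\pi\ic(1-\omega(t)) \equiv -2\pi\ic\omega(t)\pmod{2\pi\ic} \), which yields the third identity of \eqref{gjumps}. Pure imaginarity of \( \mathcal Q_\pm \) on \( J_t^\circ \) then follows because \( \mathcal Q_+-\mathcal Q_- = 2\int_{b_2}^{z_0}Q_+^{1/2}\,\dd s \) on \( J_{t,2}^\circ \) (with its analog on \( J_{t,1}^\circ \)) is purely imaginary by \eqref{em6}, while the constant sums \( 0 \) and \( 2\pi\ic\tau \) are purely imaginary as well. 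The principal technical obstacle is aligning the orientations of \( I_t,J_{t,1},J_{t,2} \) and the \( \pm \)-side conventions with Figure~\ref{fig:tc} and Notation~\ref{i-t} so that the signs on the right-hand sides of \eqref{gjumps} emerge exactly, not merely modulo \( 2\pi\ic \); this is a careful but otherwise routine bookkeeping exercise.
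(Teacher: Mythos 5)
Your argument is correct in substance and follows essentially the paper's route: reality of \(\omega(t)\) straight from \eqref{em6}; reality of \(\tau(t)\) because \(\re\mathcal Q=\mathcal U/2\) vanishes at \(b_1\) and \(a_2\) (the paper instead observes that \(I_t\) is homologous to the short trajectory joining \(b_1\) to \(a_2\), on which \(Q^{1/2}\dd s\) is purely imaginary --- the same fact); \eqref{oc5} by termwise integration of the expansion \eqref{rootQ} (the paper routes this through the \(g\)-function \(g(z)=\int\log(z-s)\,\dd\mu_t(s)\) and the identity \(g=(V-\ell_*)/2+\mathcal Q\), which is equivalent); and the jump relations by exactly the contour bookkeeping the paper encodes in \eqref{g3}--\eqref{g6}. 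Your sign computations on \(J_{t,1}^\circ\) and \(I_t^\circ\) come out right as written, and the "purely imaginary traces" claim follows as you indicate (even more directly from \(\re\mathcal Q_\pm=\mathcal U/2=0\) on \(J_t\)). The parenthetical choice of a neighborhood of \(I_t\) "disjoint from \(J_t\)" is impossible (it must contain \(b_1,a_2\)), but this is harmless since you only use \(\mathcal U=0\) on \(J_t\) and continuity of \(\mathcal Q\) at the endpoints.

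One assertion, however, is false as stated: \(\C\setminus\Ga[a_1(t),b_2(t)]\) is \emph{not} simply connected (a loop encircling the whole arc is not contractible in \(\C\)), and \(\mathcal Q\) is \emph{not} single-valued there --- continuing it once around \(\Ga[a_1,b_2]\) changes it by \(\oint Q^{1/2}\dd s=\pm2\pi\ic\), the residue coming from the \(1/z\) term in \eqref{rootQ}, i.e., from the unit total mass of \(\mu_t\). This is precisely why the proposition (and \eqref{mQ}, which keeps the extra cut along \(\Ga(e^{\pi\ic}\infty,a_1]\)) claims holomorphy only of \(e^{\mathcal Q}\) on \(\C\setminus\Ga[a_1,b_2]\). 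The conclusion you need survives exactly because this total period is an integer multiple of \(2\pi\ic\) --- a fact already contained in your own period formula \(\oint_{J_{t,i}}Q^{1/2}\dd s=-2\pi\ic\,\mu_t(J_{t,i})\) upon summing over \(i=1,2\) --- so the repair is immediate; but as written, the justification of the holomorphy claim rests on an incorrect topological statement rather than on the unit-mass/residue argument that actually does the work.
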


We prove Proposition~\ref{prop:g} in Section~\ref{ss:g}. Observe that it follows from Theorem~\ref{geometry2} that \( |e^{\mathcal Q(z;t)}| \) is less than \( 1 \) when \( \mathcal U(z;t)<0 \) (the shaded areas of Figure~\ref{fig:tc}), is equal to \( 1 \) on the critical trajectories (black curves), and otherwise is greater than \( 1 \). The constant \( \ell_*(t) \) is such that $\re(\ell_*(t))=\ell_{\Ga_t}$, see \eqref{em2}. Moreover, the following identity holds.
\begin{proposition}
\label{prop:l}
Set
\begin{equation}
\label{BQ}
\mathsf B(t) := - \left(\int_{J_{t,1}}\frac{\dd s}{Q_+^{1/2}(s;t)}\right)/\left(\int_{I_t}\frac{\dd s}{Q^{1/2}(s;t)}\right).
\end{equation}
Let \( \theta(u;\mathsf B(t)) \) be the Riemann theta function associated with \( \mathsf B(t) \), i.e.,
\begin{equation}
\label{Rtheta}
\theta(u;\mathsf B(t)) = \sum_{k\in\Z}\exp\left\{\pi\ic \mathsf B(t)k^2+2\pi\ic uk\right\}, \quad u\in\C.
\end{equation}
Then it holds that
\begin{equation}
\label{el}
e^{\ell_*(t)} = \left(4\mathcal D(\infty;t)\frac{e^{\pi\ic\tau(t)(\varsigma(t)+\omega(t)+\mathsf B(t)\tau(t))}}{b_1(t)-a_1(t)+b_2(t)-a_2(t)}\frac{\theta(\varsigma(t)+\omega(t)+\mathsf B(t)\tau(t))}{\theta(0)}\right)^2.
\end{equation}
\end{proposition}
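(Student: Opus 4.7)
My plan is to derive \eqref{el} by combining $\mathcal D(z;t)$, $e^{\mathcal Q(z;t)}$, an algebraic factor $\gamma(z)$, and a theta-function ratio on the elliptic curve of $\sqrt{Q(z;t)}$ into a scalar function $\Phi(z)$ whose jumps across $J_t\cup I_t$ and whose endpoint branching all cancel; the resulting entire function of linear growth then determines $e^{\ell_*(t)/2}$ upon matching with the expansions \eqref{D-exp} and \eqref{oc5} at infinity, and squaring yields \eqref{el}.

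\textbf{Setting up the elliptic data.} Let $\mathfrak S$ denote the genus-one Riemann surface obtained by gluing two copies of $\overline\C\setminus J_t$ along $J_{t,1}\cup J_{t,2}$, with canonical homology basis $(\mathbf a,\mathbf b)$ in which $\mathbf a$ encircles $J_{t,1}$ on the first sheet and $\mathbf b$ runs through $I_t$. Normalize the unique holomorphic differential $\zeta$ on $\mathfrak S$ by $\oint_{\mathbf a}\zeta = 1$; then \eqref{BQ} gives $\oint_{\mathbf b}\zeta = \mathsf B(t)$, while Proposition~\ref{prop:g} together with \eqref{em6} and \eqref{tau-omega} identifies the $\mathbf a$- and $\mathbf b$-periods of $Q^{1/2}(z;t)\,\dd z$ with $-2\pi\ic\omega(t)$ and $-2\pi\ic\tau(t)$ respectively.

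\textbf{The ansatz and its cancellations.} With $\mathcal A(z) := \int_{b_2(t)}^z\zeta$ the Abel map and $\kappa(t) := \varsigma(t) + \omega(t) + \mathsf B(t)\tau(t)$, consider
\[
\Phi(z) \;:=\; \gamma(z)\,\mathcal D(z;t)\,e^{\mathcal Q(z;t)}\,\frac{\theta\big(\mathcal A(z) - \mathcal A(\infty) + \kappa(t)\big)}{\theta\big(\mathcal A(z) - \mathcal A(\infty)\big)},
\]
where $\gamma(z)$ is built from quarter-roots of $(z-a_i(t))$ and $(z-b_i(t))$ with branch cuts on $J_t$, tending to $1$ at infinity. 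The three sources of multivaluedness of $F := \mathcal D\,e^{\mathcal Q}$ (the $\varsigma$-jump on $I_t$ from $\mathcal D$, the $\omega$-jump on $I_t$ from $e^{\mathcal Q}$, and the $F_+F_- = e^V e^{2\pi\ic\tau}$ relation on $J_{t,1}$) are absorbed as follows: the $F_+F_-$ relations on $J_{t,1}^\circ$ and $J_{t,2}^\circ$ are handled by $\gamma$, while the combined $I_t$-discontinuity is absorbed by the theta quotient via the quasi-periodicity $\theta(u + m + n\mathsf B) = e^{-\pi\ic n^2\mathsf B - 2\pi\ic nu}\theta(u)$ and the Abel-map period $\oint_{\mathbf b}\zeta = \mathsf B(t)$. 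Verifying these cancellations is the most delicate step and uses Riemann's bilinear relations to identify the $\mathbf b$-period of $Q^{1/2}\,\dd z$ with the theta-argument translation $\mathsf B(t)\tau(t)$; this is precisely what forces the combination $\kappa(t) = \varsigma+\omega+\mathsf B\tau$. Once verified, $\Phi$ extends to an entire function of linear growth, hence a degree-one polynomial by Liouville's theorem.

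\textbf{Matching at infinity and the conclusion.} Expand each factor of $\Phi$ as $z\to\infty$: use \eqref{oc5} for $e^{\mathcal Q}$, \eqref{D-exp} for $\mathcal D$, the direct expansion $\gamma(z) = 1 + (b_1-a_1+b_2-a_2)/(4z) + O(z^{-2})$ of the quarter-root factor, and expand the theta ratio as $\theta(\kappa)/\theta(0)$ at $\mathcal A(\infty)$ together with the quasi-periodic prefactor $e^{\pi\ic\tau\kappa}$ (the latter arising because the Abel map differs by $-\mathsf B(t)\tau(t)$ between the two sheets at infinity, again via the Riemann relations). Comparing leading coefficients and solving for $e^{\ell_*(t)/2}$ yields the square root of \eqref{el}. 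The squared form of \eqref{el} reflects that the construction symmetrically exploits both sheets of $\mathfrak S$, and squaring also removes the sign ambiguity inherent in the quarter-root choice for $\gamma$. The hardest part throughout will be this bookkeeping in Step~2: three distinct monodromies must cancel simultaneously against the quasi-periodicity of $\theta$ and the local branching of $\gamma$ at the four endpoints.
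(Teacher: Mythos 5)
Your overall strategy is in the right circle of ideas (theta quotient built from the Abel map, cancellation of the monodromies of \(\mathcal D\,e^{\mathcal Q}\), normalization at infinity), but the ansatz as written does not work, and the failure is structural rather than a matter of bookkeeping. On \(J_t^\circ\) the boundary relations are of \emph{product} type: \(\mathcal D_+\mathcal D_-=e^{V}\) and \(e^{\mathcal Q_++\mathcal Q_-}\in\{1,e^{2\pi\ic\tau}\}\), see \eqref{Djumps} and \eqref{gjumps}, so for \(F=\mathcal D e^{\mathcal Q}\) one knows \(F_+F_-\), not \(F_+/F_-\). The factors you multiply by — the quarter\nobreakdash-root \(\gamma\) (which satisfies \(\gamma_+=\pm\ic\,\gamma_-\)) and the theta quotient (which has constant multiplicative jumps) — can only modify ratio-type jumps; requiring \(\Phi_+=\Phi_-\) on \(J_t\) would force an identity involving \(F_-^2\) that is false. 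This is precisely why the paper passes to the two-sheeted surface \(\RS\): it defines \(F(z^{(0)}):=\mathcal D^{-1}(z)e^{V(z)/2+\mathcal Q(z)}\) and \(F(\z^*)=1/F(\z)\), so that the product relations on \(J_t\) become analytic continuation onto the second sheet, leaving only constant jumps \(e^{2\pi\ic(\omega+\varsigma)}\), \(e^{2\pi\ic\tau}\) on the cycles \(\boldsymbol\alpha,\boldsymbol\beta\) (see \eqref{new1}), which are then removed by the theta quotient \(\Theta(\z)\) of \eqref{new3}. Two further gaps compound this. First, growth: by \eqref{oc5}, \(\mathcal D e^{\mathcal Q}\sim z\,e^{(\ell_*-V(z))/2}\), so your \(\Phi\) has order-three exponential growth in some sectors and Liouville in the form ``entire of linear growth \(\Rightarrow\) degree-one polynomial'' does not apply; one must use the combination \(\mathcal D^{-1}e^{V/2+\mathcal Q}\) (as the paper does) to get \eqref{new2}. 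Second, divisors: your theta denominator \(\theta(\mathcal A(z)-\mathcal A(\infty))\) vanishes at a finite point of the curve (Jacobi inversion), producing a pole that nothing in your ansatz cancels; in the paper this pole sits at the distinguished point \(p^{(k)}\) of \eqref{tc4} and is cancelled by the zero of \(B(z)\), which is exactly why the functions \(A,B=(\gamma\pm\gamma^{-1})/2\) of \eqref{nt2} — not the bare quarter root \(\gamma\) — enter the argument.

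With these corrections your plan essentially collapses onto the paper's proof: \((F\Theta)(\z)\) is rational on \(\RS\) with divisor \(p^{(1)}+\infty^{(1)}-p^{(0)}-\infty^{(0)}\), hence a constant multiple of the reciprocal of \eqref{nt4}; evaluating at the two points over infinity and using the symmetry \(F(\z)F(\z^*)\equiv1\) gives \eqref{new6}, and the explicit theta ratio is then computed from \eqref{7.3.3}, \eqref{residue}, and the quasi-periodicity \eqref{ab4}, using that \(\theta\) is even. Note in particular that the square in \eqref{el} arises from combining the values \(\theta(\pm(\varsigma+\omega+\mathsf B\tau))\) at \(\infty^{(0)}\) and \(\infty^{(1)}\) through this two-sheeted evaluation, not from removing a sign ambiguity of a quarter root. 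A small additional inaccuracy: with your homology labels (\(\mathbf a\) around \(J_{t,1}\), \(\mathbf b\) through \(I_t\)) and the normalization \(\oint_{\mathbf a}\zeta=1\), the ratio in \eqref{BQ} gives \(\oint_{\mathbf b}\zeta=-1/\mathsf B(t)\) rather than \(\mathsf B(t)\); the paper's convention is the opposite one, \(\boldsymbol\alpha=\pi^{-1}(I_t)\) normalized and \(\mathsf B=\oint_{\boldsymbol\beta}\mathcal H\) with \(\boldsymbol\beta=\pi^{-1}(J_{t,1})\).
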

We prove this proposition in Section~\ref{ss:propl}. We also explain right after the statement of Proposition~\ref{prop:Nt} further below that \( b_1(t)-a_1(t)+b_2(t)-a_2(t) \neq 0\).

Another auxiliary function we need is given by
\begin{equation}
\label{alpha}
A(z;t) := \frac12\left(\left(\frac{z-b_2(t)}{z-a_2(t)}\frac{z-b_1(t)}{z-a_1(t)}\right)^{1/4}+\left(\frac{z-b_2(t)}{z-a_2(t)}\frac{z-b_1(t)}{z-a_1(t)}\right)^{-1/4}\right)
\end{equation}
for \( z\in \overline\C\setminus J_t \), where the branches are chosen so that the summands are holomorphic in \( \overline\C\setminus J_t \) and have value \( 1 \) at infinity. As explained in Section~\ref{ss:JIP}, this function can be analytically continued through each side of \( J_t^\circ \) and is non-vanishing in the domain of the definition.

Given a sequence \( \{N_n\}_{n\in \N} \), we define further below in \eqref{ab5} functions \( \Theta_n(z;t) \), which are ratios of Riemann theta functions \eqref{Rtheta} with various arguments. To shorten the presentation of the main results, we only discuss main features of the functions \( \Theta_n(z;t) \) and defer the detailed construction and description of further properties to Section~\ref{sec:An}.

\begin{proposition}
\label{prop:A}
Functions \( \Theta_n(z;t) \) are holomorphic in \( \overline\C\setminus\Ga[a_1(t),b_2(t)] \) with at most one zero there. They have continuous traces on \( I_t^\circ \) that satisfy
\[
\Theta_{n+}(s;t) = \Theta_{n-}(s;t) e^{2\pi\ic(n\omega(t)+(n-N_n)\varsigma(t))}.
\]
Assume that there exists a constant \( N_* \) such that \( |n-N_n|\leq N_* \) for all \( n\in \N \). Then for any \( \delta>0 \) there exists a constant \( C(t,\delta,N_*) \) such that
\[
|A(z;t)\Theta_n(z;t)| \leq C(t,\delta,N_*), \quad z\in \overline\C\setminus \bigcup_{e\in E_t} \big\{|z-e|<\delta\big\},
\]
that is, including the traces on \( \Ga[a_1(t),b_2(t)] \). Given \( \varepsilon>0 \), let \( \N(t,\varepsilon) \) be a subsequence of indices \( n \) such that \( \Theta_n(z;t) \) is non-vanishing in \( \{|z|\geq 1/\varepsilon\} \). Then there exists a constant \( c(t,\varepsilon)>0 \) such that
\[
|\Theta_n(\infty;t)| \geq c(t,\varepsilon), \quad n\in \N(t,\varepsilon).
\]
\end{proposition}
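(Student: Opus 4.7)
The plan is to derive every assertion from the representation (promised by \eqref{ab5}) of $\Theta_n(z;t)$ as a ratio of Riemann theta functions \eqref{Rtheta} whose arguments are shifts, linear in $n$, of the Abel--Jacobi map $\mathfrak u(\cdot;t)$ associated with the genus-one hyperelliptic Riemann surface of $Q^{1/2}(\cdot;t)$; the canonical homology basis will consist of an $A$-cycle encircling $J_{t,1}$ and a $B$-cycle through $I_t$, so that $\mathfrak u$ is single-valued and holomorphic on $\overline\C\setminus\Ga[a_1(t),b_2(t)]$. Since $\theta(\cdot;\mathsf B(t))$ is entire, holomorphy of $\Theta_n$ on the slit plane follows at once; the at-most-one-zero statement then follows from Riemann's vanishing theorem, which guarantees that the numerator theta function has exactly one zero on the genus-one surface, while the denominator, evaluated at a fixed non-singular argument, stays bounded away from zero.

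Next, I would establish the jump condition. Crossing $I_t$ from one bank to the other is, by construction, crossing the representative of the $B$-cycle, so that $\mathfrak u$ jumps by $\pm\mathsf B(t)$. Applying the quasi-periodicity identity
\[
\theta(u+\mathsf B(t);\mathsf B(t)) = e^{-\pi\ic\mathsf B(t)-2\pi\ic u}\theta(u;\mathsf B(t))
\]
to the numerator and denominator of $\Theta_n$ with their respective $n$-dependent shifts (built from $n\omega(t)$ and $(n-N_n)\varsigma(t)$), the linear prefactors combine into precisely $e^{2\pi\ic(n\omega(t)+(n-N_n)\varsigma(t))}$ after the purely $t$-dependent contributions cancel; this is essentially what dictates the specific form of the arguments to be chosen in \eqref{ab5}.

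For the uniform boundedness of $A(z;t)\Theta_n(z;t)$, both factors are continuous off $E_t$, so only the endpoints require attention. There $\mathfrak u(z)-\mathfrak u(e)\sim\mathrm{const}\cdot\sqrt{z-e}$, which, combined with the branch-cut structure, produces at worst a $(z-e)^{-1/4}$ singularity in $\Theta_n$, and the function $A(z;t)$ of \eqref{alpha} is defined precisely so that its $(z-e)^{1/4}$ local factor cancels this singularity. The $n$-uniform bound $C(t,\delta,N_*)$ then follows from the hypothesis $|n-N_n|\leq N_*$: the $n$-dependent shifts reduced modulo the period lattice $\Z+\mathsf B(t)\Z$ lie in a compact subset of the Jacobian, and both $\Theta_n$ and the local constants depend continuously on this parameter.

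Finally, the upper bound $|\Theta_n(\infty;t)|\leq C$ is immediate from the preceding step since $A(\infty;t)=1$. For the lower bound, nonvanishing of $\Theta_n$ in $\{|z|\geq 1/\varepsilon\}$ is equivalent to the argument of the numerator theta avoiding, for all such $z$, the unique zero of $\theta(\cdot;\mathsf B(t))$ on the Jacobian; on the subsequence $\N(t,\varepsilon)$ this forces the shift (modulo the lattice) into a compact set bounded away from that zero, and continuity of the argument-to-value map on this compact yields a uniform positive lower bound $c(t,\varepsilon)$. The main obstacle will be the boundedness step: matching exactly the local square-root behavior of $\mathfrak u$ at each of the four endpoints to the algebraic prefactors of $A(z;t)$, and then upgrading the pointwise cancellation to a bound uniform in $n$ via compactness of the Jacobian. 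The other steps amount to careful book-keeping of the classical quasi-periodicity of the theta function.
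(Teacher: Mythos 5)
Your overall route coincides with the paper's: \( \Theta_n \) is a theta-function construction on the genus-one surface of \( Q^{1/2} \), the jump on \( I_t^\circ \) comes from quasi-periodicity \eqref{ab4}, and the uniform bounds come from a compactness argument in \( n \). However, your treatment of the boundedness step is based on incorrect local behaviour. Near a point \( e\in E_t \) the Abel map is analytic in the local parameter \( (z-e)^{1/2} \) and \( \theta \) is entire, so \( \Theta_n \) is \emph{bounded} there; it has no \( (z-e)^{-1/4} \) singularity. Conversely, \( A(z;t) \) from \eqref{alpha} \emph{grows} like \( |z-e|^{-1/4} \) at each endpoint (one of the two summands \( \gamma^{\pm1}(z) \) blows up); it supplies no \( (z-e)^{1/4} \) factor. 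Hence there is no cancellation, and generically \( A\Theta_n\sim|z-e|^{-1/4} \) near \( E_t \) — which is exactly why the proposition only claims the bound off the \( \delta \)-neighbourhoods of \( E_t \) (in the paper this growth is recorded for the functions \( M_{n,k} \) right after \eqref{ab7}). So the step you single out as the ``main obstacle'' is both unnecessary for the statement and, as you argue it, wrong; what actually has to be proved is uniformity in \( n \) away from \( E_t \), including the one-sided traces on \( \Ga[a_1(t),b_2(t)] \).

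That uniformity is where your sketch has a genuine gap. A ratio of theta functions is \emph{not} a function of its shift modulo \( \Z+\mathsf B\Z \): by \eqref{ab4}, changing the representative multiplies it by exponential factors, and the actual definition \eqref{ab5} carries the prefactor \( \exp\{-2\pi\ic(m_{n,k}+\tau n)\mathfrak a(\z)\} \), without which the bare ratio whose zero sits at \( \z_{n,1} \) would jump across \( I_t \) by an extra factor \( e^{\pm2\pi\ic\mathsf B(\tau n+m_{n,1})} \) of non-unit modulus (and would have no jump at all across \( J_{t,1} \)). Consequently, ``the shifts reduced mod the lattice lie in a compact subset of the Jacobian'' does not by itself bound \( |\Theta_n| \) above, nor \( |\Theta_n(\infty;t)| \) below: one must show that the real sequences \( m_{n,k}+\tau n \) and \( j_{n,k}+\omega n \) determined by \eqref{ab3} are bounded, which is precisely where \( |n-N_n|\leq N_* \), the boundedness of the range of \( \tilde{\mathfrak a} \), and \( \im(\mathsf B)>0 \) enter; only then can one pass to limit functions (the paper additionally uses subharmonicity of \( \log|M_{n,k}| \) and the maximum principle to carry the bound up to the traces, and combines the compactness with Proposition~\ref{prop:Nt} and the definition of \( \N(t,\varepsilon) \), which keeps the zero \( \z_{n,1} \) away from \( \infty^{(0)} \), to get the lower bound). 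A smaller slip: with your labelling (\( A \)-cycle around \( J_{t,1} \) carrying the unit period) the Abel map would jump by \( 1 \), not by \( \pm\mathsf B \), across \( I_t \); the paper's \( \mathsf B(t) \) in \eqref{BQ} corresponds to normalizing the differential on the cycle over \( I_t \).
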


As in the case of the Szeg\H{o} functions \( \mathcal D(z;t) \), it would convenient for us to renormalize \( \Theta_n(z;t) \) at infinity. Thus, we set \( \vartheta_n(z;t) := \Theta_n(z;t)/\Theta_n(\infty;t) \). Observe that \( \vartheta_n(z;t)D^{N_n-n}(z;t)e^{n\mathcal Q(z;t)} \) are functions holomorphic in \( \C\setminus J_t \).

Proposition~\ref{prop:A} has substance only if the sets \( \mathbb N(t,\varepsilon) \) have infinite cardinality. Recall \( \mathsf B(t) \) from \eqref{BQ} above. It follows from the general theory of Riemann surfaces, see Section~\ref{ss:RS}, that \( \im(\mathsf B(t))>0 \). In particular, any \( s\in\C \) can be uniquely written as \( x+\mathsf B(t)y \) for some \( x,y\in\R \).

\begin{proposition}
\label{prop:N0}
Write \( \varsigma(t) =: x(t) + \mathsf B(t)y(t) \), \(x(t),y(t)\in\R\). It holds that
\begin{itemize}
\item[(i)] if \( x(t),y(t)\in\Z \), then the functions \( \Theta_n(z;t) \) do not depend on the choice of  \( N_n \) and at least one of the integers \( n,n+1 \) belongs to \( \N(t,\varepsilon) \)\footnote{In fact, this claim can be improved depending on the arithmetic properties of the numbers \( \omega(t) \) and \( \tau(t) \).};
\item[(ii)] otherwise, for any natural number \( N_*>0 \) there exists a choice of \( N_n \) such that \( |n-N_n| \leq N_* \) and \( \N(t,\varepsilon) = \N \) for all \( \varepsilon \) small enough.
\end{itemize}
\end{proposition}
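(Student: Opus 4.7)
The plan is to leverage the explicit formula for \( \Theta_n(z;t) \) to be constructed in Section~\ref{sec:An} as a ratio of Riemann theta functions \eqref{Rtheta}, together with the quasi-periodicity \( \theta(u+m+k\mathsf B(t);\mathsf B(t)) = e^{-\pi\ic k^2\mathsf B(t)-2\pi\ic ku}\theta(u;\mathsf B(t)) \) for \( m,k\in\Z \). From the jump prescription in Proposition~\ref{prop:A}, the argument of the theta function in the numerator of \( \Theta_n(z;t) \) will have the form \( \mathsf u(z)+n\omega(t)+(n-N_n)\varsigma(t)+c(t) \) modulo \( \Lambda_t:=\Z+\mathsf B(t)\Z \), where \( \mathsf u \) is the Abel map of the elliptic Riemann surface of \( Q^{1/2}(z;t) \) and \( c(t) \) a constant independent of \( n,N_n \). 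Since \( \theta(\cdot;\mathsf B(t)) \) has a unique zero (at the half-period) in each fundamental cell, \( \Theta_n(z;t) \) has at most one zero \( z_n^* \), whose Jacobian image is the unique point congruent to \( \tfrac12+\tfrac12\mathsf B(t) \) minus the above argument modulo \( \Lambda_t \).

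\textbf{Case (i).} Suppose \( \varsigma(t)=x+\mathsf B(t)y \) with \( x,y\in\Z \). Replacing \( N_n \) by \( N_n+1 \) shifts the theta argument by the lattice vector \( -\varsigma(t) \); by quasi-periodicity the numerator picks up an exponential factor of the form \( e^{\mathrm{const}+2\pi\ic y\,\mathsf u(z)} \), which must cancel against the matching factor coming from the denominator/normalization built into the ratio defining \( \Theta_n \). This cancellation gives the asserted \( N_n \)-independence. With \( \Theta_n \) so fixed, \( z_n^* \) depends only on \( n\omega(t)\pmod{\Lambda_t} \) and consecutive zeros satisfy \( \mathsf u(z_{n+1}^*)-\mathsf u(z_n^*)\equiv -\omega(t)\pmod{\Lambda_t} \). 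For \( \varepsilon \) sufficiently small, the lift of \( \{|z|\geq 1/\varepsilon\} \) to the Riemann surface consists of two small neighborhoods of the two points at infinity \( \infty^{(1)},\infty^{(2)} \), whose Jacobian images differ by a non-lattice quantity. Consequently \( z_n^* \) and \( z_{n+1}^* \) can both lie in this lifted set only if either \( \omega(t) \) or \( \omega(t)\pm(\mathsf u(\infty^{(2)})-\mathsf u(\infty^{(1)})) \) is congruent to \( 0 \) modulo \( \Lambda_t \), which is nongeneric; otherwise at least one of \( n,n+1 \) belongs to \( \N(t,\varepsilon) \). The sharper footnoted statement follows by iterating the same reasoning with further shifts by \( \omega(t) \) and \( \tau(t) \).

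\textbf{Case (ii).} If \( \varsigma(t)\notin\Lambda_t \), the subgroup \( \{k\varsigma(t)\pmod{\Lambda_t}\}_{k\in\Z} \) is an infinite subset of \( \C/\Lambda_t \). Fix \( \varepsilon>0 \) small enough that \( K_\varepsilon:=\mathsf u(\{|z|\geq 1/\varepsilon\}) \) lies in two small neighborhoods of the images of \( \infty^{(1)},\infty^{(2)} \). Because \( K_\varepsilon \) is a proper subset of the Jacobian and the orbit is infinite, there exists \( N_*=N_*(\varepsilon) \) such that for every \( u_0\in\C/\Lambda_t \) some element of \( \{u_0+k\varsigma(t):|k|\leq N_*\} \) lies outside \( K_\varepsilon \). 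For each \( n \) we then set \( N_n:=n-k_n \), choosing \( k_n\in\{-N_*,\ldots,N_*\} \) so that \( \mathsf u(z_n^*) \) lies outside \( K_\varepsilon \); this forces \( \N(t,\varepsilon)=\N \).

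The principal technical obstacle lies in case~(i): establishing the cancellation that makes \( \Theta_n(z;t) \) independent of \( N_n \). This amounts to a bookkeeping identity inside the explicit theta-ratio construction to be carried out in Section~\ref{sec:An}, namely that the \( \mathsf u(z) \)-dependent factor produced by the lattice shift \( -\varsigma(t) \) in the numerator is precisely matched by an analogous factor in the denominator or normalization. Once this identity is in place, both halves of the proposition reduce to an elementary geometric analysis of arithmetic orbits in the Jacobian of the elliptic surface.
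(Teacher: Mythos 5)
Your proposal has genuine gaps in both cases, and in each case the gap sits exactly where the paper's argument does its real work. In case (i), the \( N_n \)-independence is not obtained from a cancellation of a \( z \)-dependent exponential between numerator and denominator: in the actual construction \eqref{ab5} the denominator's theta argument involves only \( p^{(k)} \) and carries no \( N_n \) dependence, so there is nothing for the factor you invoke to cancel against. The independence comes directly from \eqref{tc4}: when \( \varsigma\in\Z+\mathsf B\Z \), changing \( N_n \) shifts the datum \( s_{n,1} \) by a lattice vector, so the solution \( \z_{n,1} \) of the Jacobi inversion problem --- the only ingredient that varies --- is unchanged. By deferring this to an unspecified ``bookkeeping identity'' you leave unproven precisely the step you yourself call the principal obstacle, and the mechanism you propose is not the one available. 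For the ``at least one of \( n,n+1 \)'' claim your argument ends with ``which is nongeneric,'' but the proposition is asserted for every \( t\in O_\mathsf{two-cut} \), so a genericity caveat is a real gap. The paper's argument is unconditional: \( [\mathfrak a(\z_{n+1,1})-\mathfrak a(\z_{n,1})]=[\omega+\mathsf B\tau] \) (not \( [-\omega] \)), and this class is nonzero simply because \( \omega=\mu_t(J_{t,1})\in(0,1) \), see \eqref{tau-omega}; combined with the localization estimate \eqref{7.3.0} (two points whose Abel images differ by a fixed nonzero class cannot both lie in a small enough neighborhood of \( \infty^{(0)} \)) this yields the claim. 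Note also that the exceptional set defining \( \N(t,\varepsilon) \) is \( D^{(0)}\cap\pi^{-1}\{|z|\geq 1/\varepsilon\} \), a neighborhood of \( \infty^{(0)} \) on one sheet only, so the quantity \( \mathsf u\big(\infty^{(2)}\big)-\mathsf u\big(\infty^{(1)}\big) \) you worry about never enters (its nonvanishing is a separate fact, \eqref{residue}, used elsewhere).

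In case (ii) there are two problems. First, the assertion that \( \{k\varsigma \bmod (\Z+\mathsf B\Z)\}_{k\in\Z} \) is infinite whenever \( \varsigma\notin\Z+\mathsf B\Z \) is false: \( \varsigma \) may be a torsion point of the Jacobian (e.g.\ \( \varsigma\equiv(i_1+\mathsf B i_2)/d \), which is exactly the situation singled out in \eqref{degen-cond}), and then the orbit is finite, so your ``infinite orbit plus escape from a proper compact set'' step collapses. Second, your quantifiers are reversed: you fix \( \varepsilon \) and produce \( N_*=N_*(\varepsilon) \), whereas the statement requires, for every given \( N_*\geq1 \) (in particular \( N_*=1 \)), a choice of \( \{N_n\} \) followed by a threshold in \( \varepsilon \). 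Both issues disappear with the paper's one-step argument: since \( [\varsigma]\neq[0] \), the solutions of \eqref{tc4} corresponding to \( N_n \) and \( N_n\pm1 \) differ in the Jacobian by \( [\mp\varsigma] \), so by \eqref{7.3.0} at most one of them lies in the neighborhood \( U_{\varepsilon_\varsigma} \) of \( \infty^{(0)} \); choosing for each \( n \) the good neighbour (so \( |n-N_n|\leq 1\leq N_* \)) gives \( \N(t,\varepsilon)=\N \) for all \( \varepsilon\leq\varepsilon_\varsigma \). No density or equidistribution of the orbit is needed --- only that a single fixed nonzero shift separates points at the scale \( \varepsilon_\varsigma \).
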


On the other hand if the sequence \( \{N_n\}_{n\in\N}\) is fixed, we can claim the following.

\begin{proposition}
\label{prop:N}
Let \( \{N_n\}_{n\in\N}\) be a sequence such that \( |n-N_n|\leq N_* \) for some  \( N_*\geq0 \). The sequence \( \N(t,\varepsilon) \) is infinite for all \( \varepsilon \) small enough unless there exist integers \( d>0,k,i_1,i_2,m_1,m_2 \) such that
\begin{equation}
\label{degen-cond}
\varsigma(t) = (i_1+\mathsf B(t) i_2)/d, \quad \omega(t) d = (k-1)i_1 + m_1d,  \qandq \tau(t) d  = (k-1)i_2+ m_2d,
\end{equation}
where at least one of the fractions \( i_1/d \), \( i_2/d \) is irreducible, \( d \) is even, \( i_1 \) and \( i_2 \) are odd, and the sequence \( \{ N_n \} \) is such that all but finitely many numbers \( nk-N_n \) are divisible by \( d/2 \) but not by \( d \). Moreover, the following special cases take place:
\begin{itemize}
\item[(i)] if there exists an infinite subsequence \( \{n_l\} \) such that \( N_{n_l+1}-N_{n_l}\in\{0,1\} \) (in particular, this happens when \( N_n=n \), in which case the differences are always equal to \( 1 \)), then at least one of the integers \( n_l,n_l+1 \) belongs to \( \N(t,\varepsilon) \).
\item[(ii)] if one of the triples \( \omega(t),x(t),1 \) or \( \tau(t),y(t),1 \) is rationally independent, then at least one of the integers \( n,n+1 \) belongs to \( \N(t,\varepsilon) \). 
\end{itemize}
\end{proposition}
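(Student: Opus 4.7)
The plan is to translate the vanishing of \( \Theta_n \) near infinity into a two-dimensional Diophantine condition on the data \( (\omega(t), \tau(t), \varsigma(t), N_n) \) and then rule out that condition in all non-degenerate cases. \emph{Step~1 (zero locus via theta functions).} By Proposition~\ref{prop:A} the function \( \Theta_n \) has at most one zero in \( \overline\C\setminus\Ga[a_1(t),b_2(t)] \); by the construction in Section~\ref{sec:An}, this zero is encoded by a ratio of Riemann theta functions on the genus-one surface \( \mathfrak R \) of \( Q^{1/2}(\cdot;t) \). Since \( \theta(\cdot;\mathsf B(t)) \) has a unique zero modulo the period lattice \( \Lambda(t)=\Z+\mathsf B(t)\Z \), located at the Riemann-constant half-period \( (1+\mathsf B(t))/2 \), the zero \( z_n^* \) of \( \Theta_n \) is characterised by a Jacobi-inversion equation
\[
\mathfrak u(z_n^*) \equiv \mathfrak c_n \pmod{\Lambda(t)}, \qquad \mathfrak c_n = \mathfrak c_0 + n\bigl(\omega(t)+\mathsf B(t)\tau(t)\bigr) - (n-N_n)\,\varsigma(t),
\]
where \( \mathfrak u \) is the normalised Abel map and \( \mathfrak c_0 \) absorbs the Riemann constant and the image under \( \mathfrak u \) of the two points at infinity on \( \mathfrak R \).

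\emph{Step~2 (reduction to a Diophantine condition).} Writing \( \mathfrak c_n - \mathfrak u(\infty) = \xi_n + \mathsf B(t)\eta_n \) with \( \xi_n,\eta_n \in \R \) and using \( \varsigma(t)=x(t)+\mathsf B(t)y(t) \) gives
\[
\xi_n = \xi_0 + n\,\omega(t) - (n-N_n)\,x(t), \qquad \eta_n = \eta_0 + n\,\tau(t) - (n-N_n)\,y(t).
\]
The zero \( z_n^* \) escapes every fixed compact set precisely when \( (\xi_n,\eta_n) \to (0,0) \pmod{\Z^2} \), so finiteness of \( \N(t,\varepsilon) \) for all small \( \varepsilon \) forces \( (\xi_n,\eta_n)\in \Z^2 \) (up to a half-period contribution from \( \mathfrak c_0 \)) for all but finitely many \( n \). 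Holding \( k := N_{n+1}-N_n \) constant along a subsequence (possible because \( |n-N_n|\le N_* \)), integrality for infinitely many such \( n \) forces \( \omega(t)-(k-1)x(t),\tau(t)-(k-1)y(t) \in \mathbb Q \), hence \( \omega(t),\tau(t),x(t),y(t)\in\mathbb Q \) with common denominator \( d \); writing \( x(t)=i_1/d \), \( y(t)=i_2/d \) (with one fraction irreducible), reduction yields the congruences \( \omega(t)d=(k-1)i_1+m_1 d \) and \( \tau(t)d=(k-1)i_2+m_2 d \), together with the prescribed divisibility of \( nk-N_n \). The half-period shift in the target forces \( d \) to be even and \( i_1, i_2 \) to be odd, while the specific shift class is read off as divisibility of \( nk-N_n \) by \( d/2 \) but not by \( d \).

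\emph{Step~3 (special cases and obstacle).} For (i), suppose \( k_l := N_{n_l+1}-N_{n_l}\in\{0,1\} \) along an infinite subsequence. The pair \( (\xi_n,\eta_n) \) then shifts between indices \( n_l \) and \( n_l+1 \) by either \( (\omega(t),\tau(t)) \) or \( (\omega(t)-x(t),\tau(t)-y(t)) \); under the parity constraints of Step~2 neither shift can carry a (suitably shifted) integer-pair back to another, so at least one of \( n_l,n_l+1 \) must lie in \( \N(t,\varepsilon) \). For (ii), rational independence of \( (\omega(t),x(t),1) \) (respectively, \( (\tau(t),y(t),1) \)) prevents \( \xi_n \) (resp.\ \( \eta_n \)) from being integral for all large \( n \), irrespective of the choice of \( \{N_n\} \) with \( |n-N_n|\le N_* \), and the same shift argument gives ``one of \( n,n+1 \)''. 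The main obstacle is Step~2: pinning down the precise half-period shift in the characterisation of \( z_n^* \) relative to the specific Abel-map and theta-function normalisations adopted in Section~\ref{sec:An}, and extracting from it the exact parity and divisibility conditions on \( d,i_1,i_2 \) and \( nk-N_n \). Once this bookkeeping is done, Steps~1 and~3 reduce to standard Abel--Jacobi theory and elementary arithmetic of linear sequences modulo \( d \).
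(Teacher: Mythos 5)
Your skeleton --- describing the possible zero of \( \Theta_n \) through the Jacobi inversion problem \eqref{tc4}, tracking the linear drift of its Abel image, and isolating a rational degenerate case --- is the same as the paper's, but the two steps that carry the actual content of the proposition are missing. The first is the half-period identity that you yourself flag as the ``main obstacle'': in the paper this is \eqref{7.3.3}, \( \big[\int_{p^{(1)}}^{\infty^{(0)}}\mathcal H\big]=\big[(1+\mathsf B)/2\big] \), proved by applying Abel's theorem to the function \( (-1)^k\gamma^{-2}\big(z^{(k)}\big)-1 \) (zeros at \( \infty^{(0)},p^{(0)} \), poles at \( \boldsymbol b_1,\boldsymbol b_2 \)), together with the symmetry \( \mathfrak a(\z^*)=-\mathfrak a(\z) \) and the evaluation of \( \int_{\boldsymbol b_2}^{\boldsymbol b_1}\mathcal H \) along a cycle homologous to \( \boldsymbol\alpha+\boldsymbol\beta \). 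Only with this identity in hand can one rewrite, under \eqref{degen-cond}, \( \big[\int_{\infty^{(0)}}^{\z_{n,1}}\mathcal H\big]=\big[(1+\mathsf B)/2+(nk-N_n)\varsigma\big] \) and extract the parity and divisibility conditions (\( d \) even, \( i_1,i_2 \) odd, \( nk-N_n \) divisible by \( d/2 \) but not by \( d \)); since these conditions are the whole point of the ``unless'' clause, deferring this bookkeeping leaves the statement unproved. (A minor point: the drift in \eqref{tc4} is \( +(n-N_n)\varsigma \), not \( -(n-N_n)\varsigma \).)

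The second gap is logical. From ``\( \N(t,\varepsilon) \) finite for all small \( \varepsilon \)'' you conclude that \( (\xi_n,\eta_n)\in\Z^2 \) (up to a half-period) for all but finitely many \( n \); but accumulation of a sequence at a lattice point does not by itself give exact membership. The correct route, and the paper's, is to observe that the consecutive differences \( \big[\mathfrak a(\z_{n+1,1})-\mathfrak a(\z_{n,1})\big]=\big[(1-k)\varsigma+\omega+\mathsf B\tau\big] \), \( k=N_{n+1}-N_n \), range over finitely many lattice classes because \( |n-N_n|\le N_* \); the quantitative statement \eqref{7.3.0} then shows that if such a class is nonzero the two points cannot both lie in a small neighborhood of \( \infty^{(0)} \), so accumulation at \( \infty^{(0)} \) forces one of these classes to vanish exactly --- whence the set \( \mathbb K \) of exceptional increments and, when \( \mathbb K \) has two elements, \eqref{degen-cond}. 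This same mechanism, not eventual integrality of \( \xi_n,\eta_n \), is what yields the special cases, and it rests on two non-vanishing facts you never establish: \( [\omega+\mathsf B\tau]\neq[0] \) (because \( \omega=\mu_t(J_{t,1})\in(0,1) \)) and \( [\varsigma+\omega+\mathsf B\tau]\neq[0] \) (by \eqref{residue} and injectivity of the Abel map, since \( \infty^{(0)}\neq\infty^{(1)} \)). Your argument for (i) invokes ``parity constraints of Step 2,'' which exist only in the degenerate case, whereas (i) is unconditional; and in (ii) you conflate ``\( \z_{n,1}=\infty^{(0)} \)'' (non-integrality of \( \xi_n \)) with ``\( \z_{n,1}\in D^{(0)}\cap\pi^{-1}(\{|z|\ge 1/\varepsilon\}) \)'', which is what membership in \( \N(t,\varepsilon) \) actually concerns. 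The ``one of \( n,n+1 \)'' conclusions require the nonzero-class argument with \( \varepsilon \) chosen below the finitely many thresholds attached to the possible difference classes, and for (ii) one must additionally check that rational independence of a triple forces \( \mathbb K=\varnothing \).
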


We prove Propositions~\ref{prop:A},~\ref{prop:N0} and~\ref{prop:N} in Sections~\ref{ss:A},~\ref{ss:N0}, and~\ref{ss:N}, respectively. 

\begin{theorem}
\label{global2}
Let \( t\in O_\mathsf{two-cut} \) and \( \{N_n\}_{n=1}^\infty \) be a sequence such that \( |n-N_n|\leq N_* \) for some \( N_* \) fixed. Let \( P_n(z;t,N) \) be defined by \eqref{cm2}--\eqref{cm3} and
\[
\psi_n(z;t) := P_n(z;t,N_n)e^{-n(V(z;t)-\ell_*(t))/2}.
\]
Given \( \varepsilon>0 \), let \( \N(t,\varepsilon) \) be as in Proposition~\ref{prop:A}. Then for all $n\in\N(t,\varepsilon)$ large enough it holds that\footnote{By writing $f(n) = \mathcal{O}_{\varepsilon}(n^{-1})$ we mean that there exists a constant $C_\varepsilon$ depending on $\varepsilon$ for which $|f(n)| \leq C_\varepsilon n^{-1}$.}
\begin{equation}
\label{tc17}
\psi_n(z;t) = \left( \left(A\vartheta_nD^{N_n-n}\right)(z;t)  +  \mathcal{O}_\varepsilon\big(n^{-1}\big)\right) e^{n\mathcal Q(z;t)}
\end{equation}
locally uniformly in $\C\setminus J_t$; moreover,
\begin{equation}
\label{tc18}
\psi_n(s;t) = \left(A\vartheta_nD^{N_n-n}\right)_+(s;t)e^{n\mathcal Q_+(s;t)} + \left(A\vartheta_nD^{N_n-n}\right)_-(s;t)e^{n\mathcal Q_-(s;t)} +\mathcal{O}_\varepsilon\big(n^{-1}\big)
\end{equation}
locally uniformly on $ J_t^\circ$. 
\end{theorem}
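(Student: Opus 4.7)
The plan is to apply the Deift--Zhou nonlinear steepest descent analysis to the Fokas--Its--Kitaev \rhy\ characterization of \( P_n(z;t,N_n) \), namely the \( 2\times 2 \) problem on \( \Ga_t \) whose jump is the standard upper-triangular matrix with \( e^{-N_nV(s;t)} \) above the diagonal and whose asymptotic normalization at infinity reads \( \boldsymbol Y(z) = (I+\mathcal O(1/z))\diag(z^n,z^{-n}) \); the \( (1,1) \)-entry recovers \( P_n(z;t,N_n) \). First I would perform a \( g \)-function transformation \( \boldsymbol Y\mapsto \boldsymbol T \) based on \( \mathcal Q(z;t) \) and \( \ell_*(t) \). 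The identity \eqref{oc5} yields a \( \boldsymbol T \) normalized to the identity at infinity, while the jump relations \eqref{gjumps} convert the original jump on \( J_t^\circ \) into an oscillatory form admitting the standard triangular factorization, produce a constant unimodular jump on \( I_t^\circ \) carrying the factor \( e^{\pm 2\pi\ic n\omega(t)} \), and leave exponentially decaying jumps on the unbounded tails of \( \Ga_t\setminus J_t \) (since \( \re V\to+\infty \) in the admissible asymptotic sectors of \eqref{cm2b}).

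The next transformation \( \boldsymbol T\mapsto\boldsymbol S \) opens lenses around each of \( J_{t,1}^\circ \) and \( J_{t,2}^\circ \) using the standard upper/lower triangular factorization of the oscillatory jump. By Theorem~\ref{geometry2} together with \eqref{em0}, one has \( \re \mathcal Q = \mathcal U/2<0 \) throughout the shaded region of Figure~\ref{fig:tc}, so the jumps on the lens boundaries are exponentially close to \( I \) uniformly away from \( E_t \).

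The model parametrix \( \boldsymbol N \) is then built piecewise as a \( 2\times 2 \) matrix whose entries have the form \( A(z;t)\vartheta_n(z;t)D^{\pm(N_n-n)}(z;t) \), multiplied by the appropriate powers of \( e^{\ell_*(t)/2} \) prescribed by Proposition~\ref{prop:l}. The Szeg\H{o} factor \( D \), via \eqref{Djumps}, absorbs the multiplicative jump on \( J_t^\circ \) and introduces a constant jump \( e^{2\pi\ic(N_n-n)\varsigma(t)} \) on \( I_t^\circ \); the theta-function ratio \( \vartheta_n \) from Proposition~\ref{prop:A} compensates this together with the residual \( e^{-2\pi\ic n\omega(t)} \) on \( I_t^\circ \) and reconciles the two-cut monodromy on \( J_{t,1}^\circ,J_{t,2}^\circ \) controlled by \( \tau(t) \); and the algebraic factor \( A(z;t) \) from \eqref{alpha}, which extends analytically through each side of \( J_t^\circ \), carries the fourth-root discontinuity required by the lens factorization. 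The hypothesis \( n\in\N(t,\varepsilon) \) is precisely what Proposition~\ref{prop:A} needs in order to guarantee \( \Theta_n(\infty;t)\neq 0 \) and \( |A\vartheta_n|\leq C(t,\varepsilon,N_*) \) uniformly off small fixed disks around \( E_t \). Inside such disks, Airy parametrices \( \boldsymbol P_e \) are constructed in the standard way and matched with \( \boldsymbol N \) on the disk boundaries up to an \( \mathcal O(n^{-1}) \) error.

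The main technical obstacle is the construction of \( \boldsymbol N \): verifying algebraically that the ansatz \( A\vartheta_nD^{\pm(N_n-n)} \) actually solves the constant-jump model RHP with the multiplicative constant \( e^{\ell_*(t)} \) of Proposition~\ref{prop:l}, and quantifying its uniform boundedness, requires careful arithmetic with the Riemann theta function together with the non-vanishing statements in Propositions~\ref{prop:N0}--\ref{prop:N}. Once these are in place, the residual matrix \( \boldsymbol R:=\boldsymbol S\boldsymbol N^{-1} \) outside the disks (and \( \boldsymbol S\boldsymbol P_e^{-1} \) inside a disk around \( e\in E_t \)) solves a small-norm RHP with jumps \( I+\mathcal O_\varepsilon(n^{-1}) \) on the disk boundaries and exponentially small elsewhere, giving \( \boldsymbol R=I+\mathcal O_\varepsilon(n^{-1}) \) uniformly for \( n\in\N(t,\varepsilon) \). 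Unwinding \( \boldsymbol Y\leftarrow\boldsymbol T\leftarrow\boldsymbol S\leftarrow\boldsymbol R\boldsymbol N \) and reading the \( (1,1) \)-entry yields \eqref{tc17} locally uniformly on \( \C\setminus J_t \); on \( J_t^\circ \) the two triangular factors from the lens opening each contribute one of the \( \pm \) boundary values of the model, producing the sum in \eqref{tc18}.
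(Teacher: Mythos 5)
Your proposal follows essentially the same route as the paper: the Fokas--Its--Kitaev \rhy{} problem, the $g$-function transformation built from $\mathcal Q(z;t)$ and $\ell_*(t)$, lens opening around $J_{t,1}^\circ$ and $J_{t,2}^\circ$, a global parametrix whose entries are exactly the $A\vartheta_n D^{\pm(N_n-n)}$ combinations (the paper packages them as the functions $M_{n,k}(\z)$ on the Riemann surface, with $n\in\N(t,\varepsilon)$ ensuring the lower bound at infinity needed for invertibility and the small-norm step), Airy local parametrices, and the small-norm argument for $\boldsymbol R$, with the $(1,1)$-entry giving \eqref{tc17} and the two lens factors giving the two terms in \eqref{tc18}. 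The sketch is consistent with the paper's proof in Sections~\ref{s:g}--\ref{s:aa}.
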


Recall that each \( \vartheta_n(z;t) \) might have a single zero in \( \C\setminus J_t \). If these zeros accumulate at a point \( z_* \) along a subsequence of \( \N(t,\varepsilon) \), then the polynomials \( P_n(z;t,N_n) \) will have a single zero approaching \( z_* \) along this subsequence by \eqref{tc17} and Rouche's theorem. With this exception, it also follows from \eqref{tc17} that \( P_n(z;t,N_n) \) are eventually zero free on compact subsets \( \C\setminus J_t \). The main part of the proof of Theorem~\ref{global2} is carried out in Section~\ref{s:aa} with auxiliary details relegated to Sections~\ref{s:g} and~\ref{sec:An}.

The functions \( \vartheta_n(z;t) \) were defined as pull-backs from one of the sheets (a copy of \( \overline\C\setminus J_t \)) of certain ratios of Riemann theta functions defined on the (two-sheeted) Riemann surface of \( Q^{1/2}(z;t) \). To describe asymptotics of the recurrence coefficients we shall need the pull-backs from the other sheet as well, which, for the purpose of  the next theorem, we denote by \( \vartheta_n^*(z;t) \). To complicate matters further, our analysis requires another related family of ratios of theta functions, which, in the next theorem, we will denote by \( \widetilde\vartheta_n(z;t) \), see \eqref{all-thetas} for the rigorous definition of these functions. The proof of Proposition~\ref{prop:A} in fact shows that \( \vartheta_n^*(\infty;t) = \mathcal O_\varepsilon(1) \) and \( \widetilde\vartheta_n(\infty;t) = \mathcal O_\varepsilon(1) \) for \( n\in \N(t,\varepsilon) \).

\begin{theorem}
\label{thm:beta}
 Let \( S_k(t) := \sum_{i=1}^2 \big(b_i^k(t)-a_i^k(t)\big) \) for \(k\in\{1,2\}\). In the setting of Theorem \ref{global2}, it holds for all $n \in \N(t, \varepsilon)$ large enough that 
\begin{equation}
\label{hn-asymp}
\begin{cases}
h_n(t, N_n) & = \frac{\pi}{2} e^{-n\ell_*(t)} S_1(t)  \mathcal{D}^{2(n - N_n)}(\infty; t)  \vartheta_n^*(\infty;t) + \mathcal{O}_{\varepsilon}\big( n^{-1}\big), \medskip \\
\gamma_n^2(t, N_n) & = \frac1{16} S_1^2(t) \vartheta_n^*(\infty;t)\widetilde\vartheta_n(\infty;t) + \mathcal{O}_{\varepsilon}\big( n^{-1}\big),
\end{cases}
\end{equation}
where $\mathcal{D}(\infty; t)$ is as in Proposition \ref{prop:Szego}, $\ell_*(t)$ as in \eqref{oc5}, and $\mathcal{O}_\varepsilon(\cdot)$ is interpreted in the same way as in Theorem \ref{global2}. Furthermore, it holds that
\begin{equation}
\label{betan-asymp}
\beta_n(t, N_n) = \frac{\vartheta_n^*(\infty;t) \big( \frac{S_2(t)}{2S_1(t)} + \frac{\dd}{\dd z} \big( \log \vartheta_n(1/z;t) + \log \vartheta_n^*(1/z;t) \big)\bigl|_{z = 0} \big) + \mathcal O_\varepsilon\big( n^{-1}\big)}{\vartheta_n^*(\infty;t) + \mathcal O_\varepsilon\big( n^{-1}\big)},
\end{equation}
where we agree that \( \vartheta_n^*(\infty;t)\frac{\dd}{\dd z} \big(\log \vartheta^*_n(1/z;t) \big)\big|_{z=0} = \frac{\dd}{\dd z} \vartheta^*_n(1/z;t) \big|_{z = 0} \) even when \(  \vartheta^*_n(\infty;t)=0 \).
\end{theorem}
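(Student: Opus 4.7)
The plan is to extract all three formulas from the large-$z$ expansion of the Fokas--Its--Kitaev matrix encoding the orthogonal polynomials, namely
\[
Y(z;t,N_n) := \begin{pmatrix} P_n(z;t,N_n) & \mathcal C_{\Ga_t}[P_n e^{-N_n V}](z) \\[2pt] -\tfrac{2\pi\ic}{h_{n-1}(t,N_n)}P_{n-1}(z;t,N_n) & -\tfrac{1}{h_{n-1}(t,N_n)}\mathcal C_{\Ga_t}[P_{n-1} e^{-N_n V}](z) \end{pmatrix},
\]
with $\mathcal C_{\Ga_t}[f](z) := \int_{\Ga_t}f(s)(s-z)^{-1}\dd s$. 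Writing $Y(z)z^{-n\sigma_3} = I + Y^{(1)}/z + Y^{(2)}/z^2 + \mathcal O(z^{-3})$ at infinity with $\sigma_3 := \diag(1,-1)$, orthogonality \eqref{cm3} combined with \eqref{cm4}--\eqref{cm5b} gives the classical identities
\[
h_n(t,N_n) = -2\pi\ic\,(Y^{(1)})_{12}, \qquad \gamma_n^2(t,N_n) = (Y^{(1)})_{12}(Y^{(1)})_{21}, \qquad \beta_n(t,N_n) = \frac{(Y^{(2)})_{12}}{(Y^{(1)})_{12}},
\]
so the task reduces to computing $Y^{(1)}$ and $Y^{(2)}$.

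The key input will be the Riemann--Hilbert steepest-descent analysis already performed in the proof of Theorem~\ref{global2}: the chain $Y\mapsto T\mapsto S\mapsto R$ produces an RH problem for $R(z) = I + R^{(1)}/z + \mathcal O_\varepsilon(n^{-1}z^{-2})$ with $R^{(1)} = \mathcal O_\varepsilon(1/n)$ for $n\in\N(t,\varepsilon)$. Unwinding these transformations outside the lenses and the local parametrix disks expresses $Y(z)$ in terms of $R(z)N(z)$, conjugated by suitable exponential factors built from $\mathcal Q(z;t)$, $V(z;t)$, and $\ell_*(t)$, where $N(z)$ is the global parametrix assembled in Section~\ref{sec:An} out of $A(z;t)$, $D^{N_n-n}(z;t)$, and the theta quotients producing $\vartheta_n$, $\vartheta_n^*$, $\widetilde\vartheta_n$. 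By \eqref{oc5} the conjugating exponentials convert the $z^{\pm n}$ prefactor back into an identity-normalized expansion, so $Y^{(k)}$ is obtained from the first two coefficients in the Taylor-type expansion of $(RN)(z)$ at $z=\infty$, up to a bounded shift coming from the subleading terms of $e^{n(\mathcal Q-V/2+\ell_*/2)}$.

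The expansion of $N(z)$ at $z=\infty$ is explicit. The root factor in $A(z;t)$ from \eqref{alpha} has the series $((z-b_1)(z-b_2)/((z-a_1)(z-a_2)))^{\pm 1/4} = 1 \pm S_1(t)/(4z) + (\pm S_2(t)/8 + 3S_1^2(t)/32)/z^2 + \cdots$, which produces the factor $S_1(t)/4$ (times $\pm\ic$) in the off-diagonal entries of $N^{(1)}$ and the combination $S_2(t)/(2S_1(t))$ when one takes the ratio $N^{(2)}_{12}/N^{(1)}_{12}$; Proposition~\ref{prop:Szego} delivers the scalar diagonal $\mathcal D^{\pm(N_n-n)}(\infty;t)$; and the theta factors contribute $\vartheta_n(\infty;t)$ in the $(1,1)$-slot, $\vartheta_n^*(\infty;t)$ in the $(2,2)$-slot, and $\widetilde\vartheta_n(\infty;t)$ in the $(2,1)$-slot. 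Substituting into $h_n = -2\pi\ic(Y^{(1)})_{12}$ yields the first line of \eqref{hn-asymp}, the constant $\pi/2$ appearing as $(-2\pi\ic)\cdot(\ic/4)$; substituting into $\gamma_n^2 = (Y^{(1)})_{12}(Y^{(1)})_{21}$ cancels the $e^{\pm n\ell_*}$ and $\mathcal D^{\pm(N_n-n)}$ factors between the two entries, leaves $S_1^2(t)/16$, and produces the product $\vartheta_n^*(\infty;t)\widetilde\vartheta_n(\infty;t)$, giving the second line.

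The $\beta_n$ formula is the delicate step and the main obstacle. One needs the $1/z^2$ coefficient of $(RN)_{12}$, namely $(N^{(2)} + R^{(1)}N^{(1)})_{12}$, so the $R^{(1)}N^{(1)}$-contribution enters as an additive $\mathcal O_\varepsilon(1/n)$ term in the numerator of \eqref{betan-asymp}. The product structure $N = A\cdot\diag(D^{N_n-n},D^{n-N_n})\cdot(\text{theta matrix})$ splits $N^{(2)}_{12}/N^{(1)}_{12}$ into the $A$-contribution $S_2(t)/(2S_1(t))$, the sum of logarithmic derivatives $\frac{\dd}{\dd z}\log\vartheta_n(1/z;t)\big|_{z=0} + \frac{\dd}{\dd z}\log\vartheta_n^*(1/z;t)\big|_{z=0}$ from the theta factors, and a $D$-contribution that cancels because its off-diagonal entries share the same scalar subleading coefficient. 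The difficulty is that along subsequences of $\N(t,\varepsilon)$ on which $\vartheta_n^*(\infty;t)\to 0$ the denominator $(Y^{(1)})_{12}$ itself degenerates, so one cannot simplify by cancelling $\vartheta_n^*(\infty;t)$; the stated convention $\vartheta_n^*(\infty;t)\frac{\dd}{\dd z}\log\vartheta_n^*(1/z;t)\big|_{z=0} = \frac{\dd}{\dd z}\vartheta_n^*(1/z;t)\big|_{z=0}$ is precisely what the RH argument produces, and presenting $\beta_n$ as the quotient of two independent approximations each carrying its own additive $\mathcal O_\varepsilon(1/n)$ remainder, as in \eqref{betan-asymp}, is what yields a uniform statement valid in the degenerate regime.
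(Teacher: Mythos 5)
Your route for \( h_n \) and \( \gamma_n^2 \) is essentially the paper's own: unwind \( \boldsymbol Y = e^{-n\ell_*\sigma_3/2}(\boldsymbol R\boldsymbol N)e^{n(g+\ell_*/2)\sigma_3} \) near infinity, expand the global parametrix, and read off \( -2\pi\ic[\boldsymbol Y_1]_{12} \) and \( [\boldsymbol Y_1]_{12}[\boldsymbol Y_1]_{21} \); apart from slot misattributions in where \( \vartheta_n^* \) and \( \widetilde\vartheta_n \) enter \( \boldsymbol N \), that part is sound in outline.

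The genuine gap is your starting identity for \( \beta_n \). With monic polynomials, \( \beta_n = (P_{n,t,N})_{n-1} - (P_{n+1,t,N})_n \) as in \eqref{rc1}, and since \( [\boldsymbol Y_1]_{11} = (P_{n,t,N})_{n-1} \) while \( [\boldsymbol Y_2]_{12}/[\boldsymbol Y_1]_{12} = -(P_{n+1,t,N})_n \), the correct formula is \( \beta_n = [\boldsymbol Y_1]_{11} + [\boldsymbol Y_2]_{12}[\boldsymbol Y_1]_{12}^{-1} \), i.e.\ \eqref{rc3}; your \( \beta_n = (Y^{(2)})_{12}/(Y^{(1)})_{12} \) drops the \( [\boldsymbol Y_1]_{11} \) term and equals \( -(P_{n+1,t,N})_n \), not \( \beta_n \). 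This is not a harmless slip: from \eqref{oc4}, \( e^{-ng(z)\sigma_3} = z^{-n\sigma_3}\big(\boldsymbol I + \tfrac{n\mu_1}{z}\sigma_3 + \cdots\big) \), so the shift you call ``bounded'' is in fact of size \( n\mu_1 \) (see \eqref{rc4}--\eqref{Y-T}); it cancels only in the combination \( [\boldsymbol Y_1]_{11} + [\boldsymbol Y_2]_{12}[\boldsymbol Y_1]_{12}^{-1} \), whereas your expression retains an uncancelled \( n\mu_1 \) and is generically unbounded in \( n \). Moreover, the term \( \tfrac{\dd}{\dd z}\log\vartheta_n(1/z;t)\big|_{z=0} \) in \eqref{betan-asymp} comes precisely from \( [\boldsymbol N_1]_{11} \) (the image of the missing \( [\boldsymbol Y_1]_{11} \)), and the cancellation of the Szeg\H{o}-type constant \( D_1 \) that you attribute to the diagonal \( D \)-factor actually occurs between \( [\boldsymbol N_1]_{11}[\boldsymbol N_1]_{12} \) and \( [\boldsymbol N_2]_{12} \), compare \eqref{N-1}--\eqref{N-2}; neither is available once \( [\boldsymbol Y_1]_{11} \) is discarded. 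Finally, the identity \eqref{rc3} presupposes \( \deg P_{n+1}=n+1 \); the paper justifies its validity also in the degenerate case \( \vartheta_n^*(\infty;t)=0 \) by a meromorphic-continuation argument in \( t \), a step your sketch should include rather than only gesturing at the degenerate regime when interpreting \eqref{betan-asymp}.
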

Of course, if there exists a subsequence of \( \N(t,\varepsilon) \) along which the values \( \vartheta^*_n(\infty;t) \) are separated away from zero, then formula \eqref{betan-asymp} can be further simplified. However, such a subsequence might not exist (in particular, one can deduce from \eqref{residue} and \eqref{7.3.2} further below that it does not exist when \( N_n=n \) and \( \omega(t)=\tau(t)=1/2 \)). Notice that formulae \eqref{hn-asymp} are not immediately reflective of the equality in \eqref{cm5b}. However, as we point out further below in \eqref{rc8}, the leading term of the asymptotics of \( \gamma_n^2(t,N_n) \) can be rewritten to make this connection more transparent. Theorem \ref{thm:beta} is proven in Section \ref{s:ae}, but its proof heavily relies on the material of Section~\ref{s:aa}.

\section{S-curves}
\label{s:pr-geom}

In this section we prove Theorem~\ref{geometry2}. We do it in several steps. In Section~\ref{ss:51} we gather results about quadratic differentials that will be important to us throughout the proof. In Section~\ref{ss:52} we show the validity of formula \eqref{ts8}; that is, we prove that we are indeed in the two-cut case when \( t\in O_\mathsf{two-cut} \). In Section~\ref{ss:53} we show that the critical and critical orthogonal graphs of
\[
\varpi_t(z):=-Q(z;t)\dd z^2
\]
do look like as depicted on Figure~\ref{fig:tc}. In Section~\ref{ss:54} we describe the dependence of the zeros of \( Q(z;t) \) on \( t \) by showing  that the variational condition \eqref{em2} and the S-property \eqref{em5} yield that the zeros satisfy a certain system of real equations with non-zero Jacobian, see \eqref{jac7} and \eqref{jac9}, and that this system is, in fact, uniquely solved by them. Finally, in Section~\ref{ss:55} we establish the limits in~\eqref{ts8a}.

\subsection{On Quadratic Differentials}
\label{ss:51}

 To start, let us also recall the following important result, known as Teichm\"uller's lemma, see \cite[Theorem 14.1]{Strebel}. Let \( P \) be a geodesic polygon of a quadratic differential, that is, a Jordan curve in $\overline{\C}$ that consists of a finite number of trajectories and orthogonal trajectories of this differential. Then it holds that 
\begin{equation}
\label{tl}
\sum_{z \in P} \left(1 - \theta(z) \dfrac{2+\mathrm{ord}(z)}{2 \pi}\right) = 2 + \sum_{z \in \mathrm{int}(P)} \mathrm{ord}(z),
\end{equation}
where $\mathrm{ord}(z)$ is the order of $z$ with respect to the considered differential and $\theta(z) \in [0, 2\pi]$, $z \in P$, is the interior angle of $P$ at $z$. Both sums in \eqref{tl} are finite since only critical points of the differential have a non-zero contribution. 

Let us briefly recall the main properties of the differential \( \varpi_t(z) \). The only critical points of \( \varpi_t(z) \) are the zeros of \( Q(z;t) \) and the point at infinity. Regular points have order \(0\), the order of a zero of \( Q(z;t) \) is equal to its multiplicity, and infinity is a critical point of order $-8$. Through each regular point passes exactly one trajectory and one orthogonal trajectory of $\varpi_t(z)$, which are orthogonal to each other at the point.  Two distinct (orthogonal) trajectories meet only at critical points \cite[Theorem 5.5]{Strebel}. As $Q(z;t)$ is a polynomial, no finite union of (orthogonal) trajectories can form a closed Jordan curve, while a trajectory and an orthogonal trajectory can intersect at most once \cite[Lemma 8.3]{Pommerenke}.  Furthermore, (orthogonal) trajectories of $\varpi_t(z)$ cannot be recurrent (dense in two-dimensional regions) \cite[Theorem 3.6]{Jenkins}. From each critical point of order \( m>0 \) there emanate \( m+2 \) critical trajectories whose consecutive tangent lines at the critical point  form an angle \( 2\pi/(m+2) \). Furthermore, since infinity is a pole of order 8, the critical trajectories can approach infinity only in six distinguished directions, namely, asymptotically to the lines \( L_{-\pi/6} \), \( L_{\pi/6} \), and \( L_{\pi/2} \), where \( L_\theta = \{z:~z=re^{\ic\theta},~r\in(-\infty,\infty)\} \). In fact, there exists a neighborhood of infinity such that any trajectory entering it necessarily tends to infinity \cite[Theorem 7.4]{Strebel}. This discussion also applies to orthogonal trajectories. In particular, they can approach infinity asymptotically to the lines \( L_0 \), \( L_{\pi/3} \), and \( L_{2\pi/3} \) only.

Denote by \( \mathcal G \) the critical graph of \( \varpi_t(z) \), that is, the totality of all the critical trajectories of \( \varpi_t(z) \). Then, see \cite[Theorem 3.5]{Jenkins}, the complement of \( \mathcal G \) can be written as a disjoint union of either half-plane or strip domains. Recall that a \emph{half-plane (or end) domain} is swept by trajectories unbounded in both directions that approach infinity along consecutive critical directions. Its boundary is connected and consists of a union of two unbounded critical trajectories and a finite number (possibly zero) of short trajectories of $\varpi_t(z)$. The map $z \mapsto \int^z \sqrt{-\varpi_t}$ maps end domains conformally onto half planes $\{z \in \C \ | \ \re (z) >c \}$ for some $c \in \R$ that depends on the domain, and extends continuously to the boundary. Similarly, a \emph{strip domain} is again swept by trajectories unbounded in both directions, but its boundary consists of two disjoint $\varpi_t(z)$-paths, each of which is comprised of two unbounded critical trajectories and a finite number (possibly zero) of short trajectories. The map $z \mapsto \int^z \sqrt{-\varpi_t}$ maps strip domains conformally onto vertical strips $\{w \in \C \ | \ c_1 < \re (w) < c_2 \}$ for some $c_1,c_2 \in \R$ depending on the domain, and extends continuously to their boundaries. The number $c_2 - c_1$ is known as the \emph{width of a strip domain} and can be calculated in terms of $\varpi_t(z)$ as 
\begin{equation}
\label{strip-width}
\left| \re \int_p^q \sqrt{-\varpi_t} \right|
\end{equation}
where $p, q$ belong to different components of the boundary of the domain.

\subsection{Proof of \eqref{ts8}}
\label{ss:52}

Assume to the contrary that for a given $t\in O_{\mathsf{two-cut}}$ we are in one-cut case. That is, there exists a choice of \( a \), \( b \), and \( c \) such that the polynomial \( Q(z;t) \) from \eqref{em5} has the form \eqref{ts1}. It follows from  \eqref{em5} in conjunction with \eqref{cm2} that
\begin{equation}
\label{ts2}
Q(z;t)=\left(\frac{-z^2+t}{2}-\frac{1}{z}+\mathcal O\left(z^{-2}\right)\right)^2=\frac{(z^2-t)^2}{4}+z+C
\end{equation}
for some constant \( C \). Then, by equating the coefficients in \eqref{ts1} and \eqref{ts2}, we obtain a system of equations
\begin{equation}
\label{ts3}
\left\{
\begin{aligned}
&a+b+2c=0,\\
&ab+c^2+2(a+b)c=-2t,\\
& 2abc+(a+b)c^2=-4.
\end{aligned}
\right.
\end{equation}
Setting $x:= (a + b)/2$ and eliminating the product $ab$ from the second and third relations in \eqref{ts3} yields 
\begin{equation}
\label{ts4}
x^3 - tx - 1 = 0,
\end{equation}
which is exactly the equation appearing before \eqref{ts5}. Given any solution of \eqref{ts4}, say \( x(t) \), then \( a(t) \), \( b(t) \), and \( c(t) \) are necessarily expressed via \eqref{ts6}. Theorem~\ref{fundamental} and the variational condition \eqref{em2} imply that there must exist a contour \( \Ga_t \in\mathcal T\) (this class of contours was defined right after \eqref{cm2b}) such that
\begin{equation}
\label{needed}
\mathcal U(z;t) \leq 0 \quad \text{ for all } \quad z\in\Ga_t,
\end{equation}
see \eqref{em0}. In what follows, we shall show that no such contour exists in \( \mathcal T \) for any of the three  possible choices of $x(t)$ solving \eqref{ts4} when $t \in O_{\mathsf{two-cut}}$ and \( Q(z;t) \) is given by \eqref{ts1} and \eqref{ts6}. 

In accordance with the above strategy, observe that the solutions of \eqref{ts4} can be written as 
\begin{equation}
\label{ts7}
x_k(t) = u_k(t) + \dfrac{t}{3u_k(t)}, \quad u_k(t) := \left(\dfrac{1}{2} - \sqrt{\dfrac{1}{4} - \dfrac{t^3}{27}}\right)^{1/3} e^{2k\pi \ic/3},
\end{equation}
\( k \in\{ 0, 1, 2 \} \), with all branches being principal. It can be readily verified that \( x_1(t) \) is analytic in \( \C\setminus(\overline{e^{2\pi\ic/3}S},\overline S) \) (here, \(\overline\cdot\) means topological closure), see \eqref{ts5} for the definition of the ray \( S \), and
\begin{equation}
\label{xrelations}
\left\{
\begin{array}{ll}
x_0(t) & = e^{4\pi \ic /3} x_1\big(t e^{-2\pi \ic /3}\big), \\
x_1(t) &= e^{4\pi \ic /3}\overline{x_1\big(\overline{t} e^{2\pi \ic/3}\big)},\\
x_2(t) &= \overline{x_1\big(\overline{t}\big)}.
\end{array}
\right.
\end{equation}
Furthermore, noting that the function $x(t)$, defined after \eqref{ts5}, maps $\Omega_{\mathsf{one-cut}}$ onto $O_{\mathsf{one-cut}}$, it can be easily checked that $x(t)$ is evaluated as shown on Figure~\ref{fig:x-values}, where the dashed lines are the chosen branch cuts of $x_1(t)$. In particular, $x(t)$ can be analytically continued across \( C_\mathsf{split} \), \( C_\mathsf{birth}^a \), and \( C_\mathsf{birth}^b \), see \eqref{ts5} and Figure~\ref{fig:x-values}. In what follows, we consider what happens in the case of each of these continuations. 

\begin{figure}[ht!]
	\centering
	\includegraphics[scale=1.2]{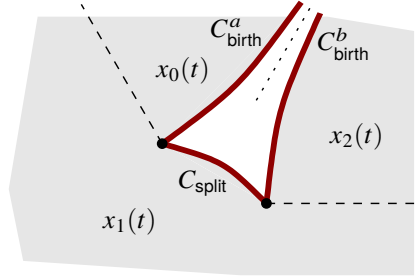}
	\begin{picture}(0,0)
	\put (-120,18){$x_1(t)$}
	\put (-100, 75){$x_0(t)$}
	\put (-35, 50){$x_2(t)$}
	\put(-40,85){$C_\mathsf{birth}^b$}
	\put(-80,90){$C_\mathsf{birth}^a$}
	\put(-92,33){$C_\mathsf{split}$}
	\end{picture}
	\caption{\small Determination of $x(t)$}
	\label{fig:x-values}
\end{figure}

Continue \( x(t) \) into \( O_{\mathsf{two-cut}} \) by either \( x_2(t) \) or \( x_0(t) \), that is, analytically across either  \( C_\mathsf{birth}^b \) or  \( C_\mathsf{birth}^a \), see Figure~\ref{fig:x-values}. The first and the last symmetries in \eqref{xrelations} then yield that \( \varpi_t(z) \) is either equal to
\[
\overline{\varpi_{\overline{t}}(\overline{z})} \quad \text{or} \quad \varpi_{t e^{-2\pi \ic/3}} \big(z e^{-4\pi \ic/3}\big).
\]
Since the set \( O_{\mathsf{two-cut}} \) is symmetric with respect to the line \( L_{\pi/3} \), its rotation by \( -2\pi/3 \) is equal to its reflection across the real axis. Thus, the critical graph \( \varpi_t(z) \) when \( t\in O_\mathsf{two-cut} \) and \( x(t) \) is continued by either \( x_2(t) \) or \( x_0(t) \) is equal to the reflection across the real axis or the rotation by \( 4\pi/3 \) of the critical graph of \( \varpi_{t_*}(z) \) for some \( t_* \) such that \( \overline t_* \in  O_\mathsf{two-cut} \). These graphs were studied in \cite[Theorems~3.2 and~3.4]{BlDeaY17} and determined to have the structure as depicted on Figure~\ref{s-curves1}(a--c) (or the reflection of these three panels across the line \( L_{2\pi/3} \)). A direct examination shows that none of these critical graphs form a curve in \(  \mathcal T \) for which \eqref{needed} holds (such a curve must belong to the closure of the gray regions on Figure~\ref{s-curves1}).

Suppose now that we continue $x(t)$ by \( x_1(t) \), that is, analytically across $C_{\mathsf{split}}$. For such a choice of \( x(t) \), the critical graph of \( \varpi_t(z) \) was studied in \cite{HKL} when \( t\in L_{\pi/3} \) (in which case the critical graph is symmetric with respect to \( L_{2\pi/3} \)). In particular, it was shown that \( x(t)\in L_{2\pi/3} \), no union of critical trajectories join \( a \) and \( b \), and no critical trajectory of \( \varpi_t(z) \) crosses the line \( L_{2\pi/3} \) when \( t\in L_{\pi/3}\cap O_\mathsf{two-cut} \), see \cite[Lemma~3.2]{HKL}. Since critical trajectories cannot intersect, can approach infinity asymptotically to the lines \( L_{\pi/6} \), \( L_{\pi/2}\), and \( L_{-\pi/6} \) only and must obey Teichm\"uller's lemma \eqref{tl}, the critical graph of $\varpi_t(z)$ must be as on Figure~\ref{fig:cc-sym-line}.
\begin{figure}[ht!]
\includegraphics[scale=0.25]{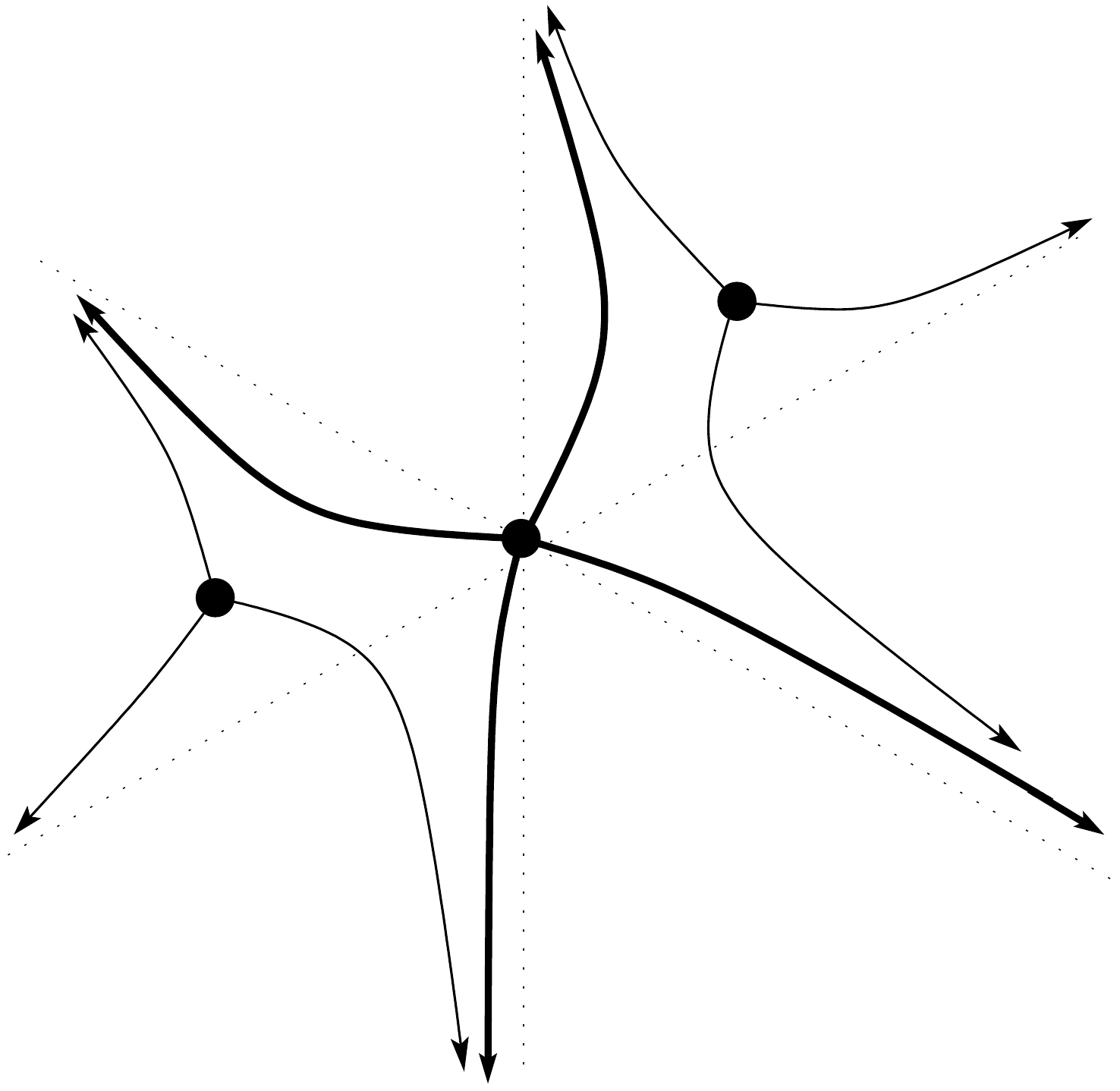}
\caption{\small The critical graph of $\varpi_t(z)$ when $x(t)$ is analytically continued across $C_{\mathsf{split}}$.}
\label{fig:cc-sym-line}
\end{figure}
Clearly, this critical graph does not yield a curve in \(  \mathcal T \) for which \eqref{needed} holds. Thus, to complete the proof, we need to argue that the structure of the critical trajectories of $\varpi_t(z)$ remains the same for all $t \in O_\mathsf{two-cut}$ in the considered case. Observe that it is enough to show that all the trajectories out of \( b \) approach infinity.

Recall that the trajectories emanating out of \( b \) are part of the level set \( \{\mathcal U(z;t) =0\} \), see \eqref{em0}. When \( t\in L_{\pi/3}\cap O_\mathsf{two-cut} \), these trajectories approach infinity at the angles \( -\pi/6 \), \( \pi/6 \), and \( \pi/2 \), see Figure~\ref{fig:cc-sym-line}. Since the values of \( \mathcal U(z;t) \) analytically depend on \( t \),  the same must be true in a neighborhood of each such \( t \). This will remain so until one of the trajectories hits a critical point different from the one at infinity. As can be seen from Figure~\ref{fig:cc-sym-line}, this critical point must necessarily be \( c=-x \), see \eqref{ts6}. That is, as long as \( \mathcal U(-x;t)\neq 0 \), the trajectories out of \( b \) will asymptotically behave as on  Figure~\ref{fig:cc-sym-line}. 
\begin{figure}[ht!]
\subfigure[]{\includegraphics[scale=.23]{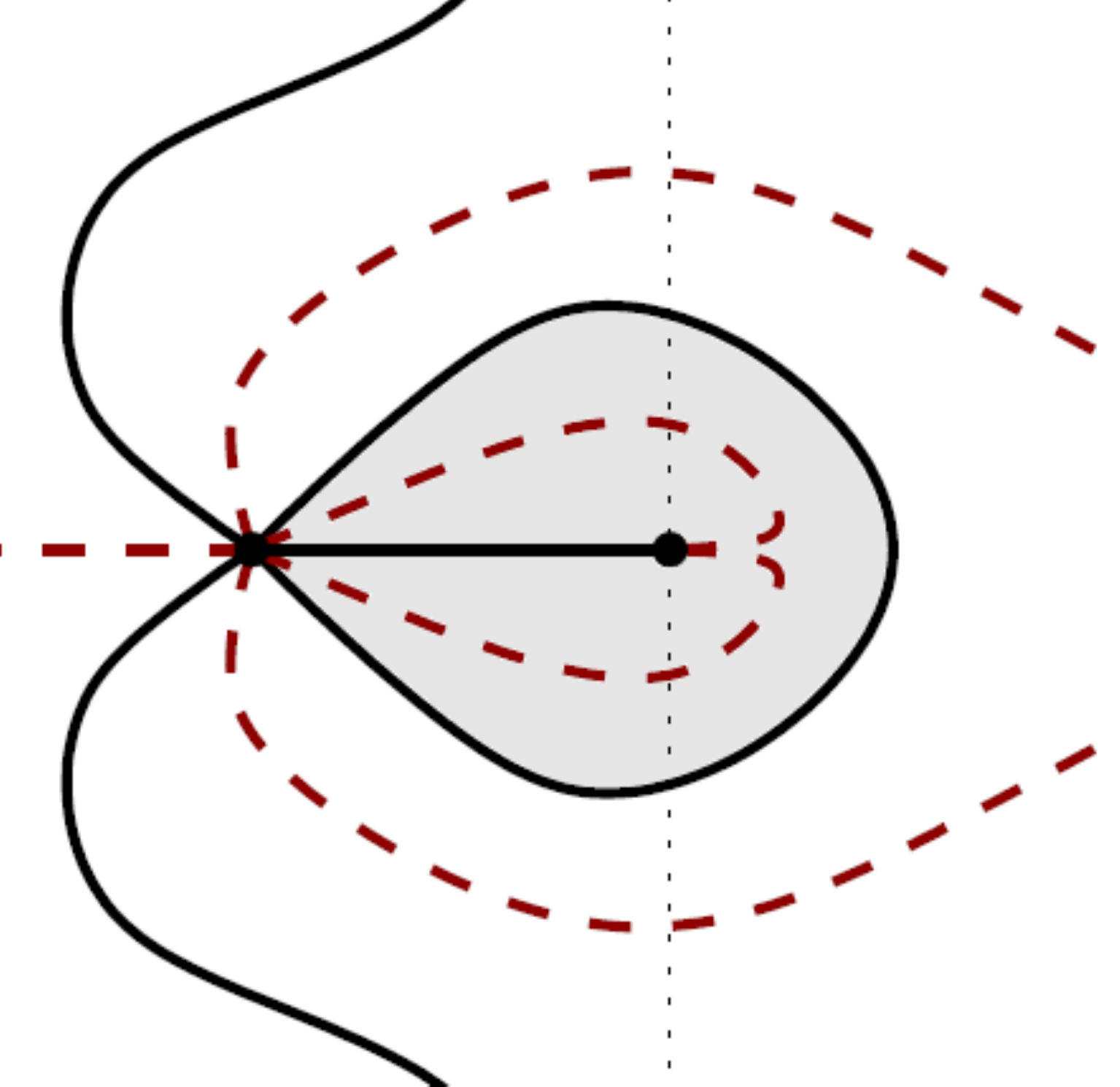}} \quad\quad
\subfigure[]{\includegraphics[scale=.2]{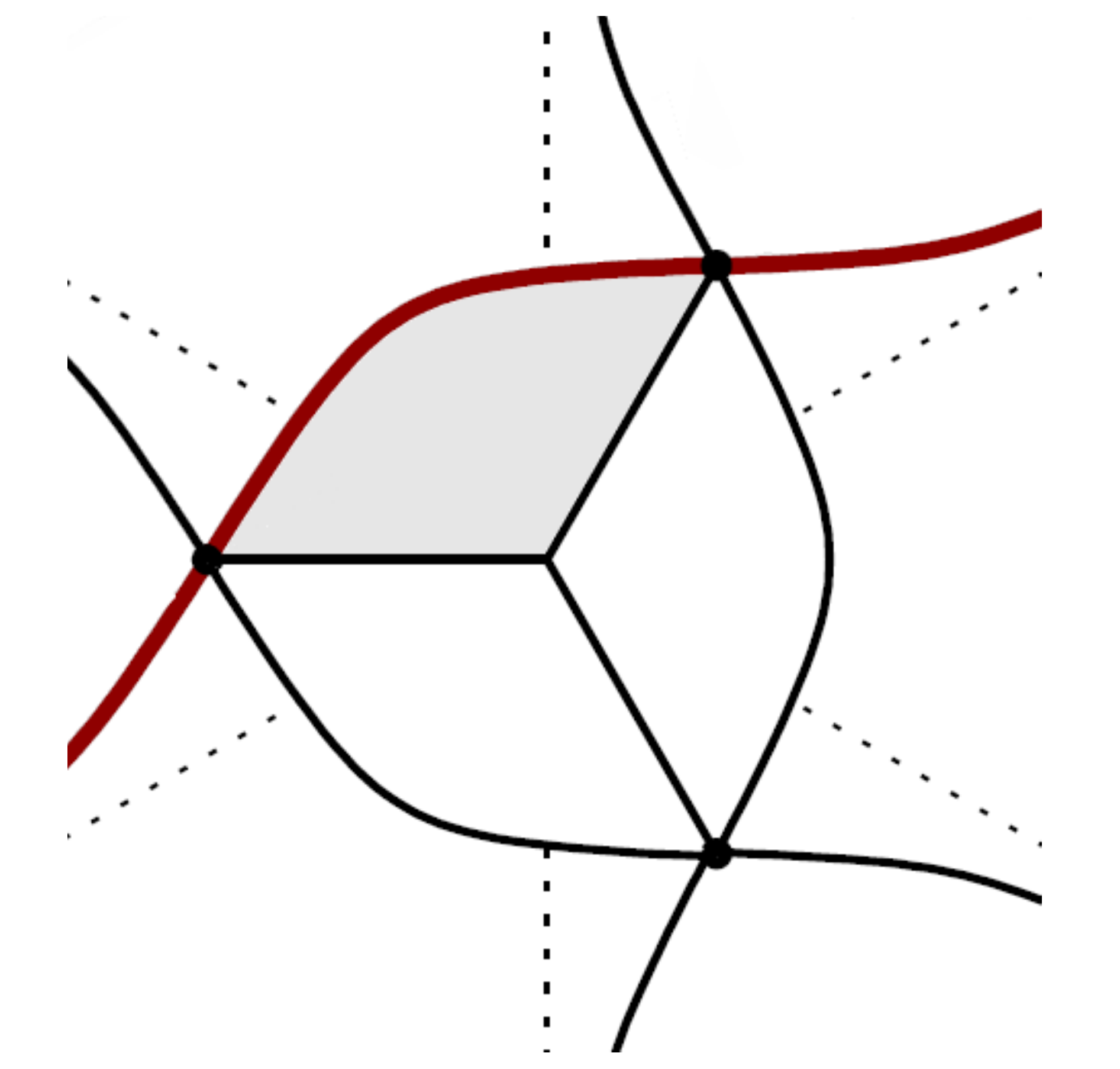}}
\caption{\small Shaded regions represent the domains within which \( 2x_1^3(t) \) (panel a) and \( x_1(t) \) (panel b) change when \( t\in O_\mathsf{two-cut} \).}
 \label{fig:x1-cont}
 \end{figure} 
 When \( x(t) \) is continued into \( O_\mathsf{two-cut} \) by \( x_1(t) \), its values lie within the gray region on Figure~\ref{fig:x1-cont}(b), see also Figure~\ref{fig:loops}(b). Respectively, the values \( 2x^3(t) \) lie within the gray region on Figure~\ref{fig:x1-cont}(a), see also Figure~\ref{fig:loops}(a). It was verified in \cite[Section~5.3]{BlDeaY17} that
\[
\mathcal U(-x;t) = \re\left(\frac23 \int_{-1}^{2x^3} \left( 1 + \dfrac{1}{s}\right)^{3/2} \dd s\right),
\]
where the path of integration lies within the shaded domain on Figure~\ref{fig:x1-cont}(a). Hence, \( \mathcal U(-x;t) = 0 \) if and only if $2x^3$ belongs to a trajectory of $-( 1 + 1/s)^3\dd s^2$ emanating from $-1$. These trajectories are drawn on Figures~\ref{fig:loops} and ~\ref{fig:x1-cont}(a) (black lines).  Thus,  \( \mathcal U(-x;t)\neq 0 \) in the considered case as claimed.

\subsection{Critical Graph of \( \varpi_t(z) \)}
\label{ss:53}

Let, as usual, \( Q(z;t) \) be the polynomial guaranteed by Theorem~\ref{fundamental}. According to what precedes, it has the form \eqref{ts8} when \( t\in O_\mathsf{two-cut} \). Recall the properties of the differential \( \varpi_t(z) = - Q(z;t) \dd z^2 \) described at the beginning of Section~\ref{ss:51}. In particular, it has four critical points of order \( 1 \), which, for a moment, we label as \( z_1(t),z_2(t),z_3(t),z_4(t) \) (these are the zeros of \( Q(z;t) \)), a critical point of order \( -8 \) at infinity, and no other critical points. It follows from Theorem~\ref{fundamental}(4) and \eqref{em2} with \eqref{em0} that  
\begin{equation}
\label{widths}
\re\left( \int_{\Ga_t[z_i(t),z_j(t)]} Q_+^{1/2}(z;t)\dd z \right)= 0,
\end{equation}
where \( \Ga_t[z_i(t),z_j(t)] \) is the subarc of \( \Ga_t \) with endpoints \( z_i(t),z_j(t) \) and \( Q_+^{1/2}(z;t) \) is the trace of \( Q^{1/2}(z;t) \) on the positive side of \( \Ga_t \). Equations \eqref{widths} imply existence of three short critical trajectories of \( \varpi_t(z) \). Indeed, if all three critical trajectories out of a zero \( z_i(t) \) approach infinity, then \( z_i(t) \) must belong to a boundary of at least one strip domain. Let \( z_j(t) \) be a different zero of \( Q(z;t) \) belonging to the other component of the boundary of this strip domain. Then it follows from \eqref{strip-width} and \eqref{widths} that the width of this strip domain is  \( 0 \), which is impossible. Thus, each zero  of \( Q(z;t) \) must be coincident with at least one short trajectory. Therefore, either there is a zero, say \( z_4(t) \), connected by short trajectories to the remaining three zeros or there are at least two short trajectories connecting two pairs of zeros. In the latter case, label these zeros by \( a_1(t),b_1(t) \) and \( a_2(t),b_2(t) \). If the other two trajectories out of both \( a_1(t) \) and \( b_1(t) \) approach infinity, one of these zeros again must belong to the boundary of a strip domain with either \( a_2(t) \) or \( b_2(t) \) belonging to the other component of the boundary. As before, \eqref{widths} yields that the width of this strip domain is \( 0 \), which, again, is impossible. Thus, in this case there also exists a third short critical trajectory. Then we choose a labeling of the zeros so that \( b_1(t) \) and \( a_2(t) \) are connected by this trajectory. 

\begin{figure}[ht!]
	\subfigure[]{\includegraphics[scale=.18]{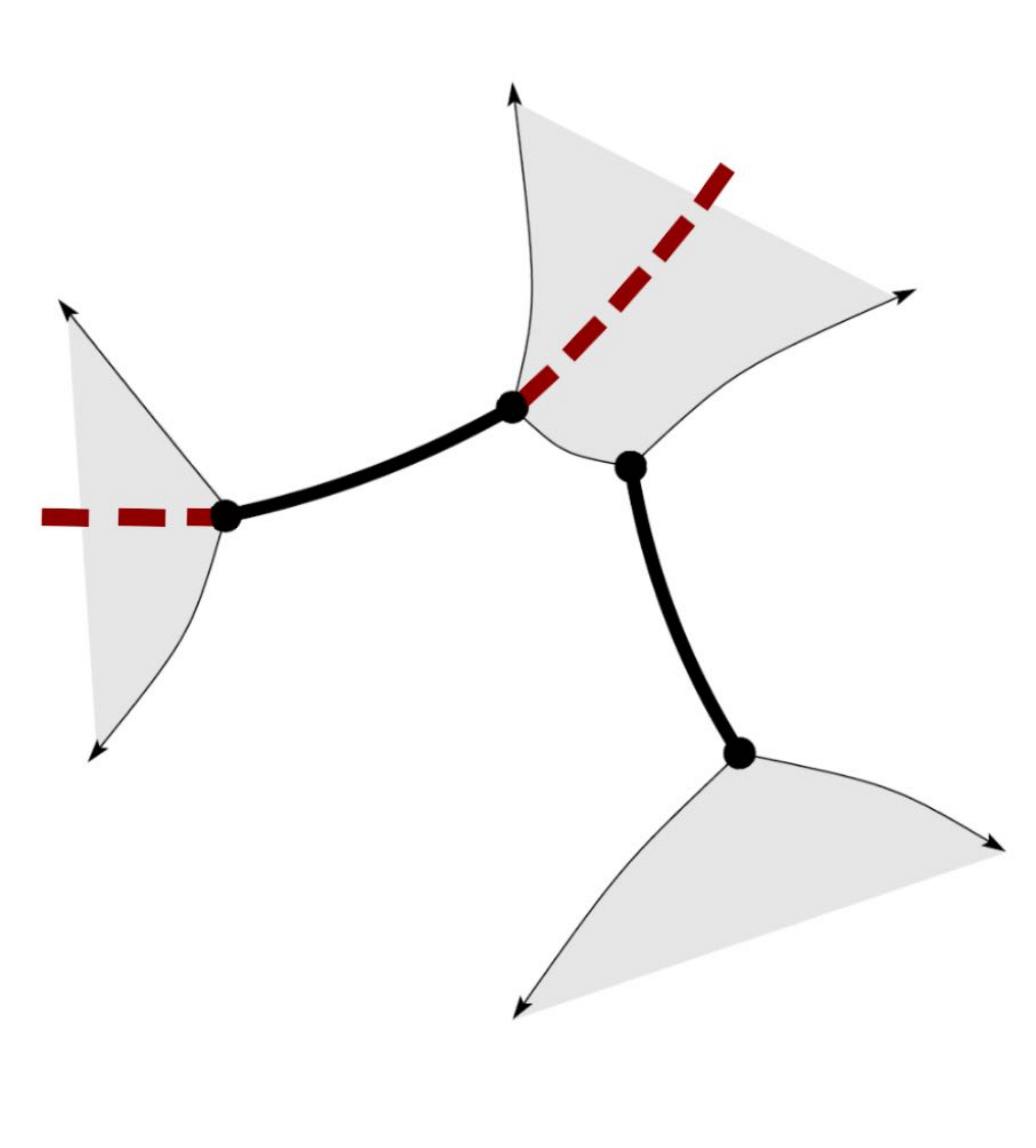} \begin{picture}(0,0)
		\put (-100,82){$a_1$}
		\put (-78, 95){$b_1$}
		\put (-45, 80){$a_2$}
		\put (-38, 52){$b_2$}
		\end{picture}}~~~
	\subfigure[]{\includegraphics[scale=.18]{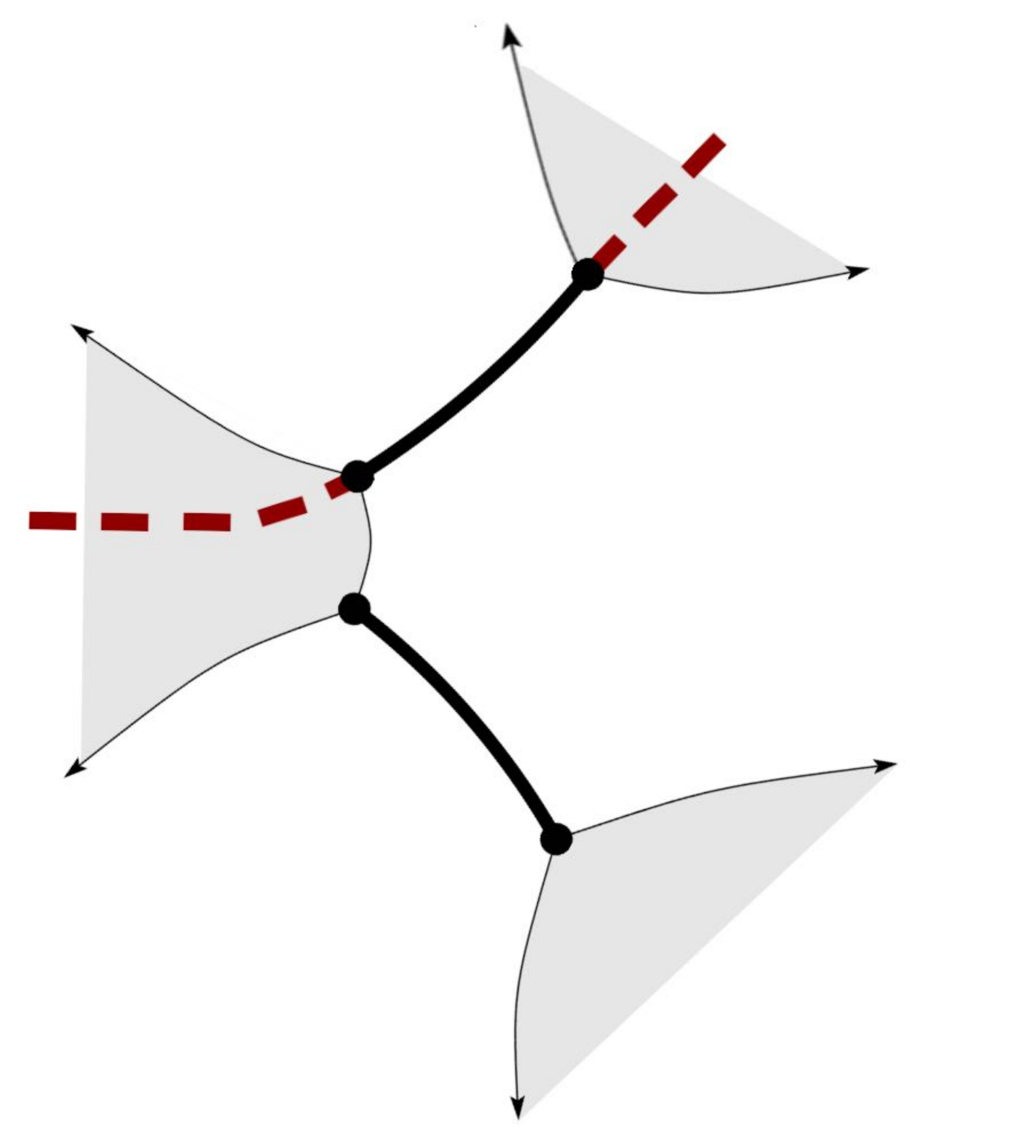}\begin{picture}(0,0)
		\put (-90,90){$a_1$}
		\put (-70, 108){$b_1$}
		\put (-90, 55){$a_2$}
		\put (-60, 48){$b_2$}
		\end{picture}}~~~
	\subfigure[]{\includegraphics[scale=.18]{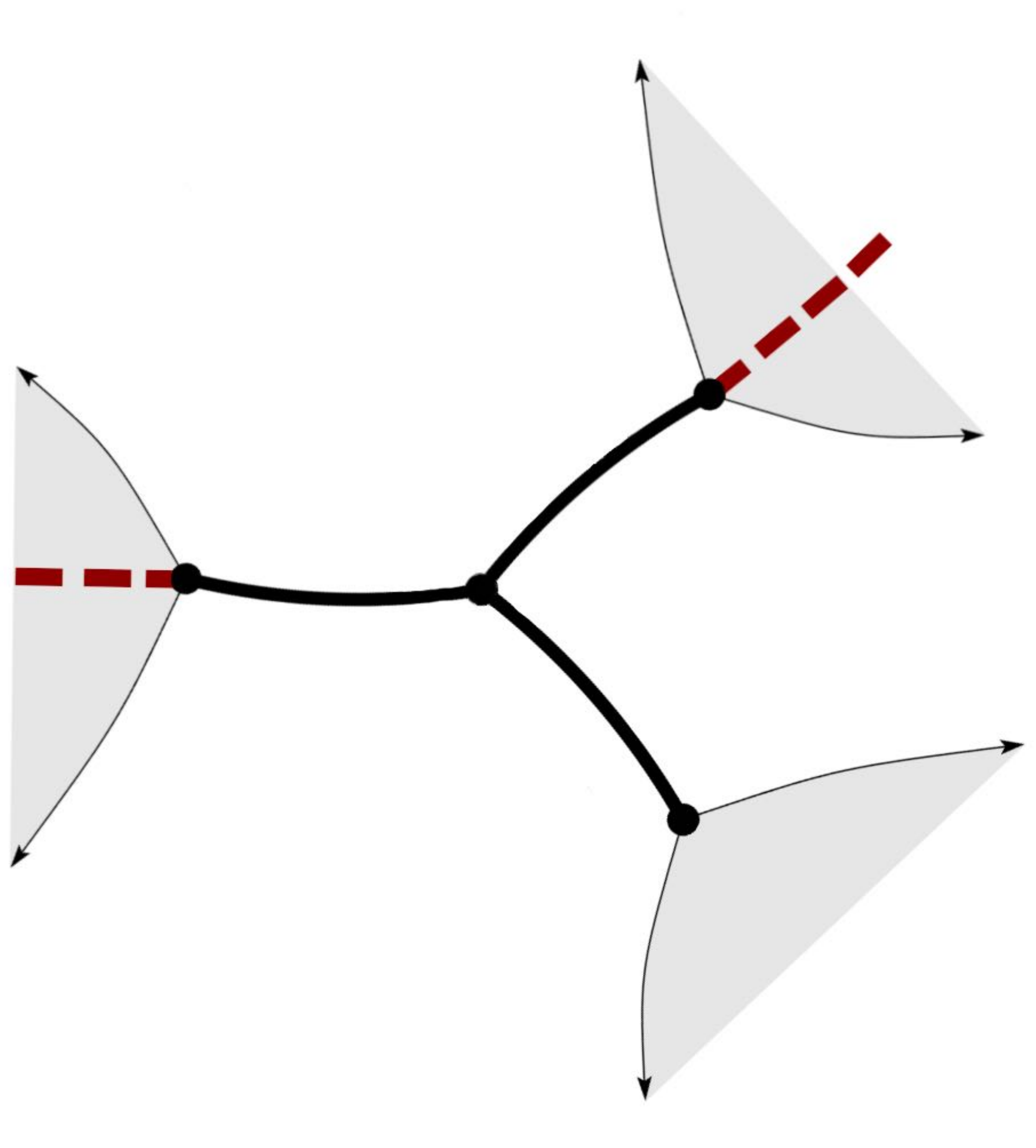} \begin{picture}(0,0)
		\put (-105,78){$z_1$}
		\put (-53, 92){$z_2$}
		\put (-58, 40){$z_3$}
		\put (-78, 58){$z_4$}
		\end{picture}}
	\caption{\small Geometries of the critical graph of \( \varpi_t(z) \). Shaded regions represent the open set \( \{\mathcal U(z;t)<0\} \), the white regions represent the open \( \{\mathcal U(z;t)>0\} \), and the red dashed arcs form \( \Ga_t\setminus J_t \).}
	\label{fig:53}
\end{figure}  

Since short critical trajectories cannot form closed curves, there cannot be any more of them. That is, the remaining critical trajectories are unbounded. Consider the two unbounded critical trajectories out of \( z_1(t) \) in the case where short ones form a threefold, see Figure \ref{fig:53}(c). Since critical trajectories cannot intersect and the remaining zeros are connected to \( z_1(t) \) by short critical trajectories, the unbounded critical trajectories out of \( z_1(t) \) delimit a half-plane domain and, in particular, must approach infinity along consecutive critical directions (those are given by the angles \( (2k+1)\pi/6\), \( k\in\{0,\ldots,5\}\), see Section~\ref{ss:51}). Clearly, the same is true for the unbounded critical trajectories out of \( z_2(t) \) and \( z_3(t) \) as well as for the unbounded critical trajectories out of \( a_1(t) \), \( b_2(t) \), and  the union of the unbounded critical trajectory out of \( b_1(t) \), the short critical trajectory connecting \( b_1(t) \) to \( a_2(t) \), and the unbounded critical trajectory out of \( a_2(t) \) in the case where short critical trajectories form a Jordan arc, see Figure~\ref{fig:53}.

Now, let \( \mathcal U(z;t) \) be given by \eqref{em0}. Clearly, \( \mathcal U(z;t) \) is a subharmonic function which is equal to zero on \( J_t \), see \eqref{em2}. Since \( \mathcal U(z;t) \) must have the same sign on both sides of each subarc of \( J_t \) by \eqref{em4}, it follows from the maximum principle for subharmonic functions that it is positive there. Further, since trajectories of \( \varpi_t(z) \) cannot form a closed Jordan curve, all the connected components of the open set  \( \{\mathcal U(z;t)<0\} \) must necessarily extend to infinity. Since \( \re(V(z;t)) \) is the dominant term of \( \mathcal U(z;t) \) around infinity, see \eqref{em0},  for any \( \delta>0 \) there exists \(R>0\) sufficiently large so that
\begin{equation}
\label{sectorsS}
\left\{
\begin{array}{ll}
\big(S_{\pi/3,\delta}\cup S_{\pi,\delta}\cup S_{-\pi/3,\delta}\big)\cap \{|z|>R\} &\subset \{\mathcal U(z;t)<0\},  \medskip \\
\big(S_{0,\delta}\cup S_{2\pi/3,\delta}\cup S_{-2\pi/3,\delta}\big) \cap \{|z|>R\} &\subset \{\mathcal U(z;t)>0\},
\end{array}
\right.
\end{equation}
where \( S_{\theta,\delta} := \{|\arg(z)-\theta|<\pi/6-\delta\}\). Altogether, the critical graph of \( \varpi_t(z) \) must look like either on Figure~\ref{fig:tc} or on Figure~\ref{fig:53}.

It remains to show that \( \varpi_t(z) \) cannot have the critical graph as on any of the panels of Figure~\ref{fig:53}. To this end, recall that the contour \( \Ga_t \) must contain \( J_t \) and two unbounded arcs extending to infinity in the directions \( \pi/3 \) and \( \pi \) (red dashed unbounded arcs on Figure~\ref{fig:53}).  Let \( \Ga_* \) be obtained from \( \Ga_t \) by dropping the short trajectory that is a part of \( J_t \) and whose removal keeps \( \Ga_* \) connected (this can be done for any of the panels on Figure~\ref{fig:53}). Observe that \( \Ga_* \) also belongs to \( \mathcal T \). Let \( \mu_* \) be the weighted equilibrium distribution on \( \Ga_* \) as defined in Definition~\ref{def:eq}. Since \( \Ga_* \subset \Ga_t \), it holds that \( \mu_*\in\mathcal M(\Ga_t) \). Moreover, since \( \mu_*\neq\mu_t \), \( \mathcal E_V(\Ga_*) = E_V(\mu_*)>E_V(\mu_t) = \mathcal E_V(\Ga_t) \), see \eqref{em1}. However, the last inequality clearly contradicts~\eqref{em3}. 

We have shown that the critical graph of \( \varpi_t(z) \) must look like on Figure~\ref{fig:tc}. As the critical orthogonal and critical trajectories cannot intersect, the structure of the critical orthogonal graph is uniquely determined by structure of the critical graph. Now, we can completely fix the labeling of the zeros of \( Q(z;t) \) by given the label \( a_1(t) \) to one that is incident with the orthogonal critical trajectory extending to infinity asymptotically to the ray \( \arg(z)=\pi \).

\subsection{Dependence on \( t \)}
\label{ss:54}

We start with some general considerations. Let \( f(z) \) and \( g(z) \) be analytic functions of \( z=x+\ic y \). Consider a determinant of the form
\[
D = \left|\begin{matrix} \partial_x \re(f) & \partial_y\re(f) & * \\ \partial_x \im(f) & \partial_y\im(f) & * \\ \partial_x \re(g) & \partial_y\re(g) & * \end{matrix} \right|,
\]
where the entries of the third column are not important for the forthcoming computation. Due to Cauchy-Riemann relations it holds that \( f^\prime = \partial_x \re(f) + \ic\partial_x\im(f) = \partial_y\im(f) -\ic \partial_y\re(f) \). Therefore,
\[
D = \left|\begin{matrix} \re(f^\prime) & -\im(f^\prime) & * \\ \im(f^\prime) & \re(f^\prime) & * \\ \re(g^\prime) & -\im(g^\prime) & * \end{matrix} \right| = \left|\begin{matrix} f^\prime & \ic f^\prime & * \\ \im(f^\prime) & \re(f^\prime) & * \\ \re(g^\prime) & -\im(g^\prime) & * \end{matrix} \right| = \frac\ic2\left|\begin{matrix} f^\prime & \ic f^\prime & * \\ \overline {f^\prime} & -\ic\overline{f^\prime} & * \\ \re(g^\prime) & -\im(g^\prime) & * \end{matrix} \right|
\]
by adding the second row times \( \ic \) to the first one and then multiplying the second row by \( -2\ic \) and adding the first row to it. It further holds that
\[
D = \frac\ic2\left|\begin{matrix} 2f^\prime & \ic f^\prime & * \\ 0 & -\ic\overline{f^\prime} & * \\ g^\prime & -\im(g^\prime) & * \end{matrix} \right| = \frac\ic2\left|\begin{matrix} 2f^\prime & 0 & * \\ 0 & -\ic\overline{f^\prime} & * \\ g^\prime & -\ic\overline{g^\prime}/2 & * \end{matrix} \right| = \frac12\left|\begin{matrix} f^\prime & 0 & * \\ 0 & \overline{f^\prime} & * \\ g^\prime & \overline{g^\prime} & * \end{matrix} \right|,
\]
where we added the second column times \( -\ic \) to the first one, then added the first column times \( -\ic/2 \) to the second one, and then factored 2 from the first column, \( -\ic \) from the second one, and \( 1/2 \) from the third row.

Now, let \( f_j(z_1,z_2,z_3,z_4) \), \( j\in\{1,2,3,4,5\} \), be analytic functions in each variable \( z_i = x_i + \ic y_i \). We would like to compute the Jacobian of the following system of real-valued functions of \( x_1,y_1,\ldots,x_4,y_4 \):
\begin{equation}
\label{jac1}
\re(f_1), \; \im(f_1), \; \re(f_2), \; \im(f_2), \; \re(f_3), \; \im(f_3), \; \re(f_4), \; \re(f_5).
\end{equation}
That is, we are interested in
\[
{\rm Jac} = \left|\begin{matrix} 
\re(f_{11}) & -\im(f_{11}) & \re(f_{12}) & -\im(f_{12}) & \re(f_{13}) & -\im(f_{13}) & \re(f_{14}) & -\im(f_{14}) \\ 
\im(f_{11}) & \re(f_{11}) & \im(f_{12}) & \re(f_{12}) & \im(f_{13}) & \re(f_{13}) & \im(f_{14}) & \re(f_{14}) \\
\re(f_{21}) & -\im(f_{21}) & \re(f_{22}) & -\im(f_{22}) & \re(f_{23}) & -\im(f_{23}) & \re(f_{24}) & -\im(f_{24}) \\ 
\im(f_{21}) & \re(f_{21}) & \im(f_{22}) & \re(f_{22}) & \im(f_{23}) & \re(f_{23}) & \im(f_{24}) & \re(f_{24}) \\
\re(f_{31}) & -\im(f_{31}) & \re(f_{32}) & -\im(f_{32}) & \re(f_{33}) & -\im(f_{33}) & \re(f_{34}) & -\im(f_{34}) \\ 
\im(f_{31}) & \re(f_{31}) & \im(f_{32}) & \re(f_{32}) & \im(f_{33}) & \re(f_{33}) & \im(f_{34}) & \re(f_{34}) \\ 
\re(f_{41}) & -\im(f_{41}) & \re(f_{42}) & -\im(f_{42}) & \re(f_{43}) & -\im(f_{43}) & \re(f_{44}) & -\im(f_{44}) \\
\re(f_{51}) & -\im(f_{51}) & \re(f_{52}) & -\im(f_{52}) & \re(f_{53}) & -\im(f_{53}) & \re(f_{54}) & -\im(f_{54}) \end{matrix}\right|,
\]
where \( f_{ji} := \partial_{z_i}f_j \). By performing the same row and column operations as for the determinant \( D \) above, we get that
\[
{\rm Jac} =  \frac1{2\ic}\left|\begin{matrix} 
f_{11} & 0 & f_{12} & 0 & f_{13} & 0 & f_{14} & 0 \\ 
0 & \overline f_{11} & 0 & \overline f_{12} & 0 & \overline f_{13} & 0 & \overline f_{14} \\
f_{21} & 0 & f_{22} & 0 & f_{23} & 0 & f_{24} & 0 \\ 
0 & \overline f_{21} & 0 & \overline f_{22} & 0 & \overline f_{23} & 0 & \overline f_{24} \\
f_{31} & 0 & f_{32} & 0 & f_{33} & 0 & f_{34} & 0 \\ 
0 & \overline f_{31} & 0 & \overline f_{32} & 0 & \overline f_{33} & 0 & \overline f_{34} \\
f_{41} & \overline f_{41} & f_{42} & \overline f_{42} & f_{43} & \overline f_{43} & f_{44} & \overline f_{44} \\
f_{51} & \overline f_{51} & f_{52} & \overline f_{52} & f_{53} & \overline f_{53} & f_{54} & \overline f_{54} \end{matrix}\right|
\]
with the constant in front of the determinant coming from \( (\ic/2)^3 \times 2^4 \times (-\ic)^4\times (1/2)^2 \), where the first factor is due to multiplications of the second,  fourth and sixth rows by \( -2\ic \), the second is due to factoring 2 from the odd columns, the third one comes from factoring \( -\ic \) from the even columns, and the fourth factor is due to factoring \( 1/2 \) from the last two rows. Assume further that
\begin{equation}
\label{jac2}
\left\{
\begin{array}{ll}
f_1(z_1,z_2,z_3,z_4) & = z_1+z_2+z_3+z_4, \\
f_2(z_1,z_2,z_3,z_4) & = z_1z_2+z_1z_3+z_1z_4+z_2z_3+z_2z_4 + z_3z_4, \\
f_3(z_1,z_2,z_3,z_4) & = z_2z_3z_4+z_1z_3z_4+z_1z_2z_4+z_1z_2z_3.
\end{array}
\right.
\end{equation}
Then, by using the above explicit expressions and subtracting the first (resp. second) column from the third, fifth, and seventh (resp. fourth, sixth, and eighth), we get that
\begin{multline*}
{\rm Jac} =\\  \frac1{2\ic}\left|\begin{matrix} 
1 & 0 & 0 & 0 & 0 & 0 & 0 & 0 \\ 
0 & 1 & 0 & 0 & 0 & 0 & 0 & 0 \\
* & * & z_1-z_2 & 0 & z_1-z_3 & 0 & z_1-z_4 & 0 \\ 
* & * & 0 & \overline z_1-\overline z_2 & 0 & \overline z_1-\overline z_3 & 0 & \overline z_1-\overline z_4 \\
* & * & \displaystyle \frac{z_1-z_2}{(z_3+z_4)^{-1}} & 0 & \displaystyle \frac{z_1-z_3}{(z_2+z_4)^{-1}} & 0 & \displaystyle \frac{z_1-z_4}{(z_2+z_3)^{-1}} & 0 \\ 
* &  * & 0 & \displaystyle \frac{\overline z_1-\overline z_2}{(\overline z_3+\overline z_4)^{-1}} & 0 & \displaystyle \frac{\overline z_1-\overline z_3}{(\overline z_2+\overline z_4)^{-1}} & 0 & \displaystyle \frac{\overline z_1-\overline z_4}{(\overline z_2+\overline z_3)^{-1}} \smallskip \\
* & * & (z_1-z_2)g_{42} & (\overline z_1-\overline z_2)\overline g_{42} & (z_1-z_3)g_{43} & (\overline z_1-\overline z_3)\overline g_{43} & (z_1-z_4)g_{44} & (\overline z_1-\overline z_4)\overline g_{44} \smallskip \\
* & * & (z_1-z_2)g_{52} & (\overline z_1-\overline z_2)\overline g_{52} & (z_1-z_3)g_{53} & (\overline z_1-\overline z_3)\overline g_{53} & (z_1-z_4)g_{54} & (\overline z_1-\overline z_4)\overline g_{54} \end{matrix}\right|,
\end{multline*}
where \( g_{ji}(z_1,z_2,z_3,z_4) := (z_1-z_i)^{-1}(f_{ji}-f_{j1})(z_1,z_2,z_3,z_4) \), \( j\in\{4,5\} \) and \( i\in\{2,3,4\} \). Hence,
\[
{\rm Jac}= \frac{|z_1-z_2|^2|z_1-z_3|^2|z_1-z_4|^2}{2\ic}\left|\begin{matrix}
 1 & 0 & 1 & 0 & 1 & 0 \\
 0 & 1 & 0 & 1 & 0 & 1 \\
 z_3+z_4 & 0 & z_2+z_4 & 0 & z_2+z_3 & 0 \\
 0 & \overline z_3+\overline z_4 &  0 & \overline z_2+\overline z_4 & 0 & \overline z_2+\overline z_3 \\
 g_{42} & \overline g_{42} & g_{43} & \overline g_{43} & g_{44} & \overline g_{44} \\
  g_{52} & \overline g_{52} & g_{53} & \overline g_{53} & g_{54} & \overline g_{54}
 \end{matrix}\right|.
\]
Absolutely analogous computation now implies that
\[
{\rm Jac} = \frac{|z_1-z_2|^2|z_1-z_3|^2|z_1-z_4|^2|z_2-z_3|^2|z_2-z_4|^2}{2\ic}\left|\begin{matrix}
1 & 0 & 1 & 0 \\
0 & 1 & 0 & 1 \\
h_{43} & \overline h_{43} & h_{44} & \overline h_{44} \\
h_{53} & \overline h_{53} & h_{54} & \overline h_{54}
 \end{matrix}\right|,
\]
where \( h_{ji}(z_1,z_2,z_3,z_4) := (z_2-z_i)^{-1}(g_{ji}-g_{j2})(z_1,z_2,z_3,z_4) \), \( j\in\{4,5\} \) and \( i\in\{3,4\} \). The above expression immediately yields that
\begin{equation}
\label{jac3}
{\rm Jac} = \frac{\prod_{i<j}|z_i-z_j|^2}{2\ic}\left|\begin{matrix} k_4 & \overline k_4 \\ k_5 & \overline k_5 \end{matrix}\right| = \prod_{i<j}|z_i-z_j|^2\im\big(k_4\overline k_5\big),
\end{equation}
where \( k_j(z_1,z_2,z_3,z_4) := (z_3-z_4)^{-1}(g_{j4}-g_{j3})(z_1,z_2,z_3,z_4) \), \( j\in\{4,5\} \). Finally, let
\[
w(z):=\sqrt{(z-z_1)(z-z_2)(z-z_3)(z-z_4)}
\]
be a branch such that \( w(z) = z^2 + \mathcal O(z) \) as \( z\to\infty \) with branch cuts \( \gamma_{12} \) and \( \gamma_{34} \) that are bounded, disjoint, and smooth, and where \( \gamma_{ij} \) connects \( z_i \) to \( z_j \). Further, select a smooth arc \( \gamma_{32} \) disjoint (except for the endpoints) from the previous two. Set
\begin{equation}
\label{jac4}
f_4(z_1,z_2,z_3,z_4) := 4\int_{\gamma_{12}}w(z)\dd z \qandq f_5(z_1,z_2,z_3,z_4) := 4\int_{\gamma_{32}}w(z)\dd z,
\end{equation}
where we integrate \( w(z) \) on the positive side of \( \gamma_{12} \). Let \( O\subset\{z_i\neq z_j,~i\neq j,~i,j\in\{1,2,3,4\} \}  \) be a domain such that there exist arcs \( \gamma_{ij}(z_1,z_2,z_3,z_4) \) with the above properties for each \( (z_1,z_2,z_3,z_4)\in O \), which, in addition, possess parameterizations that depend continuously on each variable \( z_1,z_2,z_3,z_4 \). Then the functions \( f_j(z_1,z_2,z_3,z_4) \), \( j\in\{4,5\} \), are analytic in each variable \( z_i \) for \( (z_1,z_2,z_3,z_4)\in O \). Furthermore, it can be readily computed that
\[
g_{ji} = 2\int\frac{w(z)\dd z}{(z-z_1)(z-z_i)}, \quad h_{ji} = -2\int \frac{w(z)\dd z}{(z-z_1)(z-z_2)(z-z_i)}, \qandq k_j = 2\int\frac{\dd z}{w(z)},
\]
where the integrals are taken over \( \gamma_{12} \) when \( j=4 \) and \( \gamma_{32} \) when \( j=5 \). Trivially, \eqref{jac3} can be rewritten as
\begin{equation}
\label{jac5}
{\rm Jac} = 4\prod_{i<j}|z_i-z_j|^2\im\left(\int_{\gamma_{12}}\frac{\dd z}{w(z)}\overline{\int_{\gamma_{32}}\frac{\dd z}{w(z)}}\right).
\end{equation}

Now, consider the Riemann surface \( \RS := \big\{ \z:=(z,w):~w^2=(z-z_1)(z-z_2)(z-z_3)(z-z_4)\big\} \). Denote by \( \pi:\RS\to\overline\C \) the natural projection \( \pi(\z) =z \) and write \( w(\z) \) for a rational function on \( \RS \) such that \( \z = (z,w(\z)) \). Let \( \boldsymbol\beta := \pi^{-1}(\gamma_{12}) \) and \( \boldsymbol\alpha := \pi^{-1}(\gamma_{32}) \). Orient these cycles so that
\[
2\int_{\gamma_{12}}\frac{\dd z}{w(z)} = \oint_{\boldsymbol \beta}\frac{\dd z}{w(\z)} \qandq 2\int_{\gamma_{32}}\frac{\dd z}{w(z)} = \oint_{\boldsymbol \alpha}\frac{\dd z}{w(\z)}.
\]
Observe that the cycles \( \boldsymbol\alpha,\boldsymbol\beta \) form the right pair at the point of their intersection and that \( \RS\setminus\{\boldsymbol\alpha\cup\boldsymbol\beta\} \) is simply connected. Hence, the cycles \( \boldsymbol\alpha,\boldsymbol\beta \) form a homology basis on \( \RS \). Since the genus of \( \RS \) is \( 1 \), it has a unique (up to multiplication by a constant) holomorphic differential. It is quite easy to check that this differential is  \( \dd z/w(\z) \). Hence, we get from \eqref{jac5} that
\begin{equation}
\label{jac6}
{\rm Jac} = \prod_{i<j}|z_i-z_j|^2\im\left(\oint_{\boldsymbol \beta}\frac{\dd z}{w(\z)}\overline{\oint_{\boldsymbol \alpha}\frac{\dd z}{w(\z)}}\right)>0
\end{equation}
when \( (z_1,z_2,z_3,z_4) \in O \), where the last inequality was shown by Riemann.

Now, let \( Q(z;t)=\frac14(z-a_1(t))(z-b_1(t))(z-a_2(t))(z-b_2(t)) \) be the polynomial from Theorem~\ref{fundamental}. It can be easily deduced from \eqref{em5} that
\begin{equation}
\label{jac7}
\left\{
\begin{array}{llr}
f_1(a_1,b_1,a_2,b_2) & = & 0, \smallskip \\
f_2(a_1,b_1,a_2,b_2) & = & -2t, \smallskip \\
f_3(a_1,b_1,a_2,b_2) & = & -4,
\end{array}
\right.
\end{equation}
where the functions \( f_i(z_1,z_2,z_3,z_4) \), \( i\in\{1,2,3\} \), are given by \eqref{jac2}. 

Fix \( t^*\in O_\mathsf{two-cut} \) and let \( \delta^*>0 \) be small enough so that all four disks \( \{|z-z_i^*|\leq \delta^*\} \) are disjoint, where \( z_1^*=a_1(t^*) \), \( z_2^*=b_1(t^*) \), \( z_3^*=a_2(t^*) \), and \( z_4^*=b_2(t^*) \). Let \( O := \{(z_1,z_2,z_3,z_4):|z_i-z_i^*|< \delta^*\} \), \( s_i^* \) be the point of intersection of \( \{|z-z_i^*|=\delta^*\} \) and \( J_{t^*} \), \( i\in\{1,2,3,4 \} \), and \( u_i^* \) be the point of intersection of \( \{|z-z_i^*|=\delta^*\} \) and \( \Ga(b_1(t^*),a_2(t^*)) \), \( i\in\{2,3\} \), where, as usual, \( \Ga(a,b) \) is the subarc of the trajectories of \( \varpi_t(z) \) connecting \( a \) and \( b \). Then we can choose \( \gamma_{12} = [z_1,s_1^*] \cup \Ga(s_1^*,s_2^*)\cup [s_2^*,z_2] \), \( \gamma_{32} = [z_3,u_3^*] \cup \Ga(u_3^*,u_2^*)\cup [u_2^*,z_2] \), and \( \gamma_{34} = [z_3,s_3^*] \cup \Ga(s_3^*,s_4^*)\cup [s_4^*,z_4] \), where \( [a,b]\) is the line segment connecting \( a \) and \( b \) in \( \C \), see Figure \ref{fig:54}. Clearly, the arcs \( \gamma_{ij} \) continuously depend on \( (z_1,z_2,z_3,z_4)\in O \). Now, the relations \eqref{widths} can be rewritten as
\begin{equation}
\label{jac9}
\re\left(f_4(a_1,b_1,a_2,b_2\big) \right)=0 \qandq \re\left(f_5\big(a_1,b_1,a_2,b_2\big) \right)=0
\end{equation}
with \( f_j(z_1,z_2,z_3,z_4) \), \( j\in\{4,5\} \), given by \eqref{jac4}, where we set \( w(z):=Q^{1/2}(z;t) \). 
\begin{figure}[ht!]
\includegraphics[scale=.4]{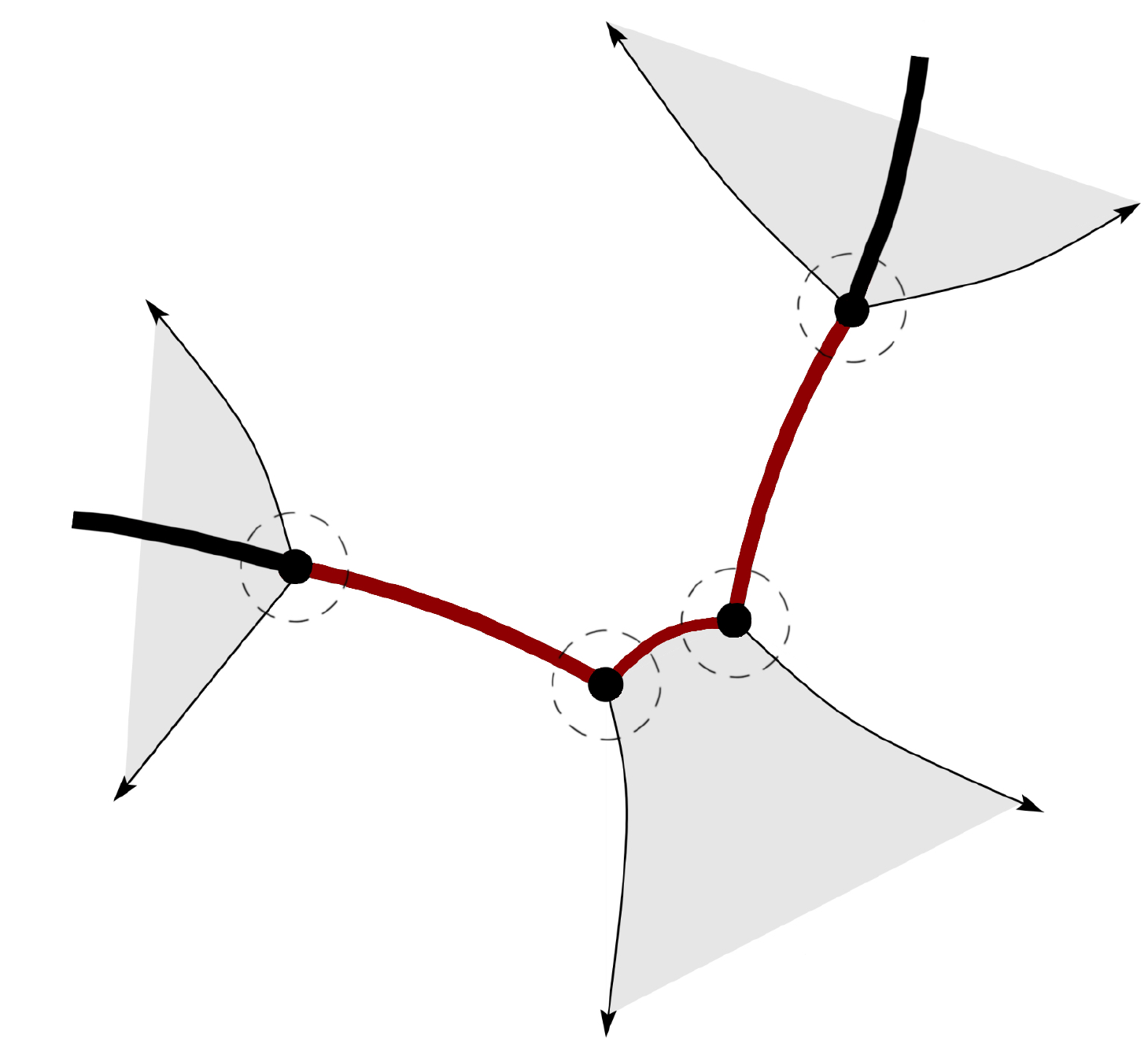} \begin{picture}(0,0)
			\put (-127,52){$z^*_1$}
			\put (-102, 45){$z^*_2$}
			\put (-50, 55){$z_3^*$}
			\put (-35, 99){$z_4^*$}
			\put (-105, 75){$\gamma_{12}$}
			\put (-71, 39){$\gamma_{23}$}
			\put (-76, 90){$\gamma_{34}$}
	\end{picture}
	\caption{\small A schematic illustration of $\gamma_{ij}$ and $z_i^*$.}
	\label{fig:54}
\end{figure} 
It follows from \eqref{jac6} and the implicit function theorem that there exists a neighborhood of \( t^* \) in which system \eqref{jac7} and \eqref{jac9} is uniquely solvable and the solution, say \( (a_1^*(t),b_1^*(t),a_2^*(t),b_2^*(t))\), is such that the real and imaginary parts of \( a_i^*(t),b_i^*(t) \) are real analytic functions of \( \re(t) \) and \( \im(t) \) for \( t \) in this neighborhood. These local solutions are unique only locally around the point \( (a_1(t^*),b_1(t^*),a_2(t^*),b_2(t^*))\) and we still need to argue that they do coincide with the zeros \( a_i(t), b_i(t)\) of \( Q(z;t) \) (of course, it holds that \( a_i^*(t^*)=a_i(t^*) \) and \( b_i^*(t^*)=b_i(t^*) \)).

In what follows, we always assume that \( t \) belongs to a disk centered at \( t^* \) of small enough radius so that the functions \( a_i^*(t),b_i^*(t) \) are defined and continuous in this disk. Let
\begin{equation}
\label{Q*}
Q^*(z;t) := \frac14(z-a_1^*(t))(z-b_1^*(t))(z-a_2^*(t))(z-b_2^*(t)) \quad \text{and} \quad \varpi_t^*(z) := -Q^*(z;t)\dd z^2.
\end{equation}
Further, let \( \mathcal U^*(z;t) \) be defined as in \eqref{em0} with \( Q(z;t) \) replaced by \( Q^*(z;t) \). For the moment, choose the branch cut for \( Q^*(z;t)^{1/2} \) as in the paragraph between \eqref{jac7} and \eqref{jac9}. Each function \( \mathcal U^*(z;t) \) is harmonic off the chosen branch cut and can be continued harmonically across it by \( -\mathcal U^*(z;t) \). Moreover, it follows immediately from their definition that the functions \( \mathcal U^*(z;t) \) are uniformly bounded above and below on any compact set for all considered values of the parameter \( t \). Thus, they converge to \( \mathcal U(z;t^*) \) locally uniformly in \( \C\setminus\{a_1(t^*),b_1(t^*),a_2(t^*),b_2(t^*)\} \) as \( t\to t^* \). Since the critical graph of \( \varpi_t^*(z) \) is the zero-level set of \( \mathcal U^*(z;t) \), it converges to the critical graph of \( \varpi_{t^*}(z) \) in any disk \( \{|z|<R\} \). Due to relations \eqref{jac9}, the argument at the beginning of Section~\ref{ss:53} also shows that the critical graph of  \( \varpi_t^*(z) \) has three short critical trajectories, which, due to uniform convergence, necessarily connect \( a_1^*(t) \) to \( b_1^*(t) \), \( b_1^*(t) \) to \( a_2^*(t) \), and \( a_2^*(t) \) to \( b_2^*(t) \) (the disk around \( t^* \) can be decreased if necessary). Thus, arguing as in Section~\ref{ss:53} and using uniform convergence, we can show that Figure~\ref{fig:tc} also schematically represents the critical and critical orthogonal graphs of \( \varpi_t^*(z) \). Moreover, let us now take the branch cut for \( Q^*(z;t)^{1/2} \), say \( J_t^* \), along the short critical trajectories of \( \varpi_t^*(z) \) connecting \( a_i^*(t) \) to \( b_i^*(t) \). Then the shading on Figure~\ref{fig:tc} corresponds to regions where \( \mathcal U^*(z;t) \) is positive (white) and negative (gray).

Define \( \Ga_t^* \) to be the union of the critical orthogonal trajectory of \( \varpi_t^*(z) \) that connects infinity to \( a_1^*(t) \), its short critical trajectories, and the critical orthogonal trajectory that connects \( b_2^*(t) \) to infinity. Orient it so that the positive direction proceeds from \( a_1^*(t) \) to \( b_2^*(t) \). Let the measures \( \mu_t^* \) be given by \eqref{em6} with \( Q(z;t) \) replaced by \( Q^*(z;t) \) and \( J_t \) replaced by \( J_t^* \). Clearly, each \( \mu_t^* \) is a positive measure. Moreover, it has a unit mass by the Cauchy theorem and since \( Q^*(z;t)^{1/2} = (z^2 - t)/2 + 1/z + \mathcal O(1/z^2) \) due to \eqref{jac7}, see also \eqref{ts2}. Thus, it holds that
\[
F^*(z;t) := Q^*(z;t)^{1/2} + \frac{V^\prime(z;t)}2 - \int\frac{\dd\mu_t^*(s)}{z-s} = \mathcal O\left(z^{-2}\right) 
\]
as \( z\to\infty \) and \( F^*(z;t) \) is holomorphic in \( \overline\C\setminus J_t^* \). It follows from the well known behavior of Cauchy integrals of smooth densities, see \cite[Section~I.8]{Gakhov}, that the traces of \( F^*(z;t) \) on \( J_t^* \) are bounded. It further follows from the Sokhotski-Plemelj formulae, see \cite[Section~I.4]{Gakhov}, that
\begin{multline*}
F_+^*(s;t) - F_-^*(s;t) \\
= Q_+^*(s;t)^{1/2} - Q_-^*(s;t)^{1/2} - \left(\int\frac{Q_+^*(w;t)^{1/2}}{w-z}\frac{\dd w}{\pi\ic}\right)_+  + \left( \int\frac{Q_+^*(w;t)^{1/2}}{w-z}\frac{\dd w}{\pi\ic}\right)_- \\
 \\=  Q_+^*(s;t)^{1/2} - Q_-^*(s;t)^{1/2} - 2Q_+^*(s;t)^{1/2} \equiv 0
\end{multline*}
for \( s\in J_t^* \). Hence, \( F^*(z;t) \) is an entire function and therefore is identically zero. This observation, in particular, yields that
\[
\mathcal U^*(z;t) := \mathrm{Re}\left(2\int_{b_2^*(t)}^z Q^*(s;t)^{1/2}\dd s \right) = \ell_t^* - \mathrm{Re}(V(z;t)) - 2U^{\mu_t^*}(z)
\]
for some constant \( \ell_t^* \), see also \eqref{em0}. Since \( \mathcal U^*(z;t) \) can be harmonically continued across \( J_t^* \) by \( -\mathcal U^*(z;t) \), we get that \( \mu_t^* \) satisfies \eqref{em4}; that is \( J_t^* \) has the S-property in the field \( \mathrm{Re}(V(z;t)) \). Since \( \Ga_t^*\in\mathcal T \), it follows from the uniqueness part of Theorem~\ref{fundamental}(2) that \( \mu_t^* = \mu_t \). In particular, \( a_i^*(t)=a_i(t) \) and \( b_i^*(t) = b_i(t) \), \( i\in\{1,2\} \).

Since any compact subset of \( O_\mathsf{two-cut} \) can be covered by finitely many disks where the above considerations hold, the functions \( a_i(t),b_i(t) \) continuously depend on \( t\in O_\mathsf{two-cut} \) and, moreover, their real and imaginary parts are real analytic functions of \( \mathrm{Re}(t) \) and  \( \mathrm{Im}(t) \).

Let us now show that at no point in \( O_\mathsf{two-cut} \) are the functions \( a_i(t),b_i(t) \) analytic in~\( t \). 
Suppose, for the sake of contradiction, that one of these functions, say $b_2(t)$, is analytic in $t$ at some point $t_0\in  O_\mathsf{two-cut}$. Then all the other endpoint functions, $a_1(t)$, $b_1(t)$, and $a_2(t)$, are analytic in $t$ at $t_0$ as well. Indeed, from \eqref{jac2} and \eqref{jac7} we have that
\[
\left\{
\begin{aligned}
	&a_1+b_1+a_2=-b_2,\\
	&a_1b_1+a_1a_2+b_1a_2=-2t-b_2(a_1+b_1+a_2)=-2t+b_2^2,\\
	&a_1b_1a_2=-4-b_2(a_1b_1+a_1a_2+b_1a_2)
	=-4-b_2(-2t+b_2^2)=-4+2tb_2-b_2^3.
\end{aligned}
\right.
\]
Hence $a_1(t),b_1(t),a_2(t)$ are roots of the cubic equation
\[
z^3+b_2z^2+(-2t+b_2^2)z+(4-2tb_2+b_2^3)=0.
\]
Since $a_1(t),b_1(t),a_2(t)$ are pairwise different, they analytically depend on the coefficients of the cubic equation. That is, they are analytic functions of $t$ and $b_2$. This implies that $a_1(t)$, $b_1(t)$, and $a_2(t)$ are analytic in $t$ at $t_0$. The analyticity of $a_1(t)$, $b_1(t)$, $a_2(t)$, and $b_2(t)$ yields that the integral
\[
\frac1{\pi\ic}\int_{\Gamma_t[a_1(t),b_1(t)]} Q_+^{1/2}(s;t)\dd s
\]
is also an analytic function of \( t \) at $t_0$. However, the integral above is equal to \( -\omega(t) \), see \eqref{tau-omega}, which is a real number, see \eqref{em6}. Hence, \( \omega(t) \) must be constant in \( U \).  On the other hand, it follows from \eqref{ts8a}, the proof of which in the next subsection is independent of the current considerations, that \( \omega(t) \) is not constant in \( O_\mathsf{two-cut} \). We also have already shown that \( \omega(t) \) is real analytic in \( \re(t) \) and \( \im(t) \) and therefore cannot be locally constant. This contradiction proves that the endpoints  $a_1(t)$, $b_1(t)$, $a_2(t)$, and $b_2(t)$ are not analytic functions of $t$ at any point in $ O_\mathsf{two-cut}$.

\subsection{Proof of \eqref{ts8a}}
\label{ss:55}

It follows from \cite[Theorem 5.11]{BT} that if \( t \) remains in a bounded set, so do the zeros of \( Q(z;t) \). Fix $t^* \in \partial O_{\mathsf{two-cut}}$ and let \( \{t_m\} \) be a sequence such that \( t_m\to t^* \) as \( m\to\infty \). Restricting to a subsequence if necessary, we see that there exist \( a_i^*,b_i^* \) such that \( e(t_m)\to e^* \) as \( m\to\infty \), where \( e\in\{ a_1,b_1,a_2,b_2 \} \). Clearly, polynomials \( Q(z;t_m) \) converge uniformly on compact subsets of \( \C \) to \( Q^*(z) \), the polynomial with zeros \( a_i^*,b_i^* \) and leading coefficient \( 1/4 \). Let \( \varpi^*(z) := -Q^*(z)\dd z^2 \). Then repeating the argument after \eqref{Q*}, we get that the critical graphs of \( \varpi_{t_m}(z) \) converge to the critical graph of \( \varpi^*(z) \) in any disk \( |z|<R \). Furthermore, as stated in \cite[Theorem~3.3]{Jenkins}, there exists \( R_t>0 \) such that every  trajectory of \( \varpi_t(z) \) entering \( \{ |z|>R_t \} \) necessarily remains in \( \{ |z|>R_t \} \)  and tends to infinity. Moreover, examination of the proof of \cite[Theorem~3.3]{Jenkins} also shows that one can take \( R_t = 2^{-1/2}\max_i\big\{|a_i(t)|,|b_i(t)|\big\} \) and that if a trajectory enters \( S_{(2k-1)\pi/3,\delta} \cap \{|z|>R_t\} \), \( k\in\{0,1,2\} \), it stays in this sector, see \eqref{sectorsS}. Put \( R^* := \sup_m R_{t_m} \). As all the differentials \( \varpi_{t_m}(z) \) have structurally ``the same'' critical graph, see Figure~\ref{fig:tc}, and all the trajectories of \( \varpi_{t_m}(z) \) and \( \varpi^*(z) \) entering \( S_{(2k-1)\pi/3,\delta} \cap \{|z|>R^*\} \) must remain there, tend to infinity, and not intersect, the unbounded critical trajectories of \( \varpi^*(z) \) behave like on Figure~\ref{fig:tc}.

Observe that the first and the third equations of \eqref{jac7}, see also \eqref{jac2}, must remain true for \( a_1^*,b_1^*,a_2^*,b_2^* \) as well. Hence, we cannot  simultaneously have that \( a_1^*=b_1^* \) and \( a_2^* = b_2^* \).  As short trajectories cannot form loops and their tangent vectors cannot become parallel at each critical point, the above considerations yield that the critical and therefore critical orthogonal graphs of \( \varpi^*(t) \) look like either on Figure~\ref{fig:tc}, or on Figure~\ref{s-curves2}, or as the graphs obtained by reflection across the line \( L_{2\pi/3} \) of the graphs on Figure~\ref{s-curves2}. Repeating the arguments at the end of Section~\ref{ss:54}, we see that \( \varpi^*(t) \) gives rise to an S-contour in \( \mathcal T \). The limits in \eqref{ts8a} now follows from the uniqueness of such a S-contour and Theorem~\ref{geometry1} (in particular, the critical graph \( \varpi_*(t) \) cannot look like as on Figure~\ref{fig:tc}).

\section{Functions \( \mathcal D(z;t) \) and \( \mathcal Q(z;t) \)}
\label{s:g}

In this section we prove Propositions~\ref{prop:Szego} and~\ref{prop:g} as well as discuss other properties of $\mathcal Q(z;t)$. We consider the parameter $t\in O_\mathsf{two-cut}$ to be fixed and stop indicating the dependence on $t$ of the various quantities appearing below whenever this does not introduce ambiguity and is convenient.

\subsection{Proof of Proposition~\ref{prop:Szego}}
\label{ss:Szego}

It follows from \eqref{cm2}, \eqref{em5}, and the choice of the branch of \( Q^{1/2}(z) \) that
\begin{equation}
\label{rootQ}
Q^{1/2}(z) = \frac{z^2-t}2 + \frac1z + \frac{\mu_1}{z^2} + \mathcal O\left(\frac1{z^3}\right).
\end{equation}
It further follows from the choice of the constant \( \varsigma, d_1 \) in \eqref{Ct} that
\[
z + \int_{I_t}\frac{3\varsigma}{s-z}\frac{\dd s}{Q^{1/2}(s)} = z - \frac{2t}z - \frac{3 \varsigma d_1}{z^2}  - \frac{3\varsigma d_2}{z^3} + \mathcal O\left(\frac1{z^4}\right)
\]
as \( z\to\infty \). Thus, the analyticity properties of \( \mathcal D(z) \) as well as \eqref{D-exp} now follow from the fact that the product of the above functions behaves like 
\[
-\frac{3V(z)}2 + 1 - \frac{3\varsigma d_1}2 + \frac{t^2+\mu_1-3\varsigma d_2/2}z + \mathcal O\left(\frac1{z^2}\right)
\]
as \( z\to\infty \). Moreover, since \( Q_+^{1/2}(s) = - Q_-^{1/2}(s) \) for \( s\in J_t \), we get the first relation in \eqref{Djumps}. The second relation in \eqref{Djumps} follows from Plemelj-Sokhotski formula
\[
\left(\int_{I_t}\frac{\varsigma}{s-z}\frac{\dd s}{Q^{1/2}(s)}\right)_+ -  \left(\int_{I_t}\frac{\varsigma}{s-z}\frac{\dd s}{Q^{1/2}(s)}\right)_- =  \frac{2\pi\ic\varsigma}{Q^{1/2}(z)}, \quad z\in I_t^\circ.
\]

\subsection{Proof of Proposition~\ref{prop:g}} 
\label{ss:g}

Since the arc \( I_t \) is homologous to  the short critical trajectory of \( -Q(z)\dd z^2 \) connecting \( b_1 \) and \( a_2 \) and \( J_{t,1} = \Ga[a_1,b_1]  \) is such a trajectory, see Figure~\ref{fig:tc}  and Conventions~\ref{Gamma-arcs} and~\ref{i-t}, the constants \( \tau,\omega \) are indeed real. Let
\begin{equation}
\label{oc4}
g(z) := \int\log(z-s)\mathrm d\mu_t(s), \quad z\in\C\setminus \Ga\big(e^{\pi\ic}\infty,b_2\big],
\end{equation}
where we take the principal branch of $\log(\cdot-s)$ holomorphic outside of $\Ga\big(e^{\pi\ic}\infty,s\big]$ and $\mu_t$ is the equilibrium measure defined in \eqref{em6}. It follows directly from definition \eqref{oc4} that
\[
\partial_z g(z) = \int\frac{\mathrm d\mu_t(s)}{z-s},
\]
where, as usual, $\partial_z:=(\partial_x-\mathrm i\partial_y)/2$. Therefore, it can be deduced from \eqref{em2} and \eqref{em5} that
\begin{equation}
\label{g1}
g(z) = \frac{V(z)-\ell_*}2 + \int_{b_2}^zQ^{1/2}(s)\mathrm ds = \frac{V(z)-\ell_*}2 + \mathcal Q(z),
\end{equation}
where, as usual, we take the branch $Q^{1/2}(z)=\frac12z^2+\mathcal{O}(z)$, $\ell_*$ is a constant such that the equality holds at \( b_2 \) (notice that $\re(\ell_*)=\ell$, see \eqref{em2}), and \( \mathcal Q(z) \) is given by \eqref{mQ}. Property \eqref{oc5} clearly follows from \eqref{oc4} and \eqref{g1}. In the view of \eqref{g1}, let us define
\begin{equation}
\label{g2}
\phi_e(z) := 2\int_e^zQ^{1/2}(s)\dd s, \quad e\in\{a_1,b_1,a_2,b_2\}, 
\end{equation}
holomorphically in $\C\setminus\Gamma(e^{\pi\ic}\infty,b_2]$ when \( e=b_2 \), in \( \C\setminus\Ga[a_1,e^{\pi\ic/3}\infty) \) when \( e=a_1 \), and in  $\C\setminus\big(\Gamma(e^{\pi\ic}\infty,b_1)\cup\Gamma(a_2,e^{\pi\ic/3}\infty)\big)$ when \( e\in\{b_1,a_2\} \). Clearly, \( \phi_{b_2}(z) = 2\mathcal Q(z) \). One can readily check that
\begin{equation}
\label{g3}
\phi_{b_2}(z) = 
\left\{
\begin{array}{l}
\phi_{a_2}(z) \pm 2\pi\ic(1-\omega), \medskip \\
\phi_{b_1}(z) \pm 2\pi\ic(1-\omega) + 2\pi\ic\tau, \medskip \\
\phi_{a_1}(z) \pm 2\pi\ic + 2\pi\ic\tau,
\end{array}
\right. \quad z\in\C\setminus\Gamma,
\end{equation}
where the plus sign is used if $z$ lies to the left of $\Ga$ and the minus sign if $z$ lies to the right of it, and
\begin{equation}
\label{g4}
\phi_{b_2\pm}(s) = 
\left\{
\begin{array}{ll}
\displaystyle \pm2\pi\ic\mu_t\big(\Ga[s,b_2]\big), & s\in\Ga(a_2,b_2), \medskip \\
\displaystyle \pm 2\pi\ic\mu_t\big(\Ga[s,b_2]\big) +  2\pi\ic\tau, & s\in\Ga(a_1,b_1).
\end{array}
\right. 
\end{equation}
The jump relations in \eqref{gjumps} now easily follow from \eqref{g3} and \eqref{g4}. 

For future use let us record that \eqref{g1}, \eqref{g3}, and \eqref{g4} imply that
\begin{equation}
\label{g5}
g_+(s) - g_-(s) = \left\{
\begin{array}{rl}
0, & s\in\Gamma(b_2,e^{\pi\ic/3}\infty), \medskip \\
\pm\phi_{b_2\pm}(s), & s\in \Gamma(a_2,b_2), \medskip \\
2\pi\ic(1-\omega), & s\in\Gamma(b_1,a_2), \medskip \\
\pm\big(\phi_{b_2\pm}(s) - 2\pi\ic\tau\big), & s\in \Gamma(a_1,b_1), \medskip \\
2\pi\ic, & s\in\Gamma(e^{\pi\ic}\infty,a_1),
\end{array}
\right.
\end{equation}
and that
\begin{equation}
\label{g6}
g_+(s) + g_-(s) - V(s) + \ell_* = \left\{
\begin{array}{rl}
\phi_{b_2}(s), & s\in\Gamma(b_2,e^{\pi\ic/3}\infty), \medskip \\
0, & s\in\Gamma(a_2,b_2), \medskip \\
\phi_{a_2}(s), & s\in\Gamma(b_1,a_2),\medskip \\
2\pi\ic\tau, & s\in\Gamma(a_1,b_1), \medskip \\
\phi_{a_1}(s)+2\pi\ic\tau, & s\in\Gamma(e^{\pi\ic}\infty,a_1).
\end{array}
\right.
\end{equation}

\subsection{Local Analysis at \( e\in\{a_1,b_1,a_2,b_2\} \)}

Given \( e\in\{a_1,b_1,a_2,b_2\} \), let
\begin{equation}
\label{g7}
U_e := \big\{z:~|z-e|<\delta_e\rho(t)/3\big\},
\end{equation}
where \( \delta_e \in(0,1] \) to be adjusted later and we shall specify the function \( \rho(t) \) at the end of this subsection. Set
\begin{equation}
\label{g8}
J_e:=U_e\cap J_t \quad \text{and} \quad I_e:=U_e\cap(\Ga_t\setminus J_t),
\end{equation}
where the arcs $J_e$ and $I_e$ inherit their orientation from $\Ga_t$ and we assume that the value of \( \rho(t) \) is small enough so that these arcs are connected and that $I_e$ is a subarc of the orthogonal critical trajectory of $-Q(z)\dd z^2$ emanating from $e$ (see the remarks about \( \Ga_t \) at the very beginning of Section~\ref{s:main}). The latter fact and Theorem~\ref{geometry2} yield that
\begin{equation}
\label{g9} 
\phi_e(s)<0, \quad s\in I_e,
\end{equation}
see Figure~\ref{fig:tc}. In fact, the same reasoning shows that \eqref{g9} holds not only on \( I_e \), but on \( \Ga(e^{\pi\ic}\infty,a_1) \) when \( e=a_1 \), on \( \Ga(b_2,e^{\pi\ic/3}\infty) \) when \( e=b_2 \), and, for \( \re(\phi_e(z)) \) on \( \Ga(b_1,a_2) \) when \( e\in\{b_1,a_2\} \) (observe that these functions are also monotone on the respective arcs).   Furthermore, each function $\phi_e(z)$ is analytic in $U_e\setminus J_e$ and its traces on $J_e$ satisfy
\begin{equation}
\label{g10}
\phi_{e\pm}(s) = \pm2\pi\ic\nu_e\mu(J_{s,e})  = 2\pi e^{\pm 3\pi\ic\nu_e/2}\mu(J_{s,e}),
\end{equation}
where $J_{s,e}$ is the subarc of $J_e$ with endpoints $e$ and $s$,
\begin{equation}
\label{g11}
\nu_e :=
\left\{
\begin{array}{rl}
1, & e\in\{b_1,b_2\}, \medskip \\
-1, & e\in\{a_1,a_2\},
\end{array}
\right.
\end{equation}
and the second equality follows from \eqref{em6} and \eqref{g2}.  Since \( |\phi_e(z)|\sim |z-e|^{3/2} \) as \( z\to e \), it follows from \eqref{g9} and \eqref{g10} that we can define an analytic branch of $(-\phi_e)^{2/3}(z)$ in \( U_e \) that is positive on $I_e$ and satisfies \( (-\phi_e)^{2/3}(s) = -\big(2\pi \mu_t(J_{s,e})\big)^{2/3} \), \( s\in J_e \). Since $(-\phi_e)^{2/3}(z)$ has a simple zero at \( e \), it is conformal in $U_e$ for all radii small enough. Altogether, $(-\phi_e)^{2/3}(z)$ maps $e$ into the origin, is conformal in \( U_e \), and satisfies
\begin{equation}
\label{g12}
\left\{
\begin{array}{lcl}
(-\phi_e)^{2/3}(J_e) &\subset& (-\infty,0), \medskip \\
(-\phi_e)^{2/3}(I_e) &\subset& (0,\infty).
\end{array}
\right.
\end{equation}
Furthermore, if we define $(-\phi_e)^{1/6}(z)$ to be holomorphic in $U_e\setminus J_e$ and positive on $I_e$, then
\begin{equation}
\label{g13}
(-\phi_e)^{1/6}_+(s) = \nu_e \ic(-\phi_e)^{1/6}_-(s), \quad s\in J_e.
\end{equation}

To specify \( \rho(t) \), let \( \rho_e(t) \) be the radius of the largest disk around \( e \) for which \( J_e,I_e \) are connected and in which \( (-\phi_e)^{2/3}(z) \) is conformal. Observe that the disk around \( e \) of radius \( \rho_e(t) \) cannot contain other endpoints of \( J_t \) besides \( e \). We set \( \rho(t):=\min_e\{\rho_e(t)\}\). Then the disks \( U_e \) in \eqref{g7} are necessarily disjoint. Observe also that \( \rho(t) \) is non-zero for all \( t\in O_\mathsf{two-cut} \) and continuously depends on \( t \) due to continuous dependence on \( t \) of \( \phi_e(z) \), which in itself follows from Theorem~\ref{geometry2} and \eqref{g2}.

\section{Functions \(\Theta_n(z;t) \)}
\label{sec:An}

In this section we prove Proposition~\ref{prop:l}--\ref{prop:N} as well as discuss some related results. As in the previous section, we omit indicating the explicit dependence on \( t \) whenever convenient.

\begin{figure}[h!]
\begin{center}
\begin{tikzpicture}
    \coordinate (0) at (0,0);
    \coordinate (a) at (-3,-0.25);
    \coordinate (b) at (4,-0.25);
    \coordinate (c) at (-3,-2.5);
    \coordinate (d) at (4,-2.5);

   \filldraw (-0.65,0.75) circle (1.5pt) node[below=8pt,right=0pt] {$a_1$};
   \filldraw (0.9,0.25) circle (1.5pt) node[below=2pt,left=0pt] {$b_1$};
   \filldraw (1.9,0.25) circle (1.5pt) node[below=2pt,right=2pt] {$a_2$};
   \filldraw (3.5,0.9) circle (1.5pt) node[above=8pt,right=0pt] {$b_2$};
   \filldraw (-0.65,-1.5) circle (1.5pt) node[below=8pt,right=0pt] {$a_1$};
   \filldraw (0.9,-2) circle (1.5pt) node[above=8pt,right=-2pt] {$b_1$};
   \filldraw (1.9,-2) circle (1.5pt) node[above=2pt,left=0pt] {$a_2$};
   \filldraw (3.5,-1.35) circle (1.5pt) node[above=8pt,right=0pt] {$b_2$};
  
    \node at (5.25,1.25)   {$D^{(0)}$};
    \node at (5.25,-1)   {$D^{(1)}$};
    \node at (0.5,1.25)   {\textcolor{red}{$\beta$}};
    \node at (0.25,-2.25)   {\textcolor{blue}{$\alpha$}};
    \begin{scope}[very thick,decoration={
    markings,
    mark=at position 0.4 with {\arrow[scale=1.5]{>}}}
    ]
    \draw[thick] (a)--(b);
    \draw[thick] (c)--(d);
    \draw[thick] (a) -- (-1,1.25);
    \draw[thick] (b) -- (6,1.25);
    \draw[thick] (c) -- (-1,-1);
    \draw[thick] (d) -- (6,-1);
    \draw[thick,postaction={decorate}] (-0.65,0.75) .. controls (-0.25,0.8) and (0.25,0.65) ..node[near start, above=2pt] {} (0.9,0.25);
    \draw[thick,postaction={decorate}] (1.9,0.25) .. controls (2.25,0.65) and (2.5,0.75) ..node[near start, above=2pt] {} (3.5,0.9);
 \draw[thick,postaction={decorate}] (-0.65,-1.5) .. controls (-0.25,-1.45) and (0.25,-1.6) ..node[near start, above=2pt] {} (0.9,-2);
    \draw[thick,postaction={decorate}] (1.9,-2) .. controls (2.25,-1.6) and (2.5,-1.5) ..node[near start, above=2pt] {} (3.5,-1.35);
    \draw[thick,dashed] (-0.65,0.75)--(-0.65,-1.5);
    \draw[thick,dashed] (0.9,0.25)--(0.9,-2);
    \draw[thick,dashed] (1.9,0.25)--(1.9,-2);
    \draw[thick,dashed] (3.5,0.9)--(3.5,-1.35);
    \draw[red,thick,dashed,postaction={decorate}] (-0.8,1).. controls (-0.2,1.3) and (1.7,0.6)..node[near start, above=2pt] {} (0.9,0);
    \draw[red,thick,dashed,postaction={decorate}] (0.9,0).. controls (-1,0) and (-1.25,0.6)..node[near start, above=2pt] {} (-0.8,1);
    \draw[blue,thick,dashed,postaction={decorate}] (2.3,0.65).. controls (2,1.1) and (0.7,1.1) ..node[near start, above=2pt] {} (0.5,0.5);
    \draw[blue,thick,dashed,postaction={decorate}] (0.5,-1.8).. controls (0.3,-2.5) and (2.7,-2.5) ..node[near start, above=2pt] {} (2.3,-1.7) ;
\end{scope}
\end{tikzpicture}
\end{center}
\caption{\small Schematic plot of the Riemann surface $\RS$ and the cycles $\boldsymbol\alpha$ and $\boldsymbol\beta$.}
\label{fig_surface}
\end{figure}

\subsection{Riemann Surface}
\label{ss:RS}

In this subsection  we discuss properties of the Riemann surface \( \RS \) that has already appeared between \eqref{jac5} and \eqref{jac6}. Once again, set
\begin{equation}
\label{RS}
\RS := \big\{ \z:=(z,w):~w^2=Q(z)\big\}.
\end{equation}
We denote by \( \pi:\RS\to\overline\C \) the natural projection \( \pi(\z) =z \) and by \( \cdot^* \) a holomorphic involution on \( \RS \) acting according to the rule \(\z^*=(z,-w) \). We use notation \( \z,\s,\boldsymbol a \) for points on \( \RS \) with natural projections \( z,s,a \). 

The function \( w(\z) \), defined by \( w^2(\z)=Q(z) \), is meromorphic on \( \RS \) with simple zeros at the ramification points \( \boldsymbol E_t =\{ \boldsymbol a_1,\boldsymbol b_1,\boldsymbol a_2,\boldsymbol b_2 \} = \pi^{-1}(E_t) \), double poles at the points on top of infinity, and is otherwise non-vanishing and finite.  Put 
\[
\boldsymbol\Delta:=\pi^{-1}(J_t) \quad \text{and} \quad \RS = D^{(0)}\cup\boldsymbol\Delta\cup D^{(1)},
\]
where the domains \( D^{(k)} \) project onto \( \overline\C\setminus J_t \) with labels chosen so that \( 2w(\z) = (-1)^kz^2 + \mathcal O(z) \) as \( \z \) approaches the point on top of infinity within \( D^{(k)} \). For \( z\in\overline\C\setminus J_t \) we let \( z^{(k)} \) stand for \( \z\in D^{(k)} \) with, as agreed, \( \pi(\z)=z \). We define a homology basis on \( \RS \) in the following way: we let 
\[ 
\boldsymbol\alpha := \pi^{-1}(I_t) \quad \text{and} \quad \boldsymbol\beta := \pi^{-1}(J_{t,1}),
\]
where \( \boldsymbol\alpha \) is oriented towards \( \boldsymbol b_1 \) within \( D^{(0)} \) and \( \boldsymbol\beta \) is oriented so that \( \boldsymbol\alpha,\boldsymbol\beta \) form the right pair at \( \boldsymbol b_1 \), see Figures~\ref{fig_surface} and~\ref{planar-model}. It also will be convenient to put \( \RS_{\boldsymbol\alpha,\boldsymbol\beta} := \RS\setminus\{\boldsymbol\alpha\cup\boldsymbol\beta\} \), which is simply connected.

\begin{figure}[!h]
\includegraphics[scale=0.3]{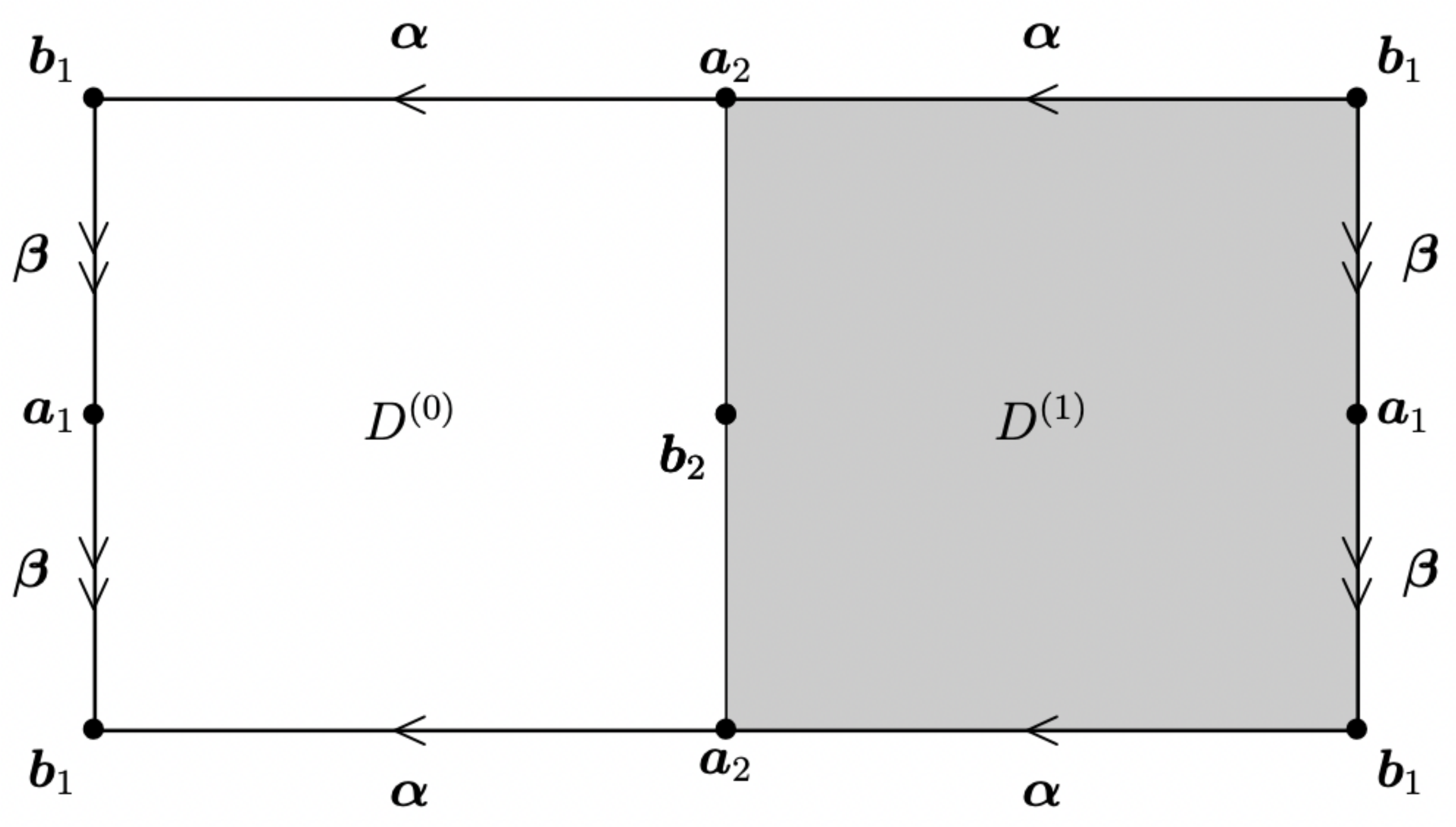}
 \caption{\small Planar model of the Riemann surface $\RS$ and the cycles $\boldsymbol \alpha$ and $\boldsymbol \beta$.}
\label{planar-model}
 \end{figure}
The surface $\RS$ has genus one. Thus, there exists a unique holomorphic differential on \( \RS \) normalized to have a unit period on \( \boldsymbol\alpha \), say \( \mathcal H \). In fact, it can be explicitly expressed as
\begin{equation}
\label{hol-diff}
\mathcal H(\z) = \left(\oint_{\boldsymbol\alpha} \frac{\dd s}{w(\s)} \right)^{-1} \frac{\dd z}{w(\z)}.
\end{equation}
With this notation, the choice of the homology basis, and the definition of \( w(\z) \), \eqref{BQ} becomes
\begin{equation}
\label{B}
\mathsf B = \oint_{\boldsymbol\beta}\mathcal H, \quad  \im(\mathsf B)>0, 
\end{equation} 
where the last inequality is a classical result of Riemann, see \cite{M07tata}.

Given the normalized holomorphic differential, we define
\begin{equation}
\label{ab1}
\mathfrak a(\z) := \int_{\boldsymbol b_2}^{\z}\mathcal H,
\end{equation}
where we restrict \( \z \) as well as the path of integration to $\RS_{\boldsymbol\alpha,\boldsymbol\beta}$. The function \( \mathfrak a(\z) \) is a holomorphic function in \( \RS_{\boldsymbol\alpha,\boldsymbol\beta} \) with continuous traces on $\boldsymbol\alpha,\boldsymbol\beta$ (away from the point of their intersection) that satisfy
\begin{equation}
\label{ab2}
\mathfrak a_+(\s) - \mathfrak a_-(\s) =  \left\{
\begin{array}{rl}
-\mathsf B, & \s\in\boldsymbol\alpha\setminus\{\boldsymbol b_1\}, \medskip \\
1, & \boldsymbol \s\in\boldsymbol\beta\setminus\{\boldsymbol b_1\},
\end{array}
\right.
\end{equation}
by the normalization of \( \mathcal H \) and the definition of \( \mathsf B \). Moreover, it continuously extends to \( \partial \RS_{\boldsymbol\alpha,\boldsymbol\beta} \), the topological boundary of \( \RS_{\boldsymbol\alpha,\boldsymbol\beta} \) (the rectangle on Figure~\ref{planar-model}). Notice that
\begin{equation}
\label{periods}
\tau = \frac1{2\pi\ic} \oint_{\boldsymbol\alpha} w(\s)\dd s \qandq \omega = -\frac1{2\pi\ic}\oint_{\boldsymbol\beta} w(\s)\dd s,
\end{equation}
where \( \tau,\omega \) are the constants from \eqref{tau-omega}. It readily follows from \eqref{ab2} and \eqref{periods} that
\[
\oint_{\partial \RS_{\boldsymbol\alpha,\boldsymbol\beta}} (w\mathfrak a)(\s)\dd s = \oint_{\boldsymbol\beta} w(\s)(\mathfrak a_+-\mathfrak a_-)(\s)\dd s  +  \oint_{\boldsymbol\alpha} w(\s)(\mathfrak a_+-\mathfrak a_-)(\s)\dd s = -2\pi\ic(\omega+\mathsf B\tau) ,
\]
where \( \partial \RS_{\boldsymbol\alpha,\boldsymbol\beta} \) is oriented so that \( \RS_{\boldsymbol\alpha,\boldsymbol\beta} \) remains on the left when \( \partial \RS_{\boldsymbol\alpha,\boldsymbol\beta} \) is traversed in the positive direction. On the other hand, the function \( (w\mathfrak a)(\z)\) is meromorphic in \( \RS_{\boldsymbol\alpha,\boldsymbol\beta} \) with only two singularities, both polar, at \( \infty^{(0)} \) and \( \infty^{(1)} \). Moreover, since \( w(\z^*)=-w(\z) \) and \( \mathfrak a(\z^*) = - \mathfrak a(\z) \), the residues at those poles coincide. Therefore, it holds that
\[
\omega + \mathsf B\tau = -\frac1{2\pi\ic} \oint_{\partial \RS_{\boldsymbol\alpha,\boldsymbol\beta}} (w\mathfrak a)(\s)\dd s  = -2\mathrm{res}_{z=\infty^{(0)}} (w\mathfrak a)(\z).
\]
Recall that \( w\big(z^{(0)}\big) = Q^{1/2}(z) \) and therefore it has expansion \eqref{rootQ} as \( z^{(0)}\to\infty^{(0)} \). This and \eqref{ab1} also allow us to deduce that
\[
\mathfrak a\big(z^{(0)}\big) = \mathfrak a\big(\infty^{(0)}\big) + \int_{\infty^{(0)}}^{z^{(0)}} \mathcal H = \mathfrak a\big(\infty^{(0)}\big) - \left(\oint_{\boldsymbol\alpha} \frac{\dd s}{w(\s)} \right)^{-1}\left(\frac2z+\frac{2t}{3z^3}+\mathcal O\left(\frac1{z^4}\right)\right)
\]
as \( z^{(0)}\to\infty^{(0)} \). Therefore, taking into account \eqref{Ct} with the definition of \( \boldsymbol\alpha \) and the symmetry \( \mathfrak a(\z^*) = - \mathfrak a(\z) \), we get that
\begin{equation}
\label{residue}
\varsigma + \omega + \mathsf B\tau = \varsigma + 2\mathfrak a\big(\infty^{(0)}\big) + \frac{4t}3\left(\oint_{\boldsymbol\alpha} \frac{\dd s}{w(\s)} \right)^{-1} = 2\int_{\boldsymbol b_2}^{\infty^{(0)}}\mathcal H = \int_{\infty^{(1)}}^{\infty^{(0)}}\mathcal H,
\end{equation}
where the path of integration lies entirely in \( \RS_{\boldsymbol\alpha,\boldsymbol\beta} \) and is involution-symmetric.

\subsection{Jacobi Inversion Problem}
\label{ss:JIP}

Denote by \( \mathsf{Jac}(\RS) := \C/ \{\Z+\mathsf B\Z\} \) the Jacobi variety of \( \RS \). We shall represent elements of \( \mathsf{Jac}(\RS) \) as equivalence classes \( [s] = \{s + j + m\mathsf B:j,m\in \Z\} \), where \( s\in\C \). One can readily deduce from \eqref{ab2} that \( \mathfrak a(\z) \), defined in \eqref{ab1}, is essentially a holomorphic bijection from \( \RS \) onto \( \mathsf{Jac}(\RS) \). Formally, we define this bijection, known as Abel's map, by
\begin{equation}
\label{Abel-map}
\z\in\RS \mapsto \left[\int_{\boldsymbol b_2}^\z\mathcal H\right]\in \mathsf{Jac}(\RS).
\end{equation}
Given \( s \in \C \), we denote by \( \z_{[s]} \) the unique point in \( \RS \) such that \( \left[\int_{\boldsymbol b_2}^{\z_{[s]}}\mathcal H\right] = [s] \). 

\begin{proposition}
\label{prop:Nt}
Let \( \tau,\omega \) be given by \eqref{tau-omega} and \( \varsigma \) by \eqref{Ct}. Further, let \( \{N_n\} \) be a sequence as in Theorem~\ref{global2}. Denote by \( \z_{n,k}=\z_{n,k}(t) \) the unique solution \( \z_{[s_{n,k}(t)]}  \) of the Jacobi inversion problem with
\begin{equation}
\label{tc4}
s_{n,k}(t) := \int_{\boldsymbol b_2}^{p^{(k)}}\mathcal H + (n-N_n)\varsigma + (\omega+\mathsf B\tau) n, \quad p = p(t) := \frac{b_1b_2-a_1a_2}{(b_2-a_2)+(b_1-a_1)},
\end{equation}
\( k\in\{0,1\} \). Then for any subsequence \( \N_* \) the point \( \infty^{(0)} \) is a topological limit point of \( \{\z_{n,1}\}_{n\in \N_*} \) if and only if \( \infty^{(1)} \) is a topological limit point of \( \{\z_{n,0}\}_{n\in \N_*} \). Moreover, it holds that \( [s_{n,0}(t;N_n)]  = [s_{n-1,1}(t;N_n)]\); that is, \( z_{n,0}(N_n) = z_{n-1,1}(N_n) \).
\end{proposition}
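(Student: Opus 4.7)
Both assertions of the proposition reduce to a single identity in the Jacobian $\mathsf{Jac}(\RS) := \C/(\Z + \mathsf B \Z)$, namely
\[
2\mathfrak{a}\big(p^{(0)}\big) + 2\mathfrak{a}\big(\infty^{(0)}\big) \equiv 0 \pmod{\Z + \mathsf B \Z}. \quad (\ast)
\]
For the ``moreover'' part I would substitute directly into \eqref{tc4} and invoke the involution relation $\mathfrak{a}(p^{(1)}) = -\mathfrak{a}(p^{(0)})$ (which follows from $\mathcal{H}(\z^*) = -\mathcal{H}(\z)$ and $\boldsymbol{b}_2^* = \boldsymbol{b}_2$) to obtain $s_{n,0}(t;N_n) - s_{n-1,1}(t;N_n) = 2\mathfrak{a}(p^{(0)}) + (\varsigma + \omega + \mathsf B\tau)$; by \eqref{residue} the parenthesis equals $2\mathfrak{a}(\infty^{(0)})$, so $(\ast)$ is exactly what makes the difference vanish in $\mathsf{Jac}(\RS)$. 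For the first assertion, $[s_{n,0}] - [s_{n,1}] = 2\mathfrak{a}(p^{(0)})$ is independent of $n$, so translation by this element is a homeomorphism of $\mathsf{Jac}(\RS)$; composing with the Abel homeomorphism $\mathfrak{a}\colon \RS \to \mathsf{Jac}(\RS)$, any subsequence along which $\z_{n,1} \to \infty^{(0)}$ produces $[s_{n,0}] \to \mathfrak{a}(\infty^{(0)}) + 2\mathfrak{a}(p^{(0)})$, which by $(\ast)$ equals $-\mathfrak{a}(\infty^{(0)}) = \mathfrak{a}(\infty^{(1)})$, i.e., $\z_{n,0} \to \infty^{(1)}$ along the same subsequence; the converse is symmetric.

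To prove $(\ast)$ I would exhibit an explicit meromorphic function $F$ on $\RS$ whose divisor Abel-maps to $2\mathfrak{a}(p^{(0)}) + 2\mathfrak{a}(\infty^{(0)})$. The key algebraic input is an alternative characterization of $p$: combining \eqref{tc4} with $a_1+b_1+a_2+b_2=0$ from \eqref{jac7} shows that $p$ is the unique solution of
\[
(p-a_1)(p-a_2) = (p-b_1)(p-b_2) =: A,
\]
whence $Q(p) = A^2/4$; using the identity $Q'(z) = z^3 - tz + 1$ read off from $Q(z) = \tfrac{1}{4}(z^2-t)^2 + z + \mu_1$, one also gets $Q'(p) = pA$. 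A direct expansion then yields
\[
A(z)^2 - Q(z) = \frac{(z-p)^2}{2p}, \qquad A(z) := \frac{z^2-t}{2} + \frac{1}{2p},
\]
so the candidate $F(\z) := (A(z) - w(\z))/(z-p)$ satisfies $F(\z) F(\z^*) = (A^2 - Q)/(z-p)^2 = 1/(2p)$, a nonzero constant. Reading off $\operatorname{div}(F)$: cancellation of leading $z^2/2$ terms at $\infty^{(0)}$ gives $F \sim 1/(2pz)$ (simple zero there), reinforcement at $\infty^{(1)}$ gives $F \sim z$ (simple pole), and at $z = p$ the factor $A - w$ has a double zero on the sheet where $w$ takes the value $A(p) = A/2$---the secondary cancellation coming from $A'(p) = p = \pm(Q^{1/2})'(p)$---producing after division by $(z-p)$ a simple zero of $F$ on that sheet and a simple pole on the other. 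Hence $\operatorname{div}(F) = p^{(0)} + \infty^{(0)} - p^{(1)} - \infty^{(1)}$, whose Abel image via the involution identities reduces to $2\mathfrak{a}(p^{(0)}) + 2\mathfrak{a}(\infty^{(0)})$, and Abel's theorem forces it to vanish, giving $(\ast)$.

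The delicate point will be pinning down on which sheet the double zero of $A - w$ actually occurs, i.e., whether $Q^{1/2}(p) = +A/2$ or $-A/2$ for the branch normalized by $Q^{1/2}(z) \sim z^2/2$ at infinity; only the ``$+$'' case gives $(\ast)$ as stated (the other yields $2\mathfrak{a}(p^{(0)}) \equiv 2\mathfrak{a}(\infty^{(0)})$, which does not suffice). I plan to settle this by continuity: $A$ and $Q^{1/2}(p)$ depend analytically on $t \in O_\mathsf{two-cut}$ and can vanish only as $t \to \partial O_\mathsf{two-cut}$ by \eqref{ts8a}, so the sign of $Q^{1/2}(p)/A$ is locally constant on the connected open set $O_\mathsf{two-cut}$; one explicit verification at a convenient base point---e.g., along the symmetry ray $L_{\pi/3}\cap O_\mathsf{two-cut}$, where all relevant branches can be written out---then pins down the correct sign globally.
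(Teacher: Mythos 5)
Your reduction of both claims to the single Jacobian identity \((\ast)\) is exactly the paper's route: the paper also derives everything from the fact that \( p^{(0)}+\infty^{(0)}-p^{(1)}-\infty^{(1)} \) is a principal divisor (its \eqref{nt5}), combined with \eqref{residue} and the unique solvability of the Jacobi inversion problem. Moreover, your explicit function is not really new: using \( a_1+b_1+a_2+b_2=0 \), \( f_2=-2t \), \( f_3=-4 \) from \eqref{jac7} one checks that \( \tfrac{z^2-t}{2}+\tfrac1{2p}=\tfrac{z^2}2+\tfrac{a_1a_2+b_1b_2}4 \), and then \( F(\z) \) coincides, up to a multiplicative constant, with the function \eqref{nt4} built from \( \gamma(z) \) in the paper; your algebraic identities (e.g.\ \( A(z)^2-Q(z)=(z-p)^2/(2p) \), which follows since \( A(p)=\tfrac12(p-a_1)(p-a_2) \) and the two roots of the quadratic \( A^2-Q \) automatically sum to \( 2p \)) are correct, though you assert rather than verify them.

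The genuine gap is precisely the point you flag as delicate and then defer: deciding on which sheet the double zero of \( A(z)-w \) sits, i.e.\ whether \( Q^{1/2}(p)=+A(p) \). This is equivalent to \( \gamma^2(p)=1 \), and the paper settles it by a short winding argument (continuation of the \( 1/4 \)-root of \( \gamma^4 \) along an arc from \( \infty \) to \( p \) avoiding \( \gamma^4(J_{t,1}),\gamma^4(J_{t,2}) \), giving \( \gamma(p)=1 \)); without it one only gets \( [2\mathfrak a(p^{(0)})]=[2\mathfrak a(\infty^{(0)})] \), which, as you note, does not suffice. Your substitute -- local constancy of the sign in \( t \) plus an explicit check at one base point on \( L_{\pi/3}\cap O_\mathsf{two-cut} \) -- is both unexecuted and, as stated, not airtight: the ratio \( Q^{1/2}(p(t))/A(p(t)) \) is a \( \pm1 \)-valued function that is continuous only on \( \{t\in O_\mathsf{two-cut}:\,p(t)\notin J_t\} \), and the paper explicitly allows the degenerate situation \( p\in J_t^\circ \) (it even adjusts the definition of \( p^{(0)} \) there), so you would have to prove that this set is connected, or control what happens when \( p(t) \) meets the cut, before "locally constant on a connected set" yields a global sign; and the base-point computation on the symmetry ray, which carries the whole burden, is not carried out (doing it honestly essentially amounts to the paper's winding argument at that point). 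Until the sign is pinned down, the proof of \((\ast)\), and hence of both assertions, is incomplete.
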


The behavior of the points $\z_{n,k}$ with respect to $n$ can be extremely chaotic. Assuming that \( n-N_n \) is constant, it is known that if the numbers $\omega$ and $\tau$ are rational, then there exist only finitely many distinct points $\z_{n,k}$; when $\omega$ and $\tau$ are irrational, all the points $\z_{n,k}$ are distinct, lie on a Jordan curve if $1$, $\omega$ and $\tau$ are rationally dependent, and are dense on the whole surface $\RS$ otherwise \cite{Suet03}. 

Let us also point out that the point \( p \) is always finite. Indeed, it follows from \eqref{cm2} and \eqref{em5} that \( b_1+b_2+a_1+a_2 = 0 \), see also \eqref{jac2} and \eqref{jac7}. Thus, if \( b_2 + b_1 - a_1 - a_2 = 0 \) were true, then we would have \( b_1=-b_2 \) and \( a_1=-a_2 \). In this case \( a_1b_1a_2 + a_1b_1b_2 + a_1a_2b_2 + b_1a_2b_2 = 0 \). However, this expression must be equal to \( -4 \) according to the above references.

\begin{proof}[Proof of Proposition~\ref{prop:Nt}]
Define 
\begin{equation}
\label{nt1}
\gamma(z):=\left(\frac{z-b_2}{z-a_2}\frac{z-b_1}{z-a_1}\right)^{1/4}, \quad z\in\overline\C\setminus J_t,
\end{equation}
where $\gamma(z)$ is holomorphic off $J_t=\supp(\mu_t)$ and the branch is chosen so that $\gamma(\infty)=1$. Further, set
\begin{equation}
\label{nt2}
A(z) = \frac{\gamma(z)+\gamma^{-1}(z)}2 \qandq B(z) := \frac{\gamma(z)-\gamma^{-1}(z)}{-2\ic}.
\end{equation}
Observe that the function \( A(z) \) was already defined in \eqref{alpha}. The functions \( A(z) \) and \( B(z) \) are holomorphic in \( \overline\C\setminus J_t \) and satisfy
\begin{equation}
\label{nt3}
A(\infty)=1, \quad B(\infty)=0, \qandq A_\pm(s) = \pm B_\mp(s), ~~ s\in J_t^\circ.
\end{equation}
Notice that the equation \( (AB)(z)=0 \) can be rewritten as \( \gamma^4(z) = 1 \) and has two solutions, namely, \( \infty \) and the point \( p \) from \eqref{tc4}. In fact, unless \( p\in J_t^\circ \), it a zero of \( B(z) \). Indeed, it is enough to show that \( \gamma(p)=1 \) in the latter case. Let \( L_i :=\gamma^4(J_{t,i})\), \( i\in\{1,2\} \), which are unbounded arcs connecting the origin to the point at infinity. Let \( L\subset\overline\C\setminus J_t  \) be an arc connecting  the point at infinity and \( p \). Then \( \gamma^4(L) \) is a closed curve that contains \( 1\) and does not intersect the arcs \( L_i \) and therefore does not wind around the origin. Thus, analytic continuation of the principal branch of the \( 1/4 \)-root from \( 1 \) along \( \gamma^4(L) \) leads back to the value \( 1 \) at the point \( 1 \). However, this continuation is exactly the continuation of \( \gamma(z) \) from the point at infinity to \( p \) along \( L \), which does imply that \( \gamma(p)=1 \) as claimed.

It follows from \eqref{nt3} that
\begin{equation}
\label{nt4}
\left\{
\begin{array}{rl}
(B/A)(z), & \z\in D^{(0)}, \medskip \\
-(A/B)(z), & \z\in D^{(1)},
\end{array}
\right.
\end{equation}
is a rational function on \( \RS \) with two simple zeros $\infty^{(0)}$ and $p^{(0)}$ and two simple poles $\infty^{(1)}$ and $p^{(1)}$ (if it happens that \( p\in J_t^\circ \), then we choose \( p^{(0)}\in\RS \) precisely in such a way that it is a zero of \eqref{nt4} and so \( p^{(1)} \) is a pole of \eqref{nt4}; it is, of course, still true that these points are distinct and \( \pi\big(p^{(k)}\big) = p \)). Therefore, Abel's theorem yields that
\begin{equation}
\label{nt5}
\left[\int_{p^{(0)}}^{\infty^{(1)}}\mathcal H\right]  = \left[\int_{p^{(1)}}^{\infty^{(0)}}\mathcal H\right]
\end{equation}
while the relations \eqref{tc4}, in particular, imply that
\begin{equation}
\label{nt6}
\left[\int_{p^{(0)}}^{\z_{n,0}}\mathcal H\right]  = \left[\int_{p^{(1)}}^{\z_{n,1}}\mathcal H\right].
\end{equation}
Let $\z_k$ be a topological limit of a subsequence $\{\z_{n_i,k}\}$. Holomorphy of the differential \( \mathcal H \) implies that
\[
\int_{p^{(k)}}^{\z_{n_i,k}} \mathcal H = \int_{p^{(k)}}^{\z_k} \mathcal H + \int_{\z_k}^{\z_{n_i,k}} \mathcal H \to \int_{p^{(k)}}^{\z_k} \mathcal H
\]
as \( i\to\infty \), where the integral from \( \z_k \) to \( \z_{n_i,k} \) is taken along the path that projects into a segment joining \( z_k \) and \( z_{n_i,k} \). The first claim of the proposition now follows from \eqref{nt5}, \eqref{nt6}, and the unique solvability of the Jacobi inversion problem on \( \RS \). Observe that
\begin{equation}
\label{nt7}
\left[\int_{p^{(0)}}^{p^{(1)}}\mathcal H\right]  = \left[\int_{\infty^{(1)}}^{\infty^{(0)}}\mathcal H\right] = \big[ \varsigma + \omega + \mathsf B\tau \big].
\end{equation}
by \eqref{nt5} and \eqref{residue}. Since
\[
s_{n,0}(N_n) = s_{n-1,1}(N_n) + \int_{p^{(1)}}^{p^{(0)}} \mathcal H + \varsigma + \omega + \mathsf B\tau
\]
the second conclusion of the proposition easily follows.
\end{proof}

\subsection{Proof of Proposition~\ref{prop:N0}}
\label{ss:N0}

As we shall show further below, the functions \( \Theta_n(z;t) \) from Proposition~\ref{prop:A} vanish at \( \z_{n,1} \) when it belongs to \( D^{(0)} \) and do not vanish at all when  \( \z_{n,1} \) does not belong to \( D^{(0)} \). Hence, the subsequences \( \N(\varepsilon)=\N(t,\varepsilon) \) from Proposition~\ref{prop:A} can be equivalently defined as 
\[
\N(\varepsilon) := \left\{ n\in\N:~~\z_{n,1} \not\in D^{(0)} \cap \pi^{-1}\big(\big\{|z|\geq 1/\varepsilon\big\}\big) \right\},
\]

Denote by \( r[s] \) the distance from \( 0 \) to \( [s] \), viewed as a lattice in \( \C \). That is,
\[
r[s]:= \min_{j,m\in\Z}\big|s+j+\mathsf Bm\big|,
\]
where $s$ is any representative of $[s]$. Further, for each \( r[s]>0 \), choose \( \varepsilon_s>0 \) small enough so that
\[
\big| \mathfrak a(\z)-\mathfrak a(\infty^{(0)}) \big|<r[s]/3, \quad \z\in U_{\varepsilon_s}, \quad U_\varepsilon := D^{(0)} \cap \pi^{-1}\big(\big\{|z| > 1/\varepsilon\big\}\big),
\]
where we also assume that \( \varepsilon_s \) is small enough so that always \( U_{\varepsilon_s} \subset\RS_{\boldsymbol\alpha,\boldsymbol\beta} \). Then
\begin{equation}
\label{7.3.0}
\big[\mathfrak a(\z_1) - \mathfrak a(\z_2) \big]  = [s] \neq [0] \quad \Rightarrow \quad \text{either } \z_1\notin U_{\varepsilon_s} \text{ or } \z_2 \notin U_{\varepsilon_s}
\end{equation} 
as otherwise it must hold \( 0<r[s] \leq |\mathfrak a(\z_1) - \mathfrak a(\z_2)|< (2/3) r[s] \).

Assume that either \( x \) or \( y \) is not an integer. Let \( \z_{n,1}^-,\z_{n,1},\z_{n,1}^+ \) be solutions of Jacobi inversion problem \eqref{tc4} with \( N \) equal to \( n-1,n,n+1 \), respectively. Then it follows from \eqref{tc4} that
\[
\big[\mathfrak a(\z_{n,1}^\pm) - \mathfrak a(\z_{n,1}) \big]  = \big[\mp \varsigma\big]. 
\]
As \( [\pm\varsigma] \neq [0] \) by assumption, the second assertion of the proposition follows from \eqref{7.3.0}.

Assume now that both \( x \) and \( y \) are integers. It readily follows from \eqref{tc4} that the definition of \( \z_{n,k} \) does not depend on the choice of \( N_n \) in this case. The first assertion of the proposition holds since
\[
\big[\mathfrak a(\z_{n+1,1})-\mathfrak a(\z_{n,1})\big]  = \big[\omega + \mathsf B\tau\big] \neq \big[0\big],
\]
where the last conclusion follows from the fact \( \omega=\mu_t(J_{t,1})\in(0,1) \) by the very definition in \eqref{tau-omega}.

\subsection{Proof of Proposition~\ref{prop:N}}
\label{ss:N}

Let \( k= N_{n+1} - N_n  \). It follows from \eqref{tc4} that
\begin{equation}
\label{7.3.1}
\big[\mathfrak a(\z_{n+1,1})-\mathfrak a(\z_{n,1})\big]  = \big[(1-k)\varsigma + \omega + \mathsf B\tau\big].
\end{equation}
Set \( \mathbb K  := \{k\in\Z:r[(1-k)\varsigma + \omega + \mathsf B\tau]=0\} \). We have already shown that \( r[\omega + \mathsf B\tau]>0 \). It also holds that \( r[\varsigma+\omega + \mathsf B\tau]>0 \) since  \( [\varsigma + \omega + \mathsf B\tau] \neq [0] \) by \eqref{residue} and the unique solvability of the Jacobi inversion problem. The following lemma easily follows from \eqref{7.3.0}.

\begin{lemma}
Assume that \( |n-N_n|\leq N_* \) and let \( \varepsilon_k:=\varepsilon_{(1-k)\varsigma + \omega + \mathsf B\tau} \).
\begin{itemize}
\item[(1)] If \( \mathbb K=\varnothing \), then at least one of the integers \( n,n+1 \) belongs to \( \N(\varepsilon) \) for all \( \varepsilon\leq \min_{|k|\leq 2N_*+1}\varepsilon_k \).
\item[(2)] If \( \mathbb K=\{k^\prime\} \), there exists an infinite subsequence \( \{ n_l \} \) such that \( N_{n_l+1}-N_{n_l}\neq k^\prime\) and therefore at least one of the integers \( n_l,n_l+1 \) belongs to \( \N(\varepsilon) \) for all \( \varepsilon\leq \min_{|k|\leq 2N_*+1,k\neq k^\prime}\varepsilon_k \). 
\item[(3)] If there exists an infinite subsequence \( \{ n_l \} \) such that \( N_{n_l+1}-N_{n_l}\in \{ 0,1 \} \), then at least one of the integers \( n_l,n_l+1 \) belongs to \( \N(\varepsilon) \) for all \( \varepsilon\leq \min \{\varepsilon_0,\varepsilon_1\} \). 
\end{itemize}
\end{lemma}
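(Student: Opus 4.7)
The plan is to chain together two facts already in hand: the Abel-map identity \eqref{7.3.1}, which rewrites $[\mathfrak a(\z_{n+1,1})-\mathfrak a(\z_{n,1})]=[(1-k)\varsigma+\omega+\mathsf B\tau]$ with $k:=N_{n+1}-N_n$, and the dichotomy \eqref{7.3.0}, which says that whenever $r[s]>0$ the two corresponding solutions of the Jacobi inversion problem cannot simultaneously lie in the neighborhood $U_{\varepsilon_s}$ of $\infty^{(0)}$. Combined with the reformulation of $\N(\varepsilon)$ as $\{n:\z_{n,1}\notin U_\varepsilon\}$ recalled in Section~\ref{ss:N0}, this immediately converts the non-vanishing of $r[(1-k)\varsigma+\omega+\mathsf B\tau]$ into the assertion that at least one of $n,n+1$ lies in $\N(\varepsilon)$, provided $\varepsilon\le\varepsilon_k$. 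The a~priori bound $|k|=|N_{n+1}-N_n|\le 2N_*+1$, coming from $|n-N_n|\le N_*$, then lets us replace the $n$-dependent threshold $\varepsilon_k$ by the finite minimum over $|k|\le 2N_*+1$.

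With this template, part~(1) is immediate: $\mathbb K=\varnothing$ says $r[(1-k)\varsigma+\omega+\mathsf B\tau]>0$ for every $k$, and in particular for the (bounded) value $k=N_{n+1}-N_n$, so the dichotomy applies for every $n$ once $\varepsilon\le\min_{|k|\le 2N_*+1}\varepsilon_k$. Part~(3) is the same argument restricted to the subsequence $\{n_l\}$: here $k\in\{0,1\}$, and both $r[\varsigma+\omega+\mathsf B\tau]>0$ and $r[\omega+\mathsf B\tau]>0$ are facts recorded immediately before the lemma, so the conclusion follows with $\varepsilon\le\min\{\varepsilon_0,\varepsilon_1\}$.

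The only step requiring more than direct substitution is the existence of the subsequence in part~(2). The key observation is that $\mathbb K$ cannot contain $0$ or $1$, by the same two positivity facts; hence $\mathbb K=\{k'\}$ forces $k'\notin\{0,1\}$. I would then argue by contradiction: if $N_{n+1}-N_n=k'$ for all sufficiently large $n$, then $N_n=k'n+c$ on a tail, and so $n-N_n=(1-k')n-c$ is unbounded because $k'\ne 1$, violating $|n-N_n|\le N_*$. Consequently there must be infinitely many $n_l$ with $N_{n_l+1}-N_{n_l}\ne k'$, and for each such $n_l$ the corresponding $k\in[-2N_*-1,2N_*+1]\setminus\{k'\}$ avoids $\mathbb K$, so the template of part~(1) applies with $\varepsilon\le\min_{|k|\le 2N_*+1,\,k\ne k'}\varepsilon_k$.

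The main (mild) obstacle is simply bookkeeping: one must uniformly control $k$ by $2N_*+1$ so that a single threshold $\varepsilon$ works for all $n$, and one must use the exclusion $\{0,1\}\cap\mathbb K=\varnothing$ precisely to rule out the degenerate scenario $N_{n+1}-N_n\equiv k'$ in case~(2). No new estimates on $\mathfrak a$, on the Abel map, or on the lattice geometry of $[s]$ are needed beyond what has already been established in Sections~\ref{ss:RS}--\ref{ss:N0}.
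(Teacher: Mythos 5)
Your proposal is correct and follows essentially the route the paper intends: the paper dispatches this lemma in one line as an easy consequence of \eqref{7.3.0} combined with \eqref{7.3.1}, and you supply exactly the expected details -- the bound \( |N_{n+1}-N_n|\leq 2N_*+1 \) from \( |n-N_n|\leq N_* \), the application of the dichotomy \eqref{7.3.0} with threshold \( \varepsilon_k \), and, for part (2), the observation that \( 1\notin\mathbb K \) (since \( r[\omega+\mathsf B\tau]>0 \)) rules out \( N_{n+1}-N_n\equiv k' \) on a tail, which would make \( n-N_n \) unbounded.
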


Inclusion \( k\in\mathbb K \) means that \( [(1-k)\varsigma+\omega+\mathsf B\tau]=[0]\).  Hence, both triples \( \omega,x,1 \) and \( \tau,y,1 \) are rationally dependent, \( \varsigma=x+\mathsf By \). Assume that \( k^\prime,k^{\prime\prime}\in\mathbb K \), \( k^\prime\neq k^{\prime\prime} \). Then it follows from \eqref{7.3.1} that \( [\omega+\mathsf B\tau] = [(k^\prime-1)\varsigma] \) and \( [(k^{\prime\prime}-k^\prime)\varsigma] = [0] \). The latter relation implies the first representation in \eqref{degen-cond} while the former gives the other two. It is easy to see in this case that \( \mathbb K= k^\prime + d\Z \). That is, if \( \mathbb K \) has at least two elements, then it is an arithmetic progression, \( \omega,\tau \) are rational numbers, \( \varsigma \) has rational coordinates in the basis \( 1,\mathsf B \), and the second and third relations of \eqref{degen-cond} must be satisfied. Thus, we can claim the following.

\begin{lemma}
If one of the triples \( \omega,x,1 \) or \( \tau,y,1 \) is rationally independent, then \( \mathbb K = \varnothing \). If not all numbers \( \omega,\tau,x,y \) are rational or they all rational but the second and third relations of \eqref{degen-cond} do not hold, then either \( \mathbb K=\varnothing \) or \( \mathbb K=\{k^\prime\}  \).
 \end{lemma}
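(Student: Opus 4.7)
The central reduction I would make at the outset is to translate the condition $r[(1-k)\varsigma+\omega+\mathsf B\tau]=0$ into two independent scalar conditions. Since $\im(\mathsf B)>0$ (by \eqref{B}), the real numbers $1$ and $\mathsf B$ are $\R$-linearly independent, so for any $a,b\in\R$ one has $a+\mathsf B b\in\Z+\mathsf B\Z$ if and only if $a\in\Z$ and $b\in\Z$ separately. Writing $\varsigma=x+\mathsf B y$ as in the statement, I expand
\[
(1-k)\varsigma+\omega+\mathsf B\tau=\bigl((1-k)x+\omega\bigr)+\mathsf B\bigl((1-k)y+\tau\bigr),
\]
so that $k\in\mathbb K$ is equivalent to the two simultaneous conditions $(1-k)x+\omega\in\Z$ and $(1-k)y+\tau\in\Z$. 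Every subsequent step uses only this reformulation.

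For the first assertion, I argue by contradiction. Suppose that $\omega,x,1$ are rationally independent and some $k\in\mathbb K$. Then there is an integer $m$ with $\omega+(1-k)x-m=0$, which is a nontrivial integer linear relation among $\omega$, $x$, $1$ (the coefficient of $\omega$ being $1$), contradicting the assumption. Therefore $\mathbb K=\varnothing$, and the argument for the triple $\tau,y,1$ is word-for-word identical using the imaginary component.

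For the second assertion, I prove the contrapositive: I show that $|\mathbb K|\geq 2$ forces all four numbers $\omega,\tau,x,y$ to be rational and simultaneously forces the second and third relations of \eqref{degen-cond}. Given two distinct elements $k',k''\in\mathbb K$, subtracting the corresponding pairs of membership conditions gives $(k''-k')x\in\Z$ and $(k''-k')y\in\Z$; since $k''-k'$ is a nonzero integer, $x,y\in\Q$. Plugging this back into the membership conditions for $k'$ yields $\omega,\tau\in\Q$ as well. To produce the precise algebraic form of the second and third relations, I choose a common denominator $d>0$ so that $x=i_1/d$ and $y=i_2/d$ with $i_1,i_2\in\Z$, and then the two membership conditions for $k'$ become $\omega d=(k'-1)i_1+m_1 d$ and $\tau d=(k'-1)i_2+m_2 d$ for some $m_1,m_2\in\Z$, which are exactly the second and third relations of \eqref{degen-cond} (with the role of $k$ played by $k'$).

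There is no genuine obstacle in the argument: once the translation via $\R$-linear independence of $\{1,\mathsf B\}$ is in place, the lemma becomes a bookkeeping consequence of elementary $\Z$-linear algebra, collecting the observations already made in the paragraph just preceding the statement (in particular that $\mathbb K$, when of cardinality at least two, is forced to be an arithmetic progression). The only small point I would check carefully is that the constants $d,i_1,i_2,m_1,m_2$ produced in the last step are integers with $d>0$, which is automatic from the construction.
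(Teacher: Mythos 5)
Your proposal is correct and follows essentially the same route as the paper: the paper's justification (the paragraph preceding the lemma) likewise decomposes $[(1-k)\varsigma+\omega+\mathsf B\tau]=[0]$ into the two real conditions $(1-k)x+\omega\in\Z$ and $(1-k)y+\tau\in\Z$ via $\varsigma=x+\mathsf B y$ and $\im(\mathsf B)>0$, deduces rational dependence of both triples for the first claim, and for the second subtracts the conditions for two distinct $k',k''\in\mathbb K$ to get rationality of $x,y,\omega,\tau$ and the second and third relations of \eqref{degen-cond}. Your write-up just makes the bookkeeping slightly more explicit; no gap.
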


Assume now that all three relations of \eqref{degen-cond} take place. That is, \( [\omega+\mathsf B\tau] = [(k-1)\varsigma] \) and \( [d\varsigma] = [0] \) for some integers \( k,d \). It follows from \eqref{nt5} that
\[
\int_{\boldsymbol b_2}^{p^{(1)}}\mathcal H = \frac12\int_{p^{(0)}}^{p^{(1)}}\mathcal H = \frac12\left(\int_{\infty^{(1)}}^{\infty^{(0)}}\mathcal H + j+\mathsf Bm\right) = \int_{\boldsymbol b_2}^{\infty^{(0)}}\mathcal H + \frac{j+\mathsf Bm}2
\]
for some \(j,m\in\Z \), where we use involution-symmetric paths of integration. Notice that \( j,m \) cannot be simultaneously even as this would contradict unique solvability of the Jacobi inversion problem. In fact, it holds that
\begin{equation}
\label{7.3.3}
\left[\int_{p^{(1)}}^{\infty^{(0)}}\mathcal H\right] = \left[ \int_{\boldsymbol b_2}^{ \infty^{(0)}} \mathcal H + \int_{\boldsymbol b_2}^{ p^{(0)}} \mathcal H \right] = \left[ \int_{\boldsymbol b_2}^{\boldsymbol b_1} \mathcal{H} \right] = \left[\frac{ 1 +\mathsf B}2\right],
\end{equation}
i.e., both $j, m$ must be odd. Indeed, the first equality follows from the symmetry $\mathfrak{a}(\boldsymbol z^*) = -\mathfrak a (\boldsymbol z)$. Next, observe that the function $(-1)^k\gamma^{-2}\big(z^{(k)}\big) - 1$ is meromorphic on $\RS$ with zeros at $\infty^{(0)}$ and $p^{(0)}$ and poles at $\boldsymbol b_1$ and $\boldsymbol b_2$, all simple. Hence, the second equality is a consequence of the same symmetry and Abel's theorem. Further, consider an involution-symmetric cycle represented by the anti-diagonal on Figure \ref{planar-model}, oriented so that it proceeds towards $\boldsymbol b_1$ in $D^{(0)}$. It is clearly homologous to $\boldsymbol \alpha + \boldsymbol \beta$. Then the symmetry of $\mathfrak a (\boldsymbol z)$ and the normalization in \eqref{hol-diff} and \eqref{B} yield the final equality in \eqref{7.3.3}. Notice that the conclusion of \eqref{7.3.3} a posteriori holds with the bounds of integration in left most integral of \eqref{7.3.3} in flipped. Therefore, adding \( \int_{\infty^{(0)}}^{\boldsymbol b_2}\mathcal H \) to both sides of \eqref{tc4} gives us
\begin{equation}
\label{7.3.2}
\left[\int_{\infty^{(0)}}^{\z_{n,1}}\mathcal H\right] = \left[\frac{{1}+\mathsf B}2 + (n-N_n)\varsigma + (\omega+\mathsf B\tau) n\right] = \left[\frac{{1}+\mathsf B}2 + (nk-N_n)\varsigma\right].
\end{equation}
Since \( \varsigma \) has rational coordinates in the basis \( 1,\mathsf B \) with denominator \( d \), the right-hand side of \eqref{7.3.2} has at most \( d \) distinct values that depend only on \( \varrho\in\{0,\ldots,d-1\} \), the remainder of the division of \( nk - N_n  \) by \( d \). Let \( \z_\varrho \), \( \varrho\in\{0,\ldots,d-1\} \), be such that
\[
\left[\int_{\infty^{(0)}}^{\z_\varrho}\mathcal H\right] = \left[\frac{{1}+\mathsf B}2 + \varrho\varsigma\right].
\]
Clearly,  \( \{\z_{n,1}\}_{n\in\N}\subseteq\{ \z_\varrho \}_{\varrho=0}^{d-1}\). Thus, it only remains to investigate when \( \z_\varrho=\infty^{(0)} \), or equivalently, when \( [({1}+\mathsf B)/2+\varrho\varsigma]=[0] \). From the representation of \( \varsigma \) in \eqref{degen-cond}, it must simultaneously hold that \( \varrho=d(2l_j-1)/(2i_j) \), \( j\in\{1,2\} \), for some \( l_1,l_2\in\Z \). Since one of the pairs \( (i_j,d) \) is co-prime and \( \varrho\in\{0,\ldots,d-1\} \), this is possible only if \( \varrho=0 \) or \( \varrho=d/2 \) (in the second case, of course, \( d \) must be even and \( i_j=2l_j-1\) must be odd), and the former is ruled out since $[(1 + \mathsf B)/2] \neq [0]$ (or equivalently, since \( 0 \neq 2l_j-1 \)).

\begin{lemma}
If all three relations of \eqref{degen-cond} take place, then Jacobi inversion problem \eqref{tc4} for \( \z_{n,1} \) has only finitely many distinct solutions and \( \infty^{(0)} \) is one of them if and only if \( d \) is even, \( i_1,i_2 \) are odd, and \( nk-N_n \) is divisible by \( d/2 \) but not by \( d \).
\end{lemma}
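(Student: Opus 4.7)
The plan is to build directly on the analysis immediately preceding the statement, which already supplies the two key ingredients: the representation \eqref{7.3.2} of $[\mathfrak a(\z_{n,1})-\mathfrak a(\infty^{(0)})]$ and the observation $[d\varsigma]=[0]$ extracted from the first identity of \eqref{degen-cond}. Combining these, the right-hand side of \eqref{7.3.2} depends on $n$ only through the residue $\varrho:=(nk-N_n)\bmod d$, so the map $n\mapsto\z_{n,1}$ takes at most $d$ distinct values by the unique solvability of the Jacobi inversion problem on $\RS$. This settles the finiteness claim.

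For the characterization of when $\infty^{(0)}$ occurs, I would use unique solvability in reverse: $\z_{n,1}=\infty^{(0)}$ exactly when the class appearing in \eqref{7.3.2} is trivial, namely
\[
\left[\frac{1+\mathsf B}{2}+\varrho\varsigma\right]=[0], \qquad \varrho\in\{0,1,\ldots,d-1\}.
\]
Substituting $\varsigma=(i_1+\mathsf Bi_2)/d$ and using that $1,\mathsf B$ form an $\R$-basis of $\C$ (since $\im(\mathsf B)>0$, see \eqref{B}), this single lattice equation splits into two independent scalar congruences, each demanding that $2\varrho i_j/d$ be an odd integer for $j\in\{1,2\}$.

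The only genuinely delicate step is the arithmetic analysis of this pair of congruences, and that is where I would focus attention; everything else is bookkeeping. By hypothesis at least one of $i_1/d,i_2/d$ is irreducible, and, by the symmetry of the two scalar equations, I may assume $\gcd(i_1,d)=1$. Then $d\mid 2\varrho i_1$ forces $d\mid 2\varrho$, so $2\varrho=md$ with $m\in\{0,1\}$ (since $0\le\varrho<d$). Oddness of $2\varrho i_1/d=mi_1$ rules out $m=0$ and requires both $m=1$ and $i_1$ odd; hence $\varrho=d/2$, which forces $d$ even, and the second congruence then reduces to $i_2$ being odd. The converse is immediate: if $d$ is even and $i_1,i_2$ are odd then $\varrho=d/2$ makes both congruences hold. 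Translating $\varrho=d/2$ back into a divisibility condition yields exactly that $nk-N_n$ is divisible by $d/2$ but not by $d$, completing the proposal. I do not anticipate any obstacle beyond the careful modular bookkeeping just sketched, since the substantive analytic content has already been absorbed into \eqref{7.3.2} and \eqref{degen-cond}.
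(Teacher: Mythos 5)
Your proposal is correct and follows essentially the same route as the paper: both arguments reduce everything to \eqref{7.3.2}, use $[d\varsigma]=[0]$ to bound the number of classes by the residue $\varrho$ of $nk-N_n$ modulo $d$, and then solve $[(1+\mathsf B)/2+\varrho\varsigma]=[0]$ coordinatewise in the basis $1,\mathsf B$, where coprimality of one pair $(i_j,d)$ forces $\varrho\in\{0,d/2\}$ with $\varrho=0$ excluded, yielding exactly the stated parity and divisibility conditions. The modular bookkeeping you sketch ($2\varrho i_j/d$ odd) is equivalent to the paper's formulation $\varrho=d(2l_j-1)/(2i_j)$, so there is nothing to add.
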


\subsection{Proof of Proposition~\ref{prop:A}}
\label{ss:A}

Recall that Abel's map \eqref{Abel-map} is essentially carried out by the function \( \mathfrak a(\z) \) defined in \eqref{ab1}. We shall consider the extension \( \tilde{\mathfrak a}(\z) \) of \( \mathfrak a(\z) \) to the whole surface \( \RS \) defined by setting \( \tilde{\mathfrak a}(\s) := \mathfrak a_+(\s) \) for \( \s\in \boldsymbol \alpha \) and \( \s\in\boldsymbol\beta\setminus\{\boldsymbol b_1\} \). Given such an extension and \eqref{tc4}, there exist unique integers $j_{n,k},m_{n,k}$ such that
\begin{equation}
\label{ab3}
\tilde{\mathfrak a}(\z_{n,k}) = \tilde{\mathfrak a}\big(p^{(k)}\big) + (n-N_n)\varsigma + (\omega + \mathsf B\tau)n + j_{n,k} + \mathsf Bm_{n,k}, \quad k\in\{0,1\}.
\end{equation}

Recall the definition of  $\theta(u)$ in \eqref{Rtheta}. The function $\theta(u)$ is holomorphic in $\C$ and enjoys the following periodicity properties:
\begin{equation}
\label{ab4}
\theta(u+j+\mathsf Bm) = \exp\left\{-\pi\ic \mathsf Bm^2-2\pi\ic um\right\}\theta(u), \qquad j,m\in\Z.
\end{equation}
It is also known that $\theta(u)$ vanishes only at the points of the lattice \( \left[\frac{\mathsf B+1}2\right] \). Let
\begin{equation}
\label{ab5}
\Theta_{n,k}(\z) = \exp\left\{-2\pi\ic\big(m_{n,k}+\tau n\big)\mathfrak a(\z)\right\} \frac{\theta\left(\mathfrak a(\z) - \tilde{\mathfrak a}(\z_{n,k}) - \frac{\mathsf B+1}2\right)}{\theta\left(\mathfrak a(\z) - \tilde{\mathfrak a}\big(p^{(k)}\big) - \frac{\mathsf B+1}2\right)}.
\end{equation}
Then we define \( \Theta_n(z;t) \) from Proposition~\ref{prop:A} as well as the other functions appearing in the statement of Theorem~\ref{thm:beta} by
\begin{equation}
\label{all-thetas}
\Theta_n(z;t) := \Theta_{n,1}^{(0)}(z), \;\; \vartheta_n(z) := \frac{\Theta_{n,1}^{(0)}(z)}{\Theta_{n,1}^{(0)}(\infty)}, \;\; \vartheta_n^*(z) := \frac{\Theta_{n,1}^{(1)}(z)}{\Theta_{n,1}^{(0)}(\infty)}, \;\; \text{and} \;\; \widetilde\vartheta_n(z) := \frac{\Theta_{n,0}^{(0)}(z)}{\Theta_{n,0}^{(1)}(\infty)},
\end{equation}
where \( F^{(i)}(z) \), \( i\in\{0,1\} \), stands for the pull-back under \( \pi(\z) \) of a function \( F(\z) \) from \( D^{(i)} \) into \( \overline\C\setminus J_t \). 

The functions $\Theta_{n,k}(\z)$ are meromorphic on \( \RS_{\boldsymbol\alpha,\boldsymbol\beta} \) with exactly one pole, which is simple and located at $p^{(k)}$, and exactly one zero, which is also simple and located at \( \z_{n,k} \) (observe that these functions can be analytically continued as multiplicatively multivalued functions on the whole surface \( \RS \); thus, we can talk about simplicity of a pole or zero regardless whether it belongs to the cycles of a homology basis or not). Moreover, according to \eqref{ab2}, \eqref{ab3}, and \eqref{ab4}, they possess continuous traces on $\boldsymbol\alpha,\boldsymbol\beta$ away from \( \boldsymbol b_1 \) that satisfy
\begin{equation}
\label{ab6}
\Theta_{n,k+}(\s) = \Theta_{n,k-}(\s)\left\{
\begin{array}{rl}
\exp\big\{-2\pi\ic(\omega n+(n-N_n)\varsigma)\big\}, & \boldsymbol s\in\boldsymbol\alpha\setminus\{\boldsymbol b_1\}, \medskip \\
\exp\big\{-2\pi\ic\tau n\big\}, & \boldsymbol s\in\boldsymbol\beta\setminus\{\boldsymbol b_1\}.
\end{array}
\right.
\end{equation}

Recall functions \( A(z),B(z) \) from \eqref{nt2} and \eqref{nt3}. To discuss boundedness properties of \( \Theta_{n,k}(\z) \) and for the asymptotic analysis in the following section it will be convenient to define
\begin{equation}
\label{ab7}
M_{n,0}(\z) = \Theta_{n,0}(\z) \left\{ \begin{array}{r} B(z), ~~ \z\in D^{(0)}, \medskip \\ A(z), ~~ \z\in D^{(1)}, \end{array}\right. \qandq  M_{n,1}(\z) = \Theta_{n,1}(\z) \left\{ \begin{array}{r} A(z), ~~ \z\in D^{(0)}, \medskip \\ -B(z), ~~ \z\in D^{(1)}.\end{array}\right.
\end{equation}
These functions are holomorphic on \( \RS\setminus\{\boldsymbol\alpha \cup \boldsymbol\beta \cup \boldsymbol\Delta \} \) since the pole of \( \Theta_{n,k}(\z) \) is canceled by the zero of \( B(z) \). Each function \( M_{n,k}(\z) \) has exactly two zeros, namely, \( \z_{n,k} \) and \( \infty^{(k)} \). It follows from \eqref{nt3} and \eqref{ab6} that
\begin{equation}
\label{Mn-jumps}
 \left\{
\begin{array}{rl}
M_{n,k\pm}^{(0)}(s)=\mp M_{n,k\mp}^{(1)}(s), & s\in J_{t,2}^\circ, \medskip \\
M_{n,k\pm}^{(0)}(s)=\mp e^{-2\pi\ic\tau n}M_{n,k\mp}^{(1)}(s), & s\in J_{t,1}^\circ, \medskip \\
M_{n,k\pm}^{(i)}(s) = e^{(-1)^i2\pi\ic(n\omega+(n-N_n)\varsigma)}M_{n,k\mp}^{(i)}(s), & s\in I_t^\circ.
\end{array}
\right.
\end{equation}
It further follows from \eqref{nt1} and \eqref{nt2} that \( |M_{n,k}(\z)|\sim|z-e|^{-1/4} \) as \( \z\to \boldsymbol e\in \boldsymbol E \) unless \( \z_{n,k} \) coincides with \( \boldsymbol e \) in which case the exponent becomes \( 1/4 \). Assume now that there exists \( N_*\geq0 \) such that \( |n-N_n|\leq N_* \) for all \( n\in \N \). Then for each \( \delta>0 \) there exists \( C(\delta,N_*) \) independent of \(  n \) such that
\begin{equation}
\label{MN-up-bound}
|M_{n,k}(\z)| \leq C(\delta,N_*), \quad \z\in O_\delta:=\RS\setminus \cup_{\boldsymbol e\in \boldsymbol E}\pi^{-1}\{|z-e|<\delta\}.
\end{equation}
Indeed, observe that \( \{\log|M_{n,k}(\z)|\} \) is a family of subharmonic functions in \( O_\delta\setminus\boldsymbol\alpha \) (the jump of $M_{n,k}(\z)$ is unimodular on $\boldsymbol\beta$). By the maximum principle for subharmonic functions, \( \log|M_{n,k}(\z)| \) reaches its maximum on $\partial(O_\delta\setminus\boldsymbol\alpha)$, where the maximum is clearly finite. Since the sequence \( \{ n-N_n \} \) is bounded by assumption and the range of $\tilde{\mathfrak a}(\z)$ is bounded by construction, so are the sequences \( \{m_{n,k} + \tau n\} \) and \( \{j_{n,k}+\omega n\} \), see \eqref{ab3} and recall that  \( j_{n,k}+\omega n \) are real and \( \mathrm{Im}(\mathsf B)>0 \). Thus, any limit point of  \( \{\log|M_{n,k}(\z)|\} \) is obtained by taking simultaneous limit points of \( \{ n-N_n \} \), \( \{m_{n,k} + \tau n\} \), and \( \{j_{n,k}+\omega n\} \), computing the corresponding solution \( \z_k \) of the Jacobi inversion problem \eqref{ab3} and plugging all of these quantities into the right-hand side of \eqref{ab5}. Hence, all these limit functions are also bounded above on the closure of $O_\delta\setminus\boldsymbol\alpha$, which proves \eqref{MN-up-bound}. Finally, it holds that
\begin{equation}
\label{MN-low-bound}
\big|M_{n,k}\big(\infty^{(1-k)}\big)\big| \geq c_\varepsilon, \quad n\in \N(\varepsilon),
\end{equation}
for some constant \( c_\varepsilon>0 \) by a similar compactness argument combined with the definition of \( \N(\varepsilon) \) in Proposition~\ref{prop:A}, the observation that \( M_{n,k}(\z) \) is non-zero at \( \infty^{(1-k)} \) when \(  \infty^{(1-k)} \neq \z_{n,k} \), and the last conclusion of Proposition~\ref{prop:Nt}.

\subsection{Proof of Proposition~\ref{prop:l}}
\label{ss:propl}

Define a function \( F(\z) \) on \( \RS\setminus\{\boldsymbol\alpha\cup\boldsymbol\beta\} \) by setting \( F\big(z^{(1)}\big) := 1/F\big(z^{(0)}\big) \)  and 
\[
F\big(z^{(0)}\big) := \mathcal D^{-1}(z) \exp\big\{ V(z)/2 + \mathcal Q(z)\big\},
\]
see \eqref{D} and \eqref{mQ}. It follows from \eqref{Djumps} and \eqref{gjumps} that
\begin{equation}
\label{new1}
F_+(\s) = F_-(\s)\left\{
\begin{array}{rl}
\exp\big\{2\pi\ic(\omega+\varsigma)\big\}, & \boldsymbol s\in\boldsymbol\alpha\setminus\{\boldsymbol b_1\}, \medskip \\
\exp\big\{2\pi\ic\tau\big\}, & \boldsymbol s\in\boldsymbol\beta\setminus\{\boldsymbol b_1\},
\end{array}
\right.
\end{equation}
where one needs to recall that \( I_t \) is oriented towards \( a_2 \) while \( \boldsymbol\alpha\cap \RS^{(0)} \) is oriented towards \( \boldsymbol b_1 \). Moreover, it follows from \eqref{oc5} that
\begin{equation}
\label{new2}
F\big(z^{(0)}\big) = e^{\ell_*/2}\mathcal D^{-1}(\infty) z + \mathcal O(1)
\end{equation}
as \( z\to\infty \). On the other hand, we get from \eqref{nt7} that there exist integers  \( j,m \) such that
\begin{equation}
\label{new5}
\tilde{\mathfrak a}\big(p^{(1)}\big) - \tilde{\mathfrak a}\big(p^{(0)}\big) = \varsigma + \omega + \mathsf B\tau + j + \mathsf Bm.
\end{equation}
Thus, we get similarly to \eqref{ab5} and \eqref{ab6} that the function
\begin{equation}
\label{new3}
\Theta(\z) := \exp\big\{-2\pi\ic(m+\tau)\mathfrak a(\z)\big\} \frac{\theta\left(\mathfrak a(\z) - \tilde{\mathfrak a}\big(p^{(1)}\big) - \frac{\mathsf B+1}2\right)}{\theta\left(\mathfrak a(\z) - \tilde{\mathfrak a}\big(p^{(0)}\big) - \frac{\mathsf B+1}2\right)}
\end{equation}
is meromorphic in \( \RS\setminus\{\boldsymbol\alpha\cup\boldsymbol\beta\} \), with a simple zero at \( p^{(1)} \), a simple pole at \( p^{(0)} \), and otherwise non-vanishing and finite, whose traces on the cycles of the homology basis satisfy
\begin{equation}
\label{new4}
\Theta_+(\s) = \Theta_-(\s)\left\{
\begin{array}{rl}
\exp\big\{-2\pi\ic(\omega+\varsigma)\big\}, & \boldsymbol s\in\boldsymbol\alpha\setminus\{\boldsymbol b_1\}, \medskip \\
\exp\big\{-2\pi\ic\tau\big\}, & \boldsymbol s\in\boldsymbol\beta\setminus\{\boldsymbol b_1\}.
\end{array}
\right.
\end{equation}
Combining \eqref{new1}, \eqref{new2}, and \eqref{new4} yields that the function \( (F\Theta)(\z) \) is rational on \( \RS \) with the divisor  \( p^{(1)} + \infty^{(1)} - p^{(0)} - \infty^{(0)} \). That is, this function is a constant multiple of the reciprocal of \eqref{nt4}. Observe that
\[
A(\infty) = 1 \quad \text{and} \quad B(z) = \frac{S_1}{4\ic} \frac1z + \mathcal O\left( z^{-2} \right)
\]
as \( z\to\infty \). Then we get from \eqref{nt4}, \eqref{new2}, and the symmetry \( F(\z)F(\z^*) \equiv 1 \) that
\begin{equation}
\label{new6}
e^{\ell_*} = \mathcal D^2(\infty) \frac{16}{S_1^2} \frac{\Theta\big(\infty^{(1)}\big)}{\Theta\big(\infty^{(0)}\big)}.
\end{equation}
Thus, it only remains to obtain the expression for the last fraction. We get from \eqref{7.3.3} that
\[
\mathfrak a\big( \infty^{(0)} \big) - \tilde{\mathfrak a}\big(p^{(1)}\big) = \frac{1+\mathsf B}2 + u + \mathsf Bv
\]
for some integers \( u,v \). Hence, we get from \eqref{residue}, \eqref{new5}, and \eqref{ab4} that
\[
\Theta\big(\infty^{(0)}\big)  =  \frac{e^{-\pi\ic(m+\tau)(\varsigma+\omega+\mathsf B\tau)}\theta(\mathsf Bv)}{\theta(\varsigma+\omega+\mathsf B\tau+ \mathsf B(m+v))} = \frac{e^{\pi\ic(2v+m-\tau)(\varsigma+\omega+\mathsf B\tau)+\pi\ic\mathsf B(2v+m)}\theta(0)}{\theta(\varsigma+\omega+\mathsf B\tau)}
\]
and similarly
\[
\frac1{\Theta\big(\infty^{(1)}\big)}  =  \frac{e^{-\pi\ic(m+\tau)(\varsigma+\omega+\mathsf B\tau)}\theta(\mathsf B(m+v))}{\theta(\mathsf Bv-\varsigma - \omega - \mathsf B\tau)} = \frac{e^{-\pi\ic(2v+m+\tau)(\varsigma+\omega+\mathsf B\tau)-\pi\ic\mathsf B(2v+m)}\theta(0)}{\theta(-\varsigma-\omega-\mathsf B\tau)}.
\]
Since \( \theta(\cdot) \) is an even function, this finishes the proof of the proposition.

\section{Asymptotic Analysis}
\label{s:aa}

\subsection{Initial Riemann-Hilbert Problem}

As agreed before, we omit the dependence on \( t \) whenever it does not cause ambiguity. In what follows, it will be convenient to set
\[
\boldsymbol I:=\left(\begin{matrix} 1 & 0 \\ 0 & 1 \end{matrix}\right), \quad \sigma_1:=\left(\begin{matrix} 0 & 1 \\ 1 & 0 \end{matrix}\right), \quad \text{and} \quad \sigma_3:=\left(\begin{matrix} 1 & 0 \\ 0 & -1 \end{matrix}\right).
\]
We are seeking solutions of the following sequence of Riemann-Hilbert problems for $2\times2$ matrix functions (\rhy):
\begin{itemize}
\label{rhy}
\item[(a)] $\boldsymbol Y(z)$ is analytic in $\C\setminus\Ga$ and $\lim_{\C\setminus\Ga\ni z\to\infty}\boldsymbol Y(z)z^{-n\sigma_3}=\boldsymbol I$;
\item[(b)] $\boldsymbol Y(z)$ has continuous traces on $\Gamma\setminus\{a_1,b_1,a_2,b_2\}$ that satisfy
\[
\boldsymbol Y_+(s) = \boldsymbol Y_-(s) \left(\begin{matrix}1& e^{-N_nV(s)}\\0&1\end{matrix}\right),
\]
where, as before, \( V(z) \) is given by \eqref{cm2} and the sequence \( \{N_n\} \) is such that \( |n-N_n|\leq N_* \) for some \( N_*>0 \).
\end{itemize}

The connection of \hyperref[rhy]{\rhy} to orthogonal polynomials was first demonstrated by Fokas, Its, and Kitaev in \cite{FIK} and lies in the following. If the solution of \hyperref[rhy]{\rhy} exists, then it is necessarily of the form
\begin{equation}
\label{rh1}
\boldsymbol Y(z) = \left(\begin{matrix}
P_n(z) & \big(\mathcal{C}P_ne^{-N_nV}\big)(z) \medskip \\
-\frac{2\pi\ic}{h_{n-1}}P_{n-1}(z) & -\frac{2\pi\ic}{h_{n-1}}\big(\mathcal{C}P_{n-1}e^{-N_nV}\big)(z)
\end{matrix}\right),
\end{equation}
where $P_n(z)=P_n(z;t,N_n)$ are the polynomial satisfying orthogonality relations \eqref{cm3}, $h_n=h_n(t,N_n)$ are the constants defined in \eqref{cm5a}, and $\mathcal{C}f(z)$ is the Cauchy transform of a function $f$ given on $\Gamma$, i.e.,
\[
(\mathcal{C}f)(z) := \frac1{2\pi\ic}\int_\Gamma\frac{f(s)}{s-z}\dd s.
\]

Below, we show the solvability of \hyperref[rhy]{\rhy} for all $n\in\N(t,\varepsilon)$ large enough following the framework of the steepest descent analysis introduced by Deift and Zhou \cite{DZ}. The latter lies in a series of transformations which reduce \hyperref[rhy]{\rhy} to a problem with jumps asymptotically close to identity.

\subsection{Renormalized Riemann-Hilbert Problem}

Suppose that $\boldsymbol Y(z)$ is a solution of \hyperref[rhy]{\rhy}. Put
\begin{equation}
\label{rh2}
\boldsymbol T(z) := e^{n\ell_*\sigma_3/2}\boldsymbol Y(z) e^{-n(g(z)+\ell_*/2)\sigma_3},
\end{equation}
where the function $g(z)$ is defined by \eqref{oc4} and \( \ell_* \) was introduced in \eqref{g1}. Then
\[
\boldsymbol T_+(s) = \boldsymbol T_-(s)\left(\begin{matrix} e^{-n(g_+(s)-g_-(s))} & e^{n(g_+(s)+g_-(s)-V(s)+\ell_*)+(n-N_n)V(s)} \\ 0 & e^{-n(g_-(s)-g_+(s))} \end{matrix}\right),
\]
$s\in\Gamma$, and therefore we deduce from \eqref{oc5} and \eqref{g4}--\eqref{g6} that $\boldsymbol T(z)$ solves \rht:
\begin{itemize}
\label{rht}
\item[(a)] $\boldsymbol T(z)$ is analytic in $\C\setminus\Gamma$ and $\lim_{\C\setminus\Gamma\ni z\to\infty}\boldsymbol T(z)=\boldsymbol I$;
\item[(b)] $\boldsymbol T(z)$ has continuous traces on $\Gamma\setminus\{a_1,b_1,a_2,b_2\}$ that satisfy
\[
\boldsymbol T_+(s) = \boldsymbol T_-(s)  \left\{
\begin{array}{rl}
\left(\begin{matrix} 1 & e^{n(2\pi\ic\tau+\phi_{a_1}(s))+(n-N_n)V(s)}\\0&1\end{matrix}\right), & s\in\Gamma(e^{\ic\pi}\infty,a_1), \medskip \\
\left(\begin{matrix} 1 & e^{n\phi_{b_2}(s)+(n-N_n)V(s)} \\ 0 & 1 \end{matrix}\right), & s\in\Gamma(b_2,e^{\pi\ic/3}\infty), \medskip \\
\left(\begin{matrix} e^{2\pi\ic\omega n} & e^{n\phi_{a_2}(s)+(n-N_n)V(s)}\\ 0 & e^{-2\pi\ic\omega n} \end{matrix}\right), & s\in\Gamma(b_1,a_2) ,
\end{array}
\right.
\]
and
\[
\boldsymbol T_+(s) = \boldsymbol T_-(s)  \left\{
\begin{array}{rl}
\left(\begin{matrix} e^{-n\phi_{b_2+}(s)} & e^{(n-N_n)V(s)} \\ 0 & e^{-n\phi_{b_2-}(s)} \end{matrix}\right), & s\in \Gamma(a_2,b_2), \medskip \\
\left(\begin{matrix} e^{-n(\phi_{b_2+}(s)-2\pi\ic\tau)}& e^{2\pi\ic\tau n+(n-N_n)V(s)} \\ 0 & e^{-n(\phi_{b_2-}(s)-2\pi\ic\tau)} \end{matrix}\right), & s\in \Gamma(a_1,b_1).
\end{array}
\right.
\]
\end{itemize}

Clearly, if \hyperref[rht]{\rht} is solvable and $\boldsymbol T(z)$ is the solution, then by inverting \eqref{rh2} one obtains a matrix $\boldsymbol Y(z)$ that solves \hyperref[rhy]{\rhy}.

\subsection{Lens Opening} 

As usual in the steepest descent analysis of matrix Riemann-Hilbert problems for orthogonal polynomials, the next step is based on the identity
\begin{multline*}
\left(\begin{matrix}e^{-n(\phi_{b_2+}(s)-C)}& e^{nC+(n-N_n)V(s)} \\ 0 &e^{-n(\phi_{b_2-}(s)-C)}\end{matrix}\right) =\left(\begin{matrix} 1 & 0 \\ e^{-n\phi_{b_2-}(s)-(n-N_n)V(s)} & 1 \end{matrix}\right) \times \\ \left(\begin{matrix} 0 & e^{nC+(n-N_n)V(s)} \\ -e^{-nC-(n-N_n)V(s)} & 0 \end{matrix}\right)\left(\begin{matrix} 1 & 0 \\ e^{-n\phi_{b_2+}(s)-(n-N_n)V(s)} & 1 \end{matrix}\right)
\end{multline*}
that follows from \eqref{g4}, where  $C=2\pi\ic\tau$ when  $s\in\Gamma(a_1,b_1)$ and $C=0$ when  $s\in\Gamma(a_2,b_2)$. To carry it out, we shall introduce two additional system of arcs.  
\begin{figure}[!h]
\includegraphics[scale=.25]{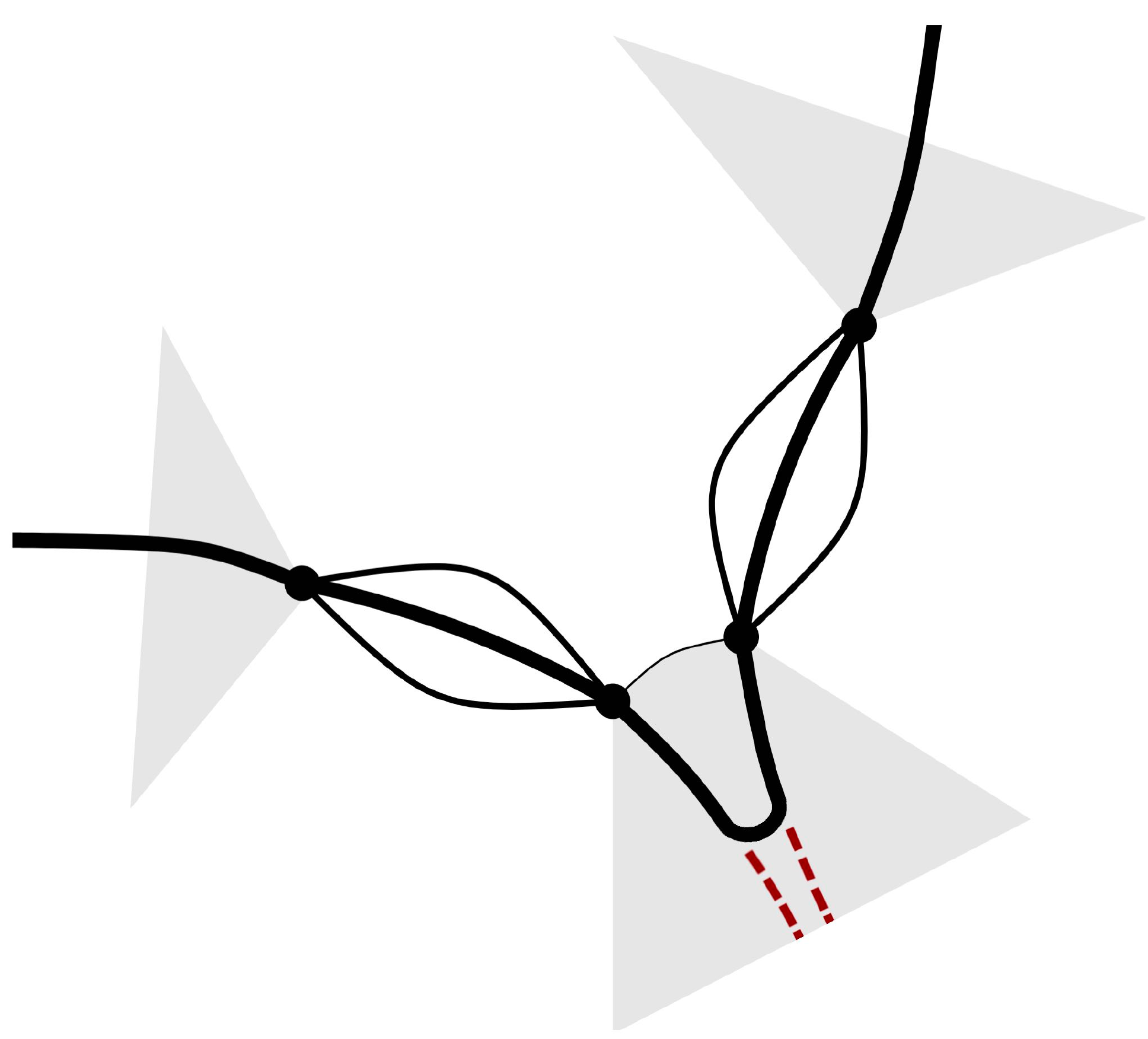}
	\begin{picture}(0,0)
	\put (-85, 60){$J_+$}
	\put (-67, 67){$J_+$}
	\put (-35, 50){$J_-$}
	\put(-92,33){$J_-$}
	\end{picture}
  \caption{\small The thick curves represent $\Gamma$ and thinner black curves represent $J_\pm$. The shaded part represents regions where $\re(\phi_e(z))<0$ (they are the same for each $e\in\{a_1,b_1,a_2,b_2\}$).}
\label{lens}
 \end{figure}
 
 Denote by \( J_\pm\) smooth homotopic deformations of $J_t$ within the region $\re(\phi_{b_2}(z))>0$ such that $J_+$ lies to the left and $J_-$ to the right of $J_t$, see Figure~\ref{lens} (both connected subarcs of \( J_+ \), resp. \( J_- \), are oriented from \( a_i \) to \( b_i \), \( i\in\{1,2\}\)). We shall fix the way these arcs emanate from $e\in\{a_1,b_1,a_2,b_2\}$. Namely, let $U_e$ be given by \eqref{g7} and \( (-\phi_e)^{2/3}(z) \) be as in \eqref{g12}.  Then we require that
\begin{equation}
\label{rh3}
\arg\big((-\phi_e)^{2/3}(z)\big) = \pm\nu_e(2\pi/3), \quad z\in U_e\cap J_\pm,
\end{equation}
where $\nu_e$ is defined by \eqref{g11}. This requirement always can be fulfilled due to conformality \( (-\phi_e)^{2/3}(z) \) in \( U_e \) and the choice of the branch in \eqref{g12}. 

Denote by $O_\pm$ the open sets delimited by $J_\pm$ and $J_t$. Set
\begin{equation}
\label{rh4}
\boldsymbol S(z) := \boldsymbol T(z) \left\{
\begin{array}{rl}
\left(\begin{matrix} 1 & 0 \\ \mp e^{-n\phi_{b_2}(z)-(n-N_n)V(z)} & 1 \end{matrix}\right), & z\in O_\pm, \medskip \\
\boldsymbol I, & \text{otherwise}.
\end{array}
\right.
\end{equation}
Then, if $\boldsymbol T(z)$ solves \hyperref[rht]{\rht}, $\boldsymbol S(z)$ solves \rhs:
\begin{itemize}
\label{rhs}
\item[(a)] $\boldsymbol S(z)$ is analytic in $\C\setminus(\Gamma\cup J_+\cup J_-)$ and $\lim_{\C\setminus\Gamma\ni z\to\infty}\boldsymbol S(z)=\boldsymbol I$;
\item[(b)] $\boldsymbol S(z)$ has continuous traces on $\Gamma\setminus\{a_1,b_1,a_2,b_2\}$ that satisfy \hyperref[rht]{\rht}(b) on $\Gamma(e^{\pi\ic}\infty,a_1)$, $\Gamma(b_1,a_2)$, and $\Gamma(b_2,e^{\pi\ic/3}\infty)$, as well as
\[
\boldsymbol S_+(s) = \boldsymbol S_-(s) \left\{
\begin{array}{rl}
\left(\begin{matrix} 0 & e^{(n-N_n)V(s)} \\ -e^{-(n-N_n)V(s)} & 0 \end{matrix}\right), & s\in \Gamma(a_2,b_2), \medskip \\
\left(\begin{matrix} 0 & e^{2\pi\ic\tau n+(n-N_n)V(s)} \\ -e^{-2\pi\ic\tau n-(n-N_n)V(s)} & 0 \end{matrix}\right), & s\in \Gamma(a_1,b_1), \medskip \\
\left(\begin{matrix}1& 0 \\ e^{-n\phi_{b_2}(s)-(n-N_n)V(s)} &1\end{matrix}\right), & s\in J_\pm.
\end{array}
\right.
\]
\end{itemize}
As before, since transformation \eqref{rh4} is invertible, a solution of \hyperref[rhs]{\rhs} yields a solution of \hyperref[rht]{\rht}.

\subsection{Global Parametrix} 

The Riemann-Hilbert problem for the global parametrix is obtained from \hyperref[rhs]{\rhs} by removing the quantities that are asymptotically zero from the jump matrices in \hyperref[rhs]{\rhs}(b). The latter can be easily identified with the help of \eqref{g3} and by recalling that the constant $\tau$ is real. Thus, we are seeking the solution of \rhn:
\begin{itemize}
\label{rhn}
\item[(a)] $\boldsymbol N(z)$ is analytic in $\overline\C\setminus\Gamma[a_1,b_2]$ and $\boldsymbol N(\infty)=\boldsymbol I$;
\item[(b)] $\boldsymbol N(z)$ has continuous traces on $\Gamma(a_1,b_2)\setminus\{b_1,a_2\}$ that satisfy
\[
\boldsymbol N_+(s) = \boldsymbol N_-(s)  \left\{
\begin{array}{rl} 
\left(\begin{matrix} 0 & e^{(n-N_n)V(s)} \\ -e^{-(n-N_n)V(s)} & 0 \end{matrix}\right), & s\in \Gamma(a_2,b_2), \medskip \\
\left(\begin{matrix} 0 & e^{2\pi\ic\tau n+(n-N_n)V(s)} \\ -e^{-2\pi\ic\tau n-(n-N_n)V(s)} & 0 \end{matrix}\right), & s\in \Gamma(a_1,b_1), \medskip \\
\left(\begin{matrix} e^{2\pi\ic\omega n} & 0 \\ 0 & e^{-2\pi\ic\omega n} \end{matrix}\right), & s\in \Gamma(b_1,a_2).
\end{array}
\right.
\]
\end{itemize}
We shall solve this problem only for \(n\in\N(\varepsilon)=\N(t,\varepsilon) \) from Proposition~\ref{prop:A}. In fact, to solve \hyperref[rhn]{\rhn} we only need to exclude indices \( n \) for which \( \z_{n,1}=\infty^{(0)}\) (in this case \( \z_{n,0}=\infty^{(1)}\), see Proposition~\ref{prop:Nt}) as should be clear from \eqref{rh5}. However, for further analysis we shall need estimate \eqref{MN-low-bound}, where it is crucial that \( n\in\N(\varepsilon) \) for some \( \varepsilon>0 \).

Let the functions \( M_{n,k}(\z) \) be given by \eqref{ab7} and \( \mathcal D(z)=\mathcal D(z;t) \) be defined by \eqref{D}. With the notation introduced right after \eqref{ab5}, a solution of \hyperref[rhn]{\rhn} is given by
\begin{equation}
\label{rh5}
\boldsymbol N(z) = \boldsymbol M^{-1}(\infty)\boldsymbol M(z), \quad \boldsymbol M(z) := \left(\begin{matrix} M_{n,1}^{(0)}(z) & M_{n,1}^{(1)}(z) \medskip \\ M_{n,0}^{(0)}(z) & M_{n,0}^{(1)}(z) \end{matrix}\right) \mathcal D^{(N_n-n)\sigma_3}(z).
\end{equation}
Indeed,  \hyperref[rhn]{\rhn}(a) follows from holomorphy of \( \mathcal D(z) \) and \( M_{n,k}(\z) \) discussed in Proposition~\ref{prop:Szego} and right after \eqref{ab7}. Fulfillment of \hyperref[rhn]{\rhn}(b) can be checked by using \eqref{Djumps} and \eqref{Mn-jumps}. Observe also that \( \det(\boldsymbol N(z))\equiv 1\). Indeed, as the jump matrices in \hyperref[rhn]{\rhn}(b) have unit determinants,  $\det(\boldsymbol N(z))$ is holomorphic through $\Gamma(a_1,b_1)$, $\Gamma(b_1,a_2)$, and $\Gamma(a_2,b_2)$. It also has at most square root singularities at $\{a_1,b_1,a_2,b_2\}$ as explained right after \eqref{Mn-jumps}. Thus, it is holomorphic throughout $\overline\C$ and therefore is a constant. The normalization at infinity implies that this constant is \( 1 \).

\subsection{Local Parametrices} 

The jumps discarded in \hyperref[rhn]{\rhn} are not uniformly close to the identity around each point $e\in\{a_1,b_1,a_2,b_2\}$. The goal of this section is to solve \hyperref[rhs]{\rhs} in the disks \( U_e \), see \eqref{g7}, with a certain matching condition on the boundary of the disks.  More precisely, we are looking for a matrix functions $\boldsymbol P_e(z)$ that solves \rhp:
\begin{itemize}
\label{rhp}
\item[(a)] $\boldsymbol P_e(z)$ has the same analyticity properties as $\boldsymbol S(z)$ restricted to $U_e$, see \hyperref[rhs]{\rhs}(a);
\item[(b)] $\boldsymbol P_e(z)$ satisfies the same jump relations as $\boldsymbol S(z)$ restricted to $U_e$, see \hyperref[rhs]{\rhs}(b);
\item[(c)] $\boldsymbol P_e(z)=\boldsymbol N(z)\big(\boldsymbol I+\boldsymbol{\mathcal{O}}(n^{-1})\big)$ holds uniformly on $\partial U_e$ as $n\to\infty$.
\end{itemize}
Again, we shall solve \hyperref[rhp]{\rhp} only for $n\in\N(\varepsilon)$ with \( \boldsymbol{\mathcal{O}}(\cdot)=\boldsymbol{\mathcal{O}}_\varepsilon(\cdot)\) in \hyperref[rhp]{\rhp}(c).

Let $U_e$, $J_e$, and $I_e$, \( e\in\{a_1,b_1,a_2,b_2\} \), be as in \eqref{g7} and \eqref{g8}. Further, let $\boldsymbol A(\zeta)$ be the Airy matrix \cite{DKMVZ,DKMVZ2}. That is, it is analytic in $\C\setminus\big((-\infty,\infty)\cup L_-\cup L_+\big)$,  $L_\pm:=\big\{\zeta:~\arg(\zeta)=\pm2\pi/3\big\}$, and satisfies
\[
\boldsymbol A_+(s) = \boldsymbol A_-(s) \left\{
\begin{array}{rl}
\left(\begin{matrix} 0 & 1 \\ -1 & 0 \end{matrix}\right), & s\in (-\infty,0), \medskip \\
\left(\begin{matrix} 1 & 0 \\ 1 & 1 \end{matrix}\right), & s\in L_\pm, \medskip \\
\left(\begin{matrix} 1 & 1 \\ 0 & 1 \end{matrix}\right), & s\in (0,\infty),
\end{array}
\right.
\]
where the real line is oriented from $-\infty$ to $\infty$ and the rays $L_\pm$ are oriented towards the origin. It is known that $\boldsymbol A(\zeta)$ has the following asymptotic expansion\footnote{As usual, we say that a function \( F(p) \), which might depend on other variables as well, admits an asymptotic expansion \( F(p) \sim \sum_{k=1}^\infty c_k \psi_k(p) \), where the functions \( \psi_k(p) \) depend only on \( p \) while the coefficients \( c_k \) might depend on other variables but not \( p \), if for any natural number \( K \) it holds that \( F(p)-\sum_{k=0}^{K-1}c_k \psi_k(p) = \mathcal O_K(\psi_K(p)) \).} at infinity:
\begin{equation}
\label{rh6}
\boldsymbol A(\zeta)e^{\frac23\zeta^{3/2}\sigma_3} \sim \frac{\zeta^{-\sigma_3/4}}{\sqrt2}\sum_{k=0}^\infty \left(\begin{matrix} s_k & 0 \\ 0 & t_k \end{matrix}\right)\left(\begin{matrix} (-1)^k & \mathrm i \\ (-1)^k \mathrm i & 1 \end{matrix}\right)  \left(\frac23\zeta^{3/2}\right)^{-k},
\end{equation}
where the expansion holds uniformly in $\C\setminus\big((-\infty,\infty)\cup L_-\cup L_+\big)$, and
\[
s_0=t_0=1, \quad s_k=\frac{\Gamma(3k+1/2)}{54^kk!\Gamma(k+1/2)}, \quad t_k=-\frac{6k+1}{6k-1}s_k, \quad k\geq1.
\]

Let us write $\boldsymbol A_e:=\boldsymbol A$ if \( e\in\{b_1,b_2\} \) and $\boldsymbol A_e:=\sigma_3\boldsymbol A\sigma_3$ if \( e\in\{a_1,a_2\} \). It can be easily checked that $\sigma_3\boldsymbol A\sigma_3$ has the same jumps as $\boldsymbol A$ only with the reversed orientation of the rays. Moreover, one needs to replace $\ic$ by $-\ic$ in \eqref{rh6} when describing the behavior of $\sigma_3\boldsymbol A\sigma_3$ at infinity. Let \( \zeta_e(z) := \big[-n(3/4)\phi_e(z)\big]^{2/3} \), which is conformal in \( U_e \), see \eqref{g12}. Further, put
\[
\boldsymbol J_e(z) := e^{(N_n-n)V(z)\sigma_3/2}\left\{
\begin{array}{rl}
\boldsymbol I, & e=b_2, \\
e^{\pi\ic(\pm\omega)n\sigma_3}, & e=a_2, \\
e^{\pi\ic(\pm \omega-\tau)n\sigma_3}, & e=b_1, \\
e^{-\pi\ic\tau n\sigma_3}, & e=a_1,
\end{array}
\right.
\]
where we use \( \omega \) if $z$ lies to the left of $\Gamma$ and use \( -\omega \) if $z$ lies to the right of $\Gamma$. Then it can be readily verified by using \eqref{g3}  that
\begin{equation}
\label{rh7}
\boldsymbol P_e(z) := \boldsymbol E_e(z)\boldsymbol A_e\big(\zeta_e(z)\big)e^{(2/3)\zeta_e^{3/2}(z)\sigma_3}\boldsymbol J_e(z),
\end{equation} 
satisfies \hyperref[rhp]{\rhp}(a,b) for any matrix \( \boldsymbol E_e(z) \) holomorphic in \( U_e \). It follows immediately from \eqref{rh6} that \hyperref[rhp]{\rhp}(c) will be satisfied if
\begin{equation}
\label{rh8}
\boldsymbol E_e(z) := \left(\boldsymbol{NJ}_e^{-1}\right)(z)\left(\begin{matrix} 1 & -\nu_e\ic \\ -\nu_e\ic & 1 \end{matrix}\right) \frac{\zeta_e^{\sigma_3/4}(z)}{\sqrt2},
\end{equation}
provided this matrix function is holomorphic in \( U_e \), where \( \nu_e \) was defined in \eqref{g11}. By using \hyperref[rhn]{\rhn}(b) and \eqref{g13} one can readily check that \( \boldsymbol E_e(z) \) is holomorphic in \( U_e\setminus\{e\} \). Since \( \zeta_e(z) \) has a simple zero at \( e \), it also follows from \eqref{rh5} and the claim after \eqref{Mn-jumps} that  \( \boldsymbol E_e(z) \) can have at most square root singularity at \( e \) and therefore is in fact holomorphic in the entire disk \( U_e \) as needed.

In fact, it follows from \eqref{rh6}--\eqref{rh8} that
\begin{equation}
\label{rh9}
\boldsymbol P_e(z) \sim  \boldsymbol N(z) \left(\boldsymbol I + \frac1n\sum_{k=0}^\infty \frac{\boldsymbol P_{e,k}(z)}{n^k} \right),
\end{equation}
where the expansion inside the parentheses holds uniformly on \( \partial U_e \) and locally uniformly for \( t\in O_\mathsf{two-cut} \), and
\begin{equation}
\label{rh10}
\boldsymbol P_{e,k-1}(z) = \boldsymbol J_e^{-1}(z)\left(\begin{matrix} 1 & -\nu_e \ic \\ -\nu_e \ic & 1 \end{matrix}\right) \left(\begin{matrix} s_k & 0 \\ 0 & t_k \end{matrix}\right) \left(\begin{matrix} (-1)^k & \nu_e \ic \\ \nu_e(-1)^k \ic & 1 \end{matrix}\right) \boldsymbol J_e(z) \left(-\frac{\phi_e(z)}2\right)^{-k}, \quad k\geq1.
\end{equation}

\subsection{RH Problem with Small Jumps}

Set
\( \Sigma := \left(\big[(\Gamma\setminus J_t)\cup J_+\cup J_-\big]\cap D\right)\cup \left(\cup_e \partial U_e\right) \),  \( D:= \C\setminus\cup_e \overline U_e \) We shall show that for all $n\in\N(\varepsilon)$ large enough there exists a matrix function $\boldsymbol R(z)$ that solves the following Riemann-Hilbert problem (\rhr):
\begin{itemize}
\label{rhr}
\item[(a)] $\boldsymbol R(z)$ is holomorphic in $\C\setminus\Sigma$ and $\lim_{\C\setminus\Gamma\ni z\to\infty}\boldsymbol R(z)=\boldsymbol I$;
\item[(b)] $\boldsymbol R(z)$ has continuous traces on $\Sigma^\circ$ (points with the well defined tangent) that satisfy
\[
\boldsymbol R_+(s) =  \boldsymbol R_-(s) \left\{
\begin{array}{ll}
\boldsymbol P_e(s)\boldsymbol N^{-1}(s), & s\in\partial U_e, \medskip \\
\boldsymbol N(s)  \left(\begin{matrix} 1 & 0 \\  e^{-n\phi_{b_2}(s)-(n-N_n)V(s)} & 1 \end{matrix}\right) \boldsymbol N^{-1}(s), & s\in J_{\pm}\cap D,
\end{array}
\right.
\]
where $\partial U_e$ is oriented clockwise, and
\[
\boldsymbol R_+(s) = \boldsymbol R_-(s) 
\left\{
\begin{array}{rl}
\boldsymbol N(s)\left(\begin{matrix} 1 & e^{n(2\pi\ic\tau+\phi_{a_1}(s))+(n-N_n)V(s)}\\0&1\end{matrix}\right)\boldsymbol N^{-1}(s), & s\in\Gamma(e^{\ic\pi}\infty,a_1)\cap D, \medskip \\
\boldsymbol N_-(s)\left(\begin{matrix} e^{2\pi\ic\omega n} & e^{n\phi_{a_2}(s)+(n-N_n)V(s)}\\ 0 & e^{-2\pi\ic\omega n} \end{matrix}\right)\boldsymbol N_+^{-1}(s), & s\in\Gamma(b_1,a_2)\cap D , \medskip \\
\boldsymbol N(s)\left(\begin{matrix} 1 & e^{n\phi_{b_2}(s)+(n-N_n)V(s)} \\ 0 & 1 \end{matrix}\right)\boldsymbol N^{-1}(s), & s\in\Gamma(b_2,e^{\pi\ic/3}\infty)\cap D.
\end{array}
\right.
\]
\end{itemize}

Observe that \hyperref[rhr]{\rhr} is a well posed problem as $\det(\boldsymbol N(z))\equiv 1$, as explained after \eqref{rh5}, and therefore the matrix is invertible. Recall also that the entries of $\boldsymbol N(z)$ and $\boldsymbol N^{-1}(z)$ are uniformly bounded on \( \Sigma \) for $n\in \N(\varepsilon)$ according to \eqref{MN-up-bound} and \eqref{MN-low-bound}.

To prove solvability of \hyperref[rhr]{\rhr}, let us show that the jump matrices in \hyperref[rhr]{\rhr}(b) are close to the identity.  To this end, set
\begin{equation}
\label{rh11}
\boldsymbol \Delta(s) := (\boldsymbol R_-^{-1}\boldsymbol R_+)(s) - \boldsymbol I, \quad s\in\Sigma.
\end{equation}
Since the entries of $\boldsymbol N(z)$ are uniformly bounded on each $\partial U_e$ with respect to \( n\in\N(\varepsilon) \), it holds by \hyperref[rhp]{\rhp}(c) and \eqref{rh9} that 
\begin{equation}
\label{rh12}
\boldsymbol \Delta(s) \sim \frac1n\sum_{k=0}^\infty \frac{\big(\boldsymbol N \boldsymbol P_{e,k} \boldsymbol N^{-1}\big)(s)}{n^k},
\end{equation}
where this pseudo expansion\footnote{We say that a function \( F(p) \)  admits a pseudo  expansion \( F(p) \sim \sum_{k=1}^\infty c_k \psi_k(p) \) if the coefficients \( c_k \) do depend on \( p \), but the estimates \( F(p)-\sum_{k=0}^{K-1}c_k \psi_k(p) = \mathcal O_K(\psi_K(p)) \) remain valid for any \( K \).}  is valid uniformly on $\partial U_e$. Thus, it holds that
\begin{equation}
\label{rh13}
\|\boldsymbol\Delta\|_{L^\infty(\cup_e\partial U_e)} = \mathcal O_\varepsilon\big(n^{-1}\big).
\end{equation}
Moreover, it follows from \eqref{g9} and the sentence right after that there exists a constant $C_D<1$, depending on the radii of the disks $U_e$, such that $|e^{\phi_{b_2}(s)}|<C_D$ for $s\in\Gamma(b_2,e^{\pi\ic/3}\infty)\cap D$, $|e^{\phi_{a_1}(s)}|<C_D$ for \( s\in\Gamma(e^{\pi\ic}\infty,a_1)\cap D \), and  $|e^{\phi_{a_2}(s)}|<C_D$ for \( s\in\Gamma(b_1,a_2)\cap D \). Therefore,
\begin{equation}
\label{rh14}
\boldsymbol \Delta(s) =\boldsymbol N_-(s)\left(\begin{matrix} 0 & e^{n(\phi_e(s)+C)+(n-N_n)V(s)} \\ 0 & 0 \end{matrix}\right)\boldsymbol N_+^{-1}(s) = \boldsymbol{\mathcal{O}}\left(C_D^n\right)
\end{equation}
on $(\Sigma\cap D)\setminus(J_+\cup J_-)$ since $C$ is either zero or purely imaginary, the entries of \( \boldsymbol N(z) \) are bounded and so is the sequence \( \{n-N_n\} \), the subscripts $\pm$ are needed only on $\Gamma(b_1,a_2)\cap D$, and we used the fact
\[
\boldsymbol N_-(s) e^{2\pi\ic\omega n\sigma_3} \boldsymbol N_+^{-1}(s) = \boldsymbol I, \quad s\in\Gamma(a_1,b_2),
\]
see \hyperref[rhn]{\rhn}(b). Similarly, we get that
\begin{equation}
\label{rh15}
\boldsymbol \Delta(s) =\boldsymbol N(s)\left(\begin{matrix} 0 & 0 \\ e^{-n\phi_{b_2}(s)-(n-N_n)V(s)} & 0 \end{matrix}\right)\boldsymbol N^{-1}(s) = \boldsymbol{\mathcal{O}}\left(C_D^n\right)
\end{equation}
on \( J_\pm\cap D \) for a possibly adjusted constant \( C_D \), where we used the fact that \( \mathrm{Re}(\phi_{b_2}(s))>0 \) for \( s\in J_\pm\setminus E \), see Figure~\ref{lens}. 

Equations \eqref{rh13}, \eqref{rh14}, and \eqref{rh15} show that $\boldsymbol \Delta(s)$ is uniformly close to zero. Since the entries of $\boldsymbol N(z)$ are holomorphic at infinity and $ e^{n\phi_e(s)}$ is geometrically small as $\Gamma\ni s\to\infty$, $\boldsymbol \Delta(s)$ is close to zero in $L^2$-norm as well. Then it follows from the same analysis as in \cite[Corollary~7.108]{Deift} that $\boldsymbol R(z)$ exists for all $n\in\N(\varepsilon)$ large enough and it holds uniformly in $\C$ that
\begin{equation}
\label{rh16}
\boldsymbol R(z) = \boldsymbol I + \boldsymbol{\mathcal O}_\varepsilon\big(n^{-1}\big).
\end{equation}

\subsection{Solution of \hyperref[rhy]{\rhy}}

Given $\boldsymbol R(z)$, \( \boldsymbol N(z) \), and \( \boldsymbol P_e(z) \), solutions of \hyperref[rhr]{\rhr}, \hyperref[rhn]{\rhn}, and \hyperref[rhp]{\rhp}, respectively, it is a trivial verification to check that \hyperref[rhs]{\rhs} is solved by
\begin{equation}
\label{rh17}
\boldsymbol S(z) = \left\{
\begin{array}{ll}
(\boldsymbol R\boldsymbol N)(z) & \text{in} \quad D\setminus[(\Gamma\setminus J_t)\cup J_+\cup J_-], \medskip \\
(\boldsymbol R\boldsymbol P_e)(z) & \text{in} \quad U_e,~~~e\in\{a_1,b_1,a_2,b_2\}.
\end{array}
\right.
\end{equation}

Let $K$ be a compact set in $\C\setminus\Gamma$. We can always arrange so that the set $K$ lies entirely within the unbounded component of the contour $\Sigma$. Then it follows from \eqref{rh2}, \eqref{rh4}, and \eqref{rh17} that
\begin{equation}
\label{rh18}
\boldsymbol Y(z) = e^{-n\ell^*\sigma_3/2}(\boldsymbol{RN})(z)e^{n(g(z)+\ell^*/2)\sigma_3}, \quad z\in K.
\end{equation}
Subsequently, by using \eqref{rh1} and \eqref{g1}, we see that 
\[
P_n(z) = [\boldsymbol Y(z)]_{11} = \big([\boldsymbol R(z)]_{11}[\boldsymbol N(z)]_{11}+[\boldsymbol R(z)]_{12}[\boldsymbol N(z)]_{21}\big)e^{n\mathcal Q(z) + \frac n2(V(z)-\ell_*)}.
\]
Therefore, it follows from \eqref{rh5} and \eqref{rh16} that
\begin{equation}
\label{rh19}
\psi_n(z)e^{-n\mathcal Q(z)} = \left( 1 + \mathcal O_\varepsilon\big(n^{-1}\big) \right) D^{N_n-n}(z) \frac{M_{n,1}^{(0)}(z)}{M_{n,1}^{(0)}(\infty)} + \mathcal O_\varepsilon\big(n^{-1}\big) \frac{D^{N_n-n}(z)}{\mathcal D^{2(n-N_n)}(\infty)} \frac{M_{n,0}^{(0)}(z)}{M_{n,0}^{(1)}(\infty)}.
\end{equation}
Asymptotic formula \eqref{tc17} now follows from \eqref{MN-up-bound} and \eqref{MN-low-bound}, boundedness of \( \{N_n-n\} \), and definitions of \( M_{n,1}(\z) \) in \eqref{ab7}, of \( \Theta_n(z) \) right before \eqref{ab5}, and of \( \vartheta_n(z) \) right after Proposition~\ref{prop:A}.

Now, let \( K \) be any compact set in \( \C\setminus J_t \). Write \( K=K_1\cup K_2 \), where \( K_1,K_2 \) are compact, \( K_1 \) does not intersect \( \Gamma \) and \( K_2 \) lies entirely within the region \( \{\re(\phi_{b_2}(z))<0\} \), see Figures~\ref{fig:tc} and~\ref{lens}. Again, the lens \( \Sigma \) can be adjusted so that \( K_1 \) lies in the unbounded component of the complement of \( \Sigma \). Hence, the estimate \eqref{tc17} on \( K_1 \) follows as before. To obtain it on \( K_2 \), recall that we had a lot of freedom in choosing \( \Ga \) away from \( J_t \). That is, \( \Ga \) can be deformed into \( \Ga^\prime \) that avoids \( K_2 \) and belongs to \( \{\re(\phi_{b_2}(z))<0\} \) away from \( J_t \). Then \hyperref[rhy]{\rhy}, formulated on \( \Ga^\prime \), can be solved exactly as before since estimates \eqref{rh14} and \eqref{rh15} remain the same (with a possibly modified constant \( C_D \)), and therefore \eqref{tc17} can be shown via \eqref{rh18}--\eqref{rh19}.

Finally, take $K\subset J_t^\circ$. It again follows from \eqref{rh1}, \eqref{rh2}, \eqref{rh4}, and \eqref{rh17} that
\begin{multline*}
P_n(s) = [\boldsymbol Y(s)]_{11} = \left([\boldsymbol R(s)]_{11}\left([\boldsymbol N(s)]_{11+} + [\boldsymbol N(s)]_{12+}e^{-n\phi_{b_2+}(s)-(n-N_n)V(s)}\right)\right. + \\ \left.[\boldsymbol R(s)]_{12}\left( [\boldsymbol N(s)]_{21+} +  [\boldsymbol N(s)]_{22+}e^{-n\phi_{b_2+}(s)-(n-N_n)V(s)} \right)\right)e^{ng_+(s)}, \quad s\in K.
\end{multline*}
Now, \eqref{g5} and \hyperref[rhn]{\rhn}(b) yield that
\[
[\boldsymbol N(s)]_{i2+} e^{-n\phi_{b_2+}(s)-(n-N_n)V(s)+ng_+(s)} = [\boldsymbol N(s)]_{i1-} e^{ng_-(s)}
\]
for \( s\in J_t^\circ \) and \( i\in\{1,2\} \). Hence, we get from \eqref{g1} and \eqref{rh16} that
\begin{multline*}
\psi_n(s) = \left( 1 + \mathcal O_\varepsilon\big(n^{-1}\big) \right) \left( [\boldsymbol N(s)]_{11+}e^{n\mathcal Q_+(s)} + [\boldsymbol N(s)]_{11-}e^{n\mathcal Q_-(s)} \right) + \\ \mathcal O_\varepsilon\big(n^{-1}\big)\left( [\boldsymbol N(s)]_{21+}e^{n\mathcal Q_+(s)} + [\boldsymbol N(s)]_{21-}e^{n\mathcal Q_-(s)} \right)
\end{multline*}
for \( s\in K \). Since the traces \( 2\mathcal Q_{\pm}(s)=\phi_{b_2\pm}(s) \), \( s\in J_t\), are purely imaginary by \eqref{g4}, the above asymptotic formula yields \eqref{tc18} in the same way \eqref{rh19} yielded \eqref{tc17}.

\section{Recurrence Coefficients} 
\label{s:ae}

In this section, we will use the notation $P_n(z;t, N) = z^n +  \sum_{k = 0}^{n - 1} (P_{n,t,N})_k z^k$ and 
\[
\boldsymbol K(z)  =  \boldsymbol I + z^{-1} \boldsymbol K_1(n,t,N) + z^{-2} \boldsymbol K_2(n,t,N) + \mathcal{O}\big(z^{-3}\big) \quad \text{as} \quad z \to \infty,
\]
where \(  \boldsymbol K \in \{ \boldsymbol M, \boldsymbol N, \boldsymbol R, \boldsymbol T ,\boldsymbol Yz^{-n\sigma_3}  \}  \). To begin, recall \eqref{cm4} and \eqref{cm5b}. That is,
\begin{equation}
\label{rc1}
\left\{
\begin{array}{lll}
\ga_n^2(t,N) & = & h_n(t,N)/h_{n-1}(t,N), \medskip \\
\beta_n(t,N) & = & (P_{n,t,N})_{n - 1} - (P_{n+1,t,N})_n,
\end{array}
\right.
\end{equation} 
where  for the second formula  in \eqref{rc1} to be valid it must hold that \( \deg(P_{n+1}(\cdot;t,N))=n+1 \), see the discussion after \eqref{cm5b}. It follows from the orthogonality relations \eqref{cm3} and \eqref{cm5a} that
\begin{eqnarray*}
z^n\big(\mathcal{C}P_ne^{-NV}\big)(z) &=& -\frac1z\frac1{2\pi\ic} \int_\Ga s^nP_ne^{-NV}\dd s - \frac1{z^2}\frac1{2\pi\ic} \int_\Ga s^{n+1}P_ne^{-NV}\dd s + \mathcal O\left( z^{-3}\right) \\
& = & \frac{h_n(t,N)}{2\pi\ic}\left(- \frac1z + \frac{ (P_{n+1,t,N})_n}{z^2}\right) + \mathcal O\left( z^{-3}\right),
\end{eqnarray*}
where the second expression for the coefficient next to \( z^{-2} \) is obtained under the condition that \( \deg(P_{n+1}(\cdot;t,N))=n+1 \) (the first expression, of course, is valid regardless this condition) and is based on the observation
\[
\int_\Ga s^{n+1}P_ne^{-NV}\dd s = \int_\Ga \left(s^{n + 1} - P_{n + 1}\right) P_ne^{-NV}\dd s = (P_{n + 1,N})_n \int s^nP_ne^{-NV}\dd s,
\]
where all polynomials correspond to the same parameter \( N \). Hence, in the considered case, it follows from \eqref{rh1} that 
\begin{equation}
\label{rc2}
{\boldsymbol Y}_1(n,t,N) = \begin{pmatrix} (P_{n,t,N})_{n - 1} & - \frac{h_n(t,N)}{2\pi \ic} \medskip \\  - \frac{2\pi \ic}{h_{n - 1}(t,N)} & -(P_{n,t,N})_{n - 1}  \end{pmatrix} \quad \text{and} \quad \boldsymbol Y_2(n,N) = \begin{pmatrix} * &  \frac{h_n(t,N)}{2\pi\ic}(P_{n+1,t,N})_n \medskip \\ * & *  \end{pmatrix}.
\end{equation}

This, in turn, yields the formulae
\begin{equation}
\label{rc3}
\begin{cases}
h_n(t, N) & = -2\pi \ic [\boldsymbol Y_1(n,t, N)]_{12} , \smallskip \\
\gamma_n^2 (t, N) & = [\boldsymbol Y_1(n,t, N)]_{12} [\boldsymbol Y_1(n, t,N)]_{21}, \smallskip \\
\beta_n (t, N) & = [\boldsymbol Y_1(n, t,N)]_{11} + [\boldsymbol Y_2(n, t,N)]_{12}[\boldsymbol Y_1(n, t,N)]_{12}^{-1}.
\end{cases}
\end{equation}

Assume now that \( \deg(P_{n+1}(\cdot;t^*,N)) = n \) for some value \( t=t^* \). As explained after \eqref{cm5b}, \( \beta_n(t^*,N)=\infty \) in this case. Moreover, we also pointed out after \eqref{cm4} that if \( \deg(P_k(\cdot;t,N)) = k \) for \( k\in\{n,n+1 \} \), then \( \beta_n(t,N) \) is finite.  Hence, for \( n\in \N(t,\varepsilon) \), \( P_{n+1}(z;t,N) \) degenerates if and only if \( \beta_n(t,N) \) is infinite. Thus, due to meromorphic dependence of \( \beta_n(t,N) \) on the parameter \( t \), if \( P_{n+1}(z;t^*,N) \) degenerates, then \( \deg(P_{n+1}(\cdot;t,N)) = n+1 \) in some small punctured neighborhood of \( t^* \). Moreover, it follows from Theorem~\ref{geometry2} and \eqref{tc4} that \( \z_{n,1}(t) \) is real analytic in real and imaginary parts of \( t \). Hence, if \( n\in \N(t^*,\varepsilon) \), then \( n\in \N(t,\varepsilon) \) for some sequence of values \( t \) converging to \( t^* \). Clearly, along this sequence the third relation in \eqref{rc3} remains valid. Since the entries of \( \boldsymbol Y(z) \) meromorphically depend on \( t \), then the third relation in \eqref{rc3} must remain valid at \( t^* \) as well (this simply means that \( \beta_n(t,N)=\infty\) corresponds to \( [\boldsymbol Y_2(n, t, N)]_{12}\neq 0 \) and \( [\boldsymbol Y_1(n, t, N)]_{12} =0 \)). Hence, \eqref{rc3} holds regardless the degree of \( P_{n+1}(z;t,N) \) and we drop indicating the dependence on \( t \) altogether.

As before, let $\mu_1=\mu_1(t)$ be the first moment of the equilibrium measure \( \mu_t \) (hopefully this slight abuse of notation will not cause a confusion). Then \eqref{oc4} implies that
\begin{equation}
\label{rc4}
e^{-ng(z) \sigma_3} = z^{-n\sigma_3} \left[ \boldsymbol I + \dfrac{n \mu_1}{z}\sigma_3  + \boldsymbol{\mathcal O} \left(\dfrac{1}{z^2}\right) \right], 
\end{equation}
where \( \boldsymbol{\mathcal O}(\cdot) \) is a diagonal matrix. Combining \eqref{rc4} with \eqref{rh2} yields
\begin{equation} 
\label{Y-T}
\begin{cases}
	[\boldsymbol T_1(n, N)]_{12} &= e^{n \ell_*} [\boldsymbol Y_1(n, N)]_{12}, \medskip \\
	[\boldsymbol T_1(n, N)]_{21} &= e^{-n \ell_*} [\boldsymbol Y_1(n, N)]_{21},
\end{cases}
\;\;
\begin{cases}
	[\boldsymbol T_1 (n, N)]_{11} &= [\boldsymbol Y_1 (n, N)]_{11} + n \mu_1, \medskip \\
	[\boldsymbol T_2(n, N)]_{12} &= e^{n \ell_*} \left( [\boldsymbol Y_2(n, N)]_{12} - n \mu_1 [\boldsymbol Y_1(n, N)]_{12} \right). 
\end{cases}
\end{equation}
This, in turn, leads to
\begin{equation}
\label{rc6}
\left\{   
\begin{array}{lll}
	h_n(N) & = & -2\pi \ic e^{-n \ell_*} [\boldsymbol T_1(n, N)]_{12} , \medskip \\
	\gamma_n^2 (N) & = & [\boldsymbol T_1(n, N)]_{12} [\boldsymbol T_1(n,N)]_{21},\medskip \\ 
	\beta_n (N) & = & [\boldsymbol T_1(n, N)]_{11} + [\boldsymbol T_2(n, N)]_{12}[\boldsymbol T_1(n,N)]_{12}^{-1}.
\end{array}
\right.
\end{equation}
It was shown in \cite[Theorem 7.10]{DKMVZ2} that $\boldsymbol R(z)$ admits an asymptotic expansion 
\begin{equation}
	\label{r-expansion}
	\boldsymbol R(z) = \boldsymbol I + \sum_{k = 1}^{\infty} \dfrac{\boldsymbol R^{(k)}(z)}{n^k}, \quad n \to \infty \quad  \text{ and } \quad z \in \C \setminus \cup_e \partial U_e. 
\end{equation}
The matrices $\boldsymbol R^{(k)}(z)$ satisfy an additive Riemann-Hilbert Problem: 
\begin{enumerate}
	\item[(a)] $\boldsymbol R^{(k)}(z)$ is analytic in $\C \setminus \cup_e \partial U_e$; 
	\item[(b)] $\boldsymbol R_+^{(k)}(z) = \boldsymbol R_-^{(k)}(z) + \sum_{j = 1}^{k} \boldsymbol R_-^{(k - j)}(z) \boldsymbol \Delta_j(z)$ for $z \in \cup_e \partial U_e$; 
	\item[(c)] as $z \to \infty,$ $\boldsymbol R^{(k)}(z)$ admits an expansion of the form
	\begin{equation}
		\label{rk-expansion}
		\boldsymbol R^{(k)}(z) = \dfrac{\boldsymbol R_1^{(k)}}{z} + \dfrac{\boldsymbol R_2^{(k)}}{z^2} + \mathcal{O}\left( \dfrac{1}{z^3} \right) .
	\end{equation}
\end{enumerate}
Recall that $\boldsymbol T(z) = (\boldsymbol R \boldsymbol N)(z)$ for $z$ in the vicinity of infinity by \eqref{rh2} and \eqref{rh18}. It also follows from \eqref{rh5}, \eqref{MN-up-bound} and \eqref{MN-low-bound} that matrices \( \boldsymbol N(z) \) form a normal family in \( n\in\N(\varepsilon) \) in the vicinity of infinity. Plugging expansions \eqref{r-expansion} and \eqref{rk-expansion} into the product $\boldsymbol T(z) = (\boldsymbol R \boldsymbol N)(z)$ yields
\begin{equation}
\label{rc7}
\begin{cases}
	\boldsymbol T_1(n, N) &= \boldsymbol N_1(n, N) + \dfrac{\boldsymbol R^{(1)}_1}{n} + \dfrac{\boldsymbol R^{(2)}_1}{n^{2}} + \boldsymbol{\mathcal O}_\varepsilon\left( \dfrac{1}{n^3} \right), \medskip \\
	\boldsymbol T_2(n, N) &= \boldsymbol N_2(n, N) + \dfrac{\boldsymbol R_1^{(1)} \boldsymbol N_1(n, N) + \boldsymbol R_2^{(1)}}{n} + \dfrac{\boldsymbol R_1^{(2)}\boldsymbol N_1(n, N)  + \boldsymbol R_2^{(2)}}{n^2} +  \boldsymbol{\mathcal O}_\varepsilon\left( \dfrac{1}{n^3} \right),
\end{cases}
\end{equation}
where we used the normality of \( \boldsymbol N(z) \) to deduce uniform boundedness of the entries of \(  \boldsymbol N_k(n, N) \), \( k\in\{1,2\} \), to maintain \( \boldsymbol{\mathcal O}_\varepsilon ( n^{-3}) \) as the order of the error terms (it is, of course, now depends on \( \varepsilon \) as do the bounds for the entries \(  \boldsymbol N_k(n, N) \)).

To use \eqref{rc7}, we move on to computing $\boldsymbol N_k(n,N)$. To that end, it follows from \eqref{D-exp} that  
\[
\mathcal D^{(N - n) \sigma_3}(z)  =  \mathcal{D} ^{(N - n) \sigma_3}(\infty) \left( \boldsymbol I + \dfrac{ (N - n) D_1}{3z}\sigma_3 + \boldsymbol{\mathcal O}\left(\dfrac{1}{z^2} \right) \right)
\]
as \( z\to\infty \), where \( \boldsymbol{\mathcal O}(z^{-2}) \) is a diagonal matrix and \( D_1 := t^2 + \mu_1 - 3\varsigma d_2/2\). Further, recall the function \( \gamma(z) \) from \eqref{nt1}. Then
\[
\gamma(z) = 1 - \frac{S_1}{4z} + \frac{C-3S_1^2/8}{4z^2} + \mathcal O\big(z^{-3} \big) \quad \text{and} \quad \gamma^{-1}(z) = 1 + \frac{S_1}{4z} - \frac{C-5S_1^2/8}{4z^2} + \mathcal O\big(z^{-3} \big)
\]
as \( z\to\infty \), where \( S_k = \sum_{i=1}^2\big( b_i^k-a_i^k\big) \) and \( C = a_1^2+a_2^2 + b_1b_2 + a_1a_2 - \sum_{i,j=1}^2 b_ia_j \) . Then it follows from \eqref{nt2} that
\[
A(z) = 1 + \mathcal O\big(z^{-2} \big) \quad \text{and} \quad B(z) = \frac{S_1}{4\ic}\frac1z +  \frac{S_2}{8\ic}\frac1{z^2} + \mathcal O\big(z^{-3} \big) 
\]
as \( z\to\infty \) since \( S_1^2-2C = S_2 \). The above power series expansions, along with \eqref{ab7} and \eqref{rh5}, yield that
\begin{equation}
\label{N-1}
\boldsymbol N_1(n, N) = \frac{(N - n)D_1}{3} \sigma_3 ~+~  \begin{pmatrix}  
\dfrac{\dd}{\dd z} \left( \log \Theta^{(0)}_{n, 1}(1/z) \right)\biggl|_{z = 0} & - \mathcal{D}^{2(n - N)}(\infty) \dfrac{S_1}{4 \ic}\dfrac{\Theta^{(1)}_{n, 1}(\infty)}{\Theta^{(0)}_{n, 1} (\infty)} \medskip \\  
	\mathcal{D}^{2(N- n)}(\infty) \dfrac{S_1}{4 \ic}\dfrac{\Theta_{n, 0}^{(0)}(\infty)}{\Theta_{n, 0}^{(1)} (\infty)} &  \dfrac{\dd}{\dd z} \left( \log \Theta_{n, 0}^{(1)}(1/z) \right)\biggl|_{z = 0}
\end{pmatrix}
\end{equation}
and 
\begin{equation}
\label{N-2}
[\boldsymbol N_2(n, N)]_{12} =   \frac{\mathcal{D}^{2(n - N) }(\infty)}{ \Theta_{n, 1}^{(0)}(\infty) } \frac{S_1}{4\ic} \left[\left(\frac{(N - n)D_1}{3} - \frac{S_2}{2S_1} \right) \Theta^{(1)}_{n, 1}(\infty) \right. \\  \left. - \dfrac{\dd}{\dd z} \left( \Theta^{(1)}_{n, 1}(1/z) \right)\biggl|_{z = 0} \right].
\end{equation}
Using \eqref{N-1} and \eqref{rc7} in \eqref{rc6} gives
\[
h_n(N_n) = \dfrac{\pi S_1}{2} e^{-n\ell^*}  \mathcal{D}^{2(n - N)}(\infty)\dfrac{\Theta_{n, 1}^{(1)}(\infty)}{\Theta_{n, 1}^{(0)}(\infty)} + \mathcal{O}_{\varepsilon}\left( \dfrac{1}{n} \right)
\]
as \( \N(\varepsilon) \ni n \to \infty \), which proves the first formula in \eqref{hn-asymp} (recall \eqref{all-thetas}). Similarly,  
\[
\gamma_n^2(t, N) = \dfrac{S_1^2}{16} \dfrac{\Theta_{n, 1}^{(1)}(\infty) \Theta_{n, 0}^{(0)}(\infty)}{\Theta_{n, 1}^{(0)}(\infty) \Theta_{n, 0}^{(1)}(\infty)}  + \mathcal{O}_{\varepsilon}\left( \dfrac{1}{n} \right) \qasq \N(\varepsilon) \ni n \to \infty,
\]
which finishes the proof of \eqref{hn-asymp}. Define \( \Theta_{n-1,1}(\z;N_n) \) as in \eqref{ab5} but with \( \z_{n-1,1}(N_{n-1}) \) replaced by \( \z_{n-1,1}(N_n) \). It follows from \eqref{ab6}, the last claim of Proposition~\ref{prop:Nt}, and \eqref{new4} that \( \Theta_{n,0}(\z) \) is a constant multiple of \( \Theta(\z)\Theta_{n-1,1}(\z;N_n) \). Then we get from \eqref{new6} that
\begin{equation}
\label{rc8}
\gamma_n^2(t, N) = e^{-\ell_*}\mathcal D^2(\infty) \dfrac{\Theta_{n, 1}^{(1)}(\infty) \Theta_{n-1, 1}^{(0)}(\infty;N_n)}{\Theta_{n, 1}^{(0)}(\infty) \Theta_{n-1, 1}^{(1)}(\infty;N_n)}  + \mathcal{O}_{\varepsilon}\left( \dfrac{1}{n} \right) \qasq \N(\varepsilon) \ni n \to \infty.
\end{equation}
Finally,  \eqref{rc6} and \eqref{rc7} yield
\[
\beta_n(N_n) = \frac{[\boldsymbol N_1]_{11}[\boldsymbol N_1]_{12}+[\boldsymbol N_2]_{12} + \mathcal O_\varepsilon\big( n^{-1}\big)}{ [\boldsymbol N_1]_{12} + \mathcal O_\varepsilon\big( n^{-1}\big) },
\]
which, in view of \eqref{N-1} and \eqref{N-2} gives
\[
\beta_n(N_n) = \frac{\Theta_{n,1}^{(1)}(\infty) \left( \dfrac{S_2}{2S_1} + \dfrac{\dd}{\dd z} \left(\log \Theta^{(0)}_{n, 1}(1/z) + \log \Theta^{(1)}_{n, 1}(1/z) \right)\biggl|_{z = 0}\right) + \mathcal O_\varepsilon\big( n^{-1}\big)}{\Theta_{n,1}^{(1)}(\infty) + \mathcal O_\varepsilon\big( n^{-1}\big)},
\]
as claimed in \eqref{betan-asymp} (we need to divide both numerator and denominator by \( \Theta_{n,1}^{(0)}(\infty) \)).

\end{document}